\providecommand{\tabularnewline}{\\}
\newtheorem{theorem}{Theorem}
\newtheorem{condition}{Condition}
\newtheorem{corollary}{Corollary}
\newtheorem{example}{Example}
\newtheorem{lemma}{Lemma}
\newtheorem{remark}{Remark}
\newenvironment{proof}[1][Proof]{\textbf{#1.} }{\ \rule{0.5em}{0.5em}}
\begin{document}

\title{Inference for Additive Models in the Presence of Possibly Infinite
Dimensional Nuisance Parameters}

\author{Alessio Sancetta\thanks{E-mail: <asancetta@gmail.com>, URL: <http://sites.google.com/site/wwwsancetta/>.
Address for correspondence: Department of Economics, Royal Holloway
University of London, Egham TW20 0EX, UK }\\
}
\maketitle
\begin{abstract}
A framework for estimation and hypothesis testing of functional restrictions
against general alternatives is proposed. The parameter space is a
reproducing kernel Hilbert space (RKHS). The null hypothesis does
not necessarily define a parametric model. The test allows us to deal
with infinite dimensional nuisance parameters. The methodology is
based on a moment equation similar in spirit to the construction of
the efficient score in semiparametric statistics. The feasible version
of such moment equation requires to consistently estimate projections
in the space of RKHS and it is shown that this is possible using the
proposed approach. This allows us to derive some tractable asymptotic
theory and critical values by fast simulation. Simulation results
show that the finite sample performance of the test is consistent
with the asymptotics and that ignoring the effect of nuisance parameters
highly distorts the size of the tests. 

\textbf{Key Words:} Constrained estimation, convergence rates, functional
restriction, hypothesis testing, nonlinear model, reproducing kernel
Hilbert space.
\end{abstract}

\section{Introduction}

Suppose that you are interested in estimating the number of event
arrivals $Y$ in the next one minute, conditioning on a vector of
covariates $X$ known at the start of the interval. You decide to
minimize the negative log-likelihood for Poisson arrivals with conditional
intensity $\exp\left\{ \mu\left(X\right)\right\} $ for some function
$\mu$. For observation $i$, the negative loglikelihood is proportional
to 
\begin{equation}
\exp\left\{ \mu\left(X_{i}\right)\right\} -Y_{i}\mu\left(X_{i}\right).\label{EQ_poissonLogLik}
\end{equation}
You suppose that $\mu$ lies in some infinite dimensional space. For
example, to avoid the curse of dimensionality, you could choose 
\begin{equation}
\mu\left(X\right):=\sum_{k=1}^{K}f^{\left(k\right)}\left(X^{\left(k\right)}\right)\label{EQ_additiveFunction}
\end{equation}
where $X^{\left(k\right)}$ denotes the $k^{th}$ covariate (the $k^{th}$
element of the $K$-dimensional covariate $X$), and the univariate
functions $f^{\left(k\right)}$ are elements in some possibly infinite
dimensional space. However, you suppose that $f^{\left(1\right)}$
is a linear function. You want to test whether linearity with respect
to the first variable holds against the alternative of a general additive
model. You could also test against the alternative of a general continuous
multivariate function, not necessarily additive. This paper addresses
practical problems such as the above. The paper is not restricted
to this Poisson problem or additive models on real valued variables.

From the example above, we need to (i) estimate $\mu$, which in this
example we chose to be additive with $f^{\left(1\right)}$ linear
under the null; we need to (ii) test this additive restriction, against
a more general non-parametric alternative. Under the null, the remaining
$K-1$ functions in (\ref{EQ_additiveFunction}) are not specified.
Problem (i) is standard, though the actual numerical estimation can
pose problems. Having solved problem (i), solution of problem (ii)
requires to test a non-parametric hypothesis (an additive model with
linear $f^{\left(1\right)}$) with infinite dimensional nuisance parameters
(the remaining unknown $K-1$ functions) against a more general non-parametric
alternative. In this paper, we shall call the restriction under the
null semi-parametric. This does not necessarily mean that the parameter
of interest is finite dimensional, as often the case in the semiparametric
literature. 

Semiparametric inference requires that the infinite dimensional parameter
and the finite dimensional one are orthogonal in the population (e.g.,
Andrews, 1994, eq.(2.12)). In our Poisson motivating example this
is not the case. Even if the restriction is parametric, we do not
need to suppose that the parameter value is known under the null.
This requires us to modify the test statistic in order to achieve
the required orthogonality. Essentially, we project the test statistic
on some space that is orthogonal to the infinite dimensional nuisance
parameter. This is the procedure involved in the construction of the
efficient score in semiparametric statistics. The reader is referred
to van der Vaart (1998) for a review of the basic idea. Here, we are
concerned with functional restrictions and are able to obtain critical
values by fast simulation. In many empirical application, the problem
should possibly allow for dependent observations. The extension to
dependence is not particularly complicated, but will require us to
join together various results in a suitable way. 

Throughout, we shall use the framework of reproducing kernel Hilbert
spaces. The RKHS setup is heavily used in the derivations of the results.
Estimation in these spaces has been studied in depth and is flexible
and intuitive from a theoretical point of view. RKHS also allow us
to consider multivariate problems in a very natural way. In consequence
of these remarks, this paper's main contribution to the literature
is related to testing rather than estimation. Nevertheless, as far
as estimation is concerned, we do provide results that are partially
new. For example, we establish insights regarding the connection between
constrained and penalized estimation together with convergence rates
using possibly dependent observations.

Estimation in RKHS can run into computational issues when the sample
size is large, as it might be the case in the presence of large data
sets. We will address practical computational issues. Estimation of
the model can be carried out via a greedy algorithm, possibly imposing
LASSO kind of constraints under additivity.

Under the null hypothesis, we can find a representation for the limiting
asymptotic distribution which is amenable of fast simulation. In consequence
critical values do not need to be generated using resampling procedures.
While the discussion of the asymptotic validity of the procedure is
involved, the implementation of the test is simple. The Matlab code
for greedy estimation, to perform the test, and compute its critical
values is available from the URL: <https://github.com/asancetta/ARKHS/>.
A set of simulations confirm that the procedure works well, and illustrates
the well known fact that nuisance parameters can considerably distort
the size of a test if not accounted for using our proposed procedure.
The reader can have a preliminary glance at Table \ref{Table_motivatingExample}
in Section \ref{Section_motivation}, and Table \ref{Table_simulationSubset}
in Section \ref{Section_simulations} to see this more vividly.

\subsection{Relation to the Literature}

Estimation in RKHS has been addressed in many places in the literature
(see the monographs of Wahba, 1990, and Steinwart and Christmann,
2008). Inference is usually confined to consistency (e.g., Mendelson,
2002, Christmann and Steinwart, 2007), though there are exceptions
(Hable, 2012, in the frequentist framework). A common restriction
used in the present paper is additivity and estimation in certain
subspaces of additive functions. Estimation of additive models has
been extensively studied by various authors using different techniques
(e.g., Buja et al., 1989, Linton and Nielsen, 1995, Mammen et al.,
1999, Meier et al., 2009, Christmann and Hable, 2012). The last reference
considers estimation in RKHS which allows for a more general concept
of additivity. Here, the assumptions and estimation results are not
overall necessarily comparable to existing results. For example, neither
independence nor the concept of true model are needed. Moreover, we
establish rates of convergence and the link between constrained versus
penalized estimation in RKHS. The two are not always equivalent. 

The problem of parametric inference in the presence of non-orthogonal
nuisance parameters has been addressed by various authors by modification
of the score function or equivalent quantities. Belloni et al. (2017)
provide general results in the context of high dimensional models.
There, the reader can also find the main references in that literature.
The asymptotic distribution usually requires the use of the bootstrap
in order to compute critical values. 

The problem of testing parametric restrictions with finite dimensional
nuisance parameter under the null against general nonparametric alternatives
is well known (H\"{a}rdle and Mammen, 1993), and requires the use
of the bootstrap in order to derive confidence intervals. Fan et al.
(2001) have developed a Generalized Likelihood Ratio test of the null
of parametric or nonparameteric additive restrictions versus general
nonparametric ones. This is based on a Gaussian error model (or parametric
error distribution) for additive regression, and estimation using
smoothing kernels. Fan and Jiang (2005) have extended this approach
to the nonparametric error distribution. The asymptotic distribution
is Chi-square with degrees of freedom equal to some (computable) function
of the the data. Chen et al.(2014) considers the framework of sieve
estimation and derives a likelihood ratio statistic with asymptotic
Chi-square distribution (see also Shen and Shi, 2005).

The approach considered here is complementary to the above references.
It allows the parameter space to be a RKHS of smooth functions. Estimation
in RKHS is well understood and can cater for many circumstances of
interest in applied work. For example, it is possible to view sieve
estimation as estimation in RKHS where the feature space defined by
the kernel increases with the sample size. The testing procedure is
based on a corrected moment condition. Hence, it does not rely on
likelihood estimation. The conditions used are elementary, as they
just require existence of real valued derivatives of the loss function
(in the vein of Christmann and Steinwart, 2007) and mild regularity
conditions on the covariance kernel. We also allow for dependent errors.
The correction is estimated by either ridge regression, or just ordinary
least square using pseudo-inverse. 

For moderate sample sizes (e.g. less than 10,000) estimation in RKHS
does not pose particular challenges and it is trivial for the regression
problem under the square error loss. For large sample sizes, computational
aspects in RKHS have received a lot of attention in the literature
(e.g., Rasmussen and Williams, 2006, Ch.8, Banerjee et al., 2008,
L\'{a}zaro-Gredilla et al., 2010). 

Here we discuss a greedy algorithm, which is simple to implement (e.g.,
Jaggi, 2013, Sancetta, 2016) and, apparently, has not been applied
to the RKHS estimation framework of this paper. 

\subsection{Outline }

The plan for the paper is as follows. Section \ref{Section_inferenceProblem}
reviews some basics of RKHS, defines the problem and model used in
the paper, and describes the implementation of the test. Section \ref{Section_asymptoticAnalysis}
contains the asymptotic analysis of the estimation problem and the
proposed testing procedure in the presence of nuisance parameters.
Section \ref{Section_discussion} provides some additional discussion
of the conditions and the asymptotic analysis. Some details related
to computational implementation can be found in Section \ref{Section_estimationAlgo}.
Section \ref{Section_furtherRemarks} concludes with a finite sample
analysis via simulations. It also discusses the partial extension
to non-smooth loss functions, which exemplify one of the limitations
in the framework of the paper. The proofs, and additional results
are in the Appendices as supplementary material.

\section{The Inference Problem\label{Section_inferenceProblem}}

The explanatory variable $X^{\left(k\right)}$ takes values in $\mathcal{X}$,
a compact subset of a separable Banach space ($k=1,2,...,K$). The
most basic example of $\mathcal{X}$ is $\left[0,1\right]$. The vector
covariate $X=\left(X^{\left(1\right)},...,X^{\left(K\right)}\right)$
takes values in the Cartesian product $\mathcal{X}^{K}$, e.g., $\left[0,1\right]^{K}$.
The dependent variable takes values in $\mathcal{Y}$ usually $\mathbb{R}$.
Let $Z=\left(Y,X\right)$ and this takes values in $\mathcal{Z}=\mathcal{Y}\times\mathcal{X}^{K}$.
If no dependent variable $Y$ can be defined (e.g., unsupervised learning,
or certain likelihood estimators), $Z=X$. Let $P$ be the law of
$Z$, and use linear functional notation, i.e., for any $f:\mathcal{Z}\rightarrow\mathbb{R}$,
$Pf=\int_{\mathcal{Z}}f\left(z\right)dP\left(z\right)$. Let $P_{n}=\frac{1}{n}\sum_{i=1}^{n}\delta_{Z_{i}}$,
where $\delta_{Z_{i}}$ is the point mass at $Z_{i}$, implying that
$P_{n}f=\frac{1}{n}\sum_{i=1}^{n}f\left(Z_{i}\right)$ is the sample
mean of $f\left(Z\right)$. For $p\in\left[1,\infty\right]$, let
$\left|\cdot\right|_{p}$ be the $L_{p}$ norm (w.r.t. the measure
$P$), e.g., for $f:\mathcal{Z}\rightarrow\mathbb{R}$, $\left|f\right|_{p}=\left(P\left|f\right|^{p}\right)^{1/p}$,
with the obvious modification to $\sup$ norm when $p=\infty$.

\subsection{Motivation\label{Section_motivation} }

The problem can be described as follows, though in practice we will
need to add extra regularity conditions. Let $\mathcal{H}^{K}$ be
a vector space of real valued functions on $\mathcal{X}^{K}$, equipped
with a norm $\left|\cdot\right|_{\mathcal{H}^{K}}$. Consider a loss
function $L:\mathcal{Z}\times\mathbb{R}\rightarrow\mathbb{R}$. We
shall be interested in the case where the second argument is $\mu\left(x\right)$:
$L\left(z,\mu\left(x\right)\right)$ with $\mu\in\mathcal{H}^{K}$.
Therefore, to keep notation compact, let $\ell_{\mu}\left(Z\right)=L\left(Z,\mu\left(X\right)\right)$.
For the special case of the square error loss we would have $\ell_{\mu}\left(z\right)=L\left(z,\mu\left(x\right)\right)=\left|y-\mu\left(x\right)\right|^{2}$
($z=\left(y,x\right)$). The use of $\ell_{\mu}$ makes it more natural
to use linear functional notation. The unknown function of interest
is the minimizer $\mu_{0}$ of $P\ell_{\mu}$, and it is assumed to
be in $\mathcal{H}^{K}$. We find an estimator $\mu_{n}=\arg\inf_{\mu}P_{n}\ell_{\mu}$
where the infimum is over certain functions $\mu$ in $\mathcal{H}^{K}$.
The main goal it to test the restriction that $\mu_{0}\in\mathcal{R}_{0}$
for some subspace $\mathcal{R}_{0}$ of $\mathcal{H}^{K}$ (for example
a linear restriction). The restricted estimator in $\mathcal{R}_{0}$
is denoted by $\mu_{0n}$. To test the restriction we can look at
how close 
\begin{equation}
\sqrt{n}P_{n}\partial\ell_{\mu_{0n}}h=\frac{1}{\sqrt{n}}\sum_{i=1}^{n}\partial\ell_{\mu_{0n}}\left(Z_{i}\right)h\left(X_{i}\right)\label{EQ_sampleMomentEquation}
\end{equation}
is to zero for suitable choice of $h\in\mathcal{H}^{K}\setminus\mathcal{R}_{0}$.
Throughout, $\partial^{k}\ell_{\mu}\left(z\right)=\left.\partial^{k}L\left(z,t\right)/\partial t^{k}\right|_{t=\mu\left(x\right)}$
is the $k^{th}$ partial derivative of $L\left(z,t\right)$ with respect
to $t$ and then evaluated at $\mu\left(x\right)$. The validity of
this derivative and other related quantities will be ensured by the
regularity conditions we shall impose. The compact notation on the
left hand side (l.h.s.) of (\ref{EQ_sampleMomentEquation}) shall
be used throughout the paper. If necessary, the reader can refer to
Section \ref{Examples_notation} in the Appendix (supplementary material)
for more explicit expressions when the compact notation is used in
the main text. If the restriction held true, we would expect (\ref{EQ_sampleMomentEquation})
to be mean zero if we used $\mu_{0}$ in place of $\mu_{0n}$. A test
statistic can be constructed from (\ref{EQ_sampleMomentEquation})
as follows: 
\begin{equation}
\frac{1}{R}\sum_{r=1}^{R}\left(\sqrt{n}P_{n}\partial\ell_{\mu_{0n}}h^{\left(r\right)}\right)^{2}\label{EQ_naiveTestStat}
\end{equation}
 where $h^{\left(r\right)}\in\mathcal{H}^{K}\setminus\mathcal{R}_{0}$,
$r=1,2,...,R$. 

If $\mathcal{R}_{0}$ is finite dimensional, or $\mu_{0n}$ is orthogonal
to the functions $h\in\mathcal{H}^{K}\setminus\mathcal{R}_{0}$ (e.g.,
Andrews, 1994, eq. 2.12), the above display is - to first order -
equal in distribution to $\sqrt{n}P_{n}\ell_{\mu_{0}}h$, under regularity
conditions. However, unless the sample size is relatively large, this
approximation may not be good. In fact, supposing stochastic equicontinuity
and the null that $\sqrt{n}P\partial\ell_{\mu_{0}}h=0$, it can be
shown that (e.g., Theorem 3.3.1 in van der Vaart and Wellner, 2000),
\[
\sqrt{n}P_{n}\partial\ell_{\mu_{0n}}h=\sqrt{n}P_{n}\partial\ell_{\mu_{0}}h+\sqrt{n}P\partial^{2}\ell_{\mu_{0}}\left(\mu_{0n}-\mu_{0}\right)h+o_{p}\left(1\right).
\]
The orthogonality condition in Andrews (1994, eq., 2.12) guarantees
that the second term on the right hand side (r.h.s.) is zero (Andrews,
1994, eq.2.8, assuming Fr\'{e}chet differentiability). Hence, we
aim to find/construct functions $h\in\mathcal{H}^{K}\setminus\mathcal{R}_{0}$
such that the second term on the r.h.s. is zero. In fact this term
can severely distort the asymptotic behaviour of $\sqrt{n}P_{n}\partial\ell_{\mu_{0n}}h$. 

An example is given in Table \ref{Table_motivatingExample} which
is an excerpt from the simulation results in Section \ref{Section_simulations}.
Here, the true model is a linear model with 3 variables plus Gaussian
noise with signal to noise ratio equal to one. We call this model
Lin3. We use a sample of $n\in\left\{ 100,1000\right\} $ observations
with $K=10$ variables. Under the null hypothesis, only the first
three variables enter the linear model, against an alternative that
all $K=10$ variables enter the true model in an additive nonlinear
form. The subspace of these three linear functions is $\mathcal{R}_{0}$
while the full model is $\mathcal{H}^{K}$. The test functions $h$
are restricted to polynomials with no linear term. Details can be
found in Section \ref{Section_simulations}. The nuisance parameters
are the three estimated linear functions, which are low dimensional.
It is plausible that the estimation of the three linear functions
(i.e. $\mu_{0n}$) should not affect the asymptotic distribution of
(\ref{EQ_naiveTestStat}). When the variables are uncorrelated, this
is clearly the case as confirmed by the 5\% size of the test in Table
\ref{Table_motivatingExample}. It does not matter whether we use
instruments $h\in\mathcal{H}^{K}\setminus\mathcal{R}_{0}$ that are
orthogonal to the linear functions or not. However, as soon as the
variables become correlated, Table \ref{Table_motivatingExample}
shows that the asymptotic distribution can be distorted. This happens
even in such a simple finite dimensional problem. Nevertheless, the
test that uses instruments that are made orthogonal to functions in
$\mathcal{R}_{0}$ is not affected. The paper will discuss the empirical
procedure used to construct such instruments and will study its properties
via asymptotic analysis and simulations. 

\begin{table}
\caption{Frequency of rejections. Results from 1000 simulations when the number
of covariates $K=10$ and the true model is Lin3 (only the first three
variables enter the model and they do so in a linear way). The column
No $\Pi$ denotes test results using instruments in $\mathcal{H}^{K}\setminus\mathcal{R}_{0}$.
The column $\Pi$ denotes test results using instruments in $\mathcal{H}^{K}\setminus\mathcal{R}_{0}$
that have been made orthogonal to the functions in $\mathcal{R}_{0}$
using the empirical procedure discussed in this paper. The signal
to noise ratio is denoted by $\sigma_{\mu/\varepsilon}^{2}$, while
all the variables have equal pairwise correlation equal to $\rho$.
The column Size denotes the theoretical size of the test. A value
in columns No $\Pi$ and $\Pi$ smaller than $0.05$ indicates that
the test procedure rejects less often than it should. }
\label{Table_motivatingExample}%
\begin{tabular}{ccccccc}
 &  &  & \multicolumn{4}{c}{Lin3}\tabularnewline
 &  &  & \multicolumn{2}{c}{$n$=100} & \multicolumn{2}{c}{$n$=1000}\tabularnewline
$\rho$ & $\sigma_{\mu/\varepsilon}^{2}$ & Size  & No $\Pi$ & $\Pi$ & No $\Pi$ & $\Pi$\tabularnewline
0 & 1 & 0.05 & 0.03 & 0.06 & 0.05 & 0.05\tabularnewline
0.75 & 1 & 0.05 & 0.02 & 0.05 & 0.03 & 0.05\tabularnewline
\end{tabular}
\end{table}

The situation gets really worse with other simulation designs that
can be encountered in applications and details are given in Section
\ref{Section_simulations}. More generally, $\mathcal{R}_{0}$ can
be a high dimensional subspace of $\mathcal{H}^{K}$ or even an infinite
dimensional one, e.g. the space of additive functions when $\mathcal{H}^{K}$
does not impose this additive restriction. In this case, it is unlikely
that functions in $\mathcal{H}^{K}\setminus\mathcal{R}_{0}$ are orthogonal
to functions in $\mathcal{R}_{0}$ and the distortion due to the nuisance
parameters will be larger than what is shown in Table \ref{Table_motivatingExample}. 

Here, orthogonal functions $h\in\mathcal{H}^{K}\setminus\mathcal{R}_{0}$
are constructed to asympotically satisfy 
\begin{equation}
P\partial^{2}\ell_{\mu_{0}}\nu h=0\label{EQ_orthogonalityCondition}
\end{equation}
for any $\nu\in\mathcal{R}_{0}$ when $\mu_{0}$ is inside $\mathcal{R}_{0}$.
The above display will allow us to carry out inferential procedures
as in cases previously considered in the literature. The challenge
is that the set of such orthogonal functions $h\in\mathcal{H}^{K}\setminus\mathcal{R}_{0}$
needs to be estimated. It is not clear before hand that estimation
leads to the same asymptotic distribution as if this set were known.
We show that this is the case. Suppose that $\left\{ \hat{h}^{\left(r\right)}:r=1,2,...,R\right\} $
is a set of such estimated orthogonal functions using the method to
be spelled out in this paper. The test statistic is 
\begin{equation}
\hat{S}_{n}=\frac{1}{R}\sum_{r=1}^{R}\left(\sqrt{n}P_{n}\partial\ell_{\mu_{0n}}\hat{h}^{\left(r\right)}\right)^{2}.\label{EQ_intro2testStat}
\end{equation}
We show that its asymptotic distribution can be easily simulated. 

Next, some basics of RKHS are reviewed and some notation is fixed.
Restrictions for functions in $\mathcal{H}^{K}$ are discussed and
finally the estimation problems is defined.

\subsection{Additional Notation and Basic Facts about Reproducing Kernel Hilbert
Spaces\label{Section_reviewRKHS}}

Recall that a RKHS $\mathcal{H}$ on some set $\mathcal{X}$ is a
Hilbert space where the evaluation functionals are bounded. A RKHS
of bounded functions is uniquely generated by a centered Gaussian
measure with covariance $C$ (e.g., Li and Linde, 1999) and $C$ is
usually called the (reproducing) kernel of $\mathcal{H}$. We consider
covariance functions with representation
\begin{equation}
C\left(s,t\right)=\sum_{v=1}^{\infty}\lambda_{v}^{2}\varphi_{v}\left(s\right)\varphi_{v}\left(t\right),\label{EQ_covSeriesRepresentation}
\end{equation}
for linearly independent functions $\varphi_{v}:\mathcal{X}\rightarrow\mathbb{R}$
and coefficients $\lambda_{v}$ such that $\sum_{v=1}^{\infty}\lambda_{v}^{2}\varphi_{v}^{2}\left(s\right)<\infty$.
Here, linear independent means that if there is a sequence of real
numbers $\left(f_{v}\right)_{v\geq1}$ such that $\sum_{v=1}^{\infty}f_{v}^{2}/\lambda_{v}^{2}<\infty$
and $\sum_{v=1}^{\infty}f_{v}\varphi_{v}\left(s\right)=0$ for all
$s\in\mathcal{X}$, then $f_{v}=0$ for all $v\geq1$. The coefficients
$\lambda_{v}^{2}$ would be the eigenvalues of (\ref{EQ_covSeriesRepresentation})
if the functions $\varphi_{v}$ were orthonormal, but this is not
implied by the above definition of linear independence. The RKHS $\mathcal{H}$
is the completion of the set of functions representable as $f\left(x\right)=\sum_{v=1}^{\infty}f_{v}\varphi_{v}\left(x\right)$
for real valued coefficient $f_{v}$ such that $\sum_{v=1}^{\infty}f_{v}^{2}/\lambda_{v}^{2}<\infty$.
Equivalently, $f\left(x\right)=\sum_{j=1}^{\infty}\alpha_{j}C\left(s_{j},x\right)$,
for coefficients $s_{j}$ in $\mathcal{X}$ and real valued coefficients
$\alpha_{j}$ satisfying $\sum_{j=1}^{\infty}\alpha_{i}\alpha_{j}C\left(s_{i},s_{j}\right)<\infty$.
Moreover, for $C$ in (\ref{EQ_covSeriesRepresentation}),
\begin{equation}
\sum_{j=1}^{\infty}\alpha_{j}C\left(s_{j},x\right)=\sum_{v=1}^{\infty}\left(\sum_{j=1}^{\infty}\alpha_{j}\lambda_{v}^{2}\varphi_{v}\left(s_{j}\right)\right)\varphi_{v}\left(x\right)=\sum_{v=1}^{\infty}f_{v}\varphi_{v}\left(x\right)\label{EQ_functionRepresentationEquality}
\end{equation}
by obvious definition of the coefficients $f_{v}$. The change of
summation is possible by the aforementioned restrictions on the coefficients
$\lambda_{v}$ and functions $\varphi_{v}$. The inner product in
$\mathcal{H}$ is denoted by $\left\langle \cdot,\cdot\right\rangle _{\mathcal{H}}$
and satisfies $f\left(x\right)=\left\langle f,C\left(x,\cdot\right)\right\rangle _{\mathcal{H}}$.
This implies the reproducing kernel property $C\left(s,t\right)=\left\langle C\left(s,\cdot\right),C\left(t,\cdot\right)\right\rangle _{\mathcal{H}}$.
Therefore, the square of the RKHS norm is defined in the two following
equivalent ways 
\begin{equation}
\left|f\right|_{\mathcal{H}}^{2}=\sum_{v=1}^{\infty}\frac{f_{v}^{2}}{\lambda_{v}^{2}}=\sum_{i,j=1}^{\infty}\alpha_{i}\alpha_{j}C\left(s_{i},s_{j}\right)\label{EQ_RHKS_normsEquivalences}
\end{equation}
Throughout, the unit ball of $\mathcal{H}$ will be denoted by $\mathcal{H}\left(1\right):=\left\{ f\in\mathcal{H}:\left|f\right|_{\mathcal{H}}\leq1\right\} $.

The additive RKHS is generated by the Gaussian measure with covariance
function $C_{\mathcal{H}^{K}}\left(s,t\right)=\sum_{k=1}^{K}C^{\left(k\right)}\left(s^{\left(k\right)},t^{\left(k\right)}\right)$,
where $C^{\left(k\right)}\left(s^{\left(k\right)},t^{\left(k\right)}\right)$
is a covariance function on $\mathcal{X}\times\mathcal{X}$ (as $C$
in (\ref{EQ_covSeriesRepresentation})) and $s^{\left(k\right)}$
is the $k^{th}$ element in $s\in\mathcal{X}^{K}$. The RKHS of additive
functions is denoted by $\mathcal{H}^{K}$, which is the set of functions
as in (\ref{EQ_additiveFunction}) such that $f^{\left(k\right)}\in\mathcal{H}$
and $\sum_{k=1}^{K}\left|f^{\left(k\right)}\right|_{\mathcal{H}}^{2}<\infty$.
For such functions, the inner product is $\left\langle f,g\right\rangle _{\mathcal{H}^{K}}=\sum_{k=1}^{K}\left\langle f^{\left(k\right)},g^{\left(k\right)}\right\rangle _{\mathcal{H}}$,
where - for ease of notation - the individual RKHS are supposed to
be the same. However, in some circumstances, it can be necessary to
make the distinction between the spaces (see Example \ref{Example_H0AdditiveH1General}
in Section \ref{Section_testingFunctionalRestrictions}). The norm
$\left|\cdot\right|_{\mathcal{H}^{K}}$ on $\mathcal{H}^{K}$ is the
one induced by the inner product. 

Within this scenario, the space $\mathcal{H}^{K}$ restricts functions
to be additive, where these additive functions in $\mathcal{H}$ can
be multivariate functions. 

\begin{example}\label{Example_d-Dimensional}Suppose that $K=1$
and $\mathcal{X}=\left[0,1\right]^{d}$ ($d>1$) (only one additive
function, which is multivariate). Let $C\left(s,t\right)=\exp\left\{ -a\sum_{j}\left|s_{j}-t_{j}\right|^{2}\right\} $
where $s_{j}$ is the $j^{th}$ element in $s\in\left[0,1\right]^{d}$,
and $a>0$. Then, the RKHS $\mathcal{H}$ is dense in the space of
continuous bounded functions on $\left[0,1\right]^{d}$ (e.g., Christmann
and Steinwart, 2007). A (kernel) $C$ with such property is called
universal.\end{example}

The framework also covers the case of functional data because $\mathcal{X}$
is a compact subset of a Banach space (e.g., Bosq, 2000). Most problems
of interest where the unknown parameter $\mu$ is a smooth function
are covered by the current scenario. 

\subsection{The Estimation Problem\label{Section_theGoal}}

Estimation will be considered for models in $\mathcal{H}^{K}\left(B\right):=\left\{ f\in\mathcal{H}^{K}:\left|f\right|_{\mathcal{H}^{K}}\leq B\right\} $,
where $B<\infty$ is a fixed constant. The goal is to find 
\begin{equation}
\mu_{n}=\arg\inf_{\mu\in\mathcal{H}^{K}\left(B\right)}P_{n}\ell_{\mu},\label{EQ_theEstimator}
\end{equation}
i.e. the minimizer with respect to $\mu\in\mathcal{H}^{K}\left(B\right)$
of the loss function $P_{n}\ell_{\mu}$. 

\begin{example}\label{Example_sqaureErrorLoss}Let $\ell_{\mu}\left(z\right)=\left|y-\mu\left(x\right)\right|^{2}$
so that 
\[
P_{n}\ell_{\mu}=\frac{1}{n}\sum_{i=1}^{n}\ell_{\mu}\left(Z_{i}\right)=\frac{1}{n}\sum_{i=1}^{n}\left|Y_{i}-\mu\left(X_{i}\right)\right|^{2}.
\]
By duality, we can also use $P_{n}\ell_{\mu}+\rho_{B,n}\left|\mu\right|_{\mathcal{H}^{K}}^{2}$
with sample dependent Lagrange multiplier $\rho_{B,n}$ such that
$\left|\mu\right|_{\mathcal{H}^{K}}\leq B$.\end{example}

For the square error loss the solution is just a ridge regression
estimator with (random) ridge parameter $\rho_{B,n}\geq0$. Interest
is not restricted to least square problems.

\begin{example}Consider the negative log-likelihood where $Y$ is
a duration, and $\mathbb{E}\left[Y|X\right]=\exp\left\{ \mu\left(X\right)\right\} $
is the hazard function. Then, $\ell_{\mu}\left(z\right)=y\exp\left\{ \mu\left(x\right)\right\} -\mu\left(x\right)$
so that $P_{n}\ell_{\mu}=\frac{1}{n}\sum_{i=1}^{n}Y_{i}\exp\left\{ \mu\left(X_{i}\right)\right\} -\mu\left(X_{i}\right)$.\end{example}

Even though the user might consider likelihood estimation, there is
no concept of ``true model'' in this paper. The target is the population
estimate 
\begin{equation}
\mu_{0}=\arg\inf_{\mu\in\mathcal{H}^{K}\left(B\right)}P\ell_{\mu}.\label{EQ_targetFunction}
\end{equation}
We shall show that this minimizer always exists and is unique under
regularity conditions on the loss because $\mathcal{H}^{K}\left(B\right)$
is closed. 

Theorem 1 in Sch\"{o}lkopf et al. (2001) says that the solution to
the penalized problem takes the form $\mu_{n}\left(x\right)=\sum_{i=1}^{n}\alpha_{i}C\left(X_{i},x\right)$
for real valued coefficients $\alpha_{i}$. Hence, even if the parameter
space where the estimator lies is infinite dimensional, $\mu_{n}$
is not. This fact will be used without further mention in the matrix
implementation of the testing problem.

\subsection{The Testing Problem \label{Section_TestingProblem}}

Inference needs to be conducted on the estimator in (\ref{EQ_theEstimator}).
To this end, consider inference on functional restrictions possibly
allowing $\mu$ not to be fully specified under the null. Within this
framework, tests based on the moment equation $P_{n}\partial\ell_{\mu}h$
for suitable test functions $h$ are natural (recall (\ref{EQ_intro2testStat})).
Let $\mathcal{R}_{0}\subset\mathcal{H}^{K}$ be the RKHS with kernel
$C_{\mathcal{R}_{0}}$. Suppose that we can write $C_{\mathcal{H}^{K}}=C_{\mathcal{R}_{0}}+C_{\mathcal{R}_{1}}$,
where $C_{\mathcal{R}_{1}}$ is some suitable covariance function.
Under the null hypothesis we suppose that $\mu_{0}\in\mathcal{R}_{0}$
($\mu_{0}$ as in (\ref{EQ_targetFunction})). Under the alternative,
$\mu_{0}\notin\mathcal{R}_{0}$. Define 
\begin{equation}
\mu_{n0}:=\arg\inf_{\mu\in\mathcal{R}_{0}\left(B\right)}P_{n}\ell_{\mu},\label{EQ_restrictedEstimator}
\end{equation}
where $\mathcal{R}_{0}\left(B\right)=\mathcal{R}_{0}\cap\mathcal{H}^{K}\left(B\right)$.
This is the estimator under the null hypothesis. For this estimation,
we use the kernel $C_{\mathcal{R}_{0}}$. The goal is to consider
the quantity in (\ref{EQ_sampleMomentEquation}) with suitable $h\in\mathcal{R}_{1}$. 

\subsubsection{Matrix Implementation\label{Section_matrixImplementation}}

We show how to construct the statistic in (\ref{EQ_intro2testStat})
using matrix notation. Consider the regression problem under the square
error loss: nonlinear least squares. Let $\mathbf{C}$ be the $n\times n$
matrix with $\left(i,j\right)$ entry equal to $C_{\mathcal{H}^{K}}\left(X_{i},X_{j}\right)$,
$\mathbf{y}$ the $n\times1$ vector with $i^{th}$ entry equal to
$Y_{i}$. The penalized estimator is the $n\times1$ vector $\mathbf{a}:=\left(\mathbf{C}+\rho\mathbf{I}\right)^{-1}\mathbf{y}$.
Here, $\rho$ can be chosen such that $\mathbf{a}^{T}\mathbf{C}\mathbf{a}\leq B^{2}$
so that the constraint is satisfied: $\mu_{n}\left(\cdot\right)=\sum_{i=1}^{n}a_{i}C_{\mathcal{H}^{K}}\left(X_{i},\cdot\right)$
is in $\mathcal{H}^{K}\left(B\right)$; here $a_{i}$ is the $i^{th}$
entry in $\mathbf{a}$ and the superscript $^{T}$ is used for transposition.
For other problems the solution is still linear, but the coefficients
usually do not have a closed form. For the regression problem under
the square error loss, if the constraint $\left\{ \mu\in\mathcal{H}^{K}\left(B\right)\right\} $
is binding, the $\rho$ that satisfies the constraint is given by
the solution of 
\[
\sum_{i=1}^{n}\left(\mathbf{y}^{T}\mathbf{Q}_{i}\right)^{2}\frac{\kappa_{i}}{\kappa_{i}+\rho}=B^{2}
\]
where $\mathbf{Q}_{i}$ is the $i^{th}$ eigenvector of $\mathbf{C}$
and here $\kappa_{i}$ is the corresponding eigenvalue. 

The restricted estimator has the same solution with $\mathbf{C}$
replaced by $\mathbf{C}_{0}$ which is the matrix with $\left(i,j\right)$
entry $C_{\mathcal{R}_{0}}\left(X_{i},X_{j}\right)$. For the square
error loss, let $\mathbf{e}_{0}=\mathbf{y}-\mathbf{C}_{0}\mathbf{a}_{0}$
be the vector or residuals under the null. (For other problems, $\mathbf{e}_{0}$
is the vector of generalized residuals, i.e. the $i^{th}$ entry in
$\mathbf{e}_{0}$ is $\partial\ell_{\mu_{0,n}}\left(Z_{i}\right)$.)
Under the alternative we have the covariance kernel $C_{\mathcal{R}_{1}}$.
Denote by $\mathbf{C}_{1}$ the matrix with $\left(i,j\right)$ entry
$C_{\mathcal{R}_{1}}\left(X_{i},X_{j}\right)$. Let $\mathbf{S}$
be the diagonal matrix with $\left(i,i\right)$ diagonal entry equal
to $\partial^{2}\ell_{\mu_{0n}}\left(Z_{i}\right)$. In our case,
this entry can be taken to be one, as the second derivative of the
square error loss is a constant. However, the next step is the same
regardless of the loss function, as we only need to project the functions
in $\mathcal{R}_{1}$ onto $\mathcal{R}_{0}$ and consider the orthogonal
part. This ensures that the sample version of the orthogonality condition
(\ref{EQ_orthogonalityCondition}) is satisfied. We regress each column
of $\mathbf{C}_{1}$ on the columns of $\mathbf{C}_{0}$. We denote
by $\mathbf{C}_{1}^{\left(r\right)}$ the $r^{th}$ column in $\mathbf{C}_{1}$.
We approximately project $\mathbf{C}_{1}^{\left(r\right)}$ onto the
column space spanned by $\mathbf{C}_{0}$ minimizing the loss function
\[
\left(\mathbf{C}_{1}^{\left(r\right)}-\mathbf{C}_{0}\mathbf{b}^{\left(r\right)}\right)^{T}\mathbf{S}\left(\mathbf{C}_{1}^{\left(r\right)}-\mathbf{C}_{0}\mathbf{b}^{\left(r\right)}\right)+\rho\left(\mathbf{b}^{\left(r\right)}\right)^{T}\mathbf{C}_{0}\mathbf{b}^{\left(r\right)}.
\]
Here $\rho$ is chosen to go to zero with the sample size (Theorem
\ref{Theorem_projectionScoreTest} and Corollary \ref{Corollary_feasibleStatKnownHessian}).
In applications, we may just use a subset of $R$ columns from $\mathbf{C}_{1}$
and to avoid notational trivialities, say the first $R$. The solution
for all $r=1,2,...,R$ is 
\[
\mathbf{b}^{\left(r\right)}=\left(\mathbf{C}_{0}+\rho\mathbf{S}^{-1}\right)^{-1}\mathbf{C}_{1}^{\left(r\right)},
\]
and can be verified substituting it in the first order conditions.
Let the residual vector from this regression be $\mathbf{e}_{1}^{\left(r\right)}$.
In sample, this is orthogonal to the column space of $\mathbf{C}_{0}$
when $\rho=0$. We define the $r^{th}$ instruments by $\hat{\mathbf{h}}^{\left(r\right)}=\mathbf{e}_{1}^{\left(r\right)}$.
The test statistic is $\hat{S}_{n}=\sum_{r=1}^{R}\left(\mathbf{e}_{0}^{T}\mathbf{\hat{\mathbf{h}}}^{\left(r\right)}\right)^{2}/R$.
Under regularity conditions, if the true parameter $\mu_{0}$ lies
inside $\mathcal{R}_{0}\cap\mathcal{H}^{K}\left(B\right)$, the $R\times1$
vector $\mathbf{s}=\left(\mathbf{e}_{0}^{T}\hat{\mathbf{h}}^{\left(1\right)},\mathbf{e}_{0}^{T}\hat{\mathbf{h}}^{\left(2\right)},...,\mathbf{e}_{0}^{T}\hat{\mathbf{h}}^{\left(R\right)}\right)^{T}$
is asymptotically Gaussian for any $R$ and its covariance matrix
is consistently estimated by $\left(n^{-1}\mathbf{e}_{0}^{T}\mathbf{e}_{0}\right)\sum_{k,l=1}^{R}\left[n^{-1}\left(\mathbf{\hat{\mathbf{h}}}^{\left(k\right)}\right)^{T}\mathbf{\hat{\mathbf{h}}}^{\left(l\right)}\right]$.
The distribution of $\hat{S}_{n}$ can be simulated from the process
$\sum_{l=1}^{R}\omega_{n,l}N_{l}^{2}$, where the random variables
$N_{l}$ are i.i.d. standard normal and the real valued coefficients
$\omega_{n,l}$ are eigenvalues of the estimated covariance matrix. 

\paragraph{Operational remarks.}
\begin{enumerate}
\item If $C_{\mathcal{R}_{1}}$ is not explicitly given, we can set $C_{\mathcal{R}_{1}}=C_{\mathcal{H}^{K}}$
in the projection step. 
\item Instead of $\mathbf{C}_{1}$ $n\times n$ we can use a subset of the
columns of $\mathbf{C}_{1}$, e.g. $R<n$ columns. Each column is
an instrument. 
\item The $r^{th}$column of $\mathbf{C}_{1}$ can be replaced by an $n\times1$
vector with $i^{th}$ entry $C_{\mathcal{R}_{1}}\left(X_{i},z_{r}\right)$
where $z_{r}$ is an arbitrary element in $\mathcal{X}^{K}$.
\item To keep the test functions homogeneous, we can set the $r^{th}$ column
of $\mathbf{C}_{1}$ to have $i^{th}$ entry equal to $C_{\mathcal{H}^{K}}\left(X_{i},z_{r}\right)/\sqrt{C_{\mathcal{H}^{K}}\left(z_{r},z_{r}\right)}$;
note that $h^{\left(r\right)}\left(\cdot\right):=C_{\mathcal{H}^{K}}\left(\cdot,z_{r}\right)/\sqrt{C_{\mathcal{H}^{K}}\left(z_{r},z_{r}\right)}$
satisfies $\left|h^{\left(r\right)}\right|_{\mathcal{H}^{K}}=1$ by
the reproducing kernel property.
\item When the series expansion (\ref{EQ_covSeriesRepresentation}) for
the covariance is known, we can use the elements in the expansion.
For example, suppose $\mathcal{V}_{0}$ and $\mathcal{V}_{1}$ are
mutually exclusive subsets of the natural numbers such that $C_{\mathcal{R}_{j}}\left(s,t\right)=\sum_{v\in\mathcal{V}_{j}}\lambda_{v}\varphi_{v}\left(s\right)\varphi_{v}\left(t\right)$
for $j\in\left\{ 0,1\right\} $. We can directly ``project'' the
elements in $\left\{ \lambda_{v}^{1/2}\varphi_{v}:v\in\mathcal{V}_{1}\right\} $
onto the linear span of $\left\{ \lambda_{v}^{1/2}\varphi_{v}:v\in\mathcal{V}_{0}\right\} $
by ridge regression with penalty $\rho$. For $\mathcal{V}_{j}$ of
finite but increasing cardinality, the procedure covers sieve estimators
with restricted coefficients. Note that $h^{\left(r\right)}=\lambda_{r}^{1/2}\varphi_{r}$
satisfies $\left|h^{\left(r\right)}\right|_{\mathcal{H}^{K}}=1$ for
$r\in\mathcal{V}_{1}$.
\end{enumerate}

\paragraph{Additional remarks.}

The procedure can be seen as a J-Test where the instruments are given
by the $\hat{\mathbf{h}}^{\left(r\right)}$'s. Given that the covariance
matrix of the vector $\mathbf{s}$ can be high dimensional (many instruments
for large $R$) we work directly with the unstandardized statistic.
This is common in some high dimensional problems, as it is the case
in functional data analysis. 

We could replace $\hat{S}_{n}$ with $\max_{r\leq R}\mathbf{e}_{0}^{T}\hat{\mathbf{h}}^{\left(r\right)}$.
The maximum of correlated Gaussian random variables can be simulated
or approximated but it might be operationally challenging (Hartigan,
2014, Theorem 3.4). 

The rest of the paper provides details and justification for the estimation
and testing procedure. The theoretical justification beyond simple
heuristics is technically involved. Section \ref{Section_simulations}
(Tables \ref{Table_simulationSubset} and \ref{Table_simulationSubsetCov})
will show that failing to use the projection procedure discussed in
this paper leads to poor results. Additional details can be found
in Appendix 2 (supplementary material).

\section{Asymptotic Analysis\label{Section_asymptoticAnalysis}}

\subsection{Conditions for Basic Analysis\label{Section_Conditions}}

Throughout the paper, $\lesssim$ means that the l.h.s. is bounded
by an absolute constant times the r.h.s.. 

\begin{condition}\label{Condition_kernel}The set $\mathcal{H}$
is a RKHS on a compact subset of a separable Banach space $\mathcal{X}$,
with continuous uniformly bounded kernel $C$ admitting an expansion
(\ref{EQ_covSeriesRepresentation}), where $\lambda_{v}^{2}\lesssim v^{-2\eta}$
with exponent $\eta>1$ and with linearly independent continuous uniformly
bounded functions $\varphi_{v}:\mathcal{X}\rightarrow\mathbb{R}$.
If each additive component has a different covariance kernel, the
condition is meant to apply to each of them individually.\end{condition}

Attention is restricted to loss functions satisfying the following,
though generalizations will be considered in Section \ref{Section_lossExtensions}.
Recall the loss $L\left(z,t\right)$ from Section \ref{Section_motivation}.
Let $\bar{B}:=c_{K}B$ where $c_{K}:=\max_{s\in\mathcal{X}^{K}}\sqrt{C_{\mathcal{H}^{K}}\left(s,s\right)}$.
Define $\Delta_{k}\left(z\right):=\max_{\left|t\right|\leq\bar{B}}\left|\partial^{k}L\left(z,t\right)/\partial t^{k}\right|$
for $k=0,1,2,..$. 

\begin{condition}\label{Condition_Loss} The loss $L\left(z,t\right)$
is non-negative, twice continuously differentiable for real $t$ in
an open set containing $\left[-\bar{B},\bar{B}\right]$, and $\inf_{z,t}d^{2}L\left(z,t\right)/dt^{2}>0$
for $z\in\mathcal{Z}$ and $t\in\left[-\bar{B},\bar{B}\right]$. Moreover,
$P\left(\Delta_{0}+\Delta_{1}^{p}+\Delta_{2}^{p}\right)<\infty$ for
some $p>2$. \end{condition} 

The data are allowed to be weakly dependent, but restricted to uniform
regularity.

\begin{condition}\label{Condition_Dependence}The sequence $\left(Z_{i}\right)_{i\in\mathbb{Z}}$
($Z_{i}=\left(Y_{i},X_{i}\right)$) is stationary, with beta mixing
coefficient $\beta\left(i\right)\lesssim\left(1+i\right)^{-\beta}$
for $\beta>p/\left(p-2\right)$, where $p$ is as in Condition \ref{Condition_Loss}.\end{condition}

Remarks on the conditions can be found in Section \ref{Section_RemarksConditions}. 

\subsection{Basic Results}

This section shows the consistency and some basic convergence in distribution
of the estimator. These results can be viewed as a review, except
for the fact that we allow for dependent random variables. We also
provide details regarding the relation between constrained, and penalized
estimators and convergence rates. The usual penalized estimator is
defined as 
\begin{equation}
\mu_{n,\rho}=\arg\inf_{\mu\in\mathcal{H}^{K}}P_{n}\ell_{\mu}+\rho\left|\mu\right|_{\mathcal{H}^{K}}^{2}\label{EQ_penalizedEstimator.}
\end{equation}
for $\rho\geq0$. As mentioned in Example \ref{Example_sqaureErrorLoss},
suitable choice of $\rho$ leads to the constrained estimator. Throughout,
${\rm int}\left(\mathcal{H}^{K}\left(B\right)\right)$ will denote
the interior of $\mathcal{H}^{K}\left(B\right)$. 

\begin{theorem}\label{Theorem_consistency}Suppose that Conditions
\ref{Condition_kernel}, \ref{Condition_Loss}, and \ref{Condition_Dependence}
hold. The population minimizer in (\ref{EQ_targetFunction}) is unique
up to an equivalence class in $L_{2}$. 
\begin{enumerate}
\item There is a random $\rho=\rho_{B,n}$ such that $\rho=O_{p}\left(n^{-1/2}\right)$,
$\mu_{n,\rho}=\mu_{n}$ and if $\mu_{0}\in\mathcal{H}^{K}\left(B\right)$,
$\left|\mu_{n}-\mu_{0}\right|_{\infty}\rightarrow0$ in probability
where $\mu_{n}$ and $\mu_{n,\rho}$ are as in (\ref{EQ_theEstimator})
and (\ref{EQ_penalizedEstimator.}).
\item Consider (\ref{EQ_theEstimator}). We also have that $\left|\mu_{n}-\mu_{0}\right|_{2}=O_{p}\left(n^{-\left(2\eta-1\right)/\left(4\eta\right)}\right)$,
and if $\mathcal{H}^{K}$ is finite dimensional the r.h.s. is $O_{p}\left(n^{-1/2}\right)$.
\item Consider possibly random $\rho=\rho_{n}$ such that $\rho\rightarrow0$
and $\rho n^{1/2}\rightarrow\infty$ in probability. Suppose that
there is a finite $B$ such that $\mu_{0}\in{\rm int}\left(\mathcal{H}^{K}\left(B\right)\right)$.
Then, $\left|\mu_{n,\rho}-\mu_{0}\right|_{\mathcal{H}^{K}}\rightarrow0$
in probability, and in consequence $\left|\mu_{n,\rho}\right|_{\mathcal{H}^{K}}<B$
with probability going to one. 
\item If $\mathcal{H}^{K}$ is infinite dimensional, there is a $\rho=\rho_{n}$
such that $\rho\rightarrow0$, $\rho n^{1/2}\nrightarrow\infty$,
and $\left|\mu_{n,\rho}-\mu_{0}\right|_{\infty}\rightarrow0$ in probability,
but $\left|\mu_{n,\rho}-\mu_{0}\right|_{\mathcal{H}^{K}}$ does not
converge to zero in probability. 
\end{enumerate}
All the above consistency statements also hold if $\mu_{n}$ and $\mu_{n,\rho}$
in (\ref{EQ_theEstimator}) and (\ref{EQ_penalizedEstimator.}) are
approximate minimizers in the sense that the following hold
\[
P_{n}\ell_{\mu_{n}}\leq\inf_{\mu\in\mathcal{H}^{K}\left(B\right)}P_{n}\ell_{\mu}+o_{p}\left(1\right)
\]
and 
\[
P_{n}\ell_{\mu_{n,\rho}}+\rho\left|\mu_{n,\rho}\right|_{\mathcal{H}^{K}}\leq\inf_{\mu\in\mathcal{H}^{K}}\left\{ P_{n}\ell_{\mu}+\rho\left|\mu\right|_{\mathcal{H}^{K}}\right\} +o_{p}\left(\rho\right).
\]
\end{theorem}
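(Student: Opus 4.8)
The plan is to establish the four items and the duality/equivalence statements in the natural logical order: first uniqueness, then the constrained–penalized link and consistency (items 1 and 2), and finally the subtler "inside vs. boundary" dichotomy (items 3 and 4). Uniqueness of $\mu_0$ up to $L_2$ equivalence follows from Condition \ref{Condition_Loss}: non-negativity and continuity give existence of a minimizer over the closed convex set $\mathcal{H}^K(B)$ (closedness because bounded balls of a RKHS with summable $\lambda_v^2$ are compact in the sup norm, via the Arzel\`a--Ascoli-type argument using Condition \ref{Condition_kernel}), and $\inf_{z,t} \partial^2 L/\partial t^2 > 0$ makes $t \mapsto L(z,t)$ strictly convex, hence $\mu \mapsto P\ell_\mu$ is strictly convex along $L_2$ directions; two minimizers would then have to agree in $L_2(P)$.

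For item 1, the idea is the standard Lagrangian/KKT argument: the constrained problem (\ref{EQ_theEstimator}) over the ball $\mathcal{H}^K(B)$ is a convex program, so by convex duality there is a multiplier $\rho = \rho_{B,n} \geq 0$ with $\mu_{n,\rho} = \mu_n$; complementary slackness gives $\rho = 0$ when the constraint is slack. The rate $\rho_{B,n} = O_p(n^{-1/2})$ should come from comparing first-order conditions: $P_n \partial\ell_{\mu_n} h + \rho\langle \mu_n, h\rangle_{\mathcal{H}^K} = 0$ for all $h$, while $P\partial\ell_{\mu_0} h = 0$ for all $h$ (interior condition, or a one-sided version on the boundary); plugging in $h = \mu_n$ or $h$ spanning the relevant direction and using $P_n \partial\ell_{\mu_0} h = O_p(n^{-1/2})$ from a CLT under Condition \ref{Condition_Dependence} (beta-mixing with the stated summable rate), together with the $o_p(1)$ consistency, isolates $\rho$. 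Sup-norm consistency $|\mu_n - \mu_0|_\infty \to 0$ then follows from the M-estimation argument: $P_n\ell_{\mu_n} \le P_n\ell_{\mu_0} + o_p(1)$ combined with a uniform law of large numbers over $\mathcal{H}^K(B)$ — which holds because $\mathcal{H}^K(B)$ is a Glivenko--Cantelli class for the bracketing entropy implied by $\lambda_v^2 \lesssim v^{-2\eta}$, $\eta > 1$, even under beta-mixing (e.g.\ via a blocking argument) — so $P\ell_{\mu_n} \le P\ell_{\mu_0} + o_p(1)$, and strict convexity upgrades this to $|\mu_n - \mu_0|_2 \to 0$, then to sup norm by the compact embedding.

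Item 2 (the $L_2$ rate $n^{-(2\eta-1)/(4\eta)}$) is where the real work is: this is a standard but delicate nonparametric rate computation balancing the stochastic term against the approximation/bias term. The plan is a "basic inequality" peeling argument — from $P_n\ell_{\mu_n} \le P_n\ell_{\mu_0}$ and a quadratic lower bound on the excess risk $P\ell_\mu - P\ell_{\mu_0} \gtrsim |\mu-\mu_0|_2^2$ (again from $\inf \partial^2 L/\partial t^2 > 0$), one gets $|\mu_n-\mu_0|_2^2 \lesssim (P_n - P)(\ell_{\mu_0} - \ell_{\mu_n})$; bounding the right side by a localized empirical-process maximal inequality (Dudley entropy integral with the $v^{-2\eta}$ eigenvalue decay, adjusted for dependence via a mixing maximal inequality or the $\beta$-mixing coupling) and solving the resulting fixed-point inequality yields the exponent $(2\eta-1)/(4\eta)$; in finite dimensions the entropy is bounded and the fixed point gives $n^{-1/2}$. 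For items 3 and 4, the dichotomy is driven by whether $\mu_0$ is in the interior: when $\mu_0 \in \mathrm{int}(\mathcal{H}^K(B))$ and $\rho_n \to 0$ with $\rho_n n^{1/2} \to \infty$, the penalty is asymptotically negligible relative to the estimation noise yet still dominates it enough to force $|\mu_{n,\rho}|_{\mathcal{H}^K}$ to stay below $B$ — one shows $|\mu_{n,\rho}|_{\mathcal{H}^K}^2 \le |\mu_0|_{\mathcal{H}^K}^2 + o_p(1)$ from the penalized basic inequality and $P_n\partial\ell_{\mu_0}\mu_0 = O_p(n^{-1/2})$, and the gap $\rho_n n^{1/2}\to\infty$ is exactly what kills the cross term, giving RKHS-norm consistency; whereas in item 4, choosing $\rho_n$ too small (not satisfying $\rho_n n^{1/2}\to\infty$) leaves the stochastic fluctuation in the RKHS norm un-damped in infinite dimensions, so $|\mu_{n,\rho}|_{\mathcal{H}^K}$ picks up an irreducible $O_p(1)$ component even though $L_\infty$/$L_2$ consistency still holds — this is demonstrated by exhibiting the fluctuation explicitly in the eigenbasis $\{\varphi_v\}$, where the high-frequency coefficients contribute a divergent or non-vanishing RKHS norm. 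The main obstacle throughout is the rate in item 2: getting the entropy bound, the localization, and the dependence correction to combine into exactly the claimed exponent, and verifying that the approximate-minimizer slack $o_p(1)$ (resp.\ $o_p(\rho)$) propagates harmlessly through every inequality — which it does because each step only ever uses the defining inequality up to a lower-order additive error.
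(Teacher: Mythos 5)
The central gap is in your handling of Point~1, specifically the claim $\rho_{B,n}=O_p(n^{-1/2})$. You propose to extract this from the KKT first-order condition $P_n\partial\ell_{\mu_n}h+2\rho\langle\mu_n,h\rangle_{\mathcal{H}^K}=0$ by setting $h=\mu_n$ and invoking the CLT. But decomposing $-P_n\partial\ell_{\mu_n}\mu_n=-(P_n-P)\partial\ell_{\mu_n}\mu_n-P\partial\ell_{\mu_n}\mu_n$, the empirical-process term is indeed $O_p(n^{-1/2})$, while the population remainder $P\partial\ell_{\mu_n}\mu_n=P\partial\ell_{\mu_n}(\mu_n-\mu_0)+P(\partial\ell_{\mu_n}-\partial\ell_{\mu_0})\mu_0$ contains the term $P(\partial\ell_{\mu_n}-\partial\ell_{\mu_0})\mu_0$, which by Taylor and Cauchy–Schwarz is only $O_p(|\mu_n-\mu_0|_2)=O_p(n^{-(2\eta-1)/(4\eta)})$ — strictly slower than $n^{-1/2}$ for any finite $\eta$. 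So your route yields at best $\rho_{B,n}=O_p(n^{-(2\eta-1)/(4\eta)})$, which is too weak both for the statement of Point~1 and (critically) for Point~4, where the whole argument hinges on $\rho_{B,n}n^{1/2}\not\to\infty$. The paper avoids this trap entirely: it never estimates $\rho_{B,n}$ from the first-order condition. Instead it uses the Steinwart–Christmann stability bound $|\mu_{0,\rho}-\mu_{n,\rho}|_{\mathcal{H}^K}\le\rho^{-1}\left|(P_n-P)\partial\ell_{\mu_{0,\rho}}\Phi\right|_{\mathcal{H}^K}$, expands the canonical feature map $\Phi$ in the eigenbasis $\{\lambda_v\varphi_v\}$ to bound the right-hand side by $L_n/(\rho n^{1/2})$ with $L_n$ tight (summability of $\lambda_v$ is exactly what makes this work), then chooses $\rho_n=L_n/(\epsilon n^{1/2})=O_p(n^{-1/2})$ to force $|\mu_{n,\rho_n}|_{\mathcal{H}^K}\le B$, and finally uses monotonicity of $\rho\mapsto|\mu_{n,\rho}|_{\mathcal{H}^K}$ to conclude that the actual multiplier $\rho_{B,n}$ (the smallest $\rho$ making the constraint hold) satisfies $\rho_{B,n}\le\rho_n=O_p(n^{-1/2})$. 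This existence-plus-monotonicity argument sidesteps the cross term entirely.

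The rest of your plan is broadly aligned with the paper's. Uniqueness and the $L_2$ rate in Point~2 follow exactly the van der Vaart–Wellner route you sketch (quadratic lower bound on excess risk from $\inf\partial^2L>0$, bracketing entropy of order $\epsilon^{-2/(2\eta-1)}$ from the eigenvalue decay, a Doukhan–Massart–Rio maximal inequality for the dependent case, and the fixed-point $r_n^2\phi(r_n^{-1})\lesssim n^{1/2}$ giving $r_n\asymp n^{(2\eta-1)/(4\eta)}$); there is no bias term to balance here, only the stochastic modulus, so that phrasing is slightly off but harmless. For Point~3 your basic-inequality bound $|\mu_{n,\rho}|_{\mathcal{H}^K}^2\le|\mu_0|_{\mathcal{H}^K}^2+O_p(\rho^{-1}n^{-1/2})$ combined with $L_2$ consistency can be upgraded to strong $\mathcal{H}^K$-convergence via weak-plus-norm convergence in Hilbert space, though you would need an a priori bound keeping $\mu_{n,\rho}$ in a fixed ball to justify the uniform CLT — the paper gets the same conclusion more directly from the Steinwart–Christmann inequality and $|\mu_{0,\rho}-\mu_0|_{\mathcal{H}^K}\to0$. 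For Point~4 the paper's argument is a two-line parallelogram computation ($\|\mu_n\|=B$ on the binding constraint, $\|\mu_0\|=B-\epsilon$, so $\|\mu_n-\mu_0\|_{\mathcal{H}^K}\ge\epsilon$), considerably simpler than the eigenbasis-fluctuation picture you gesture at, and it crucially uses the very $\rho_{B,n}=O_p(n^{-1/2})$ fact your Point~1 argument cannot supply.
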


The above result establishes the connection between the constrained
estimator $\mu_{n}$ in (\ref{EQ_theEstimator}) and the penalized
estimator $\mu_{n,\rho}$ in (\ref{EQ_penalizedEstimator.}). It is
worth noting that whether $\mathcal{H}^{K}$ is finite or infinite
dimensional, the estimator $\mu_{n}$ is equivalent to a penalized
estimator with penalty parameter $\rho$ going to zero relatively
fast (i.e. $\rho n^{1/2}\rightarrow\infty$ does not hold). However,
this only ensures uniform consistency and not consistency under the
RKHS norm $\left|\cdot\right|_{\mathcal{H}^{K}}$ (Point 3 in Theorem
\ref{Theorem_consistency}). For the testing procedure discussed in
this paper, we need the estimator to be equivalent to a penalized
one with penalty that converges to zero fast enough. This is achieved
working with the constrained estimator $\mu_{n}$.

Having established consistency, interest also lies in the distribution
of the estimator. We shall only consider the constrained estimator
$\mu_{n}$. To ease notation, for any arbitrary, but fixed real valued
functions $g$ and $g'$ on $\mathcal{Z}$ define $P_{1,j}\left(g,g'\right)=\mathbb{E}g\left(Z_{1}\right)g'\left(Z_{1+j}\right)$.
For suitable $g$ and $g'$, the quantity $\sum_{j\in\mathbb{Z}}P_{1,j}\left(g,g'\right)$
will be used as short notation for sums of population covariances.
We shall also use the additional condition $\left|\Delta_{3}\right|_{\infty}<\infty$,
where $\Delta_{k}\left(z\right)$ is as in Section \ref{Section_Conditions}. 

\begin{theorem}\label{Theorem_weakConvergence}Suppose Conditions
\ref{Condition_kernel}, \ref{Condition_Loss}, and \ref{Condition_Dependence}
hold. If $\mu_{0}\in{\rm int}\left(\mathcal{H}^{K}\left(B\right)\right)$,
then
\[
\sqrt{n}P_{n}\partial\ell_{\mu_{0}}h\rightarrow G\left(h\right),\,h\in\mathcal{H}^{K}\left(1\right)
\]
weakly, where $\left\{ G\left(h\right):h\in\mathcal{H}^{K}\left(1\right)\right\} $
is a mean zero Gaussian process with covariance function
\[
\mathbb{E}G\left(h\right)G\left(h'\right)=\sum_{j\in\mathbb{Z}}P_{1,j}\left(\partial\ell_{\mu_{0}}h,\partial\ell_{\mu_{0}}h'\right)
\]
for any $h,h'\in\mathcal{H}^{K}\left(1\right)$. 

Now, in addition to the above, also suppose that $\left|\Delta_{3}\right|_{\infty}<\infty$.
If $\mu_{n}\in\mathcal{H}^{K}\left(B\right)$ is an asymptotic minimizer
such that $P_{n}\ell_{\mu_{n}}\leq\inf_{\mu\in\mathcal{H}^{K}\left(B\right)}P_{n}\ell_{\mu}+o_{p}\left(n^{-1}\right)$,
and $\sup_{h\in\mathcal{H}^{K}\left(1\right)}P_{n}\partial\ell_{\mu_{n}}h=o_{p}\left(n^{-1/2}\right)$,
then, 
\[
\sqrt{n}P\partial^{2}\ell_{\mu_{0}}\left(\mu_{n}-\mu_{0}\right)h=\sqrt{n}P_{n}\partial\ell_{\mu_{0}}h+o_{p}\left(1\right),\,h\in\mathcal{H}^{K}\left(1\right).
\]
\end{theorem}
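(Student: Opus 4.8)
The plan is to split the argument into two parts: first the weak convergence of the empirical process $h \mapsto \sqrt{n} P_n \partial \ell_{\mu_0} h$, and second the asymptotic expansion relating $\sqrt{n} P \partial^2 \ell_{\mu_0} (\mu_n - \mu_0) h$ to that process. For the first part, I would view $\sqrt{n} P_n \partial \ell_{\mu_0} h = \sqrt{n}(P_n - P)\partial \ell_{\mu_0} h$ (the centering is legitimate because $\mu_0 \in \mathrm{int}(\mathcal{H}^K(B))$ forces the population first-order condition $P \partial \ell_{\mu_0} h = 0$ for all $h \in \mathcal{H}^K$) as a process indexed by the unit ball $\mathcal{H}^K(1)$. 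Using the reproducing property, $\partial \ell_{\mu_0}(z) h(x) = \langle h, \partial \ell_{\mu_0}(z) C_{\mathcal{H}^K}(x,\cdot)\rangle_{\mathcal{H}^K}$, so the process is linear in $h$ and governed by the random element $\partial \ell_{\mu_0}(Z) C_{\mathcal{H}^K}(X,\cdot) \in \mathcal{H}^K$. I would check that this element has finite second moment in $\mathcal{H}^K$: its squared norm is $\partial\ell_{\mu_0}(Z)^2 C_{\mathcal{H}^K}(X,X) \le \Delta_1(Z)^2 c_K^2$, which is integrable by Condition \ref{Condition_Loss} ($P\Delta_1^p < \infty$, $p>2$). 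Tightness of the process on $\mathcal{H}^K(1)$ then follows from the entropy bound for RKHS balls with eigenvalue decay $\lambda_v^2 \lesssim v^{-2\eta}$, $\eta>1$ (Condition \ref{Condition_kernel}), combined with a CLT for beta-mixing sequences whose mixing rate $\beta(i) \lesssim (1+i)^{-\beta}$ with $\beta > p/(p-2)$ matches the moment condition (Condition \ref{Condition_Dependence}); the relevant functional CLT is standard (e.g. Doukhan--Massart--Rio, or the version in van der Vaart--Wellner adapted to mixing). Finite-dimensional convergence is a straightforward mixing CLT, and the limiting covariance is the long-run variance $\sum_{j\in\mathbb{Z}} P_{1,j}(\partial\ell_{\mu_0} h, \partial\ell_{\mu_0} h')$, which is finite and summable by the same moment/mixing trade-off.

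For the second part, the expansion, the idea is the usual one-step argument from M-estimation. Since $\mu_n$ is an (approximate) minimizer over $\mathcal{H}^K(B)$ and $\mu_0$ lies in the interior, the near-stationarity condition $\sup_{h\in\mathcal{H}^K(1)} P_n \partial \ell_{\mu_n} h = o_p(n^{-1/2})$ is the sample first-order condition. I would expand $P_n \partial \ell_{\mu_n} h$ around $\mu_0$: writing $\partial \ell_{\mu_n}(z) h(x) = \partial \ell_{\mu_0}(z) h(x) + \partial^2 \ell_{\mu_0}(z)(\mu_n - \mu_0)(x) h(x) + \tfrac12 \partial^3 \ell_{\tilde\mu}(z)((\mu_n-\mu_0)(x))^2 h(x)$ for an intermediate $\tilde\mu$, where the third-derivative term is controlled uniformly in $h\in\mathcal{H}^K(1)$ by $|\Delta_3|_\infty < \infty$ together with the rate $|\mu_n - \mu_0|_\infty \to 0$ (Theorem \ref{Theorem_consistency}, point 1) and the $L_2$ rate $|\mu_n - \mu_0|_2 = O_p(n^{-(2\eta-1)/(4\eta)})$ (point 2). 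Taking expectations $P$ of the linearization and rearranging gives $\sqrt{n} P \partial^2 \ell_{\mu_0}(\mu_n - \mu_0) h = -\sqrt{n} P \partial \ell_{\mu_0} h + (\text{remainders})$; but $P\partial\ell_{\mu_0} h = 0$, and one then adds and subtracts $\sqrt{n}(P_n - P)$ applied to the linearized terms. The term $\sqrt{n}(P_n-P)\partial\ell_{\mu_0} h$ survives (this is the leading term on the right-hand side, up to sign — one absorbs the sign into the orientation of the first-order condition), while $\sqrt{n}(P_n - P)\partial^2\ell_{\mu_0}(\mu_n-\mu_0)h$ must be shown to be $o_p(1)$ uniformly in $h$.

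The main obstacle is exactly this last empirical-process term: $\sqrt{n}(P_n - P)\,[\partial^2 \ell_{\mu_0}\cdot(\mu_n - \mu_0)\cdot h]$, which is a stochastic-equicontinuity / Donsker-type statement over a shrinking neighbourhood. I would handle it by showing that the class $\{z \mapsto \partial^2\ell_{\mu_0}(z)(\mu - \mu_0)(x) h(x) : \mu \in \mathcal{H}^K(2B),\ h \in \mathcal{H}^K(1)\}$ is $P$-Donsker (which follows from the RKHS entropy bound under $\eta > 1$, the product of two bounded RKHS balls being again of small enough entropy, and $|\partial^2\ell_{\mu_0}|_p < \infty$ providing the envelope), so that the process is asymptotically equicontinuous in the $L_2(P)$ seminorm; then since $|(\mu_n - \mu_0) h|_2 \le |\mu_n-\mu_0|_\infty |h|_2 \to 0$ uniformly over $h\in\mathcal{H}^K(1)$ (using $|h|_\infty \lesssim |h|_{\mathcal{H}^K} \le 1$), the term is $o_p(1)$. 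This is essentially the invocation of Theorem 3.3.1 in van der Vaart and Wellner already flagged in the paper's heuristic discussion; the work is in verifying the Donsker property for the relevant function class under dependence, for which I would cite a mixing-adapted Donsker theorem consistent with Conditions \ref{Condition_Loss} and \ref{Condition_Dependence}. The condition $P_n\ell_{\mu_n} \le \inf P_n\ell_\mu + o_p(n^{-1})$ is used to guarantee that any first-order slack is genuinely $o_p(n^{-1/2})$ after multiplication by $\sqrt{n}$, closing the argument.
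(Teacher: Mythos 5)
Your proposal is correct and delivers the same conclusions, but the linearization step in the second part is organized differently from the paper's argument. The paper decomposes
\[
\sqrt{n}\,\Psi_{n}\!\left(\mu_{n}\right)-\sqrt{n}\,\Psi_{n}\!\left(\mu_{0}\right)
=\sqrt{n}\bigl[\Psi\!\left(\mu_{n}\right)-\Psi\!\left(\mu_{0}\right)\bigr]
+\sqrt{n}\bigl[\left(\Psi_{n}-\Psi\right)\!\left(\mu_{n}\right)-\left(\Psi_{n}-\Psi\right)\!\left(\mu_{0}\right)\bigr],
\]
handles the first bracket by a population Taylor expansion of the operator $\Psi$ (remainder bounded by $\left|\Delta_{3}\right|_{\infty}P\left(\mu_{n}-\mu_{0}\right)^{2}$, which is $o_{p}(n^{-1/2})$ since $\eta>1$), and kills the second bracket by asymptotic equicontinuity of the process $\sqrt{n}\left(P_{n}-P\right)\partial\ell_{\mu}h$ over the Donsker class $\left\{\partial\ell_{\mu}h:\mu\in\mathcal{H}^{K}\left(B\right),h\in\mathcal{H}^{K}\left(1\right)\right\}$ (Lemma \ref{Lemma_gaussianConvergence} plus $\left|\mu_{n}-\mu_{0}\right|_{\infty}\to0$). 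You instead Taylor-expand the integrand $\partial\ell_{\mu_{n}}\left(z\right)$ pointwise to second order, apply $P_{n}$, and then split $P_{n}$ into $P+\left(P_{n}-P\right)$ on the linear term $\partial^{2}\ell_{\mu_{0}}\left(\mu_{n}-\mu_{0}\right)h$; your stochastic-equicontinuity burden is therefore the $o_{p}(1)$ of $\sqrt{n}\left(P_{n}-P\right)\partial^{2}\ell_{\mu_{0}}\left(\mu_{n}-\mu_{0}\right)h$, i.e. a Donsker statement for the class $\left\{\partial^{2}\ell_{\mu_{0}}\,g\,h:g,h\right\}$ rather than for $\left\{\partial\ell_{\mu}h:\mu,h\right\}$, and you additionally need $\sqrt{n}P_{n}\bigl[\partial^{3}\ell_{\tilde\mu}\left(\mu_{n}-\mu_{0}\right)^{2}h\bigr]=o_{p}(1)$ rather than only the population version. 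Both routes are standard and both close under the stated conditions (the new class has entropy controlled by Lemma \ref{Lemma_entropyOfL} since $\mu_{0}$ is fixed and $\Delta_{2}$ supplies an $L_{p}$ envelope; $\left|\Delta_{3}\right|_{\infty}<\infty$ and the $L_{2}$ rate handle the cubic remainder). The paper's version is marginally more economical because it reuses Lemma \ref{Lemma_gaussianConvergence}, which has to be proved anyway for the first display, and it never needs the empirical measure of the quadratic remainder; but neither gain is essential. One small point to watch: the literal identity produced by either route carries a minus sign relative to the statement in the theorem, as you note and as the paper's own proof also produces — this is harmless for the distributional conclusion since $G$ is symmetric, but you should be explicit about it rather than absorbing it silently.
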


The second statement in Theorem \ref{Theorem_weakConvergence} cannot
be established for the penalized estimator with penalty satisfying
$\rho n^{1/2}\rightarrow\infty$. The restriction $\sup_{h\in\mathcal{H}^{K}\left(1\right)}P_{n}\partial\ell_{\mu_{n}}h=o_{p}\left(n^{-1/2}\right)$
holds for finite dimensional models as long as $\mu_{0}\in{\rm int}\left(\mathcal{H}^{K}\left(B\right)\right)$.
When testing restrictions, this is often of interest. However, for
infinite dimensional models this is no longer true as the constraint
is binding even if $\mu_{0}\in{\rm int}\left(\mathcal{H}^{K}\left(B\right)\right)$.
Then, it can be shown that the $o_{p}\left(n^{-1/2}\right)$ term
has to be replaced with $O_{p}\left(n^{-1/2}\right)$ (Lemma \ref{Lemma_lagrangeMultiplierSize},
in the Appendix). This has implications for testing. Additional remarks
can be found in Section \ref{Section_remarksTheoremWeakConvergence}. 

\subsection{Testing Functional Restrictions\label{Section_testingFunctionalRestrictions}}

This section considers tests on functional restrictions possibly allowing
$\mu$ not to be fully specified under the null. As previously discussed,
we write $C_{\mathcal{H}^{K}}=C_{\mathcal{R}_{0}}+C_{\mathcal{R}_{1}}$
as in Section \ref{Section_theGoal}. It is not necessary that $\mathcal{R}_{0}\cap\mathcal{R}_{1}=\emptyset$,
but $\mathcal{R}_{0}$ must be a proper subspace of $\mathcal{H}^{K}$
as otherwise there is no restriction to test. Hence, $\mathcal{R}_{1}$
is not necessarily the complement of $\mathcal{R}_{0}$ in $\mathcal{H}^{K}$.
A few examples clarify the framework. We shall make use of the results
reviewed in Section \ref{Section_reviewRKHS} when constructing the
covariance functions and in consequence the restrictions.

\subsubsection{Examples\label{Section_examples}}

\begin{example}\label{Example_H0ZeroComponent} Let $C_{\mathcal{H}^{K}}\left(s,t\right)=\sum_{k=1}^{K}C\left(s^{\left(k\right)},t^{\left(k\right)}\right)$
so that $\mu\left(x\right)=\sum_{k=1}^{K}f^{\left(k\right)}\left(x^{\left(k\right)}\right)$
as in (\ref{EQ_additiveFunction}), though $x^{\left(k\right)}$ could
be $d$-dimensional as in Example \ref{Example_d-Dimensional}. Consider
the subspace $\mathcal{R}_{0}$ such that $f^{\left(1\right)}=0$.
This is equivalent to $C_{\mathcal{R}_{0}}\left(s,t\right)=\sum_{k=2}^{K}C\left(s^{\left(k\right)},t^{\left(k\right)}\right)$.
In consequence, we can set $C_{\mathcal{R}_{1}}\left(s,t\right)=C\left(s^{\left(1\right)},t^{\left(1\right)}\right)$.\end{example}

Some functional restrictions can also be naturally imposed.

\begin{example}\label{Example_H0LinearityComponent} Suppose that
$\mathcal{H}^{K}$ is an additive space of functions, where each univariate
function is an element in the Sobolev Hilbert space of index $V$
on $\left[0,1\right]$, i.e. functions with $V$ square integrable
weak derivatives. Then, $C_{\mathcal{H}^{K}}\left(s,t\right)=\sum_{k=1}^{K}C\left(s^{\left(k\right)},t^{\left(k\right)}\right)$
where $C\left(s^{\left(k\right)},t^{\left(k\right)}\right)=\sum_{v=1}^{V-1}\lambda_{v}^{2}\left(s^{\left(k\right)}t^{\left(k\right)}\right)^{v}+H_{V}\left(s^{\left(k\right)},t^{\left(k\right)}\right)$
and where $H_{V}$ is the covariance function of the $V$-fold integrated
Brownian motion (see Section \ref{Section_detailsExamples}, in the
supplementary material, or Wahba, 1990, p.7-8, for the details). Consider
the subspace $\mathcal{R}_{0}$ that restricts the univariate RKHS
for the first covariate to be the set of linear functions, i.e. $f^{\left(1\right)}\left(x^{\left(1\right)}\right)=cx^{\left(1\right)}$
for real $c$. Then, $C_{\mathcal{R}_{0}}=\lambda_{1}^{2}s^{\left(1\right)}t^{\left(1\right)}+\sum_{k=2}^{K}C\left(s^{\left(k\right)},t^{\left(k\right)}\right)$.
Hence we can choose $C_{\mathcal{R}_{1}}=\sum_{v=2}^{V-1}\lambda_{v}^{2}\left(s^{\left(1\right)}t^{\left(1\right)}\right)^{v}+H_{V}\left(s^{\left(1\right)},t^{\left(1\right)}\right)$.\end{example}

In both examples above, $\mathcal{R}_{1}$ is the complement of $\mathcal{R}_{0}$
in $\mathcal{H}^{K}$. However, we can just consider spaces $\mathcal{R}_{0}$
and $\mathcal{R}_{1}$ to define the model under the null and the
space of instruments under the alternative. 

\begin{example}\label{Example_H0AdditiveH1General}Suppose $C_{K}\left(s,t\right)$
is a universal kernel on $\left[0,1\right]^{K}\times\left[0,1\right]^{K}$
(see Example \ref{Example_d-Dimensional}). We suppose that $C_{\mathcal{R}_{0}}=\sum_{k=1}^{K}C\left(s^{\left(k\right)},t^{\left(k\right)}\right)$,
while $C_{\mathcal{R}_{1}}=C_{K}\left(s,t\right)$. If $C$ is continuous
and bounded on $\left[0,1\right]\times\left[0,1\right]$, then, $\mathcal{R}_{0}\subset\mathcal{R}_{1}$.
In this case we are testing an additive model against a general nonlinear
one.\end{example}

It is worth noting that Condition \ref{Condition_kernel} restricts
the individual covariances in $C_{\mathcal{H}^{K}}$. The same condition
is inherited by the individual covariances that comprise $C_{\mathcal{R}_{0}}$
(i.e. Condition \ref{Condition_kernel} applies to each individual
component of $C_{\mathcal{R}_{0}}$). In a similar vein, in Example
\ref{Example_H0AdditiveH1General}, the covariance $C_{\mathcal{R}_{1}}$
can be seen as the individual covariance of a multivariate variable
$X^{\left(K+1\right)}:=\left(X^{\left(1\right)},...,X^{\left(K\right)}\right)$
and $C_{\mathcal{R}_{1}}$ will have to satisfy (\ref{EQ_covSeriesRepresentation})
where the features $\varphi_{v}$'s are functions of the variable
$X^{\left(K+1\right)}$. Hence, also Example \ref{Example_H0AdditiveH1General}
fits into our framework, though additional notation is required (see
Section \ref{Section_detailsExamples}, in the supplementary material
for more details).

The examples above can be extended to test more general models.

\begin{example}\label{Example_CAPM}Consider the varying coefficients
regression function $\mu\left(X_{i}\right)=bX_{i}^{\left(1\right)}+\beta\left(X_{i}^{\left(2\right)},...,X_{i}^{\left(K\right)}\right)X_{i}^{\left(1\right)}$.
The function $\beta\left(X_{i}^{\left(2\right)},...,X_{i}^{\left(K\right)}\right)$
can be restricted to linear or additive under the null $\mu\in\mathcal{R}_{0}$.
In the additive case, $C_{\mathcal{R}_{0}}\left(s,t\right)=\lambda_{0}^{2}+s^{\left(1\right)}t^{\left(1\right)}+\sum_{k=1}^{K}C\left(s^{\left(k\right)},t^{\left(k\right)}\right)s^{\left(1\right)}t^{\left(1\right)}$.
In finance, this model can be used to test the conditional Capital
Asset Pricing Model and includes the semiparametric model discussed
in Connor et al. (2012).\end{example}

\subsubsection{Correction for Nuisance Parameters}

Recall that $\mathcal{R}_{0}\left(B\right):=\mathcal{R}_{0}\cap\mathcal{H}^{K}\left(B\right)$
for any $B>0$ and similarly for $\mathcal{R}_{1}\left(B\right)$.
Suppose that $\mu_{0}$ in (\ref{EQ_targetFunction}) lies in the
interior of $\mathcal{R}_{0}\left(B\right)$. Then, the moment equation
$P\partial\ell_{\mu_{0}}h=0$ holds for any $h\in\mathcal{R}_{1}$.
This is because, by definition of (\ref{EQ_targetFunction}), $\partial\ell_{\mu_{0}}$
is orthogonal to all elements in $\mathcal{H}^{K}$. By linearity,
one can restrict attention to $h\in\mathcal{R}_{1}\left(1\right)$
(i.e. $\mathcal{R}_{1}\left(B\right)$ with $B=1$). For such functions
$h$, the statistic $P_{n}\partial\ell_{\mu_{0}}h$ is normally distributed
by Theorem \ref{Theorem_weakConvergence}. In practice, $\mu_{0}$
is rarely known and it is replaced by $\mu_{n0}$ in (\ref{EQ_restrictedEstimator}).
The estimator $\mu_{n0}$ does not need to satisfy $P_{n}\partial\ell_{\mu_{n0}}h=0$
for any $h$ in $\mathcal{H}^{K}\left(1\right)$ under the null. Moreover,
the nuisance parameter affects the asymptotic distribution because
it affects the asymptotic covariance. From now on, we suppose that
the restriction is true, i.e. $\mu_{0}$ in (\ref{EQ_targetFunction})
lies inside $\mathcal{R}_{0}\left(B\right)$, throughout. 

For fixed $\rho\geq0$, let $\Pi_{\rho}$ be the penalized population
projection operator such that 
\begin{equation}
\Pi_{\rho}h=\arg\inf_{\nu\in\mathcal{R}_{0}}P\partial\ell_{\mu_{0}}^{2}\left(h-\nu\right)^{2}+\rho\left|\nu\right|_{\mathcal{H}^{K}}^{2}\label{EQ_projectionOperator}
\end{equation}
for any $h\in\mathcal{H}^{K}$. Let the population projection operator
be $\Pi_{0}$, i.e. (\ref{EQ_projectionOperator}) with $\rho=0$.
We need the following conditions to ensure that we con construct a
test statistic that is not affected by the estimator $\mu_{n0}$. 

\begin{condition}\label{ConditionProjectionTheorem} On top of Conditions
\ref{Condition_kernel}, \ref{Condition_Loss} and \ref{Condition_Dependence},
the following are also satisfied: 
\begin{enumerate}
\item $P\Delta_{1}^{2p}<\infty$, $\left|\Delta_{2}\right|_{\infty}+\left|\Delta_{3}\right|_{\infty}<\infty$
with $p$ as in Conditions \ref{Condition_Loss} and \ref{Condition_Dependence};
\item Under the null, the sequence of scores at the true value is uncorrelated
in the sense that 
\[
\sup_{j>1,h\in\mathcal{H}^{K}\left(1\right)}\left|P_{1,j}\left(\partial\ell_{\mu_{0}}h,\partial\ell_{\mu_{0}}h\right)\right|=0;
\]
 
\item Using the notation in (\ref{EQ_additiveFunction}), for any $\mu\in\mathcal{H}^{K}\left(B\right)$
such that $\left|\mu\right|_{2}^{2}>0$, there is a constant $c>0$
independent of $\mu=\sum_{k=1}^{K}f^{\left(k\right)}$ such that $\left|\mu\right|_{2}^{2}\geq c\sum_{k=1}^{K}\left|f^{\left(k\right)}\right|_{2}^{2}$.
\end{enumerate}
\end{condition}

Remarks on these conditions can be found in Section \ref{Section_RemarksConditions}.
The following holds. 

\begin{theorem}\label{Theorem_projectionScoreTest}Suppose that Condition
\ref{ConditionProjectionTheorem} holds and that $\mu_{n0}\in\mathcal{R}_{0}\left(B\right)$
is such that $P_{n}\ell_{\mu_{n0}}\leq\inf_{\mu\in\mathcal{R}_{0}\left(B\right)}P_{n}\ell_{\mu}+o_{p}\left(n^{-1}\right)$.
Under the null $\mu_{0}\in{\rm int}\left(\mathcal{R}_{0}\left(B\right)\right)$,
we have that
\[
P_{n}\partial\ell_{\mu_{n0}}\left(h-\Pi_{0}h\right)\rightarrow G\left(h-\Pi_{0}h\right),\,h\in\mathcal{H}^{K}\left(1\right),
\]
weakly, where the r.h.s. is a mean zero Gaussian process with covariance
function
\[
\Sigma\left(h,h'\right):=\mathbb{E}G\left(h-\Pi_{0}h\right)G\left(h'-\Pi_{0}h'\right)=P\partial\ell_{\mu_{0}}^{2}\left(h-\Pi_{0}h\right)\left(h'-\Pi_{0}h'\right)
\]
for any $h,h'\in\mathcal{H}^{K}\left(1\right)$. \end{theorem}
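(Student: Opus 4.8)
The plan is to decompose $P_{n}\partial\ell_{\mu_{n0}}(h-\Pi_{0}h)$ into a leading Gaussian term plus negligible remainders, following the same strategy that underlies Theorem \ref{Theorem_weakConvergence} but now exploiting the orthogonality built into $h-\Pi_{0}h$. First I would write
\[
P_{n}\partial\ell_{\mu_{n0}}(h-\Pi_{0}h)=P_{n}\partial\ell_{\mu_{0}}(h-\Pi_{0}h)+\big(P_{n}\partial\ell_{\mu_{n0}}-P_{n}\partial\ell_{\mu_{0}}\big)(h-\Pi_{0}h).
\]
The first term, after multiplying by $\sqrt{n}$, converges weakly to $G(h-\Pi_{0}h)$ by the first display of Theorem \ref{Theorem_weakConvergence}, since $h-\Pi_{0}h\in\mathcal{H}^{K}$ and, by linearity and the definition of $\Pi_{0}$ as a projection in the $L_{2}(P\,\partial\ell_{\mu_{0}}^{2})$ geometry, $h-\Pi_{0}h$ lies in a fixed multiple of $\mathcal{H}^{K}(1)$ uniformly over $h\in\mathcal{H}^{K}(1)$ (this is where Condition \ref{ConditionProjectionTheorem}.3 enters, to control the $\mathcal{H}^{K}$-norm of $\Pi_{0}h$ in terms of the $L_{2}$ geometry). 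Point 2 of Condition \ref{ConditionProjectionTheorem} collapses the sum $\sum_{j\in\mathbb{Z}}P_{1,j}$ in the covariance of Theorem \ref{Theorem_weakConvergence} to the single term $j=0$, which is exactly the stated $\Sigma(h,h')=P\partial\ell_{\mu_{0}}^{2}(h-\Pi_{0}h)(h'-\Pi_{0}h')$.

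Next I would handle the second term. By a Taylor expansion of $\partial\ell$ in its scalar argument around $\mu_{0}(X_{i})$, using the twice-differentiability and $|\Delta_{2}|_{\infty}+|\Delta_{3}|_{\infty}<\infty$ from Condition \ref{ConditionProjectionTheorem}.1,
\[
\big(P_{n}\partial\ell_{\mu_{n0}}-P_{n}\partial\ell_{\mu_{0}}\big)g=P_{n}\partial^{2}\ell_{\mu_{0}}(\mu_{n0}-\mu_{0})\,g+\tfrac12 P_{n}\partial^{3}\ell_{\tilde\mu}(\mu_{n0}-\mu_{0})^{2}g
\]
with $g=h-\Pi_{0}h$ and $\tilde\mu$ between $\mu_{n0}$ and $\mu_{0}$. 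The quadratic remainder is $O_{p}(|\mu_{n0}-\mu_{0}|_{2}^{2})=o_{p}(n^{-1/2})$ by the rate in Theorem \ref{Theorem_consistency} (point 2, applied inside $\mathcal{R}_{0}$, where $\eta>1$ gives exponent $(2\eta-1)/(4\eta)>1/4$, so the square is $o_{p}(n^{-1/2})$), uniformly in $g$ since $|g|_{\infty}$ is uniformly bounded. For the main linear term, I would first replace $P_{n}$ by $P$ via a stochastic-equicontinuity / empirical-process argument over the Donsker class $\{\partial^{2}\ell_{\mu_{0}}(\mu-\mu_{0})g:\mu\in\mathcal{R}_{0}(B),g\in(h-\Pi_{0}h)\text{-class}\}$ — the $\beta$-mixing Donsker result already invoked for Theorem \ref{Theorem_weakConvergence} applies because $\mathcal{R}_{0}$ and $\mathcal{H}^{K}$ inherit Condition \ref{Condition_kernel} and $P\Delta_{1}^{2p}<\infty$ gives the envelope — so that
\[
\sqrt{n}\,P_{n}\partial^{2}\ell_{\mu_{0}}(\mu_{n0}-\mu_{0})g=\sqrt{n}\,P\partial^{2}\ell_{\mu_{0}}(\mu_{n0}-\mu_{0})g+o_{p}(1).
\]
Then the second display of Theorem \ref{Theorem_weakConvergence}, applied to the restricted estimator $\mu_{n0}$ in $\mathcal{R}_{0}(B)$ with $\mu_{0}\in\mathrm{int}(\mathcal{R}_{0}(B))$, gives $\sqrt{n}P\partial^{2}\ell_{\mu_{0}}(\mu_{n0}-\mu_{0})\nu=\sqrt{n}P_{n}\partial\ell_{\mu_{0}}\nu+o_{p}(1)$ for $\nu\in\mathcal{R}_{0}(1)$ — but here is the crucial point: we are pairing with $g=h-\Pi_{0}h$, not with an element of $\mathcal{R}_{0}$. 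However $\mu_{n0}-\mu_{0}\in\mathcal{R}_{0}$ (both lie in $\mathcal{R}_{0}$), so $\partial^{2}\ell_{\mu_{0}}(\mu_{n0}-\mu_{0})g$ only depends on $g$ through its inner product against $\mathcal{R}_{0}$ in the $P\partial^{2}\ell_{\mu_{0}}$-weighted $L_{2}$ geometry, and by the defining property of $\Pi_{0}$ that inner product is zero: $P\partial^{2}\ell_{\mu_{0}}\,\nu\,(h-\Pi_{0}h)=0$ for all $\nu\in\mathcal{R}_{0}$. Hence $\sqrt{n}P\partial^{2}\ell_{\mu_{0}}(\mu_{n0}-\mu_{0})(h-\Pi_{0}h)=0$ identically, and the whole second term vanishes to order $o_{p}(1)$.

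Putting the pieces together gives $\sqrt{n}P_{n}\partial\ell_{\mu_{n0}}(h-\Pi_{0}h)=\sqrt{n}P_{n}\partial\ell_{\mu_{0}}(h-\Pi_{0}h)+o_{p}(1)\Rightarrow G(h-\Pi_{0}h)$, with the covariance reducing to $\Sigma$ by Condition \ref{ConditionProjectionTheorem}.2; tightness of the process over $h\in\mathcal{H}^{K}(1)$ is inherited from the tightness already established in Theorem \ref{Theorem_weakConvergence} because $h\mapsto h-\Pi_{0}h$ is a bounded linear map on $\mathcal{H}^{K}$. I expect the main obstacle to be the rigorous handling of the exact orthogonality identity $P\partial^{2}\ell_{\mu_{0}}\nu(h-\Pi_{0}h)=0$: the projection $\Pi_{0}$ is defined in the $P\partial\ell_{\mu_{0}}^{2}$-weighted geometry while the Taylor term carries the weight $\partial^{2}\ell_{\mu_{0}}$, so one must verify these two weightings coincide in the relevant sense — which holds here because the first-order condition defining $\mu_{0}$ as the minimizer of $P\ell_{\mu}$ makes $P\partial\ell_{\mu_{0}}\nu=0$ for all $\nu\in\mathcal{H}^{K}$, so the only surviving contribution is the second-order one, and the definition of $\Pi_{0}$ in \eqref{EQ_projectionOperator} must be read as using precisely the weight $\partial\ell_{\mu_{0}}^{2}$ that makes the scores' covariance, not the Hessian, the relevant object — a subtlety I would need to reconcile carefully with how $\Pi_{0}$ is used, possibly by noting that under Condition \ref{Condition_Loss} the Hessian is bounded away from zero so the two weighted $L_{2}$ norms are equivalent and the projection's vanishing-inner-product property transfers. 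A secondary technical point is the uniform-in-$h$ control of the empirical-process and remainder terms, which requires that the classes indexed by $h\in\mathcal{H}^{K}(1)$ together with $\mu\in\mathcal{R}_{0}(B)$ remain Donsker under $\beta$-mixing; this follows from Condition \ref{Condition_kernel}'s eigenvalue decay $\eta>1$ exactly as in the proof of Theorem \ref{Theorem_weakConvergence}.
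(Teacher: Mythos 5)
Your proposal follows essentially the same route as the paper: expand $\sqrt{n}P_n\partial\ell_{\mu_{n0}}g=\sqrt{n}P_n\partial\ell_{\mu_0}g+\sqrt{n}\dot{\Psi}_{\mu_0}(\mu_{n0}-\mu_0)g+o_p(1)$ for $g=h-\Pi_0h$ via stochastic equicontinuity plus a Taylor remainder, kill the Fr\'{e}chet-derivative term by the orthogonality $P\partial^2\ell_{\mu_0}\,\nu\,(h-\Pi_0h)=0$ for $\nu\in\mathcal{R}_0$, and identify the covariance using Condition \ref{ConditionProjectionTheorem}.2. This is the paper's argument.

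One substantive caution, on the ``weight mismatch'' you raise. The tentative reconciliation you float — that because the Hessian is bounded away from zero, the $\partial^2\ell_{\mu_0}$-weighted and $(\partial\ell_{\mu_0})^2$-weighted $L_2$ norms are equivalent and therefore the vanishing-inner-product property transfers — is not valid: equivalent norms do not induce the same orthogonality relations, so $\langle\nu,g\rangle_{w_1}=0$ gives you nothing about $\langle\nu,g\rangle_{w_2}$ even when $w_1\asymp w_2$. The intended reading (borne out by the empirical projection (\ref{EQ_empiricalProjection}), the list (\ref{EQ_notationNus}), and the matrix implementation where $\mathbf{S}$ carries second derivatives) is that $\Pi_0$ and $\Pi_\rho$ use the Hessian weight $\partial^2\ell_{\mu_0}$ throughout; the $(\partial\ell_{\mu_0})^2$ appearing in $\Sigma$ arises separately, from the score covariance under Condition \ref{ConditionProjectionTheorem}.2. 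With that reading the orthogonality is immediate from the first-order condition of the projection and no norm-equivalence argument is needed. Two smaller notes: the paper passes to the population operator $\Psi(\mu_{n0})-\Psi(\mu_0)$ \emph{before} Taylor-expanding (as in (\ref{EQ_basicIdentityFinal})), so the quadratic remainder is a $P$-integral and the $P_n$-versus-$P$ step you would otherwise need for $P_n(\mu_{n0}-\mu_0)^2$ is avoided; and the uniform $\mathcal{H}^K$-boundedness of $h-\Pi_0h$ over $h\in\mathcal{H}^K(1)$ is supplied by Lemma \ref{Lemma_nuTildeNormBound} (an RKHS direct-sum argument), not by Condition \ref{ConditionProjectionTheorem}.3.
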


Theorem \ref{Theorem_projectionScoreTest} says that if we knew the
projection (\ref{EQ_projectionOperator}), we could derive the asymptotic
distribution of the moment equation. Additional comments on Theorem
\ref{Theorem_projectionScoreTest} are postponed to Section \ref{Section_remarksTheoremWeakConvergence}. 

Considerable difficulties arise when the projection is not known.
In this case, we need to find a suitable estimator for the projection
and construct a test statistic using the moment conditions, whose
number does not need to be bounded. Next we show that it is possible
to do so as if we knew the true projection operator.

\subsubsection{The Test Statistic\label{Section_testStatistic}}

For the moment, to avoid distracting technicalities, suppose that
the projection $\Pi_{0}h$ and the covariance $\Sigma$ are known.
Then, Theorem \ref{Theorem_projectionScoreTest} suggests the construction
of the test statistic for any finite set $\tilde{\mathcal{R}}_{1}\subseteq\mathcal{R}_{1}\cap\mathcal{H}^{K}\left(1\right)$.
Let the cardinality of $\tilde{\mathcal{R}}_{1}$ be $R$, for definiteness.
For the sake of clarity in what follows, fix an order on $\tilde{\mathcal{R}}_{1}$.
Define the test statistic 
\[
S_{n}:=\frac{1}{R}\sum_{h\in\tilde{\mathcal{R}}_{1}}\left[P_{n}\partial\ell_{\mu_{n0}}\left(h-\Pi_{0}h\right)\right]^{2}.
\]
Let $\omega_{k}$ be the $k^{th}$ scaled eigenvalue of the covariance
matrix $\left\{ \Sigma\left(h,h'\right):h,h'\in\mathcal{R}_{1}\right\} $,
i.e., $\omega_{k}\psi_{k}\left(h\right)=\frac{1}{R}\sum_{h'\in\tilde{\mathcal{R}}_{1}}\Sigma\left(h,h'\right)\psi_{k}\left(h'\right)$,
where the $k^{th}$ eigenvector $\left\{ \psi_{k}\left(h\right):h\in\tilde{\mathcal{R}}_{1}\right\} $
satisfies $\frac{1}{R}\sum_{h\in\tilde{\mathcal{R}}_{1}}\psi_{k}\left(h\right)\psi_{l}\left(h\right)=1$
if $k=l$ and zero otherwise.

\begin{remark}\label{Remark_eigenvalues}Given that $R$ is finite,
we can just compute the eigenvalues (in the usual sense) of the matrix
with entries $\Sigma\left(h,h'\right)/R$, $h'h'\in\tilde{\mathcal{R}}_{1}$.\end{remark}

\begin{corollary}\label{Corollary_testStat}Let $\left\{ \omega_{k}:k>1\right\} $
be the set of scaled eigenvalues of the covariance with entries $\Sigma\left(h,h'\right)$
for $h,h'\in\tilde{\mathcal{R}}_{1}$, from Theorem \ref{Theorem_projectionScoreTest}.
Suppose that they are ordered in descending value. Under Condition
\ref{ConditionProjectionTheorem}, $S_{n}\rightarrow S$, in distribution,
where $S=\sum_{k\geq1}\omega_{k}N_{k}^{2}$, and the random variables
$N_{k}$ are independent standard normal.\end{corollary}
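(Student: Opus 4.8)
The plan is to reduce the statement to Theorem \ref{Theorem_projectionScoreTest} by means of the continuous mapping theorem, and then to compute the law of a Gaussian quadratic form explicitly. First I would note that $\tilde{\mathcal{R}}_1$ is a finite subset of $\mathcal{R}_1\cap\mathcal{H}^K(1)\subseteq\mathcal{H}^K(1)$, so that the process-level weak convergence of $\{P_n\partial\ell_{\mu_{n0}}(h-\Pi_0 h):h\in\mathcal{H}^K(1)\}$ supplied by Theorem \ref{Theorem_projectionScoreTest} restricts, in particular, to joint convergence of the $R$-dimensional vector
\[
\mathbf{G}_n:=\big(P_n\partial\ell_{\mu_{n0}}(h-\Pi_0 h)\big)_{h\in\tilde{\mathcal{R}}_1}\ \longrightarrow\ \mathbf{G}:=\big(G(h-\Pi_0 h)\big)_{h\in\tilde{\mathcal{R}}_1},
\]
where the coordinates are arranged according to the fixed order on $\tilde{\mathcal{R}}_1$ and $\mathbf{G}$ is mean-zero Gaussian with $R\times R$ covariance matrix $\mathbf{V}:=\big(\Sigma(h,h')\big)_{h,h'\in\tilde{\mathcal{R}}_1}$, whose entries are finite by Theorem \ref{Theorem_projectionScoreTest} (this rests on $P\Delta_1^{2p}<\infty$ from Condition \ref{ConditionProjectionTheorem} together with the uniform boundedness of the kernel in Condition \ref{Condition_kernel}). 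Since $S_n=R^{-1}|\mathbf{G}_n|^2$ and the map $x\mapsto R^{-1}|x|^2$ on $\mathbb{R}^R$ is continuous, the continuous mapping theorem gives $S_n\to S:=R^{-1}|\mathbf{G}|^2=R^{-1}\mathbf{G}^{T}\mathbf{G}$ in distribution.

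It remains to identify the law of $S$ with $\sum_{k\geq1}\omega_k N_k^{2}$. Writing $\mathbf{G}=\mathbf{V}^{1/2}\mathbf{Z}$ for the symmetric positive semidefinite square root of the covariance and $\mathbf{Z}\sim N(0,\mathbf{I}_R)$, one has $S=\mathbf{Z}^{T}(R^{-1}\mathbf{V})\mathbf{Z}$. By Remark \ref{Remark_eigenvalues} the scaled eigenvalues $\{\omega_k\}$ appearing in the display defining the $\omega_k$ coincide with the ordinary eigenvalues of $R^{-1}\mathbf{V}$, so with the spectral decomposition $R^{-1}\mathbf{V}=\mathbf{U}\,\mathrm{diag}(\omega_1,\dots,\omega_R)\,\mathbf{U}^{T}$, $\mathbf{U}$ orthogonal and $\omega_1\geq\dots\geq\omega_R\geq0$, and setting $\mathbf{N}:=\mathbf{U}^{T}\mathbf{Z}$, orthogonal invariance of the standard Gaussian yields $\mathbf{N}\sim N(0,\mathbf{I}_R)$ and $S=\sum_{k=1}^{R}\omega_k N_k^{2}$; since $R<\infty$, this is the finite sum $\sum_{k\geq1}\omega_k N_k^{2}$ asserted in the statement, with i.i.d. standard normal $N_k$.

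I do not expect a genuine obstacle here: all the analytic content — joint weak convergence under the null notwithstanding the estimated nuisance parameter $\mu_{n0}$, and the identification and finiteness of the limiting covariance $\Sigma$ — is already carried by Theorem \ref{Theorem_projectionScoreTest}. The only two points that deserve a line of care are that process-level weak convergence restricts to joint convergence of the finitely many coordinates indexed by $\tilde{\mathcal{R}}_1$, and the bookkeeping between the ``scaled'' eigen-normalization used to define the $\omega_k$ and the ordinary spectral decomposition of $R^{-1}\mathbf{V}$, which is exactly Remark \ref{Remark_eigenvalues}.
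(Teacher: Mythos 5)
Your proof is correct and follows essentially the same route as the paper: the paper establishes the weak convergence to the Gaussian process in Theorem \ref{Theorem_projectionScoreTest} and then disposes of the corollary in one line ("the average of squared asymptotically centered Gaussian random variables... by the singular value decomposition, its distribution is given by $S$"), which is precisely the continuous-mapping-plus-spectral-decomposition argument you have spelled out explicitly.
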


To complete this section, it remains to consider an estimator of the
projection $\Pi_{0}h$ and of the covariance function $\Sigma$. The
population projection operator can be replaced by a sample version
\begin{equation}
\Pi_{n,\rho}h=\arg\inf_{\nu\in\mathcal{R}_{0}}P_{n}\partial^{2}\ell_{\mu_{n0}}\left(h-\nu\right)^{2}+\rho\left|\nu\right|_{\mathcal{H}^{K}}^{2},\label{EQ_empiricalProjection}
\end{equation}
which depends on $\rho=\rho_{n}\rightarrow0$. To ease notation, write
$\Pi_{n}=\Pi_{n,\rho}$ for $\rho=\rho_{n}$. By the Representer Theorem,
$\Pi_{n,\rho}h$ is a linear combination of the finite set of functions
$\left\{ C_{\mathcal{R}_{0}}\left(\cdot,X_{j}\right):j=1,2,...,n\right\} $,
as discussed in Section \ref{Section_matrixImplementation}. 

The estimator of $\Sigma$ at $h,h'\in\mathcal{R}_{1}$ is given by
$\Sigma_{n}$ such that 
\begin{equation}
\Sigma_{n}\left(h,h'\right)=P_{n}\partial^{2}\ell_{\mu_{n0}}\left(h-\Pi_{n}h\right)\left(h'-\Pi_{n}h'\right).\label{EQ_covFunction}
\end{equation}
It is not at all obvious that we can effectively use the estimated
projection for all $h\in\mathcal{R}_{1}$, in place of the population
one. The following shows that this is the case.

\begin{theorem}\label{Theorem_feasibleTestStat}In Condition \ref{Condition_kernel},
let $\eta>3/2$, and in (\ref{EQ_empiricalProjection}), choose $\rho$
such that $\rho n^{1/\left(2\lambda\right)}\rightarrow0$ and $\rho n^{\left(2\lambda-1\right)/\left(4\lambda\right)}\rightarrow\infty$,
and define
\begin{equation}
\hat{S}_{n}:=\frac{1}{R}\sum_{h\in\tilde{\mathcal{R}}_{1}}\left[P_{n}\partial\ell_{\mu_{n0}}\left(h-\Pi_{n}h\right)\right]^{2}.\label{EQ_TestStatistic}
\end{equation}
Let $\hat{S}:=\sum_{k\geq1}\omega_{nk}N_{k}^{2}$ where $\omega_{nk}$
is the $k^{th}$ scaled eigenvalue of the covariance matrix $\left\{ \Sigma_{n}\left(h,h'\right):h,h^{'}\in\tilde{\mathcal{R}}_{1}\right\} $
(see Remark \ref{Remark_eigenvalues}). Under Condition \ref{ConditionProjectionTheorem},
$\hat{S}_{n}$ and $\hat{S}$ converge in distribution to $S$, where
the latter is as given in Corollary \ref{Corollary_testStat}. \end{theorem}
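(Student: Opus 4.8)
The plan is to show that $\hat{S}_n$ has the same limit as $S_n$ from Corollary \ref{Corollary_testStat} by controlling the error introduced when $\Pi_0$ is replaced by the feasible projection $\Pi_n$, and then to show that the feasible eigenvalues $\omega_{nk}$ converge to the population ones $\omega_k$, which handles the conditional/simulated statistic $\hat{S}$. The first task is therefore to prove that
\[
\sup_{h\in\tilde{\mathcal{R}}_1}\left|\sqrt{n}\,P_n\partial\ell_{\mu_{n0}}\left(\Pi_n h-\Pi_0 h\right)\right|=o_p(1),
\]
which combined with Theorem \ref{Theorem_projectionScoreTest} and the continuous mapping theorem gives $\hat{S}_n\to S$. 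Since $\tilde{\mathcal{R}}_1$ is finite, uniformity over it is automatic once we have the bound for a single fixed $h$, so the real content is a rate for $\left|\Pi_n h-\Pi_0 h\right|$ in an appropriate norm together with a bound on the empirical process $\sqrt{n}(P_n-P)\partial\ell_{\mu_{n0}}(\Pi_n h-\Pi_0 h)$ and on the bias term $\sqrt{n}\,P\partial^2\ell_{\mu_0}(\mu_{n0}-\mu_0)(\Pi_n h-\Pi_0 h)$ that arises because $\mu_{n0}\ne\mu_0$.

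The heart of the argument is a convergence-rate result for the feasible projection. I would set this up as a standard M-estimation / penalized-least-squares problem: $\Pi_{n,\rho}h$ minimizes $P_n\partial^2\ell_{\mu_{n0}}(h-\nu)^2+\rho|\nu|_{\mathcal{H}^K}^2$ over $\nu\in\mathcal{R}_0$, while $\Pi_0 h$ minimizes $P\partial^2\ell_{\mu_0}(h-\nu)^2$. The strategy is to compare the feasible criterion to the population one in three steps: (i) replace $\mu_{n0}$ by $\mu_0$ inside $\partial^2\ell$, using $|\Delta_3|_\infty<\infty$ (Condition \ref{ConditionProjectionTheorem}(1)) and the consistency/rate of $\mu_{n0}$ from Theorems \ref{Theorem_consistency}--\ref{Theorem_weakConvergence} applied with $\mathcal{R}_0$ in place of $\mathcal{H}^K$; (ii) replace $P_n$ by $P$, controlling the empirical process over the relevant RKHS ball using the eigenvalue decay $\lambda_v^2\lesssim v^{-2\eta}$ with $\eta>3/2$ (this is exactly where the strengthened exponent is used — it gives a fast enough entropy bound and a Donsker-type property), together with the $\beta$-mixing Condition \ref{Condition_Dependence} and the moment bound $P\Delta_1^{2p}<\infty$; (iii) handle the ridge penalty $\rho|\nu|_{\mathcal{H}^K}^2$, which vanishes because $\rho\to0$, while the lower bound $\inf_{z,t}\partial^2 L>0$ from Condition \ref{Condition_Loss} makes the population criterion strongly convex in the $L_2(P)$ sense (using Condition \ref{ConditionProjectionTheorem}(3) to pass between the $L_2$ norm of the additive sum and the $L_2$ norms of the components), so that closeness of criterion values translates into closeness of minimizers in $|\cdot|_2$. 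The two-sided rate condition on $\rho$ ($\rho n^{1/(2\lambda)}\to0$ and $\rho n^{(2\lambda-1)/(4\lambda)}\to\infty$, with $\lambda=\eta$) is calibrated so that the penalty is asymptotically negligible relative to the stochastic error yet large enough to stabilize the inverse in the Representer-theorem representation; this is the usual ridge-regression bias–variance trade-off and I would verify the exponents by the same computation underlying Theorem \ref{Theorem_consistency}(2).

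Once $\left|\Pi_n h-\Pi_0 h\right|_2=o_p(n^{-1/4})$ (say) is in hand, the three error terms above are dispatched as follows: the empirical-process term is $o_p(1)$ by stochastic equicontinuity over a shrinking neighborhood (van der Vaart–Wellner type argument, Theorem 3.3.1 as already invoked in the excerpt, adapted to $\beta$-mixing via a blocking argument); the bias term $\sqrt{n}\,P\partial^2\ell_{\mu_0}(\mu_{n0}-\mu_0)(\Pi_n h-\Pi_0 h)$ is controlled because $\Pi_0 h\in\mathcal{R}_0$ makes $P\partial^2\ell_{\mu_0}(\mu_{n0}-\mu_0)\Pi_0 h$ the derivative of the restricted criterion and hence $O_p(n^{-1/2})$-small — here I would need Condition \ref{ConditionProjectionTheorem}(3) and the interior assumption $\mu_0\in\mathrm{int}(\mathcal{R}_0(B))$ — while the remaining piece involving $(\mu_{n0}-\mu_0)(\Pi_n h-\Pi_0 h)$ is a product of two $o_p(n^{-1/4})$ quantities, hence $o_p(n^{-1/2})$, which kills the $\sqrt{n}$. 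For the covariance estimator, $\Sigma_n(h,h')\to\Sigma(h,h')$ follows from the same projection rate plus a law of large numbers for $P_n\partial^2\ell_{\mu_{n0}}(\cdot)(\cdot)$ under mixing; since the matrices are $R\times R$ with $R$ fixed, eigenvalue continuity gives $\omega_{nk}\to\omega_k$ for each $k$, hence $\hat{S}=\sum_k\omega_{nk}N_k^2\to\sum_k\omega_k N_k^2=S$. I expect the main obstacle to be step (ii)–(iii): getting a sharp enough rate on $\left|\Pi_n h-\Pi_0 h\right|$ uniformly in the presence of the estimated second derivative $\partial^2\ell_{\mu_{n0}}$ and the vanishing ridge penalty, because the projection is onto a possibly infinite-dimensional $\mathcal{R}_0$ and the naive bound only gives $O_p(n^{-1/2})$ on $P_n\partial\ell_{\mu_{n0}}h$ over the RKHS ball (Lemma \ref{Lemma_lagrangeMultiplierSize}), not $o_p(n^{-1/2})$ — so the argument must exploit that $h-\Pi_n h$ is a \emph{residual}, automatically (nearly) orthogonal to $\mathcal{R}_0$ in the sample inner product, to cancel the leading-order nuisance effect rather than bound it crudely.
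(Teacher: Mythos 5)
Your sketch correctly identifies most of the ingredients — the covariance/eigenvalue consistency step, the bias–variance role of the two rate conditions on $\rho$, and the need to exploit that the estimated instrument is (near) orthogonal to $\mathcal{R}_0$ rather than bound everything crudely — but it misses the structural device that makes the paper's proof go through, and the specific rate heuristic you rely on is false.

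The key missing piece is the intermediate penalized \emph{population} projection $\Pi_\rho h$ (i.e.\ (\ref{EQ_projectionOperator}) with penalty $\rho>0$), and the resulting decomposition $\Pi_{n,\rho}h - \Pi_0 h = (\Pi_{n,\rho}h - \Pi_\rho h) + (\Pi_\rho h - \Pi_0 h)$. These two pieces converge in \emph{different} norms and must be treated by \emph{different} arguments. The sampling error $\Pi_{n,\rho}h - \Pi_\rho h$ converges to zero in the strong $\left|\cdot\right|_{\mathcal{H}^K}$ norm (Lemmas \ref{Lemma_populationPenalizedProjConvergence2}, \ref{Lemma_samplePenalizedProjConvergence}), and the paper handles it without ever performing the bias expansion on this piece: the first-order condition of the restricted estimator gives $\sqrt{n}P_n\partial\ell_{\mu_{n0}}g = O_p(\left|g\right|_{\mathcal{H}^K})$ (via $\rho_{B,n}n^{1/2}=O_p(1)$ in Lemma \ref{Lemma_lagrangeMultiplierSize}), so $\sqrt{n}\Psi_n(\mu_{n0})(\Pi_\rho-\Pi_{n,\rho})h=o_p(1)$ directly (Lemma \ref{Lemma_projectionsConvergence}). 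Your closing remark about the ``residual'' gestures at this, but note this is the FOC of $\mu_{n0}$, not the residual orthogonality of $h-\Pi_n h$, and it requires the strong $\mathcal{H}^K$-norm convergence of $\Pi_{n,\rho}-\Pi_\rho$, not merely $L_2$ convergence. The ridge bias $\Pi_\rho h - \Pi_0 h$ only converges in $L_2(\tilde P)$ at rate $\rho^{1/2}$ (Lemma \ref{Lemma_penalizedProjL2Convergence}), not at your posited $o_p(n^{-1/4})$: under $\rho n^{(2\eta-1)/(4\eta)}\to\infty$ we have $\rho\gg n^{-(2\eta-1)/(4\eta)}\gg n^{-1/2}$ for $\eta>1$, so $\rho^{1/2}\gg n^{-1/4}$. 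The exact Cauchy--Schwarz calculation for the bias term on this piece gives $\sqrt{n}\left|\mu_{n0}-\mu_0\right|_2\left|\Pi_\rho h-\Pi_0 h\right|_2 = O_p(n^{1/(4\eta)}\rho^{1/2})$, which vanishes precisely under $\rho n^{1/(2\eta)}\to0$ (Lemma \ref{Lemma_bilinearOperatorProjectionBound}); the corresponding part $\dot\Psi_{\mu_0}\sqrt{n}(\mu_{n0}-\mu_0)(h-\Pi_0 h)$ is exactly zero by the $\tilde P$-orthogonality of $\Pi_0$, and the score piece $\sqrt{n}\Psi_n(\mu_0)(h-\Pi_\rho h)$ converges to $G(h-\Pi_0 h)$ by continuity of the Gaussian limit under the $L_2(\tilde P)$ pseudometric (Lemma \ref{Lemma_weakConvergenceProjectedProcess}).

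Concretely, the step that would fail in your plan: you cannot bound $\sqrt{n}\,P\partial^2\ell_{\mu_0}(\mu_{n0}-\mu_0)(\Pi_{n}h-\Pi_0 h)$ by Cauchy--Schwarz with a single $L_2$ rate for $\Pi_n h - \Pi_0 h$, both because the available rate for the $\Pi_\rho-\Pi_0$ component is only $\rho^{1/2}$ (slower than $n^{-1/4}$), and because for the $\Pi_{n,\rho}-\Pi_\rho$ component the product $\sqrt{n}\left|\mu_{n0}-\mu_0\right|_2\left|\Pi_{n,\rho}h-\Pi_\rho h\right|_2$ does not vanish under the stated rate conditions on $\rho$. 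The correct route, which your sketch does not reach, is to avoid the bias expansion entirely for the $\Pi_{n,\rho}-\Pi_\rho$ component by appealing to the Lagrange-multiplier bound, and to invoke Gaussian-process continuity rather than a direct $o_p(1)$ bound for the $\Pi_\rho-\Pi_0$ component. Your treatment of $\hat S$ via eigenvalue continuity for fixed $R$ is correct and is in fact a valid shortcut relative to the paper's more elaborate truncation argument.
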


Note that the condition on $\rho$ can only be satisfied if in Condition
\ref{Condition_kernel}, $\eta>3/2$, as otherwise the condition on
$\rho$ is vacuous. 

Let $P\left(y|x\right)$ be the distribution of $Y_{i}$ given $X_{i}$.
Define the function $w:\mathcal{X}^{K}\rightarrow\mathbb{R}$ such
that $w\left(x\right):=\int\partial^{2}\ell_{\mu_{0}}\left(\left(y,x\right)\right)dP\left(y|x\right)$.
The function $w$ might be known under the null. In this case, $\partial^{2}\ell_{\mu_{n0}}$
in (\ref{EQ_empiricalProjection}) can be replaced by $w$, i.e.,
define the empirical projection as the $\arg\inf$ of 
\begin{equation}
P_{n}w\left(h-\nu\right)^{2}+\rho\left|\nu\right|_{\mathcal{H}^{K}}^{2}=\frac{1}{n}\sum_{i=1}^{n}w\left(X_{i}\right)\left(h\left(X_{i}\right)-\nu\left(X_{i}\right)\right)^{2}+\rho\left|\nu\right|_{\mathcal{H}^{K}}^{2}\label{EQ_simplifiedEmpiricalProjection}
\end{equation}
w.r.t. $\nu\in\mathcal{R}_{0}$. For example, for the regression problem,
using the square error loss, $w=1$.

\begin{corollary}\label{Corollary_feasibleStatKnownHessian}Suppose
$w$ is known. Replace $\Pi_{n}h$ with the minimizer of (\ref{EQ_simplifiedEmpiricalProjection})
in the construction of the test statistic $\hat{S}_{n}$ and $\Sigma_{n}$.
Suppose Condition \ref{ConditionProjectionTheorem} and $\rho$ such
that $\rho n^{1/\left(2\lambda\right)}\rightarrow0$ and $\rho n^{1/2}\rightarrow\infty$.
Then, the conclusion of Theorem \ref{Theorem_feasibleTestStat} continues
to hold.\end{corollary}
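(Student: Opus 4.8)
The plan is to re-run the proof of Theorem~\ref{Theorem_feasibleTestStat} essentially verbatim, the only change being the treatment of the empirical projection. Write $\tilde\Pi_{n,\rho}h$ for the minimiser of (\ref{EQ_simplifiedEmpiricalProjection}) and $\tilde\Pi_{n}=\tilde\Pi_{n,\rho_{n}}$, and let $\lambda$ be the kernel exponent of Condition~\ref{Condition_kernel}. Because $w(X)=\mathbb{E}[\partial^{2}\ell_{\mu_{0}}(Z)\mid X]$ while $h$ and $\nu$ depend on $Z$ only through $X$, the tower property gives $Pw(h-\nu)^{2}=P\partial^{2}\ell_{\mu_{0}}(h-\nu)^{2}$; hence the population limit of (\ref{EQ_simplifiedEmpiricalProjection}) as $\rho\to0$ is the same projection $\Pi_{0}$ that Theorem~\ref{Theorem_projectionScoreTest} is built on, and the covariance $\Sigma$ and the limit law $S=\sum_{k\ge1}\omega_{k}N_{k}^{2}$ of Corollary~\ref{Corollary_testStat} are unchanged. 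So it suffices to show, for each $h$ in the finite set $\tilde{\mathcal R}_{1}$, that $\sqrt{n}\,P_{n}\partial\ell_{\mu_{n0}}(h-\tilde\Pi_{n}h)=\sqrt{n}\,P_{n}\partial\ell_{\mu_{n0}}(h-\Pi_{0}h)+o_{p}(1)$ and that $\Sigma_{n}$ (built with $\tilde\Pi_{n}$ in place of $\Pi_{n}$ in (\ref{EQ_covFunction})) converges to $\Sigma$; the conclusion then follows from Theorem~\ref{Theorem_projectionScoreTest}, Corollary~\ref{Corollary_testStat} and the continuous mapping theorem exactly as in Theorem~\ref{Theorem_feasibleTestStat}.

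The first and principal step is a convergence rate for $\tilde\Pi_{n,\rho}h$. Problem (\ref{EQ_simplifiedEmpiricalProjection}) is a penalised least squares fit of the \emph{deterministic} function $h$ onto the sub-RKHS $\mathcal R_{0}$ with the \emph{deterministic} weight $w$, which is bounded above by $|\Delta_{2}|_{\infty}$ and bounded away from zero by Conditions~\ref{Condition_Loss} and~\ref{ConditionProjectionTheorem}; it is therefore a cleaner instance of the estimation problem underlying Theorem~\ref{Theorem_consistency}, with no stochastic regression error and no estimated weight. The only two error sources are the penalisation bias $|\tilde\Pi_{\rho}h-\Pi_{0}h|_{2}$ of the population penalised projection $\tilde\Pi_{\rho}h=\arg\inf_{\nu\in\mathcal R_{0}}Pw(h-\nu)^{2}+\rho|\nu|_{\mathcal H^{K}}^{2}$, controlled by the eigenvalue decay $\lambda_{v}^{2}\lesssim v^{-2\lambda}$ of Condition~\ref{Condition_kernel} together with $\rho\to0$, and the sampling fluctuation $|\tilde\Pi_{n,\rho}h-\tilde\Pi_{\rho}h|_{2}$, coming from $P_{n}-P$ applied to the fixed integrand $w(h-\nu)^{2}$ and controlled by the same maximal inequality under $\beta$-mixing used for Theorem~\ref{Theorem_weakConvergence}, localised to RKHS balls. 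Here $\rho n^{1/(2\lambda)}\to0$ keeps the bias negligible and $\rho n^{1/2}\to\infty$ plays the role it plays in Point~3 of Theorem~\ref{Theorem_consistency}, keeping $|\tilde\Pi_{n,\rho}h|_{\mathcal H^{K}}=O_{p}(1)$ (and $|\tilde\Pi_{n,\rho}h-\Pi_{0}h|_{\mathcal H^{K}}\to0$ when $\Pi_{0}h$ has finite RKHS norm). The point worth stressing is that, since the weight is the known $w$ and not the estimated $\partial^{2}\ell_{\mu_{n0}}$ of (\ref{EQ_empiricalProjection}), there is no contamination of the projection by the error $\mu_{n0}-\mu_{0}$; this is exactly what widens the admissible window for $\rho$ so that only $\lambda>1$ (equivalently $\eta>1$, already in Condition~\ref{Condition_kernel}) is needed for the two conditions on $\rho$ to be jointly satisfiable, instead of $\eta>3/2$.

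Granted this rate, the remaining algebra duplicates that of Theorem~\ref{Theorem_feasibleTestStat}. Write $\sqrt{n}\,P_{n}\partial\ell_{\mu_{n0}}(h-\tilde\Pi_{n}h)=\sqrt{n}\,P_{n}\partial\ell_{\mu_{n0}}(h-\Pi_{0}h)+\sqrt{n}\,P_{n}\partial\ell_{\mu_{n0}}(\Pi_{0}h-\tilde\Pi_{n}h)$; the first term converges to $G(h-\Pi_{0}h)$ by Theorem~\ref{Theorem_projectionScoreTest}. For the second, $\Pi_{0}h-\tilde\Pi_{n}h\in\mathcal R_{0}$, so the Karush--Kuhn--Tucker bound for the restricted near-minimiser $\mu_{n0}$ (whose multiplier is $O_{p}(n^{-1/2})$, Lemma~\ref{Lemma_lagrangeMultiplierSize}) gives $|P_{n}\partial\ell_{\mu_{n0}}(\Pi_{0}h-\tilde\Pi_{n}h)|\le O_{p}(n^{-1/2})\,|\Pi_{0}h-\tilde\Pi_{n}h|_{\mathcal H^{K}}$, which after multiplication by $\sqrt{n}$ is $o_{p}(1)$ by the RKHS-norm consistency just discussed; to cover the case in which $\Pi_{0}h$ has infinite RKHS norm one instead expands $\partial\ell_{\mu_{n0}}=\partial\ell_{\mu_{0}}+\partial^{2}\ell_{\mu_{0}}(\mu_{n0}-\mu_{0})+r_{n}$ with $|r_{n}|\le\tfrac{1}{2}|\Delta_{3}|_{\infty}|\mu_{n0}-\mu_{0}|_{\infty}^{2}$, so that the $\partial\ell_{\mu_{0}}$ piece is a mean-zero empirical process at the shrinking argument $\Pi_{0}h-\tilde\Pi_{n}h$, hence $o_{p}(n^{-1/2})$ by stochastic equicontinuity over RKHS balls, and the $\partial^{2}\ell_{\mu_{0}}(\mu_{n0}-\mu_{0})$ piece is at most $|\Delta_{2}|_{\infty}|\mu_{n0}-\mu_{0}|_{2}|\Pi_{0}h-\tilde\Pi_{n}h|_{2}$ up to a lower-order $(P_{n}-P)$ remainder, whose $\sqrt{n}$-scaled size vanishes once $|\mu_{n0}-\mu_{0}|_{2}=O_{p}(n^{-(2\lambda-1)/(4\lambda)})$ (Theorem~\ref{Theorem_consistency}, Point~2, applied inside $\mathcal R_{0}$, whose kernel inherits Condition~\ref{Condition_kernel}) is combined with the $L_{2}$ rate from the previous paragraph. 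Finally $\Sigma_{n}\to\Sigma$ follows from $|\tilde\Pi_{n}h-\Pi_{0}h|_{2}\to0$, $|\mu_{n0}-\mu_{0}|_{2}\to0$ and the integrability and boundedness of $\Delta_{1},\Delta_{2}$ in Condition~\ref{ConditionProjectionTheorem}, exactly as in Theorem~\ref{Theorem_feasibleTestStat}; hence $\hat{S}_{n}\to S$ and $\hat{S}\to S$ in distribution.

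The main obstacle is the quantitative projection estimate: one must show that, under $\rho n^{1/(2\lambda)}\to0$ and $\rho n^{1/2}\to\infty$, the sum of the penalisation bias $|\tilde\Pi_{\rho}h-\Pi_{0}h|_{2}$ and the sampling error $|\tilde\Pi_{n,\rho}h-\tilde\Pi_{\rho}h|_{2}$ is of order smaller than $n^{-1/(4\lambda)}$, so that the product-of-rates term $\sqrt{n}\,|\mu_{n0}-\mu_{0}|_{2}\,|\Pi_{0}h-\tilde\Pi_{n}h|_{2}$ is negligible; this requires an interpolation/source-condition bound relating the rate of decay of $\rho$ to the smoothness of $\Pi_{0}h$ relative to $\mathcal R_{0}$, together with a matching control of the effective dimension $\asymp\rho^{-1/(2\lambda)}$ driving the $P_{n}-P$ fluctuation, and then a check that $\rho n^{1/2}\to\infty$ (which is weaker on the high side than the $\rho$-window of Theorem~\ref{Theorem_feasibleTestStat}, yet now feasible for every $\lambda>1$) suffices. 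A secondary, purely bookkeeping point is that $\Pi_{0}h$ need not belong to $\mathcal R_{0}$; one accommodates this by phrasing every RKHS-norm statement about $\tilde\Pi_{n}h$ as $O_{p}(1)$ through the penalised population proxy $\tilde\Pi_{\rho}h$, which is all the displayed argument actually uses.
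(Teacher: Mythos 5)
Your proposal takes a genuinely different route from the paper, and it contains a gap. The paper's proof of the corollary is essentially a one-line observation inserted into Lemma~\ref{Lemma_projectionsConvergence}: when $w$ is known, the intermediate operator $\tilde{\Pi}_{n,\rho}$ (defined with the $P$-measure weighted by $\partial^{2}\ell_{\mu_{n0}}$) coincides with $\Pi_{\rho}$ exactly, because replacing $\partial^{2}\ell_{\mu_{n0}}$ by $w$ and then taking $P$-expectation reproduces, via the tower property, precisely $P\partial^{2}\ell_{\mu_{0}}(\cdot)^{2}$. Consequently the first term in the triangle decomposition (\ref{EQ_projectionsDecompositions}) is identically zero, Lemma~\ref{Lemma_populationPenalizedProjConvergence2} — the sole source of the harsher constraint $\rho n^{(2\eta-1)/(4\eta)}\to\infty$ — is never invoked, and the only remaining constraints on $\rho$ are $\rho n^{1/2}\to\infty$ (for Lemma~\ref{Lemma_samplePenalizedProjConvergence}) and $\rho n^{1/(2\eta)}\to0$ (for Lemma~\ref{Lemma_bilinearOperatorProjectionBound}). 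You correctly spot the conceptual reason the constraint relaxes (no contamination of the projection by $\mu_{n0}-\mu_{0}$), but your proof does not exploit the paper's lemma structure and instead attempts a fresh rate analysis.

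The concrete gap is in your last paragraph. You write that the main obstacle is to show that the sum of the bias $\lvert\tilde\Pi_{\rho}h-\Pi_{0}h\rvert_{2}$ and the sampling error $\lvert\tilde\Pi_{n,\rho}h-\tilde\Pi_{\rho}h\rvert_{2}$ is $o(n^{-1/(4\lambda)})$. The bias piece indeed is, under $\rho n^{1/(2\lambda)}\to0$, via Lemma~\ref{Lemma_penalizedProjL2Convergence}. But the crude route to the sampling piece — through $\lvert\tilde\Pi_{n,\rho}h-\tilde\Pi_{\rho}h\rvert_{\mathcal{H}^{K}}=O_{p}(\rho^{-1}n^{-1/2})$ and the norm inclusion — yields $\lvert\tilde\Pi_{n,\rho}h-\tilde\Pi_{\rho}h\rvert_{2}=O_{p}(\rho^{-1}n^{-1/2})$, and demanding this be $o(n^{-1/(4\lambda)})$ forces $\rho n^{(2\lambda-1)/(4\lambda)}\to\infty$, which for $\lambda>1$ is strictly stronger than $\rho n^{1/2}\to\infty$. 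In other words, your stated ``main obstacle,'' if attacked head on, would re-impose the very constraint that the corollary exists to relax; you would end up proving Theorem~\ref{Theorem_feasibleTestStat}, not its improvement. The paper sidesteps this entirely: it never needs an $L_{2}$ rate for the sampling error, because that piece is controlled via the Lagrangian bound applied to the $\mathcal{H}^{K}$-norm distance $\lvert\Pi_{n,\rho}h-\Pi_{\rho}h\rvert_{\mathcal{H}^{K}}=o_{p}(1)$, for which $\rho n^{1/2}\to\infty$ suffices. Your first sub-argument (the KKT bound applied directly to $\Pi_{0}h-\tilde\Pi_{n}h$) is actually closer to viable, but it relies on $\lvert\Pi_{0}h-\tilde\Pi_{n}h\rvert_{\mathcal{H}^{K}}=o_{p}(1)$, which you assert but do not establish; this needs both $\lvert\tilde\Pi_{n,\rho}h-\Pi_{\rho}h\rvert_{\mathcal{H}^{K}}\to0$ (as above) and $\lvert\Pi_{\rho}h-\Pi_{0}h\rvert_{\mathcal{H}^{K}}\to0$, the latter being the RKHS-norm consistency of the penalized population projection (an analogue of Lemma~\ref{Lemma_SteinwartChirstman}'s second statement applied to the weighted least-squares fit), not something that follows from Point 3 of Theorem~\ref{Theorem_consistency} without comment. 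Finally, the ``bookkeeping'' hedge about $\Pi_{0}h$ not belonging to $\mathcal{R}_{0}$ or having infinite RKHS norm is moot: $\Pi_{0}h\in\mathcal{R}_{0}$ by construction, and $\lvert\Pi_{0}h\rvert_{\mathcal{H}^{K}}\le1$ by Lemma~\ref{Lemma_nuTildeNormBound}, so your fallback expansion argument, which anyway suffers from the rate problem above, is never needed.
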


Corollary \ref{Corollary_feasibleStatKnownHessian} improves on Theorem
\ref{Theorem_feasibleTestStat} as it imposes less restrictions on
the exponent $\eta$ and the penalty $\rho$. Despite the technicalities
required to justify the procedure, the implementation shown in Section
\ref{Section_matrixImplementation} is straightforward. In fact $\partial\ell_{\mu_{n0}}$
evaluated at $\left(Y_{i},X_{i}\right)$ is the score for the $i^{th}$
observation and it is the $i^{th}$ entry in $\mathbf{e}_{0}$. On
the other hand the vector $\hat{\mathbf{h}}^{\left(r\right)}$has
$i^{th}$ entry $\left(h^{\left(r\right)}\left(X_{i}\right)-\Pi_{n}h^{\left(r\right)}\left(X_{i}\right)\right)$
and $\tilde{\mathcal{R}}_{1}=\left\{ h^{\left(1\right)},....,h^{\left(R\right)}\right\} $,
for example $\left\{ C_{\mathcal{R}_{1}}\left(\cdot,z_{r}\right):z_{r}\in\mathcal{X}^{K},r=1,2,...,R\right\} $. 

\section{Discussion\label{Section_discussion}}

\subsection{Remarks on Conditions\label{Section_RemarksConditions}}

A minimal condition for the coefficients $\lambda_{v}$ would be $\lambda_{v}\lesssim v^{-\eta}$
with $\eta>1/2$ as this is essentially required for $\sum_{v=1}^{\infty}\lambda_{v}^{2}\varphi_{v}^{2}\left(s\right)<\infty$
for any $s\in\mathcal{X}$. Mendelson (2002) derives consistency under
this minimal condition in the i.i.d. case, but no convergence rates.
Here, the condition is strengthened to $\eta>1$, but it is not necessarily
so restrictive. The covariance in Example \ref{Example_d-Dimensional}
satisfies Condition \ref{Condition_kernel} with exponentially decaying
coefficients $\lambda_{v}$ (e.g. Rasmussen and Williams, 2006, Ch.
4.3.1); the covariance in Example \ref{Example_H0LinearityComponent}
satisfies $\lambda_{v}\lesssim v^{-\eta}$ with $\eta\geq V$ (see
Ritter et al., 1995, Corollary 2, for this and more general results). 

It is not difficult to see that many loss functions (or negative log-likelihoods)
of interest satisfy Condition \ref{Condition_Loss} using the fact
that $\left|\mu\right|_{\infty}\leq\bar{B}$ (square error loss, logistic,
negative log-likelihood of Poisson, etc.). (Recall that $\bar{B}$
was defined just before Condition \ref{Condition_Loss}.) Nevertheless,
interesting loss functions such as absolute deviation for conditional
median estimation do not satisfy Condition \ref{Condition_Loss}.
The extension to such loss functions requires arguments that are specific
to the problem together with additional restrictions to compensate
for the lack of smoothness. Some partial extension to the absolute
loss will be discuss in Section \ref{Section_lossExtensions}. 

Condition \ref{Condition_Dependence} is standard in the literature.
More details and examples can be found in Section \ref{Section_appendixRemarksConditions}
in the Appendix.

In Condition \ref{ConditionProjectionTheorem}, the third derivative
of the loss function and the strengthening of the moment conditions
(Point 1) are used to control the error in the expansion of the moment
equation. The moment conditions are slightly stronger than needed.
The proofs show that we use the following in various places $P\left(\Delta_{1}^{2p}+\Delta_{1}^{p}\Delta_{2}^{p}\right)<\infty$,
$\left|\partial^{2}\ell_{\mu_{0}}\right|_{\infty}+\left|\Delta_{3}\right|_{\infty}<\infty$,
and these can be weakened somehow, but at the cost of introducing
dependence on the exponent $\eta$ ($\eta$ as in Condition \ref{Condition_kernel}).
The condition is satisfied by various loss functions. For example,
the following loss functions have bounded second and third derivative
w.r.t. $t\in\left[-\bar{B},\bar{B}\right]$: $\left(y-t\right)^{2}$
$y\in\mathbb{R}$ (regression), $\ln\left(1+\exp\left\{ -yt\right\} \right)$
$y\in\left\{ -1,1\right\} $ (classification), $-yt+\exp\left\{ t\right\} $
$y\in\left\{ 0,1,2,...\right\} $ (counting).

Time uncorrelated moment equations in Condition \ref{ConditionProjectionTheorem}
are needed to keep the computations feasible. This condition does
not imply that the data are independent. The condition is satisfied
in a variety of situations. In the Poisson example given in the introduction
this is the case as long as $\mathbb{E}_{i-1}Y_{i}=\exp\left\{ \mu_{0}\left(X_{i}\right)\right\} $
(which implicitly requires $X_{i}$ being measurable at time $i-1$).
In general, we still allow for misspecification as long as the conditional
expectation is not misspecified. 

If the scores at the true parameter are correlated, the estimator
of $\Sigma$ needs to be modified to include additional covariance
terms (e.g., Newey-West estimator). Also the projection operator $\Pi_{0}$
has to be modified such that 
\[
\Pi_{0}h=\arg\inf_{v\in\mathcal{R}_{0}}\sum_{j\in\mathbb{Z}}P_{1,j}\left(\partial\ell_{\mu_{0}}\left(h-v\right),\partial\ell_{\mu_{0}}\left(h-v\right)\right).
\]
 This can make the procedure rather involved and it is not discussed
further.

It is simple to show that Point 4 in Condition \ref{ConditionProjectionTheorem}
means that for all pairs $k,l\leq K$ such that $k\neq l$, and for
all $f,g\in\mathcal{H}$ such that $\mathbb{E}\left|f\left(X^{\left(k\right)}\right)\right|^{2}=\mathbb{E}\left|g\left(X^{\left(l\right)}\right)\right|^{2}=1$,
then $\mathbb{E}f\left(X^{\left(k\right)}\right)g\left(X^{\left(l\right)}\right)<1$
(i.e. no perfect correlation when the functions are standardized).

\subsection{Remarks on Theorem \ref{Theorem_weakConvergence}\label{Section_remarksTheoremWeakConvergence}}

The asymptotic distribution of the estimator is immediately derived
if $\mathcal{H}^{K}\left(B\right)$ is finite dimensional. 

\begin{example}\label{Example_covarianceEstimatorSquareLoss}Consider
the rescaled square error loss so that $\partial^{2}\ell_{\mu_{0}}=1$.
Defining $\nu=\lim_{n}\sqrt{n}\left(\mu_{n}-\mu_{0}\right)$, Theorem
\ref{Theorem_weakConvergence} gives 
\[
G\left(h\right)=P\nu h,
\]
in distribution, where $G$ is as in Theorem \ref{Theorem_weakConvergence}
as long as $\mu_{0}\in{\rm int}\left(\mathcal{H}^{K}\left(B\right)\right)$.
The distribution of $\nu$ is then given by the solution to the above
display when $\mathcal{H}^{K}\left(B\right)$ is finite dimensional.
\end{example}

In the infinite dimensional case, Hable (2012) has shown that $\sqrt{n}\left(\mu_{n,\rho}\left(x\right)-\mu_{0,\rho}\left(x\right)\right)$
converges to a Gaussian process whose covariance function would require
the solution of some Fredholm equation of the second type. Recall
that $\mu_{n,\rho}$ is as in (\ref{EQ_penalizedEstimator.}), while
we use $\mu_{0,\rho}$ to denote its population version. The penalty
$\rho=\rho_{n}$ needs to satisfy $\sqrt{n}\left(\rho_{n}-\rho_{0}\right)=o_{p}\left(1\right)$
for some fixed constant $\rho_{0}>0$. When $\mu_{0}\in{\rm int}\left(\mathcal{H}^{K}\left(B\right)\right)$,
we have $\mu_{0}=\arg\min_{\mu\in\mathcal{\mathcal{H}}}P\ell_{\mu}$.
Hence, there is no $\rho_{0}>0$ such that $\mu_{0}=\mu_{0,\rho_{0}}$.
The two estimators are both of interest with different properties.
When the penalty does not go to zero the approximation error is non-negligible,
e.g. for the square loss the estimator is biased.

Theorem \ref{Theorem_weakConvergence} requires $\mu_{0}\in{\rm int}\left(\mathcal{H}^{K}\left(B\right)\right)$.
In the finite dimensional case, the distribution of the estimator
when $\mu_{0}$ lies on the boundary of $\mathcal{H}^{K}\left(B\right)$
is not standard (e.g., Geyer, 1994). In consequence the p-values are
not easy to find.

\subsection{Alternative Constraints\label{Section_alternativeConstraints}}

As an alternative to the norm $\left|\cdot\right|_{\mathcal{H}^{K}}$,
define the norm $\left|f\right|_{\mathcal{L}^{K}}:=\sum_{k=1}^{K}\left|f^{\left(k\right)}\right|_{\mathcal{H}}$.
Estimation in $\mathcal{L}^{K}\left(B\right):=\left\{ f\in\mathcal{H}^{K}:\left|f\right|_{\mathcal{L}^{K}}\leq B\right\} $
is also of interest for variable screening. The following provides
some details on the two different constraints.

\begin{lemma}\label{Lemma_propertiesOfL} Suppose an additive kernel
$C_{\mathcal{H}^{K}}$ as in Section \ref{Section_reviewRKHS}. The
following hold.\\
1. $\left|\cdot\right|_{\mathcal{H}^{K}}$ and $\left|\cdot\right|_{\mathcal{L}^{K}}$
are norms on $\mathcal{H}^{K}$. \\
2. We have the inclusion 
\[
K^{-1/2}\mathcal{H}^{K}\left(1\right)\subset\mathcal{L}^{K}\left(1\right)\subset\mathcal{H}^{K}\left(1\right).
\]
3. For any $B>0$, $\mathcal{H}^{K}\left(B\right)$ and $\mathcal{L}^{K}\left(B\right)$
are convex sets. \\
4. Let $c:=\max_{s\in\mathcal{X}}\sqrt{C\left(s,s\right)}$. If $\mu\in\mathcal{H}^{K}\left(B\right)$,
then, $\sup_{\mu\in\mathcal{H}^{K}\left(B\right)}\left|\mu\right|_{p}\leq c\sqrt{K}B$
for any $p\in\left[1,\infty\right]$, while $\sup_{\mu\in\mathcal{L}^{K}\left(B\right)}\left|\mu\right|_{p}\leq cB$.\end{lemma}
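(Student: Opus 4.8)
The plan is to verify the four numbered claims in turn, each of which reduces to elementary facts about RKHS norms and the additive structure of $\mathcal{H}^{K}$.

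For Part 1, I would argue that $\left|\cdot\right|_{\mathcal{L}^{K}}$ is a norm by checking the axioms directly: homogeneity and the triangle inequality follow from the corresponding properties of $\left|\cdot\right|_{\mathcal{H}}$ on each component (and the triangle inequality for the $\ell_1$-sum of scalars), while definiteness uses the fact that the additive decomposition $\mu = \sum_{k=1}^K f^{(k)}$ is such that $\left|\mu\right|_{\mathcal{L}^K} = 0$ forces each $f^{(k)} = 0$ in $\mathcal{H}$, hence $\mu = 0$. That $\left|\cdot\right|_{\mathcal{H}^{K}}$ is a norm is immediate since it is induced by the inner product $\left\langle \cdot,\cdot\right\rangle_{\mathcal{H}^{K}}$ defined in Section \ref{Section_reviewRKHS}. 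The only subtlety worth flagging is uniqueness of the additive representation; if the individual spaces $\mathcal{H}$ share constants or low-order polynomials one must fix a convention, but under the linear-independence hypothesis in Condition \ref{Condition_kernel} the components are well defined.

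For Part 2, the inclusion $\mathcal{L}^{K}\left(1\right)\subset\mathcal{H}^{K}\left(1\right)$ follows from $\left|\mu\right|_{\mathcal{H}^{K}}^{2} = \sum_k \left|f^{(k)}\right|_{\mathcal{H}}^{2} \leq \bigl(\sum_k \left|f^{(k)}\right|_{\mathcal{H}}\bigr)^{2} = \left|\mu\right|_{\mathcal{L}^{K}}^{2}$, i.e. the $\ell_2$ norm of the vector $\bigl(\left|f^{(1)}\right|_{\mathcal{H}},\dots,\left|f^{(K)}\right|_{\mathcal{H}}\bigr)$ is dominated by its $\ell_1$ norm. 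Conversely, Cauchy--Schwarz gives $\left|\mu\right|_{\mathcal{L}^{K}} = \sum_k \left|f^{(k)}\right|_{\mathcal{H}} \leq \sqrt{K}\,\bigl(\sum_k \left|f^{(k)}\right|_{\mathcal{H}}^{2}\bigr)^{1/2} = \sqrt{K}\left|\mu\right|_{\mathcal{H}^{K}}$, so $\left|\mu\right|_{\mathcal{H}^{K}}\leq 1$ implies $\left|\mu\right|_{\mathcal{L}^{K}}\leq\sqrt{K}$, which is exactly $K^{-1/2}\mathcal{H}^{K}\left(1\right)\subset\mathcal{L}^{K}\left(1\right)$. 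Part 3 is then routine: $\mathcal{H}^{K}\left(B\right)$ is a ball in a normed space hence convex, and $\mathcal{L}^{K}\left(B\right)$ is the sublevel set of the norm $\left|\cdot\right|_{\mathcal{L}^K}$, which is convex by the triangle inequality and homogeneity established in Part 1.

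For Part 4, the key tool is the reproducing property together with Cauchy--Schwarz: for a single component, $\left|f^{(k)}(x^{(k)})\right| = \left|\left\langle f^{(k)}, C(x^{(k)},\cdot)\right\rangle_{\mathcal{H}}\right| \leq \left|f^{(k)}\right|_{\mathcal{H}}\sqrt{C(x^{(k)},x^{(k)})} \leq c\left|f^{(k)}\right|_{\mathcal{H}}$, so $\left|\mu\right|_{\infty}\leq c\sum_k \left|f^{(k)}\right|_{\mathcal{H}} = c\left|\mu\right|_{\mathcal{L}^K}$, giving the bound $cB$ on $\mathcal{L}^{K}\left(B\right)$; and by Cauchy--Schwarz again $\sum_k \left|f^{(k)}\right|_{\mathcal{H}}\leq\sqrt{K}\left|\mu\right|_{\mathcal{H}^K}$, giving $c\sqrt{K}B$ on $\mathcal{H}^{K}\left(B\right)$. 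Since $\mathcal{X}^K$ is compact and $C$ continuous, $c$ is finite, and the sup-norm bound dominates every $L_p$ norm with $p\in[1,\infty]$ because $P$ is a probability measure. I do not anticipate a genuine obstacle here; the one place requiring a little care is making sure the additive decomposition used throughout is the canonical one so that the componentwise bounds add up correctly, and that $c_K$ versus $c$ bookkeeping is consistent with the definition of $\bar B$ preceding Condition \ref{Condition_Loss}.
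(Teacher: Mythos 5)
Your proof is correct and follows essentially the same route as the paper's: checking the norm axioms for Part 1, the $\ell_1$--$\ell_2$ comparison (Cauchy--Schwarz and Minkowski) for Part 2, convexity of norm balls for Part 3, and the reproducing property with Cauchy--Schwarz plus the domination of $L_p$ by $L_\infty$ under a probability measure for Part 4. Your Part 3 argument (balls, equivalently sub-level sets, of a norm are convex) is in fact cleaner than the paper's appeal to uniform convexity of subspaces of a Hilbert space, since $\mathcal{H}^{K}\left(B\right)$ and $\mathcal{L}^{K}\left(B\right)$ are balls rather than subspaces, but the conclusion and the overall approach coincide.
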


By the inclusion in Lemma \ref{Lemma_propertiesOfL}, all the results
derived for $\mathcal{H}^{K}\left(B\right)$ also apply to $\mathcal{L}^{K}\left(K^{1/2}B\right)$.
In this case, we still need to suppose that $\mu_{0}\in{\rm int}\left(\mathcal{H}^{K}\left(B\right)\right)$.
Both norms are of interest. When interest lies in variable screening
and consistency only, estimation in $\mathcal{L}^{K}\left(B\right)$
inherits the properties of the $l_{1}$ norm (as for LASSO). The estimation
algorithms discussed in Section \ref{Section_estimationAlgo} cover
estimation in both subsets of $\mathcal{H}^{K}$.

\section{Computational Algorithm\label{Section_estimationAlgo}}

By duality, when $\mu\in\mathcal{H}^{K}$ and the constraint is $\left|\mu\right|_{\mathcal{H}^{K}}\leq B$
the estimator is the usual one obtained from the Representer Theorem
(e.g., Steinwart and Christmann, 2008). Estimation in an RKHS poses
computational difficulties when the sample size $n$ is large. Simplifications
are possible when the covariance $C_{\mathcal{H}^{K}}$ admits a series
expansion as in (\ref{EQ_covSeriesRepresentation}) (e.g., L\'{a}zaro-Gredilla
et al., 2010). 

Estimation for functions in $\mathcal{L}^{K}\left(B\right)$ rather
than in $\mathcal{H}^{K}\left(B\right)$ is even more challenging.
Essentially, in the case of the square error loss, estimation in $\mathcal{L}^{K}\left(B\right)$
resembles LASSO, while estimation in $\mathcal{H}^{K}\left(B\right)$
resembles ridge regression. 

A greedy algorithm can be used to solve both problems. In virtue of
Lemma \ref{Lemma_propertiesOfL} and the fact that estimation in $\mathcal{H}^{K}\left(B\right)$
has been considered extensively, only estimation in $\mathcal{L}^{K}\left(B\right)$
will be address in details. The minor changes required for estimation
in $\mathcal{H}^{K}\left(B\right)$ will be discussed in Section \ref{Section_estimationAlgoForH^K}.

\subsection{Estimation in $\mathcal{L}^{K}\left(B\right)$}

Estimation of $\mu_{n}$ in $\mathcal{L}^{K}\left(B\right)$ is carried
out according the following Frank-Wolfe algorithm. Let $f_{m}^{\left(s\left(m\right)\right)}$
be the solution to 
\begin{equation}
\min_{k\leq K}\min_{f^{\left(k\right)}\in\mathcal{H}\left(1\right)}P_{n}\partial\ell_{F_{m-1}}f^{\left(k\right)}\label{EQ_fkAlgoMaximization}
\end{equation}
where $F_{0}=0$, $F_{m}=\left(1-\tau_{m}\right)F_{m-1}+c_{m}f_{m}^{\left(s\left(m\right)\right)}$,
and $c_{m}=B\tau_{m}$, where $\tau_{m}$ is the solution to the line
search 
\begin{equation}
\min_{\tau\in\left[0,1\right]}P_{n}\ell\left(\left(1-\tau\right)F_{m-1}+\tau Bf_{m}^{\left(s\left(m\right)\right)}\right),\label{EQ_lineSearchFrankWolfeAlgo}
\end{equation}
writing $\ell\left(\mu\right)$ instead of $\ell_{\mu}$ for typographical
reasons. Details on how to solve (\ref{EQ_fkAlgoMaximization}) will
be given in Section \ref{Section_solveAdditiveFunctions}; the line
search in (\ref{EQ_fkAlgoMaximization}) is elementary. The algorithm
produces functions $\left\{ f_{j}^{\left(s\left(j\right)\right)}:j=1,2,...,m\right\} $
and coefficients $\left\{ c_{j}:j=1,2,...,m\right\} $. Note that
$s\left(j\right)\in\left\{ 1,2,...,K\right\} $ identifies which of
the $K$ additive functions will be updated at the $j^{th}$ iteration. 

To map the results of the algorithm into functions with representation
in $\mathcal{H}^{K}$, one uses trivial algebraic manipulations. A
simpler variant of the algorithm sets $\tau_{m}=1/m$. In this case,
the solution at the $m^{th}$ iteration, takes the particularly simple
form $F_{m}=\sum_{j=1}^{m}\frac{B}{m}f_{j}^{\left(s\left(j\right)\right)}$
(e.g., Sancetta, 2016) and the $k^{th}$ additive function can be
written as $\tilde{f}^{\left(k\right)}=\frac{B}{m}\sum_{j\leq m:s\left(j\right)=k}f_{j}^{\left(s\left(j\right)\right)}$.

To avoid cumbersome notation, the dependence on the sample size $n$
has been suppressed in the quantities defined in the algorithm. The
algorithm can find a solution with arbitrary precision as the number
of iterations $m$ increases. 

\begin{theorem}\label{Theorem_algoErrorBound}For $F_{m}$ derived
from the above algorithm, 
\[
P_{n}\ell_{F_{m}}\leq\inf_{\mu\in\mathcal{L}^{K}\left(B\right)}P_{n}\ell_{\mu}+\epsilon_{m}
\]
where, 
\[
\epsilon_{m}\lesssim\begin{cases}
\frac{B^{2}\sup_{\left|t\right|\leq B}P_{n}d^{2}L\left(\cdot,t\right)/dt^{2}}{m} & \text{ if }\tau_{m}=\frac{2}{m+2}\text{ or line search in (\ref{EQ_lineSearchFrankWolfeAlgo})}\\
\frac{B^{2}\sup_{\left|t\right|\leq B}\left[P_{n}d^{2}L\left(\cdot,t\right)/dt^{2}\right]\ln\left(1+m\right)}{m} & \text{ if }\tau_{m}=\frac{1}{m}
\end{cases}.
\]
\end{theorem}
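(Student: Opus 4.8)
The plan is to recognize this as a standard Frank--Wolfe (conditional gradient) convergence result adapted to the RKHS setting, and to verify that the ingredients needed for the classical analysis are in place. The objective $\mu\mapsto P_n\ell_\mu$ is a convex functional on the convex set $\mathcal{L}^K(B)$ (convexity of the set is Lemma \ref{Lemma_propertiesOfL}, point 3; convexity of the objective follows from Condition \ref{Condition_Loss}, which forces $d^2 L(z,t)/dt^2>0$, so $t\mapsto L(z,t)$ is convex and hence $\mu\mapsto \ell_\mu(z)$ is convex as a composition with the linear map $\mu\mapsto\mu(x)$). First I would write down the ``curvature constant'' of the problem: for $\mu,\mu'\in\mathcal{L}^K(B)$ and $\tau\in[0,1]$, a second-order Taylor expansion of $P_n\ell$ along the segment from $\mu$ to $\mu+\tau(\mu'-\mu)$ gives a remainder controlled by $\tfrac{\tau^2}{2}\,P_n\partial^2\ell_{\xi}\,(\mu'-\mu)^2$ for some intermediate $\xi$, and since $|\mu-\mu'|_\infty\le |\mu|_\infty+|\mu'|_\infty\le 2cB$ (using point 4 of Lemma \ref{Lemma_propertiesOfL}, $\sup_{\mu\in\mathcal{L}^K(B)}|\mu|_\infty\le cB$) this is bounded by a constant multiple of $\tau^2 B^2 \sup_{|t|\le B}P_n d^2L(\cdot,t)/dt^2$ — possibly after replacing $B$ by $\bar B$ in the supremum, but since the iterates stay in $\mathcal{L}^K(B)$ the bound $|t|\le cB$ is the operative one and absorbing $c$ into $\lesssim$ yields the form stated. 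Denote this curvature bound $\mathcal{C}\lesssim B^2\sup_{|t|\le B}P_n d^2L(\cdot,t)/dt^2$.

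Next I would check that the linear minimization oracle step is exactly (\ref{EQ_fkAlgoMaximization}). The Frank--Wolfe direction at $F_{m-1}$ is the minimizer over $v\in\mathcal{L}^K(B)$ of the linearized objective $\langle \nabla (P_n\ell)(F_{m-1}), v\rangle$, which in our notation is $P_n\partial\ell_{F_{m-1}} v$. Because $\mathcal{L}^K(B)$ is the $B$-scaling of the convex hull of $\bigcup_{k=1}^K\{f^{(k)}\in\mathcal{H}(1): \text{the }k\text{th component}\}$ — this is precisely what the $\mathcal{L}^K$ norm encodes, an $\ell_1$-type ball across the $K$ blocks — the linear minimum over $\mathcal{L}^K(B)$ is attained at an extreme point, i.e.\ at $B f^{(s(m))}$ with $f^{(s(m))}\in\mathcal{H}(1)$ solving (\ref{EQ_fkAlgoMaximization}). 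Thus the algorithm is a bona fide Frank--Wolfe iteration with exact LMO and the update $F_m=(1-\tau_m)F_{m-1}+\tau_m B f_m^{(s(m))}$, so all iterates lie in $\mathcal{L}^K(B)$ by convexity.

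With convexity, the curvature bound $\mathcal{C}$, and the exact LMO in hand, the conclusion is the textbook Frank--Wolfe rate. Writing $g_m:=P_n\ell_{F_m}-\inf_{\mu\in\mathcal{L}^K(B)}P_n\ell_\mu$, the standard one-step inequality (combine the descent from the quadratic upper model with the fact that the Frank--Wolfe gap at $F_{m-1}$ dominates $g_{m-1}$ by convexity) gives $g_m\le (1-\tau_m)g_{m-1}+\tfrac{\tau_m^2}{2}\mathcal{C}$ — and for the line-search variant (\ref{EQ_lineSearchFrankWolfeAlgo}) the same bound holds since line search does at least as well as the fixed step $\tau_m=2/(m+2)$. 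I would then close the recursion by induction: with $\tau_m=2/(m+2)$ one gets $g_m\le 2\mathcal{C}/(m+2)\lesssim \mathcal{C}/m$; with $\tau_m=1/m$ one gets the extra logarithmic factor, $g_m\lesssim \mathcal{C}\ln(1+m)/m$, by the elementary estimate $\sum_{j\le m} 1/j \lesssim \ln(1+m)$ appearing when the recursion with step $1/m$ is unrolled. Substituting $\mathcal{C}\lesssim B^2\sup_{|t|\le B}P_n d^2L(\cdot,t)/dt^2$ gives exactly the two cases for $\epsilon_m$.

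The only genuinely non-routine point — and the step I would be most careful about — is the identification of the extreme points of $\mathcal{L}^K(B)$ and the verification that the block-wise double minimization (\ref{EQ_fkAlgoMaximization}) really returns the global linear minimizer over the whole set, rather than just a block-coordinate minimizer; this is where the specific geometry of the $\mathcal{L}^K$ norm (an $\ell_1$ aggregation of RKHS-ball constraints, as opposed to the $\ell_2$ aggregation defining $|\cdot|_{\mathcal{H}^K}$) matters, and it is what makes the greedy scheme produce LASSO-type sparsity across the $K$ additive components. Everything else is the classical conditional-gradient argument, with Condition \ref{Condition_Loss} supplying convexity and the finite curvature constant, and Lemma \ref{Lemma_propertiesOfL} supplying the uniform sup-norm bound that keeps that constant finite.
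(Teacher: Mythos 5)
Your proposal is correct and follows essentially the same route as the paper's proof: the paper likewise casts the algorithm as Frank--Wolfe, bounds the curvature constant $Q$ by a second-order Taylor expansion giving $Q\le 4B^2\sup_{|t|<B}P_n\,d^2L(\cdot,t)/dt^2$, derives the identical one-step recursion $g_m\le(1-\tilde\tau_m)g_{m-1}+\tfrac{Q}{2}\tilde\tau_m^2$, and then solves it for the two step-size schedules (the paper cites Jaggi, 2013, Theorem 1 for $\tau_m=2/(m+2)$ and Sancetta, 2016, Lemma 2 for $\tau_m=1/m$ instead of unrolling the recursion directly, but the content is the same). The point you single out as potentially delicate—identifying the extreme points of $\mathcal{L}^K(B)$ so that the blockwise double minimization in (\ref{EQ_fkAlgoMaximization}) is a genuine LMO—is handled in the paper in a single sentence by the same reasoning (the $\ell_1$ aggregation forces the linear minimizer onto a single block at radius $B$), so you have correctly identified both the structure and the one non-automatic verification.
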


For the sake of clarity, recall that $P_{n}d^{2}L\left(\cdot,t\right)/dt^{2}=\frac{1}{n}\sum_{i=1}^{n}d^{2}L\left(Z_{i},t\right)/dt^{2}$.

\subsubsection{Solving for the Additive Functions\label{Section_solveAdditiveFunctions} }

The solution to (\ref{EQ_fkAlgoMaximization}) is found by minimizing
the Lagrangian 
\begin{equation}
P_{n}\partial\ell_{F_{m-1}}f^{\left(k\right)}+\rho\left|f^{\left(k\right)}\right|_{\mathcal{H}}^{2}.\label{EQ_objectiveGreedyUnivariate}
\end{equation}
Let $\Phi^{\left(k\right)}\left(x^{\left(k\right)}\right)=C\left(\cdot,x^{\left(k\right)}\right)$
be the canonical feature map (Lemma 4.19 in Steinwart and Christmann,
2008); $\Phi^{\left(k\right)}$ has image in $\mathcal{H}$ and the
superscript $k$ is only used to stress that it corresponds to the
$k^{th}$ additive component. The first derivative w.r.t. $f^{\left(k\right)}$
is $P_{n}\partial\ell_{F_{m-1}}\Phi^{\left(k\right)}+2\rho f^{\left(k\right)}$,
using the fact that $f^{\left(k\right)}\left(x^{\left(k\right)}\right)=\left\langle f^{\left(k\right)},\Phi^{\left(k\right)}\left(x^{\left(k\right)}\right)\right\rangle _{\mathcal{H}}$,
by the reproducing kernel property. Then, the solution is 
\[
f^{\left(k\right)}=-\frac{1}{2\rho}P_{n}\partial\ell_{F_{m-1}}\Phi^{\left(k\right)},
\]
where $\rho$ is such that $\left|f^{\left(k\right)}\right|_{\mathcal{H}}^{2}=1$.
If $P_{n}\partial\ell_{F_{m-1}}\Phi^{\left(k\right)}=0$, set $\rho=1$.
Explicitly, using the properties of RKHS (see (\ref{EQ_RHKS_normsEquivalences}))
\[
\left|f^{\left(k\right)}\right|_{\mathcal{H}}^{2}=\frac{1}{\left(2\rho\right)^{2}}\sum_{i,j=1}^{n}\frac{\partial\ell_{F_{m-1}}\left(Z_{i}\right)}{n}\frac{\partial\ell_{F_{m-1}}\left(Z_{j}\right)}{n}C\left(X_{i}^{\left(k\right)},X_{j}^{\left(k\right)}\right)
\]
which is trivially solved for $\rho$. With this choice of $\rho$,
the constraint $\left|f^{\left(k\right)}\right|_{\mathcal{H}}\leq1$
is satisfied for all integers $k$, and the algorithm, simply selects
$k$ such that $P_{n}\partial\ell_{F_{m-1}}f^{\left(k\right)}$ is
minimized. Additional practical computational aspects are discussed
in Section \ref{Section_additionalRepresentation} in the Appendix
(supplementary material). 

The above calculations together with Theorem \ref{Theorem_algoErrorBound}
imply the following, which for simplicity, it is stated using the
update $\tau_{m}=m^{-1}$ instead of the line search.

\begin{theorem}Let $\rho_{j}$ be the Lagrange multiplier estimated
at the $j^{th}$ iteration of the algorithm in (\ref{EQ_fkAlgoMaximization})
with $\tau_{m}=m^{-1}$ instead of the line search (\ref{EQ_lineSearchFrankWolfeAlgo}).
Then, 
\[
\mu_{n}=\lim_{m\rightarrow\infty}\sum_{j=1}^{m}\left(-\frac{B}{2m\rho_{j}}\right)P_{n}\partial\ell_{F_{j-1}}\Phi^{\left(s\left(j\right)\right)},
\]
is the solution in $\mathcal{L}^{K}\left(B\right)$.\end{theorem}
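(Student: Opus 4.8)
The plan is to write the $\tau_m=m^{-1}$ iterate in closed form, use the Frank--Wolfe bound of Theorem~\ref{Theorem_algoErrorBound} to see that $\{F_m\}$ is a minimising sequence for $P_n\ell_{\cdot}$ over $\mathcal{L}^{K}(B)$, and then convert convergence of objective values into convergence of the iterates themselves via the lower curvature bound of Condition~\ref{Condition_Loss}. First I would record the closed form: with $\tau_m=1/m$ and $c_m=B\tau_m=B/m$, a one-step induction on $F_m=(1-\tau_m)F_{m-1}+c_mf_m^{(s(m))}$ gives $F_m=\frac{B}{m}\sum_{j=1}^{m}f_j^{(s(j))}$, as already noted in Section~\ref{Section_solveAdditiveFunctions}. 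Substituting the explicit minimiser of the Lagrangian~(\ref{EQ_objectiveGreedyUnivariate}), namely $f_j^{(s(j))}=-\frac{1}{2\rho_j}P_n\partial\ell_{F_{j-1}}\Phi^{(s(j))}$ (with the stated convention $\rho_j=1$, $f_j^{(s(j))}=0$ when the gradient section vanishes), yields $F_m=\sum_{j=1}^{m}\bigl(-\tfrac{B}{2m\rho_j}\bigr)P_n\partial\ell_{F_{j-1}}\Phi^{(s(j))}$. Thus the displayed partial sum is precisely $F_m$, and the theorem reduces to showing that $F_m$ converges and that its limit equals $\mu_n:=\arg\inf_{\mu\in\mathcal{L}^{K}(B)}P_n\ell_{\mu}$.

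Next I would settle existence of the target and put it in a convenient form. The set $\mathcal{L}^{K}(B)$ is closed, bounded and convex in $\mathcal{H}^{K}$ (Lemma~\ref{Lemma_propertiesOfL}), hence weakly compact; the map $\mu\mapsto P_n\ell_{\mu}$ is convex and continuous (bounded evaluation functionals together with $C^{2}$ smoothness of $L(z,\cdot)$ from Condition~\ref{Condition_Loss}), hence weakly lower semicontinuous, so the infimum is attained at some $\mu_n$. Replacing each additive block of a minimiser by its orthogonal projection onto $\mathrm{span}\{C(\cdot,X_i^{(k)}):i\le n\}$ leaves fitted values unchanged and does not increase any $|f^{(k)}|_{\mathcal{H}}$, so $\mu_n$ may be taken in the same fixed finite-dimensional subspace $V$ that contains every $F_m$. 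Also $F_m\in\mathcal{L}^{K}(B)$, since $\sum_k|F_m^{(k)}|_{\mathcal{H}}\le\frac{B}{m}\sum_{j=1}^m|f_j^{(s(j))}|_{\mathcal{H}}\le B$. Theorem~\ref{Theorem_algoErrorBound} in the $\tau_m=1/m$ case then gives $P_n\ell_{F_m}\le P_n\ell_{\mu_n}+\epsilon_m$ with $\epsilon_m\lesssim B^{2}\,\bigl[\sup_{|t|\le B}P_nd^{2}L(\cdot,t)/dt^{2}\bigr]\,\ln(1+m)/m\to0$, the constant being finite as a finite average of continuous functions over a compact set.

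It remains to upgrade $P_n\ell_{F_m}\to P_n\ell_{\mu_n}$ to $F_m\to\mu_n$. Let $\kappa:=\inf_{z,\,|t|\le\bar{B}}d^{2}L(z,t)/dt^{2}>0$ (Condition~\ref{Condition_Loss}). Since $F_m,\mu_n\in\mathcal{L}^{K}(B)\subset\mathcal{H}^{K}(B)$, every relevant evaluation lies in $[-\bar{B},\bar{B}]$, so a second-order Taylor expansion of $t\mapsto L(Z_i,t)$ about $\mu_n(X_i)$, followed by averaging over $i$, yields $P_n\ell_{F_m}-P_n\ell_{\mu_n}\ge P_n\partial\ell_{\mu_n}(F_m-\mu_n)+\tfrac{\kappa}{2}P_n(F_m-\mu_n)^{2}$; the first-order term is nonnegative because $\mu_n$ minimises the convex functional over the convex set $\mathcal{L}^{K}(B)\ni F_m$. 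Hence $P_n(F_m-\mu_n)^{2}\le 2\epsilon_m/\kappa\to0$, i.e. $F_m\to\mu_n$ in the empirical $L_2$ seminorm, and in particular $F_m(X_i)\to\mu_n(X_i)$ for each $i$. Since $F_m$ and $\mu_n$ lie in the finite-dimensional $V$ and the evaluation map is injective on $V$ (within each block by positive definiteness of the block Gram matrix, and across blocks provided exact in-sample collinearity of the standardised additive components is excluded, the finite-sample analogue of part~4 of Condition~\ref{ConditionProjectionTheorem}), this promotes to convergence of coefficient vectors, whence $|F_m-\mu_n|_{\mathcal{H}^{K}}\to0$; uniqueness of $\mu_n$ on $V$ follows from the same inequality with the two roles reversed.

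The main obstacle is precisely this last upgrade. Theorem~\ref{Theorem_algoErrorBound} controls only function values, and $P_n\ell_{\cdot}$ is strongly convex merely in the in-sample fitted vector $(\mu(X_1),\dots,\mu(X_n))$, not in the ambient RKHS geometry; so passing from $\epsilon_m$-optimality to control of $F_m$ as an element of $\mathcal{L}^{K}(B)$ is where the additive-design non-degeneracy enters. If one reads ``$\mu_n=\lim_m F_m$'' as convergence of fitted values (equivalently, in the empirical $L_2$ seminorm), everything through the penultimate sentence goes through using only Theorem~\ref{Theorem_algoErrorBound} and the curvature bound of Condition~\ref{Condition_Loss}; the function-space reading is what forces the additional bookkeeping about the in-sample Gram operator of the additive kernel.
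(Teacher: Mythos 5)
Your proposal is correct and follows the same route the paper intends: the paper supplies no standalone proof, declaring the theorem an immediate consequence of the closed form $F_{m}=\frac{B}{m}\sum_{j=1}^{m}f_{j}^{\left(s\left(j\right)\right)}$ derived just above it together with the objective-value bound of Theorem \ref{Theorem_algoErrorBound}. Your induction on the $\tau_{m}=m^{-1}$ recursion, the substitution $f_{j}^{\left(s\left(j\right)\right)}=-\frac{1}{2\rho_{j}}P_{n}\partial\ell_{F_{j-1}}\Phi^{\left(s\left(j\right)\right)}$, and the invocation of Theorem \ref{Theorem_algoErrorBound} all match the paper exactly. What you add, and the paper leaves tacit, is the step converting $\epsilon_{m}$-optimality of objective values into convergence of the iterates: your Taylor plus variational-inequality argument, using $\kappa:=\inf_{z,\left|t\right|\leq\bar{B}}d^{2}L\left(z,t\right)/dt^{2}>0$ from Condition \ref{Condition_Loss}, yields $P_{n}\left(F_{m}-\mu_{n}\right)^{2}\leq2\epsilon_{m}/\kappa\rightarrow0$, which is the precise sense in which the displayed limit equals $\mu_{n}$. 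Your closing caveat is also apt and honest: Theorem \ref{Theorem_consistency} already states that the population minimizer is unique only up to $L_{2}$ equivalence, and the same applies at the sample level, so the limit is identified with $\mu_{n}$ in the empirical $L_{2}$ seminorm; upgrading to $\left|\cdot\right|_{\mathcal{H}^{K}}$ convergence needs the in-sample Gram non-degeneracy you sketch, which the paper neither assumes nor needs for its stated claim. In short, your proof is a more careful and complete version of the paper's terse derivation, not a genuinely different argument, and the potential gap you flag is real but concerns the reading of the statement rather than a flaw in your reasoning.
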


\subsection{The Algorithm for Estimation in $\mathcal{H}^{K}\left(B\right)$\label{Section_estimationAlgoForH^K}}

When estimation is constrained in $\mathcal{H}^{K}\left(B\right)$,
the algorithm has to be modified. Let $\Phi\left(x\right)=C_{\mathcal{H}^{K}}\left(\cdot,x\right)$
be the canonical feature map of $\mathcal{H}^{K}$ (do not confuse
$\Phi$ with $\Phi^{\left(k\right)}$ in the previous section). Then,
(\ref{EQ_fkAlgoMaximization}) is replaced by 
\[
\min_{f\in\mathcal{H}^{K}\left(B\right)}P_{n}\partial\ell_{F_{m-1}}f,
\]
 and we denote by $f_{m}\in\mathcal{H}^{K}\left(B\right)$ the solution
at the $m^{th}$ iteration. This solution can be found replacing the
minimization of (\ref{EQ_objectiveGreedyUnivariate}) with minimization
of $P_{n}\partial\ell_{F_{m-1}}f+\rho\left|f\right|_{\mathcal{H}^{K}}^{2}$.
The solution is then $f_{m}=-\frac{1}{2\rho}P_{n}\partial\ell_{F_{m-1}}\Phi$
where $\rho$ is chosen to satisfy the constraint $\left|f\right|_{\mathcal{H}^{K}}^{2}\leq1$.
No other change in the algorithm is necessary and the details are
left to the reader. 

\paragraph{Empirical illustration.}

To gauge the rate at which the algorithm converges to a solution,
we consider the SARCOS data set (\url{http://www.gaussianprocess.org/gpml/data/}),
which comprises a test sample of 44484 observations with $21$ input
variables and a continuous response variable. We standardize the variables
by their Euclidean norm, use the square error loss and the Gaussian
covariance kernel of Example \ref{Example_d-Dimensional} with $d=21$
and $a^{-1}=0.75$. Hence for this example, the kernel is not additive.
Given that the kernel is universal, we shall be able to interpolate
the data if $B$ is chosen large enough: we choose $B=1000$. The
aim is not to find a good statistical estimator, but to evaluate the
computational algorithm. Figure \ref{Figure_R2ApproximationsAlgoSarcos},
plots the $R^{2}$ as a function of the number of iterations $m$.
After approximately $20$ iterations, the algorithm starts to fit
the data better than a constant, and after about 80-90 iterations
the $R^{2}$ is very close to one. The number of operations per iteration
is $O\left(n^{2}\right)$. 

\begin{figure}
\caption{Estimation Algorithm $R^{2}$ as Function of Number of Iterations.
The $R^{2}$ is computed for each iteration $m$ of the estimation
algorithm. Negative $R^{2}$ have been set to zero. }
\label{Figure_R2ApproximationsAlgoSarcos}

\includegraphics[scale=0.7]{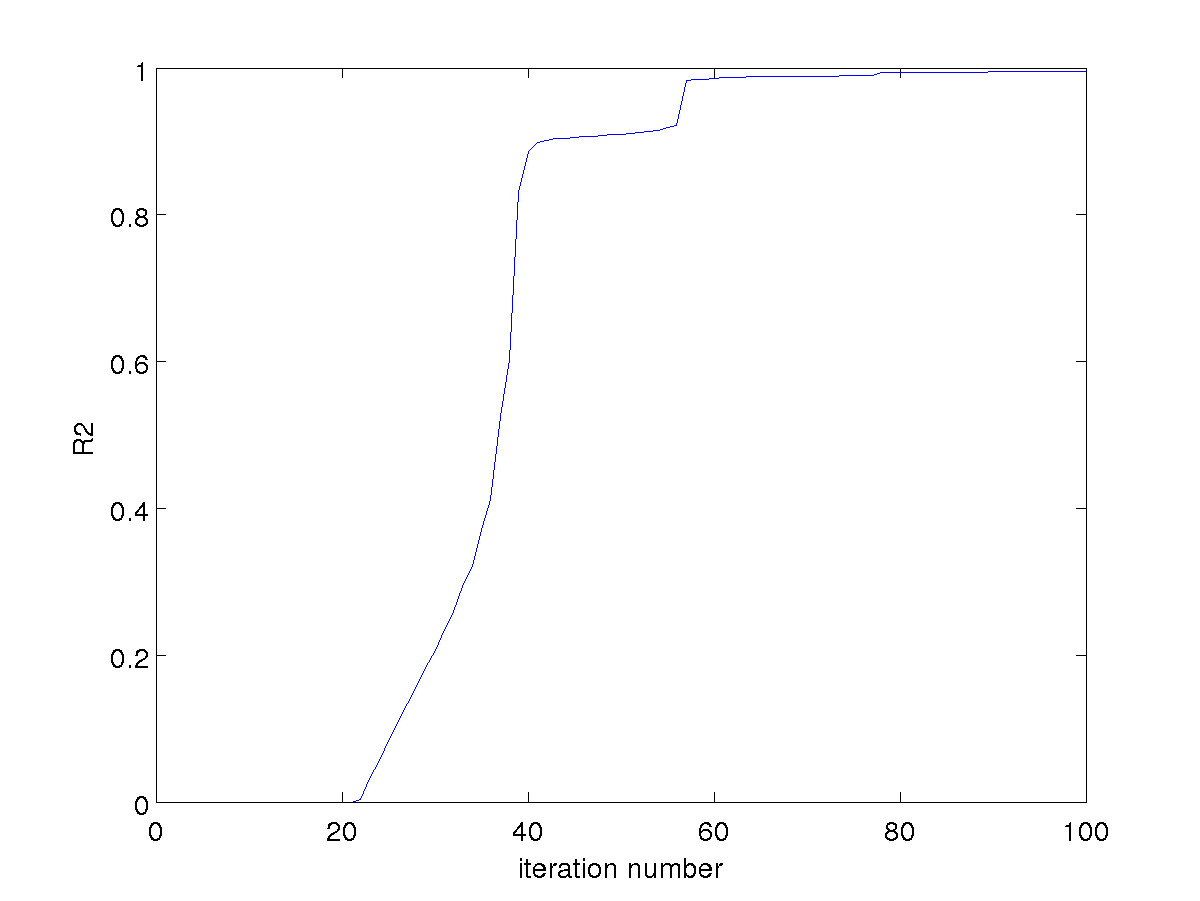}
\end{figure}

\section{Further Remarks\label{Section_furtherRemarks}}

In this last section additional remarks of various nature are included.
A simulation example is used to shed further light on the importance
of the projection procedure.  The paper will conclude with an example
on how Condition \ref{Condition_Loss} can be weakened in order to
accommodate other loss functions, such as the absolute loss. 

\subsection{Some Finite Sample Evidence via Simulation Examples\label{Section_simulations}}

\subsubsection{High Dimensional Model\label{Section_simulationHighDim}}

\paragraph{Simulation Design: True Models.}

Consider the regression problem where $Y_{i}=\mu_{0}\left(X_{i}\right)+\varepsilon_{i}$,
the number of covariates $X^{\left(k\right)}$ is $K=10$, and the
sample size is $n=100,\text{ and }1000$. The covariates are i.i.d.
standard Gaussian random variables that are then truncated to the
interval $\mathcal{X}=\left[-2,2\right]$. Before truncation, the
cross-sectional correlation between $X^{\left(k\right)}$ and $X^{\left(l\right)}$
is $\rho^{\left|k-l\right|}$ with $\rho=0,\text{ and }0.75$, $k,l=1,2,...,K$.
The error terms are i.i.d. mean zero, Gaussian with variance such
that the signal to noise ratio $\sigma_{\mu/\varepsilon}^{2}$ is
equal to $1$ and $0.2$. This is equivalent to an $R^{2}$ of $0.5$
and $0.167$, i.e. a moderate and low $R^{2}$. The following specifications
for $\mu_{0}$ are used: $\mu_{0}\left(X\right)=\sum_{k=1}^{3}b_{k}X^{\left(k\right)}$
with $b_{k}=1/3$ (Lin3:); $\mu_{0}\left(X\right)=\sum_{k=1}^{10}b_{k}X_{i}^{\left(k\right)}$
with $b_{k}=1/10$ (LinAll); $\mu_{0}\left(X\right)=X^{\left(1\right)}+\sum_{v=1}^{9}b_{4,v}\left(X^{\left(4\right)}/2\right)^{v}$
where the $b_{4,v}$'s are uniformly distributed in $\left[-20/v,20/v\right]$
(NonLinear). In NonLinear the first variable enters the model linearly,
the forth variable enters it in a nonlinear fashion, while the remaining
variables do not enter the model. The choice of random coefficient
for NonLinear is along the lines of Friedman (2001) to mitigate the
dependence on a specific nonlinear functional form. The number of
simulations is 1000.

\paragraph{Estimation Details.}

We let $\mathcal{H}^{10}$ be generated by the polynomial additive
kernel $C_{\mathcal{H}^{10}}=\sum_{k=1}^{10}C\left(s^{\left(k\right)},t^{\left(k\right)}\right)$,
where $C\left(s^{\left(k\right)},t^{\left(k\right)}\right)=\sum_{v=1}^{10}v^{-2.2}\left(s^{\left(k\right)}t^{\left(k\right)}\right)^{v}$
. For such kernel, the true models in the simulation design all lie
in a strict subset of $\mathcal{H}^{10}$. Estimation is carried out
in $\mathcal{L}^{10}\left(B\right)$ using the algorithm in Section
\ref{Section_estimationAlgo} with number of iterations $m$ equal
to $500$. This should also allow us to assess whether there is a
distortion in the test results when the estimator minimizes the objective
function on $\mathcal{L}^{10}\left(B\right)$ only approximately.
The parameter $B$ is chosen equal to $10\hat{\sigma}_{Y}$ where
$\hat{\sigma}_{Y}$ is the sample standard deviation of $Y$, which
is a crude approach to keep simulations manageable. The eigenvalues
from the sample covariance were used to simulate the limiting process
(see Lemma \ref{Lemma_covarianceConsistency}), from which the p-values
were derived using $10^{4}$ simulations.

\paragraph{Hypotheses.}

Hypotheses are tested within the framework of Section \ref{Section_testingFunctionalRestrictions}.
We estimate Lin1, Lin2, Lin3 and LinAll, using the restricted kernel
$C_{\mathcal{R}_{0}}\left(s,t\right)=\sum_{k=1}^{J}s^{\left(k\right)}t^{\left(k\right)}$
with $J=1,2,3,10$, i.e., a linear model with 1,2,3 and 10 variables
respectively. We also estimate LinPoly using the restricted kernel
$C_{\mathcal{R}_{0}}\left(s,t\right)=s^{\left(1\right)}t^{\left(1\right)}+\sum_{k=2}^{10}C\left(s^{\left(k\right)},t^{\left(k\right)}\right)$
with $C\left(s^{\left(k\right)},t^{\left(k\right)}\right)$ as defined
in the previous paragraph, i.e., the first variable enters the model
linearly, all other functions are unrestricted. In all cases we test
against the full unrestricted model with kernel $C_{\mathcal{H}^{10}}\left(s,t\right)$.

\paragraph{Test functions. }

We exploit the structure of the covariance kernels. Let the function
$h^{\left(v,k\right)}:\mathcal{X}^{K}\rightarrow\mathbb{R}$ be such
that $h^{\left(v,k\right)}\left(s\right)=v^{-1.1}\left(s^{\left(k\right)}\right)^{v}$,
$v=1,2,...,10,$ $k=1,2,...,K$. For the Lin1, Lin2, Lin3, LinAll
models, we set the test functions as elements in $\left\{ h^{\left(v,k\right)}:v=2,3,...,10,k\leq J\right\} $
with $J=1$ for model Lin1, and so on. We project on the span of $\left\{ h^{\left(1,k\right)}:k\leq J\right\} $.
For LinPoly, we set the test functions as elements in $\left\{ h^{\left(v,1\right)}:v=2,3,...,10\right\} $,
and project on the span of $\left\{ h^{\left(1,1\right)}\right\} \cup\left\{ h^{\left(v,k\right)}:v\leq10,k=2,3,...,K\right\} $.

\paragraph{Results. }

Table \ref{Table_simulationSubset} reports the frequency of rejections
for a given nominal size of the test. Here, results are for $n=1000$,
a signal to noise level $\sigma_{\mu/\varepsilon}^{2}=1$, and $\rho=0$
under the three different true designs: Lin3, LinAll, and NonLin.
The column heading ``No $\Pi$'' means that no correction was used
in estimating the test statistic (i.e. test statistic ignoring the
presence of nuisance parameters). The results for the other configurations
of sample size, signal to noise ratio and correlation in the variables
were similar. The LinPoly model is only estimated when the true model
is NonLin. Here, we only report a subset of the tested hypotheses
(Lin3 and LinAll, only). The complete set of results is in Section
\ref{Section_AppendixNumericalResults} in the Appendix (supplementary
material). Without using the projection adjustment, the size of the
test can be highly distorted, as expected. The results reported in
Table \ref{Table_simulationSubset} show that the test (properly constructed
using the projection adjustment) has coverage probability relatively
close to the nominal one when the null holds, and that the test has
a good level of power.

\begin{table}
\caption{Simulated frequency of rejections for $n=1000$, $\sigma_{\mu/\varepsilon}^{2}=1$,
$\rho=0$. The column heading ``Size'' stands for the nominal size.}
\label{Table_simulationSubset}

\begin{tabular}{cccccccc}
Size &  & Lin3 &  & LinAll &  & LinPoly & \tabularnewline
 &  & No $\Pi$ & $\Pi$ & No $\Pi$ & $\Pi$ & No $\Pi$ & $\Pi$\tabularnewline
 &  & \multicolumn{6}{c}{True model: Lin3}\tabularnewline
0.10 &  & 0.09 & 0.11 & 0.07 & 0.10 & - & -\tabularnewline
0.05 &  & 0.05 & 0.05 & 0.04 & 0.06 & - & -\tabularnewline
 &  & \multicolumn{6}{c}{True model: LinAll}\tabularnewline
0.10 &  & 1.00 & 1.00 & 0.49 & 0.08 & - & -\tabularnewline
0.05 &  & 1.00 & 1.00 & 0.23 & 0.05 & - & -\tabularnewline
 &  & \multicolumn{6}{c}{True model: NonLin}\tabularnewline
0.10 &  & 1.00 & 1.00 & 0.92 & 0.91 & 0.03 & 0.1\tabularnewline
0.05 &  & 1.00 & 1.00 & 0.9 & 0.88 & 0.02 & 0.05\tabularnewline
\end{tabular}
\end{table}

\subsubsection{Infinite Dimensional Estimation\label{Sections_simulationsCov}}

\paragraph{Simulation Design: True Model.}

Consider a bivariate regression model with independent standard normal
errors. The regression function is 
\[
\mu_{0}\left(x\right)=b\left(\frac{1}{2}x^{\left(1\right)}+\frac{3}{2}x^{\left(2\right)}-4\left(x^{\left(2\right)}\right)^{2}+3\left(x^{\left(2\right)}\right)^{3}\right),
\]
where the scalar coefficient $b$ is chosen so that the signal to
noise ratio is $1$ and $0.2$ and $x\in\mathcal{X}^{2}$ where $\mathcal{X}=\left[-2,2\right]$.
The covariates $X_{i}$ and the errors $\varepsilon_{i}$ together
with the other details are as in Section \ref{Section_simulationHighDim}.

\paragraph{Estimation Details and Hypotheses.}

We consider two hypotheses. For hypothesis one, $C_{\mathcal{R}_{0}}\left(s,t\right)=0.5\left(1+\sum_{k=1}^{2}s^{\left(k\right)}t^{\left(k\right)}\right)+0.5\exp\left\{ \frac{1}{2}\left(\frac{s^{\left(2\right)}-t^{\left(2\right)}}{0.75}\right)^{2}\right\} $
(Lin1NonLin) and $C_{\mathcal{R}_{1}}\left(s,t\right)=0.5\exp\left\{ \frac{1}{2}\left(\frac{s^{\left(1\right)}-t^{\left(1\right)}}{0.75}\right)^{2}\right\} $.
This means that we postulate a linear model for the first covariate
and a nonlinear for the second. The true model $\mu_{0}$ is in $\mathcal{R}_{0}$,
hence this hypothesis allows us to verify the size of a Type I error.
For hypothesis two, $C_{\mathcal{R}_{0}}\left(s,t\right)=0.5\left(1+\sum_{k=1}^{2}s^{\left(k\right)}t^{\left(k\right)}\right)$
(LinAll) and $C_{\mathcal{R}_{1}}\left(s,t\right)=0.5\exp\left\{ -\frac{1}{2}\left[\sum_{k=1}^{2}\left(\frac{s^{\left(k\right)}-t^{\left(k\right)}}{0.75}\right)^{2}\right]\right\} $.
In this case, the true model is not in $\mathcal{R}_{0}$ and this
hypothesis allows us to verify the power of the test. All the other
details are as in Section \ref{Section_simulationHighDim}.

\paragraph{Test functions.}

Let the function $h^{\left(r\right)}:\mathcal{X}^{2}\rightarrow\mathbb{R}$
be such that $h^{\left(r\right)}\left(s\right)=C_{\mathcal{R}_{j}}\left(s,X_{r}\right)/\sqrt{C_{\mathcal{R}_{j}}\left(X_{r},X_{r}\right)}$,
$r=1,2,...,n$. For Lin1NonLin, and LinAll, the test functions are
in $\left\{ h^{\left(r\right)}:r=1,2,...,n\right\} $. We project
on the functions $\left\{ C_{\mathcal{R}_{0}}\left(\cdot,X_{r}\right):i=1,2,...,n\right\} $. 

\paragraph{Results. }

Table \ref{Table_simulationSubsetCov} reports the frequency of rejections
for $n=1000$, a signal to noise level $\sigma_{\mu/\varepsilon}^{2}=1$,
and $\rho=0$. The detailed and complete set of results is in Section
\ref{Section_AppendixNumericalResults} in the Appendix (supplementary
material). The results still show a considerable improvement relative
to the naive test. 

\begin{table}
\caption{Simulated frequency of rejections for $n=1000$, and various combinations
of signal to noise $\sigma_{\mu/\varepsilon}^{2}$, and variables
correlation $\rho=0$. The true model is linear in the first variable
and nonlinear in the second variable. The column heading ``Size''
stands for the nominal size.}
\label{Table_simulationSubsetCov}

\begin{tabular}{cccccc}
 &  & \multicolumn{2}{c}{Lin1NonLin} & \multicolumn{2}{c}{LinAll}\tabularnewline
$\left(\sigma_{\mu/\varepsilon}^{2},\rho\right)$ & Size & No $\Pi$ & $\Pi$ & No $\Pi$ & $\Pi$\tabularnewline
\hline 
$\left(1,0\right)$ & 0.10 & \textsf{0.00} & \textsf{0.09} & \textsf{0.99} & \textsf{1.00}\tabularnewline
$\left(1,0\right)$ & 0.05 & \textsf{0.00} & \textsf{0.04} & \textsf{0.83} & \textsf{1.00}\tabularnewline
$\left(.2,0\right)$ & 0.10 & \textsf{0.00} & \textsf{0.09} & \textsf{0.00} & \textsf{1.00}\tabularnewline
$\left(.2,0\right)$ & 0.05 & \textsf{0.00} & \textsf{0.04} & \textsf{0.00} & \textsf{1.00}\tabularnewline
$\left(1,.75\right)$ & 0.10 & \textsf{0.00} & \textsf{0.09} & \textsf{1.00} & \textsf{1.00}\tabularnewline
$\left(1,.75\right)$ & 0.05 & \textsf{0.00} & \textsf{0.03} & \textsf{1.00} & \textsf{1.00}\tabularnewline
$\left(.2,.75\right)$ & 0.10 & \textsf{0.00} & \textsf{0.09} & \textsf{0.13} & \textsf{1.00}\tabularnewline
$\left(.2,.75\right)$ & 0.05 & \textsf{0.00} & \textsf{0.03} & \textsf{0.01} & \textsf{1.00}\tabularnewline
\end{tabular}
\end{table}

\subsection{Weakening Condition \ref{Condition_Loss}: Partial Extension to the
Absolute Loss\label{Section_lossExtensions}}

Some loss functions are continuous and convex, but they are not differentiable
everywhere. An important case is the absolute loss and its variations
used for quantile estimation. The following considers an alternative
to Condition \ref{Condition_Loss} that can be used in this case.
Condition \ref{Condition_Dependence} can be weakened, but Condition
\ref{Condition_kernel} has to be slightly tightened. The details
are stated next, but for simplicity for the absolute loss only. More
general losses such as the one used for quantile estimation can be
studied in a similar way. 

\begin{condition}\label{Condition_absoluteLoss}Suppose that $\ell_{\mu}\left(z\right)=\left|y-\mu\left(x\right)\right|$,
$P\ell_{\mu}<\infty$, and that $P\left(y,x\right)=P\left(y|x\right)P\left(x\right)$
where $P\left(y|x\right)$ (the conditional distribution of $Y$ given
$X$) has a bounded density $pdf\left(y|x\right)$ w.r.t. the Lebesgue
measure on $\mathcal{Y}$, and $P\left(x\right)$ is the distribution
of $X\in\mathcal{X}^{K}$. Moreover, $pdf\left(y|x\right)$ has derivative
w.r.t. $y$ which is uniformly bounded for any $x\in\mathcal{X}^{K}$,
and $\min_{\left|t\right|\leq\bar{B},x\in\mathcal{X}^{K}}pdf\left(t|x\right)>0$
($\bar{B}$ as in Section \ref{Section_Conditions}). The sequence
$\left(Z_{i}\right)_{i\in\mathbb{Z}}$ is stationary with summable
beta mixing coefficients. Finally, Condition \ref{Condition_kernel}
holds with $\lambda>3/2$.\end{condition}

\begin{theorem}\label{Theorem_absoluteLossWeakConvergence} Under
Condition \ref{Condition_absoluteLoss}, Theorems \ref{Theorem_consistency}
and \ref{Theorem_weakConvergence} hold, where $\partial\ell_{\mu_{0}}\left(z\right)=2\times1_{\left\{ y-\mu_{0}\left(x\right)\geq0\right\} }-1$
($1_{\left\{ \cdot\right\} }$ is the indicator function) and 
\[
P\partial^{2}\ell_{\mu_{0}}\left(\mu_{n}-\mu_{0}\right)h=2\int pdf\left(\mu_{0}\left(x\right)|x\right)\sqrt{n}\left(\mu_{n}\left(x\right)-\mu_{0}\left(x\right)\right)h\left(x\right)dP\left(x\right).
\]
\end{theorem}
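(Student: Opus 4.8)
\textbf{Proof proposal for Theorem \ref{Theorem_absoluteLossWeakConvergence}.}

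The plan is to reduce the absolute-loss case to the smooth-loss machinery already developed for Theorems \ref{Theorem_consistency} and \ref{Theorem_weakConvergence} by working with a suitable population surrogate for the loss. The key observation is that although $\ell_\mu(z)=|y-\mu(x)|$ is not twice differentiable in $\mu$, the population risk $P\ell_\mu = \int\!\!\int |y-\mu(x)|\,pdf(y|x)\,dy\,dP(x)$ is smooth in $\mu$: under Condition \ref{Condition_absoluteLoss} one may differentiate under the integral sign to get $\partial(P\ell_\mu)[h] = P(2\cdot 1_{\{y-\mu(x)\ge 0\}}-1)h = -2\int \big(1-2F(\mu(x)|x)\big)h(x)\,dP(x)$ where $F(\cdot|x)$ is the conditional c.d.f., and a second derivative $\partial^2(P\ell_\mu)[h,h]=2\int pdf(\mu(x)|x)h(x)^2\,dP(x)$, which is strictly positive and bounded by the density assumptions. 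So I would first establish that $\mu_0=\arg\inf_{\mu\in\mathcal{H}^K(B)}P\ell_\mu$ exists and is unique (strict convexity of $P\ell_\mu$ from $\min pdf>0$ on the relevant range, plus closedness of $\mathcal{H}^K(B)$), exactly mirroring the existence/uniqueness part of Theorem \ref{Theorem_consistency}; the consistency proof there only uses continuity and convexity of $\mu\mapsto P\ell_\mu$ together with a uniform law of large numbers over $\mathcal{H}^K(B)$, and the latter holds here because $\{|y-\mu(x)|:\mu\in\mathcal{H}^K(B)\}$ is a uniformly bounded (after noting $|\mu|_\infty\le\bar B$) Lipschitz-in-$\mu$ class with the right covering numbers and the mixing condition is summable. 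That gives Theorem \ref{Theorem_consistency} for the absolute loss with $\partial\ell_{\mu_0}(z)=2\cdot 1_{\{y-\mu_0(x)\ge 0\}}-1$.

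Next, for Theorem \ref{Theorem_weakConvergence}: the first assertion, weak convergence of $\sqrt n P_n\partial\ell_{\mu_0}h\Rightarrow G(h)$ over $h\in\mathcal{H}^K(1)$, is a direct application of a functional CLT for the class $\{\partial\ell_{\mu_0}h:h\in\mathcal{H}^K(1)\}=\{(2\cdot 1_{\{y-\mu_0(x)\ge 0\}}-1)h(x):h\in\mathcal{H}^K(1)\}$, which is a fixed (non-estimated) uniformly bounded Donsker class under the mixing hypothesis — nothing about differentiability of the loss is used here, only the pointwise definition of $\partial\ell_{\mu_0}$. For the second (linearization) assertion I would pass to the map $\mu\mapsto \sqrt n\, P_n\partial\ell_\mu h$ and decompose
\[
\sqrt n P_n\partial\ell_{\mu_n}h=\sqrt n P_n\partial\ell_{\mu_0}h+\sqrt n(P_n-P)\big(\partial\ell_{\mu_n}h-\partial\ell_{\mu_0}h\big)+\sqrt n P\big(\partial\ell_{\mu_n}h-\partial\ell_{\mu_0}h\big).
\]
The middle term is $o_p(1)$ by stochastic equicontinuity of the empirical process indexed by the shrinking class $\{(\partial\ell_\mu-\partial\ell_{\mu_0})h: |\mu-\mu_0|_\infty\le\delta_n, h\in\mathcal H^K(1)\}$ — here I would use the $L_2(P)$-continuity estimate $P|\partial\ell_\mu h-\partial\ell_{\mu_0}h|^2 = P|1_{\{y\ge\mu(x)\}}-1_{\{y\ge\mu_0(x)\}}|^2 h^2 \lesssim \|pdf\|_\infty |\mu-\mu_0|_\infty |h|_\infty^2 \to 0$, which needs the boundedness of the conditional density and the uniform consistency $|\mu_n-\mu_0|_\infty\to_p 0$ from Theorem \ref{Theorem_consistency} (this is why Condition \ref{Condition_kernel} is tightened to $\lambda>3/2$: it buys the entropy/rate control needed for equicontinuity over a class with $1/2$-type bracketing exponent rather than Lipschitz). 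The drift term is $\sqrt n\,P(\partial\ell_{\mu_n}h-\partial\ell_{\mu_0}h)$, and since $\mu_0\in\mathrm{int}(\mathcal H^K(B))$ makes $\mu_0$ an unconstrained stationary point of the smooth function $\mu\mapsto P\ell_\mu$, so $P\partial\ell_{\mu_0}h=0$, a first-order Taylor expansion of the smooth population functional $\mu\mapsto P\partial\ell_\mu h$ around $\mu_0$ gives exactly $\sqrt n\,P\partial^2\ell_{\mu_0}(\mu_n-\mu_0)h = 2\int pdf(\mu_0(x)|x)\sqrt n(\mu_n(x)-\mu_0(x))h(x)\,dP(x)$ plus a remainder controlled by the uniform bound on $\partial_y pdf(y|x)$ and $|\mu_n-\mu_0|_\infty^2\sqrt n$; the latter is $o_p(1)$ provided the rate $|\mu_n-\mu_0|_\infty = o_p(n^{-1/4})$, which follows from the rate part of Theorem \ref{Theorem_consistency} together with $\eta=\lambda>3/2$ (the square-root-speed-up in the rate coming from the mild smoothness of the population risk, analogous to quantile regression rates).

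The main obstacle will be the stochastic-equicontinuity / remainder control around the indicator kink: unlike the smooth-loss case, the increment class $\{1_{\{y\ge\mu(x)\}}-1_{\{y\ge\mu_0(x)\}}\}$ does not shrink in sup-norm as $\mu\to\mu_0$, only in $L_2(P)$, so one cannot quote the earlier equicontinuity lemmas verbatim — one needs a bracketing or chaining argument for VC-subgraph-type classes combined with the $L_2$-modulus bound above, and one must verify that the bracketing integral converges, which is where the strengthened $\lambda>3/2$ and the summable-mixing (rather than polynomial-rate) hypothesis enter. Once that equicontinuity estimate and the $o_p(n^{-1/4})$ sup-norm rate are in hand, the remaining steps — Taylor expansion of the smooth population functional, identification of the Hessian term, and invocation of the fixed-class CLT — are routine and parallel the arguments already given for Theorem \ref{Theorem_weakConvergence}.
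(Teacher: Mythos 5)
Your overall strategy matches the paper's: pass to the smooth \emph{population} risk, compute its first three Fr\'{e}chet derivatives (the paper's Lemma A.14 gives $\partial^2 P\ell_\mu[h,h]=2\int pdf(\mu(x)|x)h^2(x)dP(x)$ and $\partial^3 P\ell_\mu[h,h,h]=2\int pdf'(\mu(x)|x)h^3(x)dP(x)$, exactly the identities you use), control the complexity of the indicator score class $\{\partial\ell_\mu h\}$ by bracketing entropy exploiting $\Pr(A\Delta A')\lesssim|\mu-\mu'|_\infty$, and then run the proofs of Theorems 1 and 2 essentially verbatim with these ingredients. You also correctly identify where $\eta>3/2$ enters: the bracketing exponent is effectively halved by the indicator, so the (Rio-type) entropy integral $\int_0^1\sqrt{\epsilon^{-1}\ln N_{[]}(\epsilon,\cdot,|\cdot|_1)}\,d\epsilon$ requires $2/(2\eta-1)<1$. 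Your decomposition of $\sqrt n P_n\partial\ell_{\mu_n}h$ into the leading term plus stochastic-equicontinuity term plus drift term is literally the paper's identity for $\Psi_n$.

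Two remarks. First, the paper does not mount the equicontinuity step as a separate chaining/VC-subgraph argument as you propose; it routes through Rio's functional CLT for the \emph{doubly-indexed} process over $\{(\mu,h):\mu\in\mathcal H^K(B),\,h\in\mathcal H^K(1)\}$ (the unnumbered lemma following Lemma A.15, with summable beta mixing), and then uses a.s.\ continuity of the limiting Gaussian field plus uniform consistency of $\mu_n$. Your alternative is morally equivalent but would need to be written carefully; it is not the path of least resistance given Lemma A.15.

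Second — and this is the one genuine gap — you claim the Taylor remainder is of size $|\mu_n-\mu_0|_\infty^2\sqrt n$ and that $|\mu_n-\mu_0|_\infty=o_p(n^{-1/4})$ ``follows from the rate part of Theorem 1.'' Theorem 1 gives no sup-norm rate: Point 1 gives only $|\mu_n-\mu_0|_\infty\to_p 0$, while the rate in Point 2 is in $L_2(P)$. A sup-norm rate would require an interpolation inequality between $|\cdot|_2$, $|\cdot|_\infty$, and $|\cdot|_{\mathcal H^K}$ that the paper neither states nor needs. The correct (and weaker) remainder bound is what the paper actually proves: since $pdf'(\cdot|x)$ is uniformly bounded and $h$ is uniformly bounded,
\[
\sup_{t\in[0,1]}\bigl|\partial^3 P\ell_{\mu_0+t(\mu_n-\mu_0)}\,[h,(\mu_n-\mu_0)^2]\bigr|\ \lesssim\ P(\mu_n-\mu_0)^2=|\mu_n-\mu_0|_2^2,
\]
so the remainder is $\sqrt n\,O_p\!\left(n^{-(2\eta-1)/(2\eta)}\right)=O_p\!\left(n^{(1-\eta)/(2\eta)}\right)=o_p(1)$ already for $\eta>1$. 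You should replace the sup-norm detour with this $L_2$-control; as written, that step of your argument rests on a rate you have not established.
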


The result depends on knowledge of the probability density function
of $Y$ conditioning on $X$. Hence, inference in the presence of
nuisance parameters is less feasible within the proposed methodology. 

\newpage{}

\part*{Supplementary Material }

\setcounter{figure}{0} \renewcommand{\thefigure}{A.\arabic{figure}}

\setcounter{equation}{0} \renewcommand{\theequation}{A.\arabic{equation}} 

\setcounter{page}{1} 

\setcounter{section}{0} \renewcommand{\thesection}{A.\arabic{section}}

\section{Appendix 1: Proofs \label{Section_proofs}}

Recall that $\ell_{\mu}\left(Z\right)=L\left(Z,\mu\left(X\right)\right)$
and $\partial^{k}\ell_{\mu}\left(Z\right)=\left.\partial^{k}L\left(Z,t\right)/\partial t^{k}\right|_{t=\mu\left(X\right)}$,
$k\geq1$. Condition \ref{Condition_Loss} implies Fr\'{e}chet differentiability
of $P\ell_{\mu}$ and $P\partial\ell_{\mu}$ (w.r.t. $\mu\in\mathcal{H}^{K}$)
at $\mu$ in the direction of $h\in\mathcal{H}^{K}$. It can be shown
that these two derivatives are $P\partial\ell_{\mu}h$ and $P\partial^{2}\ell_{\mu}hh$,
respectively. For this purpose, we view $P\ell_{\mu}$ as a map from
the space of uniformly bounded functions on $\mathcal{X}^{K}$ ($L_{\infty}\left(\mathcal{X}^{K}\right)$)
to $\mathbb{R}$. The details can be derived following the steps in
the proof of Lemma 2.21 in Steinwart and Christmann (2008) or the
proof of Lemma A.4 in Hable (2012). The application of those proofs
to the current scenario, essentially requires that the loss function
$L\left(Z,t\right)$ is differentiable w.r.t. real $t$, and that
$\mu$ is uniformly bounded, together with integrability of the quantities
$\Delta_{0}$, and $\Delta_{1}$, as implied by Condition \ref{Condition_Loss}.
It will also be necessary to take the Fr\'{e}chet derivative of $P_{n}\ell_{\mu}$
and $P_{n}\partial\ell_{\mu}h$ conditioning on the sample data. By
Condition \ref{Condition_Loss} this will also hold because $\Delta_{0}$,
and $\Delta_{1}$ are finite. This will also allow us to apply Taylor's
Theorem in Banach spaces. Following the aforementioned remarks, when
the loss function is three times differentiable, we also have that
for any $h\in\mathcal{H}^{K}$, the Fr\'{e}chet derivative of $P\partial^{2}\ell_{\mu}h$
in the direction of $h'\in\mathcal{H}^{K}$ is $P\partial^{3}\ell_{\mu}hh'$.
These facts will be used throughout the proofs with no further mention.
Moreover, throughout, for notational simplicity, we tacitly suppose
that $\sup_{x\in\mathcal{X}^{K}}\sqrt{C_{\mathcal{H}^{K}}\left(x,x\right)}=1$
so that $h\in\mathcal{H}^{K}\left(B\right)$ implies that $\left|h\right|_{\infty}\leq B$
for any $B>0$. 

\subsection{Complexity and Gaussian Approximation\label{Section_GaussianApproximation}}

The reader can skip this section and refer to it when needed. Recall
that the $\epsilon$-covering number of a set $\mathcal{F}$ under
the $L_{p}$ norm (denoted by $N\left(\epsilon,\mathcal{F},\left|\cdot\right|_{p}\right)$)
is the minimum number of balls of $L_{p}$ radius $\epsilon$ needed
to cover $\mathcal{F}$. The entropy is the logarithm of the covering
number. The $\epsilon$-bracketing number of the set $\mathcal{F}$
under the $L_{p}$ norm is the minimum number of $\epsilon$-brackets
under the $L_{p}$ norm needed to cover $\mathcal{F}$. Given two
functions $f_{L}\leq f_{U}$ such that $\left|f_{L}-f_{U}\right|_{p}\leq\epsilon$,
an $L_{p}$ $\epsilon$-bracket $\left[f_{L},f_{U}\right]$ is the
set of all functions $f\in\mathcal{F}$ such that $f_{L}\leq f\leq f_{U}$.
Denote the $L_{p}$ $\epsilon$-bracketing number of $\mathcal{F}$
by $N_{\left[\right]}\left(\epsilon,\mathcal{F},\left|\cdot\right|_{p}\right)$.
Under the uniform norm, $N\left(\epsilon,\mathcal{F},\left|\cdot\right|_{\infty}\right)=N_{\left[\right]}\left(\epsilon,\mathcal{F},\left|\cdot\right|_{\infty}\right)$.

In this section, let $\left(G\left(x\right)\right)_{x\in\mathcal{X}}$
be a centered Gaussian process on $\mathcal{X}$ with covariance $C$
as in (\ref{EQ_covSeriesRepresentation}). For any $\epsilon>0$,
let 
\[
\phi\left(\epsilon\right)=-\ln\Pr\left(\left|G\right|_{\infty}<\epsilon\right).
\]

The space $\mathcal{H}$ is generated by the measure of the Gaussian
process $\left(G\left(x\right)\right)_{x\in\mathcal{X}}$ with covariance
function $C$. In particular, $G\left(x\right)=\sum_{v=1}^{\infty}\lambda_{v}\xi_{v}\varphi_{v}\left(x\right)$,
where the $\left(\xi_{v}\right)_{v\geq1}$ is a sequence of i.i.d.standard
normal random variables, and the equality holds in distribution. For
any positive integer $V$, the $l$-approximation number $l_{V}\left(G\right)$
w.r.t.  $\left|\cdot\right|_{\infty}$ (e.g., Li and Linde, 1999,
see also Li and Shao, 2001) is bounded above by $\left(\mathbb{E}\left|\sum_{v>V}\lambda_{v}\xi_{v}\varphi_{v}\right|_{\infty}^{2}\right)^{1/2}$.
Under Condition \ref{Condition_kernel}, deduce that 
\begin{equation}
l_{V}\left(G\right)\lesssim\sum_{v>V}\lambda_{v}\lesssim V^{-\left(\eta-1\right)}.\label{EQ_lApproximationNumber}
\end{equation}
There is a link between the $l_{V}\left(G\right)$ approximating number
of the centered Gaussian process $G$ with covariance $C$ and the
$L_{\infty}$ $\epsilon$-entropy number of the class of functions
$\mathcal{H}\left(1\right)$, which is denoted by $\ln N\left(\epsilon,\mathcal{H}\left(1\right),\left|\cdot\right|_{\infty}\right)$.
These quantities are also related to the small ball probability of
$G$ under the sup norm (results hold for other norms, but will not
be used here). We have the following bound on the $\epsilon$-entropy
number of $\mathcal{H}\left(1\right)$.

\begin{lemma}\label{Lemma_B_H_entropy}Under Condition \ref{Condition_kernel},
$\ln N\left(\epsilon,\mathcal{H}\left(1\right),\left|\cdot\right|_{\infty}\right)\lesssim\epsilon^{-2/\left(2\eta-1\right)}$.\end{lemma}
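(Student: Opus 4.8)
The plan is to reduce Lemma \ref{Lemma_B_H_entropy} to the approximation-number bound already recorded in (\ref{EQ_lApproximationNumber}). Under Condition \ref{Condition_kernel} we have $l_V\left(G\right)\lesssim\sum_{v>V}\lambda_v\lesssim V^{-\left(\eta-1\right)}$, and there is a standard equivalence (Li and Linde, 1999; see also Li and Shao, 2001) between the decay of the $l$-approximation numbers of a centered Gaussian process and the metric entropy of the unit ball of its Cameron--Martin space, namely the RKHS $\mathcal{H}$ generated by $C$: if $l_V\left(G\right)\lesssim V^{-\theta}$ for some $\theta>0$, then $\ln N\left(\epsilon,\mathcal{H}\left(1\right),\left|\cdot\right|_{\infty}\right)\lesssim\epsilon^{-2/\left(2\theta+1\right)}$. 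Taking $\theta=\eta-1>0$ gives the exponent $2/\left(2\theta+1\right)=2/\left(2\eta-1\right)$, which is exactly the claim. So the first step is to check that Condition \ref{Condition_kernel} supplies precisely the hypotheses for (\ref{EQ_lApproximationNumber}) and for the quoted equivalence, and then invoke that equivalence.

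Should one prefer a self-contained argument, the alternative is a direct multi-scale covering. Write any $f\in\mathcal{H}\left(1\right)$, with $f\left(x\right)=\sum_{v}f_{v}\varphi_{v}\left(x\right)$ as in (\ref{EQ_covSeriesRepresentation}), as a block sum $f=\sum_{k\geq0}f^{\left(k\right)}$ over dyadic blocks $I_{k}=\left\{ v:2^{k-1}\leq v<2^{k}\right\} $ of the index set. Using $\sup_{v}\left|\varphi_{v}\right|_{\infty}<\infty$, Cauchy--Schwarz, $\lambda_{v}^{2}\lesssim v^{-2\eta}$, and $\sum_{v}f_{v}^{2}/\lambda_{v}^{2}\leq1$, one gets
\[
\left|f^{\left(k\right)}\right|_{\infty}\lesssim\sum_{v\in I_{k}}\left|f_{v}\right|\leq\Big(\sum_{v\in I_{k}}\frac{f_{v}^{2}}{\lambda_{v}^{2}}\Big)^{1/2}\Big(\sum_{v\in I_{k}}\lambda_{v}^{2}\Big)^{1/2}\lesssim 2^{-k\left(\eta-1/2\right)}.
\]
Fix $K^{*}$ with $2^{K^{*}}\asymp\epsilon^{-2/\left(2\eta-1\right)}$; then $\sum_{k>K^{*}}\left|f^{\left(k\right)}\right|_{\infty}\lesssim2^{-K^{*}\left(\eta-1/2\right)}\lesssim\epsilon$, so the tail blocks contribute nothing beyond $\epsilon$. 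For $k\leq K^{*}$, the set $\left\{ f^{\left(k\right)}:f\in\mathcal{H}\left(1\right)\right\} $ is the image, under a fixed linear map, of an ellipsoid in $\mathbb{R}^{\left|I_{k}\right|}$ and lies in an $\left|\cdot\right|_{\infty}$-ball of radius $\lesssim2^{-k\left(\eta-1/2\right)}$, hence is covered at scale $\delta_{k}$ by $\exp\left(O\left(2^{k}\log\left(2^{-k\left(\eta-1/2\right)}/\delta_{k}\right)\right)\right)$ points. Choosing the geometrically shrinking radii $\delta_{k}=\epsilon\,2^{-\left(K^{*}-k\right)}$ (so that $\sum_{k\leq K^{*}}\delta_{k}\lesssim\epsilon$) makes $\log\left(2^{-k\left(\eta-1/2\right)}/\delta_{k}\right)\asymp K^{*}-k$, and taking the product of the block covers,
\[
\ln N\left(\epsilon,\mathcal{H}\left(1\right),\left|\cdot\right|_{\infty}\right)\lesssim\sum_{k=0}^{K^{*}}2^{k}\left(K^{*}-k\right)\lesssim 2^{K^{*}}\lesssim\epsilon^{-2/\left(2\eta-1\right)}.
\]

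The main obstacle, and the reason the geometric choice of the $\delta_{k}$ matters, is that the crude alternative --- a single head/tail split at the first $V\asymp\epsilon^{-2/\left(2\eta-1\right)}$ coordinates with a volumetric bound on the head --- produces an extra factor $\log\left(1/\epsilon\right)$, which is not absorbed into $\epsilon^{-2/\left(2\eta-1\right)}$. The dyadic block decomposition with geometrically decreasing covering radii is exactly what removes that logarithm; beyond this bookkeeping, and the elementary Cauchy--Schwarz estimate on each block (which uses $\eta>1/2$, and hence is comfortably covered by Condition \ref{Condition_kernel}), the argument is routine. If instead one routes through Li and Linde (1999), the only care needed is to state the approximation-number-to-entropy equivalence with the correct exponent and to note that it applies to the sup-norm $\left|\cdot\right|_{\infty}$ under the continuity and boundedness in Condition \ref{Condition_kernel}.
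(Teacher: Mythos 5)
Both routes in your proposal are correct, and they differ in character from each other. Your first route is essentially the paper's proof, compressed: the paper also starts from the approximation-number bound (\ref{EQ_lApproximationNumber}), then invokes Proposition 4.1 of Li and Linde (1999) to obtain $\phi\left(\epsilon\right)\lesssim\epsilon^{-1/\left(\eta-1\right)}$ and Theorem 1.2 there to pass from the small-ball exponent to the entropy of the RKHS unit ball; you have simply collapsed these two citations into a single approximation-number-to-entropy statement, and the exponent bookkeeping $2/\left(2\theta+1\right)=2/\left(2\eta-1\right)$ at $\theta=\eta-1$ is right. Your second, self-contained dyadic-block argument is a genuinely different route, and it also works: the block sup bound $\left|f^{\left(k\right)}\right|_{\infty}\lesssim2^{-k\left(\eta-1/2\right)}$ is a correct use of Cauchy--Schwarz with $\lambda_{v}^{2}\lesssim v^{-2\eta}$ and the unit-ball constraint; choosing $2^{K^{*}}\asymp\epsilon^{-2/\left(2\eta-1\right)}$ makes the tail $O\left(\epsilon\right)$; and the geometrically shrinking radii $\delta_{k}=\epsilon\,2^{-\left(K^{*}-k\right)}$ give $\sum_{k\leq K^{*}}2^{k}\left(K^{*}-k\right)\lesssim2^{K^{*}}$, which does remove the $\log\left(1/\epsilon\right)$ that a single head/tail truncation would leave. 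The comparison is: the paper's route is short and delegates the work to known Gaussian small-ball/entropy duality; yours is elementary (no small-ball machinery) and makes transparent that only $\eta>1/2$ is needed here, whereas the paper's $\eta>1$ is used elsewhere. One small point worth making explicit in your direct argument is the endpoint $k=K^{*}$, where $R_{k}/\delta_{k}$ is of order $1$ rather than large, so $\ln N_{k}\lesssim2^{k}\ln\left(1+R_{k}/\delta_{k}\right)\lesssim2^{k}\left(1+\left(K^{*}-k\right)\right)$; summing still gives $\lesssim2^{K^{*}}$, so nothing changes, but as written "$\lesssim\sum 2^{k}\left(K^{*}-k\right)$" skips the additive $1$.
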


\begin{proof}As previously remarked, the space $\mathcal{H}\left(1\right)$
is generated by the law of the Gaussian process $G$ with covariance
function $C$. For any integer $V<\infty$, the $l$-approximation
number of $G$, $l_{V}\left(G\right)$ is bounded as in (\ref{EQ_lApproximationNumber}).
In consequence, $\phi\left(\epsilon\right)\lesssim\epsilon^{-1/\left(\eta-1\right)}$,
by Proposition 4.1 in Li and Linde (1999). Then, Theorem 1.2 in Li
and Linde (1999) implies that $\ln N\left(\epsilon,\mathcal{H}\left(1\right),\left|\cdot\right|_{\infty}\right)\lesssim\epsilon^{-2/\left(2\eta-1\right)}$.\end{proof}

\begin{lemma}\label{Lemma_entropyOfL}Under Condition \ref{Condition_kernel},
\[
\ln N\left(\epsilon,\mathcal{H}^{K}\left(B\right),\left|\cdot\right|_{\infty}\right)\lesssim\left(B/\epsilon\right)^{2/\left(2\eta-1\right)}+K\ln\left(\frac{B}{\epsilon}\right).
\]

\end{lemma}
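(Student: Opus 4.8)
The plan is to bootstrap from Lemma \ref{Lemma_B_H_entropy}, the $L_{\infty}$-entropy bound for a single component ball, by covering an additive function through (i) a discretization of the vector of component ``radii'' and (ii) for each discretized radius, a covering of the corresponding component inside $\mathcal{H}$. Write a generic $\mu=\sum_{k=1}^{K}f^{(k)}\in\mathcal{H}^{K}(B)$ with $r_{k}:=|f^{(k)}|_{\mathcal{H}}$, so that $\sum_{k=1}^{K}r_{k}^{2}=|\mu|_{\mathcal{H}^{K}}^{2}\le B^{2}$, and recall from the reproducing property that $|h|_{\infty}\le\sqrt{\sup_{s}C(s,s)}\,|h|_{\mathcal{H}}\lesssim|h|_{\mathcal{H}}$. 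Fix $\epsilon>0$, set $\tau:=\min\{\epsilon,BK^{-1/2}\}$, and let $\tilde r_{k}$ be the smallest point of the grid $\{0,\tau,2\tau,\dots\}\cup\{B\}$ with $\tilde r_{k}\ge r_{k}$; then $r_{k}\le\tilde r_{k}\le r_{k}+\tau\le B$. There are at most $(B/\tau+1)^{K}$ such vectors $(\tilde r_{1},\dots,\tilde r_{K})$, with $B/\tau=\max\{B/\epsilon,K^{1/2}\}$, hence a log-count $\lesssim K\ln(B/\epsilon)$ (the possible extra $K\ln K$ is harmless, $K$ being a fixed parameter), and one still has $\sum_{k}\tilde r_{k}^{2}\le\sum_{k}(r_{k}+\tau)^{2}\le 2B^{2}+2K\tau^{2}\lesssim B^{2}$.

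Given the rounded radii, cover each ball $\mathcal{H}(\tilde r_{k})$ in $L_{\infty}$ at resolution $\epsilon/(2K)$. Since $f^{(k)}\in\mathcal{H}(r_{k})\subseteq\mathcal{H}(\tilde r_{k})$, one obtains a net point $\tilde f^{(k)}$ with $|f^{(k)}-\tilde f^{(k)}|_{\infty}\le\epsilon/(2K)$, whence $|\mu-\sum_{k}\tilde f^{(k)}|_{\infty}\le\sum_{k}|f^{(k)}-\tilde f^{(k)}|_{\infty}\le\epsilon/2<\epsilon$, so these net points do form an $\epsilon$-net of $\mathcal{H}^{K}(B)$. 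By Lemma \ref{Lemma_B_H_entropy} and rescaling, $\ln N(\epsilon/(2K),\mathcal{H}(\tilde r_{k}),|\cdot|_{\infty})=\ln N\bigl(\tfrac{\epsilon}{2K\tilde r_{k}},\mathcal{H}(1),|\cdot|_{\infty}\bigr)\lesssim(K\tilde r_{k}/\epsilon)^{p}$ with $p:=2/(2\eta-1)$; summing over $k$, the direction part contributes a log-count $\lesssim(K/\epsilon)^{p}\sum_{k=1}^{K}\tilde r_{k}^{p}$. Since $\eta>1$ forces $p<2$, the power-mean (Hölder) inequality gives $\sum_{k=1}^{K}\tilde r_{k}^{p}\le K^{1-p/2}\bigl(\sum_{k}\tilde r_{k}^{2}\bigr)^{p/2}\lesssim K^{1-p/2}B^{p}$, so this contribution is $\lesssim K^{1+p/2}(B/\epsilon)^{p}$. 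Adding the radius and direction contributions and absorbing the fixed power of $K$ into the constant yields $\ln N(\epsilon,\mathcal{H}^{K}(B),|\cdot|_{\infty})\lesssim(B/\epsilon)^{2/(2\eta-1)}+K\ln(B/\epsilon)$.

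The only genuine difficulty is the joint bookkeeping of error against count: the per-component resolution must shrink like $\epsilon/K$ so the $K$ individual $L_{\infty}$ errors still sum to at most $\epsilon$, and the question is whether the resulting aggregate of $K$ component entropies stays controlled. It is exactly the $\ell^{2}$ budget $\sum_{k}r_{k}^{2}\le B^{2}$, used through the power-mean inequality (legitimate precisely because $\eta>1$ makes $p=2/(2\eta-1)<2$), that keeps $\sum_{k}\tilde r_{k}^{p}$ — hence the direction count — from degrading beyond a fixed power of $K$ times $(B/\epsilon)^{2/(2\eta-1)}$; a cruder scheme (covering $\mathcal{H}(B)$ for every component, ignoring the radii) would inflate the power of $K$. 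Everything else is routine: rescaling covering numbers, the triangle inequality in $L_{\infty}$, and absorbing the constant depending on the fixed $K$ into $\lesssim$.
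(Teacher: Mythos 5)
Your proposal is correct and follows essentially the paper's own decomposition: split $\mathcal{H}^K(B)$ into a ``magnitude'' part (the vector of component $\mathcal{H}$-norms, constrained to the $\ell^2$ ball of radius $B$) and a ``direction'' part (the $K$ component balls), cover each separately, and combine via the triangle inequality in $L_\infty$. The paper's own proof is terser: it simply bounds the covering number of $\mathcal{H}^K(B)$ by the product of the covering number of the coefficient set $\mathcal{F}_1=\{(b_1,\dots,b_K):\sum_k b_k^2\le B^2\}$ (giving $K\ln(B/\epsilon)$) with that of a single $\mathcal{H}(B)$ ball (giving $(B/\epsilon)^{2/(2\eta-1)}$ via Lemma \ref{Lemma_B_H_entropy}), silently absorbing all $K$-dependent constants and not spelling out the $\epsilon/K$ per-component resolution needed to keep the cumulative $L_\infty$ error below $\epsilon$. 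You fill in that bookkeeping explicitly, and you additionally route the $\ell^2$ budget $\sum_k r_k^2\le B^2$ through the power-mean inequality (legitimate exactly because $\eta>1$ forces $p=2/(2\eta-1)<2$), which trims the prefactor on the exponential term from $K^{1+p}$ — what the crude component-by-component scheme would give — down to $K^{1+p/2}$. Since $K$ is fixed and the $\lesssim$ absorbs $K$-dependent constants, both variants yield the stated bound; your version is merely the more careful writeup of the same idea.
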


\begin{proof}Functions in $\mathcal{H}^{K}\left(B\right)$ can be
written as $\mu\left(x\right)=\sum_{k=1}^{K}b_{k}f^{\left(k\right)}\left(x^{\left(k\right)}\right)$
where $f^{\left(k\right)}\in\mathcal{H}\left(1\right)$. Hence, the
covering number of $\left\{ \mu\in\mathcal{H}^{K}\left(B\right)\right\} $
is bounded by the product of the covering number of the sets $\mathcal{F}_{1}:=\left\{ \left(b_{1},b_{2},...,b_{K}\right)\in\mathbb{R}^{K}:\sum_{k=1}^{K}b_{k}^{2}\leq B^{2}\right\} $
and $\mathcal{F}_{2}:=\left\{ f^{\left(k\right)}\in\mathcal{H}\left(B\right)\right\} $.
The $\epsilon$-covering number of $\mathcal{F}_{1}$ is bounded by
a constant multiple of $\left(B/\epsilon\right)^{K}$ under the supremum
norm. The $\epsilon$-covering number of $\mathcal{F}_{2}$ is given
by Lemma \ref{Lemma_B_H_entropy}, i.e. $\exp\left\{ \left(B/\epsilon\right)^{2/\left(2\eta-1\right)}\right\} $.
The lemma follows by taking logs of these quantities.\end{proof}

Next, link the entropy of $\mathcal{H}\left(1\right)$ to the entropy
with bracketing of $\ell_{\mu}h$. 

\begin{lemma}\label{Lemma_entropy_dLh} Suppose Condition \ref{Condition_kernel}
holds. For the set $\mathcal{F}:=\left\{ \partial\ell_{\mu}h:\mu\in\mathcal{H}^{K}\left(B\right),h\in\mathcal{H}^{K}\left(1\right)\right\} $,
for any $p\in\left[1,\infty\right]$ satisfying Condition \ref{Condition_Loss},
the $L_{p}$ $\epsilon$-entropy with bracketing is 
\[
\ln N_{\left[\right]}\left(\epsilon,\mathcal{F},\left|\cdot\right|_{p}\right)\lesssim\left(B/\epsilon\right)^{2/\left(2\eta-1\right)}+K\ln\left(\frac{B}{\epsilon}\right).
\]

The same exact result holds for $\mathcal{F}:=\left\{ \ell_{\mu}:\mu\in\mathcal{H}^{K}\left(B\right)\right\} $
under Condition \ref{Condition_Loss}. \end{lemma}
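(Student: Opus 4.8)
The plan is to bound the $L_{p}$ bracketing number of $\mathcal{F}$ by the product of the supremum-norm covering numbers of $\mathcal{H}^{K}\left(B\right)$ and $\mathcal{H}^{K}\left(1\right)$, both of which are supplied by Lemma \ref{Lemma_entropyOfL}. The bridge is a pointwise Lipschitz estimate for the map $\left(\mu,h\right)\mapsto\partial\ell_{\mu}h$ whose Lipschitz ``constant'' is the random envelope $\Delta_{1}+\Delta_{2}$; the finiteness of its $L_{p}$ norm, guaranteed by Condition \ref{Condition_Loss}, is what converts a sup-norm net into an $L_{p}$ bracket of comparable size.

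First I would fix $\mu,\mu'\in\mathcal{H}^{K}\left(B\right)$ and $h,h'\in\mathcal{H}^{K}\left(1\right)$ and split
\[
\left|\partial\ell_{\mu}\left(z\right)h\left(x\right)-\partial\ell_{\mu'}\left(z\right)h'\left(x\right)\right|\leq\left|\partial\ell_{\mu}\left(z\right)\right|\left|h\left(x\right)-h'\left(x\right)\right|+\left|h'\left(x\right)\right|\left|\partial\ell_{\mu}\left(z\right)-\partial\ell_{\mu'}\left(z\right)\right|.
\]
Since $\mu\left(x\right),\mu'\left(x\right)\in\left[-\bar{B},\bar{B}\right]$ and $L\left(z,\cdot\right)$ is $C^{2}$ there, the mean value theorem gives $\left|\partial\ell_{\mu}\left(z\right)-\partial\ell_{\mu'}\left(z\right)\right|\leq\Delta_{2}\left(z\right)\left|\mu\left(x\right)-\mu'\left(x\right)\right|$; combining this with $\left|\partial\ell_{\mu}\left(z\right)\right|\leq\Delta_{1}\left(z\right)$, with $\left|h'\left(x\right)\right|\leq1$ (the normalization $\sup_{x}\sqrt{C_{\mathcal{H}^{K}}\left(x,x\right)}=1$), and with $\left|\mu-\mu'\right|_{\infty},\left|h-h'\right|_{\infty}\leq\delta$ yields
\[
\left|\partial\ell_{\mu}h-\partial\ell_{\mu'}h'\right|\leq\left(\Delta_{1}+\Delta_{2}\right)\delta\quad\text{pointwise on }\mathcal{Z}.
\]

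Next I would take $\delta$-nets $\left\{ \mu_{j}\right\} $ of $\mathcal{H}^{K}\left(B\right)$ and $\left\{ h_{l}\right\} $ of $\mathcal{H}^{K}\left(1\right)$ in $\left|\cdot\right|_{\infty}$, and for each pair form the bracket $\left[\partial\ell_{\mu_{j}}h_{l}-\left(\Delta_{1}+\Delta_{2}\right)\delta,\;\partial\ell_{\mu_{j}}h_{l}+\left(\Delta_{1}+\Delta_{2}\right)\delta\right]$. By the previous display every element of $\mathcal{F}$ lies in one such bracket, and each bracket has $L_{p}$ width $2\delta\left|\Delta_{1}+\Delta_{2}\right|_{p}$, which is finite by Condition \ref{Condition_Loss}. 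Taking $\delta$ equal to a small constant multiple of $\epsilon$ (the constant depending only on $\left|\Delta_{1}+\Delta_{2}\right|_{p}$) makes this width at most $\epsilon$, so $N_{\left[\right]}\left(\epsilon,\mathcal{F},\left|\cdot\right|_{p}\right)$ is at most the product of the two $\delta$-covering numbers. Taking logarithms and invoking Lemma \ref{Lemma_entropyOfL} (once with radius $B$, once with radius $1$, the latter absorbed into the former since the leading term is a positive power of $1/\epsilon$ and $B$ is fixed) gives the stated bound. For $\mathcal{F}=\left\{ \ell_{\mu}:\mu\in\mathcal{H}^{K}\left(B\right)\right\} $ the argument is identical but shorter: $\left|\ell_{\mu}\left(z\right)-\ell_{\mu'}\left(z\right)\right|\leq\Delta_{1}\left(z\right)\left|\mu\left(x\right)-\mu'\left(x\right)\right|\leq\Delta_{1}\left(z\right)\delta$ by the mean value theorem, so a single $\delta$-net of $\mathcal{H}^{K}\left(B\right)$ in $\left|\cdot\right|_{\infty}$ induces the brackets and only the $\left(B/\epsilon\right)^{2/\left(2\eta-1\right)}+K\ln\left(B/\epsilon\right)$ term appears. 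The one genuinely delicate point is the unboundedness of the derivative envelopes: because $\partial\ell_{\mu}$ is not uniformly bounded in $z$, a plain sup-norm-to-sup-norm covering argument fails and one must use brackets whose half-width is the random function $\left(\Delta_{1}+\Delta_{2}\right)\delta$ rather than a deterministic constant; Condition \ref{Condition_Loss}'s moment requirement $P\left(\Delta_{1}^{p}+\Delta_{2}^{p}\right)<\infty$ is exactly what keeps these brackets small in $L_{p}$. The remaining ingredients are routine.
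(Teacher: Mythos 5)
Your proposal is correct and follows essentially the same route as the paper: the same triangle-inequality split of $\partial\ell_{\mu}h-\partial\ell_{\mu'}h'$, the same mean-value bound $\left|\partial\ell_{\mu}-\partial\ell_{\mu'}\right|\leq\Delta_{2}\left|\mu-\mu'\right|_{\infty}$, and the same reduction to $L_{\infty}$ covering numbers from Lemma \ref{Lemma_entropyOfL}. The only difference is cosmetic: the paper packages the passage from the Lipschitz estimate with random envelope $\Delta_{1}+\Delta_{2}$ to an $L_{p}$ bracketing bound as a citation of Theorem 2.7.11 of van der Vaart and Wellner, whereas you spell out that theorem's construction explicitly (brackets $\partial\ell_{\mu_{j}}h_{l}\pm\left(\Delta_{1}+\Delta_{2}\right)\delta$ around a sup-norm net, with $\delta$ tuned by the $L_{p}$ moment of the envelope).
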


\begin{proof}In the interest of conciseness, we only prove the result
for 
\[
\mathcal{F}:=\left\{ \partial\ell_{\mu}h:\mu\in\mathcal{H}^{K}\left(B\right),h\in\mathcal{H}^{K}\left(1\right)\right\} .
\]
To this end, note that by Condition \ref{Condition_Loss} and the
triangle inequality, 
\[
\left|\partial\ell_{\mu}h-\partial\ell_{\mu'}h'\right|\leq\left|\partial\ell_{\mu}-\partial\ell_{\mu'}\right|\sup_{h\in\mathcal{H}^{K}\left(1\right)}\left|h\right|+\sup_{\mu\in\mathcal{H}^{K}\left(B\right)}\left|\partial\ell_{\mu}\right|\left|h-h'\right|.
\]
 By Condition \ref{Condition_Loss}, $\left|\partial\ell_{\mu}\left(z\right)\right|\leq\Delta_{1}\left(z\right)$,
and $\left|\partial\ell_{\mu}\left(z\right)-\partial\ell_{\mu'}\left(z\right)\right|\leq\Delta_{2}\left(z\right)\left|\mu\left(x\right)-\mu'\left(x\right)\right|$,
and $P\left(\Delta_{1}^{p}+\Delta_{2}^{p}\right)<\infty$. By Lemma
\ref{Lemma_propertiesOfL}, $\left|h\left(x\right)\right|\lesssim1$.
By these remarks, the previous display is bounded by 
\[
\Delta_{2}\left(z\right)\left|\mu-\mu'\right|_{\infty}+\Delta_{1}\left(z\right)\left|h-h'\right|_{\infty}.
\]
Theorem 2.7.11 in van der Vaart and Wellner (2000) says that the $L_{p}$
$\epsilon$-bracketing number of class of functions satisfying the
above Lipschitz kind of condition is bounded by the $L_{\infty}$
$\epsilon'$-covering number of $\mathcal{H}^{K}\left(B\right)\times\mathcal{H}^{K}\left(1\right)$
with $\epsilon'=\epsilon/\left[2\left(P\left|\Delta_{1}+\Delta_{2}\right|^{p}\right)^{1/p}\right]$.
Using Lemma \ref{Lemma_entropyOfL}, the statement of the lemma is
deduced because the product of the covering numbers is the sum of
the entropy numbers.\end{proof}

We shall also need the following.

\begin{lemma}\label{Lemma_entropy_dL2hh} Suppose Condition \ref{Condition_kernel}
holds. For the set $\mathcal{F}:=\left\{ \partial\ell_{\mu}^{2}hh':\mu\in\mathcal{H}^{K}\left(B\right),h,h'\in\mathcal{H}^{K}\left(1\right)\right\} $,
and any $p\in\left[1,\infty\right]$ satisfying Condition \ref{Condition_Loss}
with the addition that $P\left(\Delta_{1}^{2p}+\Delta_{1}^{p}\Delta_{2}^{p}\right)<\infty$,
the $L_{p}$ $\epsilon$-entropy with bracketing is 
\[
\ln N_{\left[\right]}\left(\epsilon,\mathcal{F},\left|\cdot\right|_{p}\right)\lesssim\left(B/\epsilon\right)^{2/\left(2\eta-1\right)}+K\ln\left(\frac{B}{\epsilon}\right).
\]

If also $P\left(\Delta_{2}^{p}+\Delta_{3}^{p}\right)<\infty$, $\left\{ \partial^{2}\ell_{\mu}hh':\mu\in\mathcal{H}^{K}\left(B\right),h,h'\in\mathcal{H}^{K}\left(1\right)\right\} $
has $L_{p}$ $\epsilon$-entropy with bracketing as in the above display.
\end{lemma}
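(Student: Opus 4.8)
The plan is to follow the proof of Lemma \ref{Lemma_entropy_dLh} almost verbatim: I establish a pointwise Lipschitz bound for the maps $(\mu,h,h')\mapsto\partial\ell_{\mu}^{2}hh'$ and $(\mu,h,h')\mapsto\partial^{2}\ell_{\mu}hh'$ over the product index set $\mathcal{H}^{K}\left(B\right)\times\mathcal{H}^{K}\left(1\right)\times\mathcal{H}^{K}\left(1\right)$, with an $L_{p}$-integrable envelope, and then invoke Theorem 2.7.11 in van der Vaart and Wellner (2000) to bound the $L_{p}$ bracketing number of $\mathcal{F}$ by the $L_{\infty}$ covering number of that product set, which is controlled by Lemma \ref{Lemma_entropyOfL}.

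First I would collect the elementary estimates implied by Condition \ref{Condition_Loss} and the mean value theorem, valid for $\mu,\mu''\in\mathcal{H}^{K}\left(B\right)$ since then $\left|\mu\right|_{\infty},\left|\mu''\right|_{\infty}\le\bar{B}$: namely $\left|\partial\ell_{\mu}\left(z\right)\right|\le\Delta_{1}\left(z\right)$, $\left|\partial\ell_{\mu}\left(z\right)-\partial\ell_{\mu''}\left(z\right)\right|\le\Delta_{2}\left(z\right)\left|\mu-\mu''\right|_{\infty}$, $\left|\partial^{2}\ell_{\mu}\left(z\right)\right|\le\Delta_{2}\left(z\right)$, and $\left|\partial^{2}\ell_{\mu}\left(z\right)-\partial^{2}\ell_{\mu''}\left(z\right)\right|\le\Delta_{3}\left(z\right)\left|\mu-\mu''\right|_{\infty}$. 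Factoring $\partial\ell_{\mu}^{2}-\partial\ell_{\mu''}^{2}=\left(\partial\ell_{\mu}+\partial\ell_{\mu''}\right)\left(\partial\ell_{\mu}-\partial\ell_{\mu''}\right)$ gives $\left|\partial\ell_{\mu}^{2}\left(z\right)-\partial\ell_{\mu''}^{2}\left(z\right)\right|\le2\Delta_{1}\left(z\right)\Delta_{2}\left(z\right)\left|\mu-\mu''\right|_{\infty}$. Telescoping $\partial\ell_{\mu}^{2}hh'-\partial\ell_{\mu''}^{2}h''h'''=\left(\partial\ell_{\mu}^{2}-\partial\ell_{\mu''}^{2}\right)hh'+\partial\ell_{\mu''}^{2}\bigl(h\left(h'-h'''\right)+h'''\left(h-h''\right)\bigr)$ and using $\left|h\right|_{\infty}\le1$ on $\mathcal{H}^{K}\left(1\right)$ (the normalization $\sup_{x}\sqrt{C_{\mathcal{H}^{K}}\left(x,x\right)}=1$ adopted in the proofs) then yields
\[
\left|\partial\ell_{\mu}^{2}hh'-\partial\ell_{\mu''}^{2}h''h'''\right|\lesssim\Delta_{1}\left(z\right)\Delta_{2}\left(z\right)\left|\mu-\mu''\right|_{\infty}+\Delta_{1}^{2}\left(z\right)\bigl(\left|h-h''\right|_{\infty}+\left|h'-h'''\right|_{\infty}\bigr).
\]
The envelope is thus $F\lesssim\Delta_{1}\Delta_{2}+\Delta_{1}^{2}$, so $F^{p}\lesssim\Delta_{1}^{p}\Delta_{2}^{p}+\Delta_{1}^{2p}$ and $PF^{p}<\infty$ by the added hypothesis $P\left(\Delta_{1}^{2p}+\Delta_{1}^{p}\Delta_{2}^{p}\right)<\infty$ (and $\left|F\right|_{\infty}<\infty$ under the corresponding sup-norm bounds when $p=\infty$).

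Now Theorem 2.7.11 in van der Vaart and Wellner (2000) bounds $N_{\left[\right]}\left(\epsilon,\mathcal{F},\left|\cdot\right|_{p}\right)$ by the supremum-norm $\epsilon'$-covering number of $\mathcal{H}^{K}\left(B\right)\times\mathcal{H}^{K}\left(1\right)\times\mathcal{H}^{K}\left(1\right)$ with $\epsilon'$ a fixed multiple of $\epsilon/\left|F\right|_{p}$. The covering number of a product is the product of covering numbers, so the log-covering number is the sum of the three individual entropies; applying Lemma \ref{Lemma_entropyOfL} to each factor and combining gives $\ln N_{\left[\right]}\left(\epsilon,\mathcal{F},\left|\cdot\right|_{p}\right)\lesssim\left(B/\epsilon\right)^{2/\left(2\eta-1\right)}+K\ln\left(B/\epsilon\right)$, since adding three terms of this form only changes the implied constant. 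For the class $\left\{\partial^{2}\ell_{\mu}hh'\right\}$ the argument is identical after replacing the displayed inequality by $\left|\partial^{2}\ell_{\mu}hh'-\partial^{2}\ell_{\mu''}h''h'''\right|\lesssim\Delta_{3}\left(z\right)\left|\mu-\mu''\right|_{\infty}+\Delta_{2}\left(z\right)\bigl(\left|h-h''\right|_{\infty}+\left|h'-h'''\right|_{\infty}\bigr)$, so the envelope is $F\lesssim\Delta_{2}+\Delta_{3}$ and $PF^{p}<\infty$ follows from $P\left(\Delta_{2}^{p}+\Delta_{3}^{p}\right)<\infty$; the rest is unchanged.

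The computation is routine, and the only point requiring mild care is the invocation of Theorem 2.7.11, whose hypothesis is a Lipschitz bound $\left|f_{s}-f_{t}\right|\le d\left(s,t\right)F$ for a single metric $d$ and a single envelope $F$, whereas the bounds above carry different multiplicative constants on the $\mu$-increment and on the $h,h'$-increments. This is harmless: one takes $d$ on the product set to be a fixed positive combination (e.g.\ the maximum) of the three supremum distances and absorbs the constants into $F$, which does not alter the order of the covering numbers of the index set. I do not expect any genuine obstacle beyond bookkeeping of constants.
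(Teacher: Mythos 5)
Your proof is correct and follows essentially the same route as the paper's: a pointwise Lipschitz bound in $\left|\cdot\right|_{\infty}$ for the index-map, an integrable envelope built from $\Delta_{1}\Delta_{2}+\Delta_{1}^{2}$ (resp.\ $\Delta_{2}+\Delta_{3}$), and then Theorem 2.7.11 of van der Vaart and Wellner combined with Lemma \ref{Lemma_entropyOfL}. The only cosmetic difference is that you index explicitly over the three-fold product $\mathcal{H}^{K}\left(B\right)\times\mathcal{H}^{K}\left(1\right)\times\mathcal{H}^{K}\left(1\right)$ and telescope $hh'-h''h'''$ carefully, whereas the paper abbreviates the $\left(h,h'\right)$ pair into a single $\mathcal{H}^{K}\left(1\right)$ factor in both the Lipschitz display and the covering set; your version is marginally more careful but yields the same order of entropy since the extra factor only changes constants.
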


\begin{proof}The proof is the same as the one of Lemma \ref{Lemma_entropy_dLh}.
By Condition \ref{Condition_Loss} and the triangle inequality, for
$g,g'\in\mathcal{H}^{K}\left(1\right)$ 
\[
\left|\partial\ell_{\mu}^{2}hh'-\partial\ell_{\mu'}^{2}gg'\right|\leq\left|\partial\ell_{\mu}^{2}-\partial\ell_{\mu'}^{2}\right|\sup_{h\in\mathcal{H}^{K}\left(1\right)}\left|h\right|^{2}+\sup_{\mu\in\mathcal{H}^{K}\left(B\right)}\left|\partial\ell_{\mu}^{2}\right|\left|hh'-gg'\right|.
\]
By Condition \ref{Condition_Loss}, $\left|\partial\ell_{\mu}^{2}\left(z\right)\right|\leq\Delta_{1}^{2}\left(z\right)$,
$\left|\partial\ell_{\mu}^{2}\left(z\right)-\partial\ell_{\mu'}^{2}\left(z\right)\right|\leq2\Delta_{1}\left(z\right)\Delta_{2}\left(z\right)\left|\mu\left(x\right)-\mu'\left(x\right)\right|$,
and $P\left(\Delta_{1}^{2p}+\Delta_{1}^{p}\Delta_{2}^{p}\right)<\infty$.
By Lemma \ref{Lemma_propertiesOfL}, $\left|h\left(x\right)\right|\lesssim1$.
By these remarks, the previous display is bounded by 
\[
2\Delta_{1}\left(z\right)\Delta_{2}\left(z\right)\left|\mu-\mu'\right|_{\infty}+\Delta_{1}^{2}\left(z\right)\left|h-h'\right|_{\infty}.
\]
Theorem 2.7.11 in van der Vaart and Wellner (2000) says that the $L_{p}$
$\epsilon$-bracketing number of class of functions satisfying the
above Lipschitz kind of condition is bounded by the $L_{\infty}$
$\epsilon'$-covering number of $\mathcal{H}^{K}\left(B\right)\times\mathcal{H}^{K}\left(1\right)$
with $\epsilon'=\epsilon/\left[2\left(P\left|\Delta_{1}^{2p}+\Delta_{1}^{p}\Delta_{2}^{p}\right|\right)^{1/p}\right]$.

The last statement in the lemma is proved following step by step the
proof of Lemma \ref{Lemma_entropy_dLh} with $\partial\ell_{\mu}$
replaced by $\partial^{2}\ell_{\mu}$ and $h$ by $hh'$. \end{proof}

\begin{lemma}\label{Lemma_gaussianConvergence} Under Conditions
\ref{Condition_kernel}, \ref{Condition_Loss}, and \ref{Condition_Dependence},
\[
\sqrt{n}\left(P_{n}-P\right)\partial\ell_{\mu}h\rightarrow G\left(\partial\ell_{\mu},h\right)
\]
weakly, where $G\left(\partial\ell_{\mu},h\right)$ is a mean zero
Gaussian process indexed by $\left(\partial\ell_{\mu},h\right)\in\left\{ \partial\ell_{\mu}:\mu\in\mathcal{H}^{K}\left(B\right)\right\} \times\mathcal{H}^{K}\left(1\right)$,
with a.s. continuous sample paths and covariance function
\[
\mathbb{E}G\left(\partial\ell_{\mu},h\right)G\left(\partial\ell_{\mu'},h'\right)=\sum_{j\in\mathbb{Z}}P_{1,j}\left(\partial\ell_{\mu}h,\partial\ell_{\mu}h'\right)
\]
\end{lemma}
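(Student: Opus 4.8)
The plan is to establish a functional central limit theorem over the index class
$\mathcal{F}:=\left\{\partial\ell_{\mu}h:\mu\in\mathcal{H}^{K}(B),\,h\in\mathcal{H}^{K}(1)\right\}$
by combining (a) finite-dimensional convergence via the central limit theorem for stationary $\beta$-mixing sequences with (b) asymptotic equicontinuity via a bracketing maximal inequality for absolutely regular empirical processes (Doukhan, Massart and Rio, 1995, and references therein). First I would record that, under the normalization $\sup_{x}\sqrt{C_{\mathcal{H}^{K}}(x,x)}=1$, every $h\in\mathcal{H}^{K}(1)$ satisfies $|h|_{\infty}\le1$, so Condition \ref{Condition_Loss} supplies the envelope $F(z):=\Delta_{1}(z)$ for $\mathcal{F}$, with $PF^{p}<\infty$ for some $p>2$. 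This $p$ is exactly the exponent appearing in Condition \ref{Condition_Dependence}, and $\beta(i)\lesssim(1+i)^{-\beta}$ with $\beta>p/(p-2)$ is equivalent to $\sum_{i}i^{2/(p-2)}\beta(i)<\infty$, the standard side condition for both the $\beta$-mixing CLT and the bracketing invariance principle.

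For finite-dimensional convergence, fix $(\partial\ell_{\mu_{1}}h_{1},\dots,\partial\ell_{\mu_{m}}h_{m})$ and apply the Cram\'{e}r--Wold device together with the CLT for stationary $\beta$-mixing sequences having finite $p$-th moment, $p>2$, and $\beta$-mixing coefficients decaying at the above rate. The limiting covariance of $\sqrt{n}(P_{n}-P)\partial\ell_{\mu}h$ and $\sqrt{n}(P_{n}-P)\partial\ell_{\mu'}h'$ is the long-run covariance $\sum_{j\in\mathbb{Z}}P_{1,j}(\partial\ell_{\mu}h,\partial\ell_{\mu'}h')$, which is absolutely summable: by Davydov's covariance inequality each lag-$j$ term is $\lesssim\beta(j)^{1-2/p}\,|\partial\ell_{\mu}h|_{p}\,|\partial\ell_{\mu'}h'|_{p}$, and $\sum_{j}\beta(j)^{1-2/p}<\infty$ because $\beta(1-2/p)>1$. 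Centering is automatic since we work with $P_{n}-P$.

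The decisive step is asymptotic tightness. Here I would feed the bracketing-entropy bound of Lemma \ref{Lemma_entropy_dLh}, namely $\ln N_{[]}(\epsilon,\mathcal{F},|\cdot|_{p})\lesssim(B/\epsilon)^{2/(2\eta-1)}+K\ln(B/\epsilon)$, into the bracketing maximal inequality for absolutely regular sequences. Since $\eta>1$ forces $2/(2\eta-1)<2$ and, more to the point, $1/(2\eta-1)<1$, the bracketing integral $\int_{0}^{1}\sqrt{\ln N_{[]}(\epsilon,\mathcal{F},|\cdot|_{p})}\,d\epsilon$ converges; the only care needed is that the invariance principle measures brackets in the mixing-adapted $L_{2,\beta}$-type norm of Rio rather than in plain $L_{p}$, and the passage from $L_{p}$-brackets (with $p>2$) to such brackets is precisely what the moment/mixing trade-off $PF^{p}<\infty$, $\sum_{i}i^{2/(p-2)}\beta(i)<\infty$ buys us. This yields asymptotic equicontinuity of $\{\sqrt{n}(P_{n}-P)f:f\in\mathcal{F}\}$ with respect to the $L_{2}(P)$ pseudometric, hence, together with the previous step, weak convergence in $\ell^{\infty}(\mathcal{F})$ to a tight mean-zero Gaussian process; total boundedness of $\mathcal{F}$ in the $L_{2}(P)$ pseudometric and the equicontinuity give a.s. uniformly continuous sample paths, and the covariance is the one identified above.

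The main obstacle I anticipate is not the entropy integral --- which is immediate once $\eta>1$ --- but the mixing-specific bookkeeping: verifying the hypotheses of the $\beta$-mixing bracketing CLT in the exact form needed here (converting the $L_{p}$-brackets supplied by Lemma \ref{Lemma_entropy_dLh} into the coupling/mixing-norm brackets the theorem demands, and checking the envelope-moment condition and the mixing-rate condition hold simultaneously), together with confirming that $\mathcal{F}$ is sufficiently regular (measurable, pointwise separable) for the empirical-process machinery to apply. Everything else is routine once these points are settled.
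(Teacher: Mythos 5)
Your proposal is correct and follows essentially the same route as the paper: the paper likewise works with the index class $\mathcal{F}=\{\partial\ell_{\mu}h:\mu\in\mathcal{H}^{K}(B),\,h\in\mathcal{H}^{K}(1)\}$, obtains the $L_p$ envelope from Condition \ref{Condition_Loss}, and applies the Doukhan--Massart--Rio (1995) invariance principle after checking that its mixing-adapted bracketing entropy condition is implied by the $L_p$ bracketing integral of Lemma \ref{Lemma_entropy_dLh} together with $\beta>p/(p-2)$. Your explicit split into Cram\'{e}r--Wold finite-dimensional convergence plus bracketing equicontinuity simply unpacks what the cited Theorem 1 of Doukhan et al. delivers in one stroke; otherwise the two arguments coincide, including the observation that $\eta>1$ is exactly what makes the entropy integral finite.
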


\begin{proof}The proof shall use the main result in Doukhan et al.
(1995). Let $\mathcal{F}:=\left\{ \partial\ell_{\mu}h:\mu\in\mathcal{H}^{K}\left(B\right),h\in\mathcal{H}^{K}\left(1\right)\right\} $.
The elements in $\mathcal{F}$ have finite $L_{p}$ norm because $P\left|\partial\ell_{\mu}\right|^{p}\leq P\Delta_{1}^{p}$
by Condition \ref{Condition_Loss}, and $\left|h\right|_{\infty}\lesssim1$
by Lemma \ref{Lemma_propertiesOfL}. To avoid extra notation, it is
worth noting that the entropy integrability condition in Doukhan et
al. (1995, Theorem 1, eq. 2.10) is implied by 
\begin{equation}
\int_{0}^{1}\sqrt{\ln N_{\left[\right]}\left(\epsilon,\mathcal{F},\left|\cdot\right|_{p}\right)}d\epsilon<\infty.\label{EQ_entropyIntegral}
\end{equation}
and $\beta\left(i\right)\lesssim\left(1+i\right)^{-\beta}$ with $\beta>p/\left(p-2\right)$
and $p>2$. Then, Theorem 1 in Doukhan et al. (1995) shows that the
empirical process indexed in $\mathcal{F}$ converges weakly to the
Gaussian one given in the statement of the present lemma. By Condition
\ref{Condition_Dependence}, it is sufficient to show (\ref{EQ_entropyIntegral}).
By Lemma \ref{Lemma_entropy_dLh}, the integral is finite because
$\lambda>1$ by Condition \ref{Condition_kernel}. \end{proof}

\subsection{Proof of Theorem \ref{Theorem_consistency}}

The proof is split into the part concerned with the constrained estimator
and the one that studies the penalized estimator.

\subsubsection{Consistency of the Constrained Estimator}

At first we show Point 1 verifying the conditions of Theorem 3.2.5
van der Vaart and Wellner (2000) which we will refer to as VWTh herein.
To this end, by Taylor's Theorem in Banach spaces, 
\begin{eqnarray*}
P\ell_{\mu} & -P\ell_{\mu_{0}}= & P\partial\ell_{\mu_{0}}\left(\mu-\mu_{0}\right)+\frac{1}{2}P\partial^{2}\ell_{\mu_{t}}\left(\mu-\mu_{0}\right)^{2}
\end{eqnarray*}
for $\mu_{t}=\mu+t\left(\mu_{0}-\mu\right)$ with some $t\in\left[0,1\right]$
and arbitrary $\mu\in\mathcal{H}^{K}\left(B\right)$. The variational
inequality $P\partial\ell_{\mu_{0}}\left(\mu-\mu_{0}\right)\geq0$
holds by definition of $\mu_{0}$ and the fact that $\mu\in\mathcal{H}^{K}\left(B\right)$.
Therefore, the previous display implies that $P\ell_{\mu}-P\ell_{\mu_{0}}\gtrsim P\left(\mu-\mu_{0}\right)^{2}$
because $P\partial^{2}\ell_{\mu_{t}}\left(\mu-\nu\right)^{2}\geq P\left(\mu-\nu\right)^{2}\geq0$
by Condition \ref{Condition_Loss}. The right hand most inequality
holds with equality if and only if $\mu=\mu_{0}$ in $L_{2}$. This
verifies the first condition in VWTh. Given that the loss function
is convex and coercive and that $\mathcal{H}^{K}\left(B\right)$ is
a closed convex set, this also shows that the population minimizer
$\mu_{0}$ exists and is unique up to an $L_{2}$ equivalence class,
as stated in the theorem. Moreover, given that $\mu,\mu_{0}\in\mathcal{H}^{K}\left(B\right)$,
then both $\mu$ and $\mu_{0}$ are uniformly bounded by a constant
multiple $B$, hence for simplicity suppose they are bounded by $B$.
This implies the following relation 
\[
B^{2-p}\left|\mu-\mu_{0}\right|_{p}\leq\left|\mu-\mu_{0}\right|_{2}\leq\left|\mu-\mu_{0}\right|_{p}
\]
for any $p\in\left(2,\infty\right)$. Hence, for any finite real $\delta$,
\[
\sup_{\left|\mu-\mu_{0}\right|_{2}<\delta}\mathbb{E}\left|\left(P_{n}-P\right)\left(\ell_{\mu}-\ell_{\mu_{0}}\right)\right|\leq\sup_{\left|\mu-\mu_{0}\right|_{p}<B^{p-2}\delta}\mathbb{E}\left|\left(P_{n}-P\right)\left(\ell_{\mu}-\ell_{\mu_{0}}\right)\right|
\]
To verify the second condition in VWTh, we need to find a function
$\phi\left(\delta\right)$ that grows slower than $\delta^{2}$ such
that the r.h.s. of the above display is bounded above by $n^{-1/2}\phi\left(\delta\right)$.
To this end, note that we are interested in the following class of
functions $\mathcal{F}:=\left\{ \ell_{\mu}-\ell_{\mu_{0}}:\left|\mu-\mu_{0}\right|_{p}\leq\delta'\right\} $
with $\delta'=B^{p-2}\delta$. This class of functions satisfies $\left|\ell_{\mu}-\ell_{\mu_{0}}\right|_{p}\leq\left(\Delta_{1}^{p}\right)^{1/p}\delta'$
using the differentiability and the bounds implied by Condition \ref{Condition_Loss}.
Theorem 3 in Doukhan et al. (1995) says that that for large enough
$n$, eventually (see their page 410), 
\[
\phi\left(\delta\right)\lesssim\int_{0}^{B^{p-2}\delta}\sqrt{\ln N_{\left[\right]}\left(\epsilon,\mathcal{F},\left|\cdot\right|_{p}\right)}d\epsilon.
\]
Note that we have $L_{p}$ balls of size $B^{p-2}\delta$ rather than
$\delta$ and for this reason we have modified the limit in the integral.
Moreover, as remarked in the proof of Lemma \ref{Lemma_gaussianConvergence},
the entropy integral in Doukhan et al. (1995) uses the bracketing
number based on another norm. However, their norm is bounded by the
$L_{p}$ norm used here under the restrictions we impose on the mixing
coefficients via Condition \ref{Condition_Dependence}. To compute
the integral we use Lemma \ref{Lemma_entropy_dLh}, so that the l.h.s.
of the display is a constant multiple of $B^{\left(1-\alpha\right)+\alpha\left(p-2\right)}\delta^{\alpha}$
with $\alpha=\left(2\eta-2\right)/\left(2\eta-1\right)$. The third
condition in VWTh requires to find a sequence $r_{n}$ such that $r_{n}^{2}\phi\left(r_{n}^{-1}\right)\leq n^{1/2}$.
Given that $\phi\left(\delta\right)\lesssim A\delta^{\alpha}$ with
$A:=B^{\left(1-\alpha\right)+\alpha\left(p-2\right)}$, deduce that
we can set $r_{n}\asymp n^{\left(2\eta-1\right)/\left(4\eta\right)}$.
Then VWTh states that $\left|\mu_{n}-\mu_{0}\right|_{2}=O_{p}\left(r_{n}^{-1}\right)$.
Of course, if $\mathcal{H}^{K}$ is finite dimensional, it is not
difficult to show that $r_{n}\asymp n^{1/2}$. The space $\mathcal{H}^{K}$
is finite dimensional if (\ref{EQ_covSeriesRepresentation}) has a
finite number of terms. 

We also show that $\sup_{\mu\in\mathcal{H}^{K}\left(B\right)}\left|\left(P_{n}-P\right)\ell_{\mu}\right|\rightarrow0$
a.s. which shall imply $\left|\mu_{n}-\mu_{0}\right|_{2}\rightarrow0$
a.s. (Corollary 3.2.3 in van der Vaart and Wellner, 2000, replacing
the in probability result with a.s.). This only requires the loss
function to be integrable (if the loss is positive), but does not
allow us to derive convergence rates. For any fixed $\mu$, $\left|\left(P_{n}-P\right)\ell_{\mu}\right|\rightarrow0$
a.s., by the ergodic theorem, because $P\left|\ell_{\mu}\right|<\infty$
by Condition \ref{Condition_Loss}. Hence, it is just sufficient to
show that $\left\{ \ell_{\mu}:\mu\in\mathcal{H}^{K}\left(B\right)\right\} $
has finite $\epsilon$-bracketing number under the $L_{1}$ norm (e.g.,
see the proof of Theorem 2.4.1 in van der Vaart and Wellner, 2000).
This is the case by Lemma \ref{Lemma_entropy_dLh}, because by Condition
\ref{Condition_kernel}, $\eta>1$. Hence, $\left|\mu_{n}-\mu_{0}\right|_{2}\rightarrow0$
a.s..

To turn the $L_{2}$ convergence into uniform, note that $\mathcal{H}^{K}\left(B\right)$
is compact under the uniform norm and functions in $\mathcal{H}^{K}\left(B\right)$
are defined on a compact domain $\mathcal{X}^{K}$. Hence, $\mathcal{H}^{K}\left(B\right)$
is a subset of the space of continuous bounded function equipped with
the uniform norm. In consequence, any convergent sequence in $\mathcal{H}^{K}\left(B\right)$
converges uniformly. 

We now turn to the relation between the constrained and penalized
estimator, which will also conclude the proof of Theorem \ref{Theorem_consistency}.

\paragraph{}

\subsubsection{The Constraint and the Lagrange Multiplier\label{Section_lagrangeMultiplierLemma}}

The following lemma puts together crucial results for estimation
in RKHS (Steinwart and Christmann, 2008, Theorems 5.9 and 5.17 for
a proof). The cited results make use of the definition of integrable
Nemitski loss of finite order $p$ (Steinwart and Christmann, 2008,
Def. 2.16). However, under Condition \ref{Condition_Loss}, the proofs
of those results still hold. 

\begin{lemma}\label{Lemma_SteinwartChirstman} Under Condition \ref{Condition_Loss},
\begin{equation}
\left|\mu_{0,\rho}-\mu_{n,\rho}\right|_{\mathcal{H}^{K}}\leq\frac{1}{\rho}\left|P\partial\ell_{\mu_{0,\rho}}\Phi-P_{n}\partial\ell_{\mu_{0,\rho}}\Phi\right|_{\mathcal{H}^{K}},\label{EQ_normLagrangeEstimatorDifference}
\end{equation}
where $\Phi\left(x\right)=C_{\mathcal{H}^{K}}\left(\cdot,x\right)$
is the canonical feature map. Moreover, if $\mu_{0,\rho}$ is bounded
for any $\rho\rightarrow0$, then $\left|\mu_{0,\rho}-\mu_{0}\right|_{\mathcal{H}^{K}}\rightarrow0$.

\end{lemma}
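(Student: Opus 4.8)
The plan is to reduce the inequality \eqref{EQ_normLagrangeEstimatorDifference} to a standard argument exploiting the strong convexity of the penalized objective. Write $J_\rho(\mu) = P\ell_\mu + \rho|\mu|_{\mathcal{H}^K}^2$ and $J_{\rho,n}(\mu) = P_n\ell_\mu + \rho|\mu|_{\mathcal{H}^K}^2$, so that $\mu_{0,\rho}$ and $\mu_{n,\rho}$ are their respective minimizers over $\mathcal{H}^K$. First I would record that under Condition \ref{Condition_Loss} the map $\mu \mapsto P\ell_\mu$ is Fréchet differentiable with derivative $h \mapsto P\partial\ell_\mu h = \langle P\partial\ell_\mu\Phi, h\rangle_{\mathcal{H}^K}$ (using the reproducing property $h(x) = \langle h, \Phi(x)\rangle_{\mathcal{H}^K}$ to identify the derivative with the gradient vector $P\partial\ell_\mu\Phi \in \mathcal{H}^K$), and similarly for $P_n\ell_\mu$ conditionally on the sample; this is exactly the differentiability already asserted in the preamble to the Appendix. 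Since $\inf_{z,t} d^2L/dt^2 > 0$ on $[-\bar B,\bar B]$, the functional $\mu\mapsto P\ell_\mu$ is convex on $\mathcal{H}^K(B)$, so $J_{\rho,n}$ is $2\rho$-strongly convex in the $\mathcal{H}^K$ norm.

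The key step is the two-point inequality for strongly convex functions: for a $2\rho$-strongly convex $J_{\rho,n}$ with minimizer $\mu_{n,\rho}$, and any $\mu$,
\[
J_{\rho,n}(\mu) - J_{\rho,n}(\mu_{n,\rho}) \geq \rho|\mu - \mu_{n,\rho}|_{\mathcal{H}^K}^2.
\]
Applying this at $\mu = \mu_{0,\rho}$ gives $\rho|\mu_{0,\rho} - \mu_{n,\rho}|_{\mathcal{H}^K}^2 \leq J_{\rho,n}(\mu_{0,\rho}) - J_{\rho,n}(\mu_{n,\rho})$. Now I would bound the right side: since $\mu_{0,\rho}$ minimizes $J_\rho$, its gradient $P\partial\ell_{\mu_{0,\rho}}\Phi + 2\rho\mu_{0,\rho}$ vanishes, and using this first-order condition together with convexity of $P_n\ell_\mu$ one gets
\[
J_{\rho,n}(\mu_{0,\rho}) - J_{\rho,n}(\mu_{n,\rho}) \leq \langle P_n\partial\ell_{\mu_{0,\rho}}\Phi - P\partial\ell_{\mu_{0,\rho}}\Phi,\ \mu_{0,\rho} - \mu_{n,\rho}\rangle_{\mathcal{H}^K},
\]
which by Cauchy--Schwarz is at most $|P_n\partial\ell_{\mu_{0,\rho}}\Phi - P\partial\ell_{\mu_{0,\rho}}\Phi|_{\mathcal{H}^K}\,|\mu_{0,\rho}-\mu_{n,\rho}|_{\mathcal{H}^K}$. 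Combining the two displays and cancelling one factor of $|\mu_{0,\rho}-\mu_{n,\rho}|_{\mathcal{H}^K}$ yields \eqref{EQ_normLagrangeEstimatorDifference}. (One should note these are precisely the steps in the proof of Theorems 5.9 and 5.17 in Steinwart and Christmann, 2008, with the integrable-Nemitski-loss hypothesis replaced by Condition \ref{Condition_Loss}, which suffices because all we used was convexity in $t$, differentiability, and integrability of $\Delta_0,\Delta_1$.)

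For the last assertion, suppose $|\mu_{0,\rho}|_{\mathcal{H}^K}$ stays bounded as $\rho\to 0$. I would first argue that $P\ell_{\mu_{0,\rho}} \to P\ell_{\mu_0}$: by optimality, $P\ell_{\mu_{0,\rho}} + \rho|\mu_{0,\rho}|_{\mathcal{H}^K}^2 \leq P\ell_{\mu_0} + \rho|\mu_0|_{\mathcal{H}^K}^2$, and the penalty terms vanish. Combined with $P\ell_{\mu_{0,\rho}} \geq P\ell_{\mu_0}$, this forces $P\ell_{\mu_{0,\rho}} \to P\ell_{\mu_0}$. By the coercivity bound $P\ell_\mu - P\ell_{\mu_0} \gtrsim |\mu - \mu_0|_2^2$ established in the proof of Theorem \ref{Theorem_consistency}, we get $|\mu_{0,\rho} - \mu_0|_2 \to 0$. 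Finally, boundedness of $\{\mu_{0,\rho}\}$ in $\mathcal{H}^K$ together with the compact embedding of bounded sets of $\mathcal{H}^K$ into $C(\mathcal{X}^K)$ (Condition \ref{Condition_kernel}) promotes the $L_2$ convergence to uniform convergence along subsequences; since the limit $\mu_0$ is unique, $|\mu_{0,\rho} - \mu_0|_\infty \to 0$, hence $\mu_{0,\rho} \to \mu_0$ weakly in $\mathcal{H}^K$, and a standard lower-semicontinuity-plus-norm-convergence argument (using $\limsup |\mu_{0,\rho}|_{\mathcal{H}^K} \leq |\mu_0|_{\mathcal{H}^K}$, which follows from the optimality inequality once one knows $\mu_0 \in \mathcal{H}^K$) upgrades this to $|\mu_{0,\rho} - \mu_0|_{\mathcal{H}^K} \to 0$.

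The main obstacle is the last step: getting from $L_2$ (or uniform) consistency of $\mu_{0,\rho}$ to convergence in the stronger RKHS norm. This cannot be done by coercivity alone and genuinely needs the norm-control argument $\limsup_{\rho\to 0}|\mu_{0,\rho}|_{\mathcal{H}^K} \leq |\mu_0|_{\mathcal{H}^K}$ extracted from the optimality inequality, combined with weak convergence in $\mathcal{H}^K$; care is needed because this part of the conclusion is exactly what fails (Point 4 of Theorem \ref{Theorem_consistency}) when $\mu_{0,\rho}$ is \emph{not} assumed bounded, so the hypothesis must be used in an essential way.
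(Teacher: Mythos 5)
Your proof is correct and follows the same route as the paper, which simply cites Steinwart and Christmann (2008, Theorems 5.9 and 5.17) and notes that Condition \ref{Condition_Loss} can stand in for their integrable--Nemitski hypothesis. Your Part 1 (strong-convexity-plus-first-order-condition-plus-Cauchy--Schwarz) is exactly the S--C stability argument, and your Part 2 (optimality inequality giving $\limsup_{\rho\to0}\left|\mu_{0,\rho}\right|_{\mathcal{H}^{K}}\leq\left|\mu_{0}\right|_{\mathcal{H}^{K}}$, plus weak convergence and the Radon--Riesz upgrade) is the standard S--C approximation-error argument, so there is no meaningful difference from the paper's intended derivation.
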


We apply Lemma \ref{Lemma_SteinwartChirstman} and the results in
Section \ref{Section_GaussianApproximation} to derive the following. 

\begin{lemma}\label{Lemma_lagrangeMultiplierSize}Suppose Conditions
\ref{Condition_kernel}, \ref{Condition_Loss} and \ref{Condition_Dependence}.
The following statements hold. 
\begin{enumerate}
\item There is a finite $B$ such that $\mu_{0}\in{\rm int}\left(\mathcal{H}^{K}\left(B\right)\right)$. 
\item For any $\rho>0$ possibly random, $\left|\mu_{n,\rho}-\mu_{0\rho}\right|_{\mathcal{H}^{K}}^{2}=O_{p}\left(\rho^{-2}n^{-1}\right)$,
and $\left|\mu_{n,\rho}\right|_{\mathcal{H}^{K}}\leq B$ eventually
in probability for any $\rho\rightarrow0$ such that $\rho n^{1/2}\rightarrow\infty$.
\item There is a $\rho=O_{p}\left(n^{-1/2}\right)$ such that $\left|\mu_{n,\rho}\right|_{\mathcal{H}^{K}}\leq B$
and 
\[
\sup_{h\in\mathcal{H}^{K}\left(1\right)}P_{n}\partial\ell_{\mu_{n,\rho}}h=O_{p}\left(n^{-1/2}B\right).
\]
\end{enumerate}
\end{lemma}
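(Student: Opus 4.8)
The plan is to treat the three assertions in order, since the finiteness in Point~1 feeds into Point~2, whereas Point~3 is essentially a one-line consequence of Theorem~\ref{Theorem_consistency} once that is available.

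For \textbf{Point 1} I would show that the population penalized minimizer $\mu_{0,\rho}:=\arg\inf_{\mu\in\mathcal{H}^{K}}\{P\ell_{\mu}+\rho|\mu|_{\mathcal{H}^{K}}^{2}\}$ stays bounded in $|\cdot|_{\mathcal{H}^{K}}$ as $\rho\downarrow0$, and then quote the second assertion of Lemma~\ref{Lemma_SteinwartChirstman}, which gives $|\mu_{0,\rho}-\mu_{0}|_{\mathcal{H}^{K}}\to0$ and hence $|\mu_{0}|_{\mathcal{H}^{K}}<\infty$; any $B>|\mu_{0}|_{\mathcal{H}^{K}}$ then makes the constraint in (\ref{EQ_targetFunction}) slack, i.e. $\mu_{0}\in{\rm int}(\mathcal{H}^{K}(B))$. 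The boundedness of $\mu_{0,\rho}$ is the substantive step, and I expect it to be the main obstacle of the whole lemma. The natural route is: by Taylor's theorem in Banach spaces together with the quadratic growth inequality $P\ell_{\mu}-P\ell_{\mu'}\gtrsim|\mu-\mu'|_{2}^{2}$ established in the proof of Theorem~\ref{Theorem_consistency} (valid on every RKHS ball, where $|\mu|_{\infty}\leq\bar{B}$ and $\inf_{z,t}d^{2}L/dt^{2}>0$), the constrained minimizers over $\mathcal{H}^{K}(B)$ are Cauchy in $L_{2}$ as $B\uparrow\infty$ and their risks decrease to $\inf_{\mu\in\mathcal{H}^{K}}P\ell_{\mu}$; a weak-compactness and lower-semicontinuity argument on RKHS balls is then meant to show this infimum is attained at some $\mu^{\ast}\in\mathcal{H}^{K}$ of finite RKHS norm, after which comparing $\mu_{0,\rho}$ against $\mu^{\ast}$ inside the penalized objective yields $|\mu_{0,\rho}|_{\mathcal{H}^{K}}\leq|\mu^{\ast}|_{\mathcal{H}^{K}}$ for all $\rho>0$. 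Closing the apparent circularity in the attainment step (the minimizing sequence a priori need not lie in a fixed ball) is the delicate part.

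\textbf{Point 2} follows from Lemma~\ref{Lemma_SteinwartChirstman} directly: $|\mu_{0,\rho}-\mu_{n,\rho}|_{\mathcal{H}^{K}}\leq\rho^{-1}|(P-P_{n})\partial\ell_{\mu_{0,\rho}}\Phi|_{\mathcal{H}^{K}}$. By the reproducing property, for any $\mu$ one has $|(P-P_{n})\partial\ell_{\mu}\Phi|_{\mathcal{H}^{K}}=\sup_{h\in\mathcal{H}^{K}(1)}(P-P_{n})\partial\ell_{\mu}h$, and since by Point~1 the function $\mu_{0,\rho}$ lies in a fixed ball $\mathcal{H}^{K}(B)$, the previous display is bounded by $\rho^{-1}\sup_{\mu\in\mathcal{H}^{K}(B),\,h\in\mathcal{H}^{K}(1)}|(P-P_{n})\partial\ell_{\mu}h|$. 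Lemma~\ref{Lemma_gaussianConvergence} says $\sqrt{n}(P-P_{n})\partial\ell_{\mu}h$ converges weakly in $\ell^{\infty}$ to a Gaussian process with a.s. bounded sample paths, so applying the continuous mapping theorem to the (Lipschitz) supremum functional makes that supremum $O_{p}(n^{-1/2})$; squaring gives $|\mu_{0,\rho}-\mu_{n,\rho}|_{\mathcal{H}^{K}}^{2}=O_{p}(\rho^{-2}n^{-1})$. For the second half, bound $|\mu_{n,\rho}|_{\mathcal{H}^{K}}\leq|\mu_{0,\rho}|_{\mathcal{H}^{K}}+|\mu_{n,\rho}-\mu_{0,\rho}|_{\mathcal{H}^{K}}$; the first term tends to $|\mu_{0}|_{\mathcal{H}^{K}}<B$ by the second part of Lemma~\ref{Lemma_SteinwartChirstman}, and the second is $O_{p}(\rho^{-1}n^{-1/2})=o_{p}(1)$ when $\rho n^{1/2}\to\infty$, so $|\mu_{n,\rho}|_{\mathcal{H}^{K}}\leq B$ with probability tending to one.

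For \textbf{Point 3}, Theorem~\ref{Theorem_consistency}(1) supplies a random $\rho=\rho_{B,n}=O_{p}(n^{-1/2})$ with $\mu_{n,\rho}=\mu_{n}\in\mathcal{H}^{K}(B)$, so $|\mu_{n,\rho}|_{\mathcal{H}^{K}}\leq B$ is immediate. The Fréchet stationarity identity for the penalized empirical objective is $P_{n}\partial\ell_{\mu_{n,\rho}}\Phi+2\rho\mu_{n,\rho}=0$ in $\mathcal{H}^{K}$ (equivalently, the Representer-Theorem expansion of $\mu_{n,\rho}$); hence, using that $\mathcal{H}^{K}(1)$ is symmetric,
\[
\sup_{h\in\mathcal{H}^{K}(1)}P_{n}\partial\ell_{\mu_{n,\rho}}h=\bigl|P_{n}\partial\ell_{\mu_{n,\rho}}\Phi\bigr|_{\mathcal{H}^{K}}=2\rho\,|\mu_{n,\rho}|_{\mathcal{H}^{K}}\leq 2\rho B=O_{p}(n^{-1/2}B),
\]
which is the claim with $\mu_{n,\rho}=\mu_{n}$. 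In summary, Points~2 and~3 are short deductions from Lemmas~\ref{Lemma_SteinwartChirstman} and~\ref{Lemma_gaussianConvergence} and Theorem~\ref{Theorem_consistency}, and the real work is the attainment/boundedness argument behind Point~1.
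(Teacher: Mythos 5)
Your Point~2 is correct and takes a mildly different route than the paper. You note that $\bigl|(P-P_n)\partial\ell_{\mu}\Phi\bigr|_{\mathcal{H}^{K}}=\sup_{h\in\mathcal{H}^{K}(1)}(P-P_n)\partial\ell_{\mu}h$ by the reproducing property and then apply Lemma~\ref{Lemma_gaussianConvergence} with the continuous mapping theorem for the sup functional, obtaining $O_p(n^{-1/2})$ in one step. The paper instead expands $\Phi$ in the series $\sum_{v}\lambda_{v}^{2}\varphi_{v}(\cdot)\varphi_{v}(x)$, bounds the RKHS norm by $\sum_{v}\lambda_{v}\bigl|(P_n-P)\partial\ell_{\mu}\varphi_{v}\bigr|$, and applies a termwise maximal inequality; summability of the $\lambda_{v}$ from Condition~\ref{Condition_kernel} then delivers a concrete tight random variable $L_{n}$. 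Your shortcut is valid for Point~2 itself, but by bypassing the eigenseries decomposition you lose exactly the object ($L_{n}$) that the paper needs for Point~3.

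Your Point~1 is solving a harder problem than the one being asked, and you yourself flag that it cannot be closed: the minimizing sequence for the unconstrained population risk need not lie in any fixed RKHS ball, so the weak-compactness and lower-semicontinuity argument does not go through in general (and the conclusion would indeed be false without a further assumption). The paper disposes of Point~1 in one line by invoking the standing modeling assumption from Section~\ref{Section_motivation} that $\mu_{0}\in\mathcal{H}^{K}$, i.e.\ $|\mu_{0}|_{\mathcal{H}^{K}}<\infty$; any $B>|\mu_{0}|_{\mathcal{H}^{K}}$ then puts $\mu_{0}$ in the interior.

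Your Point~3 is circular, and this is the substantive gap. You invoke Theorem~\ref{Theorem_consistency}(1) to obtain a $\rho=\rho_{B,n}=O_p(n^{-1/2})$ with $\mu_{n,\rho}=\mu_{n}$. But in the paper the claim $\rho_{B,n}=O_p(n^{-1/2})$ in Theorem~\ref{Theorem_consistency}(1) is itself deduced from Lemma~\ref{Lemma_lagrangeMultiplierSize} Point~3, in the remarks immediately following the lemma in Section~\ref{Section_lagrangeMultiplierLemma}. Nor can you recycle the $\rho$ of your Point~2, since there $\rho n^{1/2}\rightarrow\infty$, which is incompatible with $\rho=O_p(n^{-1/2})$. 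The paper produces the required $\rho$ directly: with $L_{n}:=\sum_{v}\lambda_{v}\sup_{\mu\in\mathcal{H}^{K}(B)}\bigl|\sqrt{n}(P_n-P)\partial\ell_{\mu}\varphi_{v}\bigr|$ tight, take $\rho=L_{n}/(\epsilon n^{1/2})$, so that $|\mu_{0,\rho}-\mu_{n,\rho}|_{\mathcal{H}^{K}}\leq L_{n}/(\rho n^{1/2})=\epsilon$ by construction, giving $|\mu_{n,\rho}|_{\mathcal{H}^{K}}\leq (B-\epsilon)+\epsilon=B$ and $\rho=O_p(n^{-1/2})$ by tightness. Your first-order-condition identity $\sup_{h\in\mathcal{H}^{K}(1)}P_n\partial\ell_{\mu_{n,\rho}}h=2\rho|\mu_{n,\rho}|_{\mathcal{H}^{K}}\leq 2\rho B$ is correct once such a $\rho$ is in hand; the missing piece is its non-circular construction.
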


\begin{proof} Given that $K$ is finite and the kernel is additive,
there is no loss in restricting attention to $K=1$ in order to reduce
the notational burden. We shall need a bound for the r.h.s. of (\ref{EQ_normLagrangeEstimatorDifference}).
By (\ref{EQ_covSeriesRepresentation}), the canonical feature map
can be written as $\Phi\left(x\right)=\sum_{v=1}^{\infty}\lambda_{v}^{2}\varphi_{v}\left(\cdot\right)\varphi_{v}\left(x\right)$.
This implies that, 
\[
\left(P_{n}-P\right)\partial\ell_{\mu_{0,\rho}}\Phi\left(x\right)=\sum_{v=1}^{\infty}\left[\lambda_{v}^{2}\left(P_{n}-P\right)\partial\ell_{\mu_{0,\rho}}\varphi_{v}\right]\varphi_{v}\left(x\right).
\]
By Lemma \ref{Lemma_SteinwartChirstman}, (\ref{EQ_RHKS_normsEquivalences}),
and the above, 
\[
\left|\left(P_{n}-P\right)\partial\ell_{\mu_{0,\rho}}\Phi\right|_{\mathcal{H}^{K}}^{2}=\sum_{v=1}^{\infty}\frac{\left[\lambda_{v}^{2}\left(P_{n}-P\right)\partial\ell_{\mu_{0,\rho}}\varphi_{v}\right]^{2}}{\lambda_{v}^{2}}=\sum_{v=1}^{\infty}\lambda_{v}^{2}\left[\left(P_{n}-P\right)\partial\ell_{\mu_{0,\rho}}\varphi_{v}\right]^{2}.
\]
In consequence of the above display, by the triangle inequality, 
\begin{eqnarray*}
\left|\mu_{0,\rho}-\mu_{n,\rho}\right|_{\mathcal{H}^{K}} & \leq & \frac{1}{\rho}\left[\sum_{v=1}^{\infty}\lambda_{v}^{2}\left|\left(P_{n}-P\right)\partial\ell_{\mu_{0,\rho}}\varphi_{v}\right|^{2}\right]^{1/2}\\
 & \leq & \frac{1}{\rho}\sum_{v=1}^{\infty}\lambda_{v}\left|\left(P_{n}-P\right)\partial\ell_{\mu_{0,\rho}}\varphi_{v}\right|.
\end{eqnarray*}
Given that $\mu_{0}\in\mathcal{H}^{K}$, there is a finite $B$ such
that $\mu_{0}\in{\rm int}\left(\mathcal{H}^{K}\left(B\right)\right)$
(this proves Point 1 in the lemma). By this remark, it follows that,
uniformly in $\rho\geq0$, there is an $\epsilon>0$ such that $\left|\mu_{0,\rho}\right|_{\mathcal{H}^{K}}\leq B-\epsilon$.
Hence, the maximal inequality of Theorem 3 in Doukhan et al. (1995)
implies that 
\begin{equation}
\mathbb{E}\sup_{\mu\in\mathcal{H}^{K}\left(B\right)}\left|\sqrt{n}\left(P_{n}-P\right)\partial\ell_{\mu}\varphi_{v}\right|\leq c_{1}\label{EQ_doukhanL1MaxInequ}
\end{equation}
for some finite constant $c_{1}$, for any $v\geq1$, because the
entropy integral (\ref{EQ_entropyIntegral}) is finite in virtue of
Lemma \ref{Lemma_entropy_dLh}. Define
\[
L_{n}:=\sum_{v=1}^{\infty}\lambda_{v}\sup_{\mu\in\mathcal{H}^{K}\left(B\right)}\left|\sqrt{n}\left(P_{n}-P\right)\partial\ell_{\mu_{0,\rho}}\varphi_{v}\right|.
\]
Given that the coefficients $\lambda_{v}$ are summable by Condition
\ref{Condition_kernel}, deduce from (\ref{EQ_doukhanL1MaxInequ})
that $\left(L_{n}\right)$ is a tight random sequence. Using the above
display, we have shown that (\ref{EQ_normLagrangeEstimatorDifference})
is bounded by $L_{n}/\left(\rho n^{1/2}\right)$. This proves Point
2 in the lemma. For any fixed $\epsilon>0$, we can choose $\rho=\rho_{n}:=L_{n}/\left(\epsilon n^{1/2}\right)$
so that $\left|\mu_{0,\rho}-\mu_{n,\rho}\right|_{\mathcal{H}^{K}}\leq\epsilon$
in probability. By the triangle inequality and the above calculations,
deduce that, in probability, 
\[
\left|\mu_{n,\rho}\right|_{\mathcal{H}^{K}}\leq\left|\mu_{0,\rho}\right|_{\mathcal{H}^{K}}+\left|\mu_{0,\rho}-\mu_{n,\rho}\right|_{\mathcal{H}^{K}}\leq B
\]
for $\rho=\rho_{n}$. By tightness of $L_{n}$, deduce that $\rho_{n}=O_{p}\left(n^{-1/2}\right)$.
Also, the first order condition for the sample estimator $\mu_{n,\rho}$
reads 
\begin{equation}
P_{n}\partial\ell_{\mu_{n,\rho}}h=-2\rho\left\langle \mu_{n,\rho},h\right\rangle _{\mathcal{H}^{K}}\leq2\rho\left|\mu_{n,\rho}\right|_{\mathcal{H}^{K}}\left|h\right|_{\mathcal{H}^{K}}\label{EQ_FOCLagrangianBound}
\end{equation}
for any $h\in\mathcal{H}^{K}\left(1\right)$. In consequence, $\sup_{h\in\mathcal{H}^{K}\left(1\right)}P_{n}\partial\ell_{\mu_{n,\rho}}h\leq2\rho\left|\mu_{n,\rho}\right|_{\mathcal{H}^{K}}$.
These calculations prove Point 3 in the lemma when $\rho=O_{p}\left(n^{-1/2}\right)$.

\end{proof}

The penalized objective function is increasing with $\rho$. In the
Lagrangian formulation of the constrained minimization, interest lies
in finding the smallest value of $\rho$ such that the constraint
is still satisfied. When $\rho$ equals such smallest value $\rho_{B,n}$,
we have $\mu_{n}=\mu_{n,\rho}$. From Lemma \ref{Lemma_lagrangeMultiplierSize}
deduce that $\rho_{B,n}=O_{p}\left(n^{-1/2}\right)$. Also, if $\mathcal{H}^{K}$
is infinite dimensional, the constraint needs to be binding so that
$\left|\mu_{n}\right|_{\mathcal{H}^{K}}=B$. Hence, if $\mu_{0}\in{\rm int}\left(\mathcal{H}^{K}\left(B\right)\right)$
there is an $\epsilon>0$ such that $\left|\mu_{0}\right|_{\mathcal{H}^{K}}=B-\epsilon$.
Then, we must have 
\begin{align*}
\left|\mu_{n}-\mu_{0}\right|_{\mathcal{H}^{K}}^{2} & =\left|\mu_{n}\right|_{\mathcal{H}^{K}}^{2}+\left|\mu_{0}\right|_{\mathcal{H}^{K}}^{2}-2\left\langle \mu_{n},\mu_{0}\right\rangle _{\mathcal{H}^{K}}\\
 & =\left(B^{2}+\left(B-\epsilon\right)^{2}-2\left\langle \mu_{n},\mu_{0}\right\rangle _{\mathcal{H}^{K}}\right).
\end{align*}
But $\left\langle \mu_{n},\mu_{0}\right\rangle _{\mathcal{H}^{K}}\leq\left|\mu_{n}\right|_{\mathcal{H}^{K}}\left|\mu_{0}\right|_{\mathcal{H}^{K}}\leq B\left(B-\epsilon\right)$.
Hence, the above display is greater or equal than 
\[
B^{2}+\left(B-\epsilon\right)^{2}-2B\left(B-\epsilon\right)\geq\epsilon^{2}.
\]
This means that $\mu_{n}$ cannot converge under the norm $\left|\cdot\right|_{\mathcal{H}^{K}}$.

The statement concerning approximate minimizers will be proved in
Section \ref{Section_approximateMinimizers}.

\subsection{Proof of Theorem \ref{Theorem_weakConvergence}}

It is convenient to introduce additional notation and concepts that
will be used in the remaining of the paper. By construction the minimizer
of the population objective function is $\mu_{0}\in\mathcal{H}^{K}\left(B\right)$.
Let $l^{\infty}\left(\mathcal{H}^{K}\right)$ be the space of uniformly
bounded functions on $\mathcal{H}^{K}$. Let $\Psi\left(\mu\right)$
be the operator in $l^{\infty}\left(\mathcal{H}^{K}\right)$ such
that $\Psi\left(\mu\right)h=P\partial\ell_{\mu}h$, $h\in\mathcal{H}^{K}$.
If the objective function is Fr\'{e}chet differentiable, the minimizer
of the objective function $P\ell_{\mu}$ in $\mathcal{H}^{K}\left(B\right)$
satisfies the variational inequality: $\Psi\left(\mu\right)h\geq0$
for any $h$ in the tangent cone of $\mathcal{H}^{K}\left(B\right)$
at $\mu_{0}$. This tangent cone is defined as $\lim\sup_{t\downarrow0}\left(\mathcal{H}^{K}\left(B\right)-\mu_{0}\right)/t$.
If $\mu_{0}$ is in the interior of $\mathcal{H}^{K}\left(B\right)$,
this tangent cone is the whole of $\mathcal{H}^{K}$. Hence by linearity
of the operator $\Psi\left(\mu\right)$, attention can be restricted
to $h\in\mathcal{H}^{K}\left(1\right)$. When $\mu_{0}\in{\rm int}\left(\mathcal{H}^{K}\left(B\right)\right)$,
it also holds that $\Psi\left(\mu_{0}\right)h=0$, for any $h\in\mathcal{H}^{K}\left(1\right)$.
Then, in the following calculations, $\Psi\left(\mu\right)$ can be
restricted to be in $l^{\infty}\left(\mathcal{H}^{K}\left(1\right)\right)$.
The empirical counterpart of $\Psi\left(\mu\right)$ is the operator
$\Psi_{n}\left(\mu\right)$ such that $\Psi_{n}\left(\mu\right)h=P_{n}\partial\ell_{\mu}h$.
Finally, write $\dot{\Psi}_{\mu_{0}}\left(\mu-\mu_{0}\right)$ for
the Fr\'{e}chet derivative of $\Psi\left(\mu\right)$ at $\mu_{0}$
tangentially to $\left(\mu-\mu_{0}\right)$, where $\mu,\mu_{0}\in\mathcal{H}^{K}\left(B\right)$.
Then, $\dot{\Psi}_{\mu_{0}}$ is an operator from $\mathcal{H}^{K}$
to $l^{\infty}\left(\mathcal{H}^{K}\right)$. As for $\Psi\left(\mu\right)$,
the operator $\dot{\Psi}_{\mu_{0}}\left(\mu-\mu_{0}\right)$ can be
restricted to be in $l^{\infty}\left(\mathcal{H}^{K}\left(1\right)\right)$.
These facts will be used without further notice in what follows. Most
of these concepts are reviewed in van der Vaart and Wellner (2000,
ch.3.3) where this same notation is used. 

Deduce that $P\ell_{\mu}$is Fr\'{e}chet differentiable and its derivative
is the map $\Psi\left(\mu\right)$. By the conditions of Theorem \ref{Theorem_weakConvergence},
$\mu_{0}\in{\rm int}\left(\mathcal{H}^{K}\left(B\right)\right)$,
hence by the first order conditions, $\Psi\left(\mu_{0}\right)h=0$
for any $h\in\mathcal{H}^{K}\left(1\right)$. By this remark, and
basic algebra, 
\begin{eqnarray}
\sqrt{n}\Psi_{n}\left(\mu_{n}\right) & = & \sqrt{n}\Psi_{n}\left(\mu_{0}\right)+\sqrt{n}\left[\Psi\left(\mu_{n}\right)-\Psi\left(\mu_{0}\right)\right]\nonumber \\
 &  & +\sqrt{n}\left[\Psi_{n}\left(\mu_{n}\right)-\Psi\left(\mu_{n}\right)\right]-\sqrt{n}\left[\Psi_{n}\left(\mu_{0}\right)-\Psi\left(\mu_{0}\right)\right].\label{EQ_basicIdentity}
\end{eqnarray}
To bound the last two terms, verify that 
\[
\sup_{h\in\mathcal{H}^{K}\left(1\right)}\sqrt{n}\left[\left(\Psi_{n}\left(\mu_{n}\right)-\Psi\left(\mu_{n}\right)\right)-\left(\Psi_{n}\left(\mu_{0}\right)-\Psi\left(\mu_{0}\right)\right)\right]h=o_{p}\left(1\right).
\]
This follows if (i) $\sqrt{n}\left(\Psi_{n}\left(\mu\right)-\Psi\left(\mu\right)\right)h$
, $\mu\in\mathcal{H}^{K}\left(B\right)$, $h\in\mathcal{H}^{K}\left(1\right)$,
converges weakly to a Gaussian process with continuous sample paths,
(ii) $\mathcal{H}^{K}\left(B\right)$ is compact under the uniform
norm, and (iii) $\mu_{n}$ is consistent for $\mu_{0}$ in $\left|\cdot\right|_{\infty}$.
Point (i) is satisfied by Lemma \ref{Lemma_gaussianConvergence},
which also controls the first term on the r.h.s. of (\ref{EQ_basicIdentity}).
Point (ii) is satisfied by Lemma \ref{Lemma_entropyOfL}. Point (iii)
is satisfied by Theorem \ref{Theorem_consistency}. Hence, by continuity
of the sample paths of the Gaussian process, as $\mu_{n}\rightarrow\mu_{0}$
in probability (using Point iii), the above display holds true 

To control the second term on the r.h.s. of (\ref{EQ_basicIdentity}),
note that the Fr\'{e}chet derivative of $\Psi\left(\mu\right)$ at
$\mu_{0}$ is the linear operator $\dot{\Psi}_{\mu_{0}}$ such that
$\dot{\Psi}_{\mu_{0}}\left(\mu-\mu_{0}\right)h=P\partial^{2}\ell_{\mu_{0}}\left(\mu-\mu_{0}\right)h$,
which can be shown to exist based on the remarks at the beginning
of Section \ref{Section_proofs}. For any $h\in\mathcal{H}^{K}\left(1\right)$,

\begin{equation}
\left|\left[\Psi\left(\mu_{n}\right)-\Psi\left(\mu_{0}\right)\right]h-\dot{\Psi}_{\mu_{0}}\left(\mu_{n}-\mu_{0}\right)h\right|\leq\sup_{t\in\left(0,1\right)}\left|P\partial^{3}\ell_{\mu_{0}+t\left(\mu_{n}-\mu_{0}\right)}\left(\mu_{n}-\mu_{0}\right)^{2}h\right|\label{EQ_secondFrechetDerivativeBound}
\end{equation}
using differentiability of the loss function and Taylor's theorem
in Banach spaces. By Condition \ref{ConditionProjectionTheorem},
and the fact that $h$ is uniformly bounded, the r.h.s. is a constant
multiple of $P\left(\mu-\mu_{0}\right)^{2}$. By Theorem \ref{Theorem_consistency}
this quantity is $O_{p}\left(n^{-\left(2\eta-1\right)/\left(2\eta\right)}\right)$.
Given that $\eta>1$, these calculations show that
\[
\sqrt{n}\left[\Psi\left(\mu_{n}\right)-\Psi\left(\mu_{0}\right)\right]=\sqrt{n}\dot{\Psi}_{\mu_{0}}\left(\mu_{n}-\mu_{0}\right)+o_{p}\left(1\right).
\]
In consequence, from (\ref{EQ_basicIdentity}) deduce that
\begin{eqnarray}
\sqrt{n}\Psi_{n}\left(\mu_{n}\right)-\sqrt{n}\Psi_{n}\left(\mu_{0}\right) & = & \sqrt{n}\left(\Psi\left(\mu_{n}\right)-\Psi\left(\mu_{0}\right)\right)+o_{p}\left(1\right)\nonumber \\
 & = & \sqrt{n}\dot{\Psi}_{\mu_{0}}\left(\mu_{n}-\mu_{0}\right)+o_{p}\left(1\right).\label{EQ_basicIdentityFinal}
\end{eqnarray}
By Lemma \ref{Lemma_gaussianConvergence}, $\sqrt{n}\Psi_{n}\left(\mu_{0}\right)=O_{p}\left(1\right)$.
For the moment, suppose that $\mu_{n}$ is the exact solution to the
minimization problem, i.e. as in (\ref{EQ_theEstimator}). Hence,
by Lemma \ref{Lemma_lagrangeMultiplierSize}, $\sup_{h\in\mathcal{H}^{K}\left(1\right)}\sqrt{n}\Psi_{n}\left(\mu_{n}\right)h=O_{p}\left(1\right)$,
implying that $\sup_{h\in\mathcal{H}^{K}\left(1\right)}\sqrt{n}\dot{\Psi}_{\mu_{0}}\left(\mu_{n}-\mu_{0}\right)h=O_{p}\left(1\right)$.
Finally, if $\sup_{h\in\mathcal{H}^{K}\left(1\right)}\sqrt{n}\Psi_{n}\left(\mu_{n}\right)h=o_{p}\left(1\right)$,
(\ref{EQ_basicIdentityFinal}) together with the previous displays
imply that $-\lim_{n}\sqrt{n}\left(\Psi_{n}\left(\mu_{0}\right)-\Psi\left(\mu_{0}\right)\right)=\lim_{n}\dot{\Psi}_{\mu_{0}}\sqrt{n}\left(\mu_{n}-\mu_{0}\right)$
in probability, where the l.h.s. has same distribution as the Gaussian
process $G$ given in the statement of the theorem. It remains to
show that if we use an approximate minimizer say $\nu_{n}$ to distinguish
it here from $\mu_{n}$ in (\ref{EQ_theEstimator}), the result still
holds. The lemma in the next section shows that this is true, hence
completing the proof of Theorem \ref{Theorem_weakConvergence}.

\subsection{Asymptotic Minimizers\label{Section_approximateMinimizers}}

The following collects results on asymptotic minimizers. It proves
the last statement in Theorem \ref{Theorem_consistency} and also
allows us to use such minimisers in the test. 

\begin{lemma}Let $\left(\epsilon_{n}\right)$ be an $o_{p}\left(1\right)$
sequence. Suppose that $\nu_{n}$ satisfies $P_{n}\ell_{\nu_{n}}\leq P_{n}\ell_{\mu_{n}}-O_{p}\left(\epsilon_{n}\right)$,
where $\mu_{n}$ is as in (\ref{EQ_theEstimator}). Also suppose that
$\nu_{n,\rho}$ satisfies $P_{n}\ell_{\nu_{n,\rho}}+\rho\left|\nu_{n,\rho}\right|_{\mathcal{H}^{K}}^{2}\leq P_{n}\ell_{\mu_{n,\rho}}+\rho\left|\mu_{n,\rho}\right|_{\mathcal{H}^{K}}^{2}-O_{p}\left(\rho\epsilon_{n}\right)$,
where $\mu_{n,\rho}$ is as in (\ref{EQ_penalizedEstimator.}) and
$\rho n^{1/2}\rightarrow\infty$. 
\begin{enumerate}
\item Under the conditions of Theorem \ref{Theorem_consistency}, $\left|\mu_{n}-\nu_{n}\right|_{\infty}=o_{p}\left(1\right)$,
$\left|\mu_{n}-\nu_{n}\right|_{2}=o_{p}\left(\epsilon_{n}\right)$
and $\left|\mu_{n,\rho}-\nu_{n\rho}\right|_{\mathcal{H}^{K}}=O_{p}\left(\epsilon_{n}\right)$
in probability, and there is a finite $B$ such that $\left|\nu_{n,\rho}\right|_{\mathcal{H}^{K}}\leq B$
eventually in probability.
\item If $\epsilon_{n}=o_{p}\left(n^{-1/2}\right)$, under the Conditions
of Theorem \ref{Theorem_weakConvergence}, $\sup_{h\in\mathcal{H}^{K}\left(1\right)}\left|\Psi_{n}\left(\mu_{n}\right)h-\Psi_{n}\left(\nu_{n}\right)h\right|=o_{p}\left(n^{-1/2}\right)$.
\end{enumerate}
\end{lemma}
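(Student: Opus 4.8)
The plan is to split the lemma according to two different convexity mechanisms: for the constrained estimator one exploits the second-order curvature of $\mu\mapsto P_{n}\ell_{\mu}$ supplied by $\inf_{z,t}\partial^{2}L\left(z,t\right)/\partial t^{2}>0$, while for the penalized estimator one uses the strong convexity in $\left|\cdot\right|_{\mathcal{H}^{K}}$ contributed by the ridge term $\rho\left|\cdot\right|_{\mathcal{H}^{K}}^{2}$. Throughout I use that $\mathcal{H}^{K}\left(B\right)$ is closed and convex (Lemma \ref{Lemma_propertiesOfL}) and that, conditionally on the data, $P_{n}\ell_{\mu}$ is twice Fr\'{e}chet differentiable with the derivatives recorded at the start of Section \ref{Section_proofs}.

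For Point 1 I would first dispose of the $\left|\cdot\right|_{\infty}$ claim and, with it, the approximate-minimizer addendum to Theorem \ref{Theorem_consistency}. The class $\left\{ \ell_{\mu}:\mu\in\mathcal{H}^{K}\left(B\right)\right\} $ has finite $L_{1}$ bracketing entropy (Lemma \ref{Lemma_entropy_dLh}), hence is Glivenko--Cantelli under Condition \ref{Condition_Dependence}, so $\sup_{\mu}\left|\left(P_{n}-P\right)\ell_{\mu}\right|=o_{p}\left(1\right)$; combined with $P_{n}\ell_{\nu_{n}}\le P_{n}\ell_{\mu_{0}}+o_{p}\left(1\right)$ (since $\nu_{n}$ is an approximate minimizer and $\mu_{0}\in\mathcal{H}^{K}\left(B\right)$) and with the quadratic lower bound $P\ell_{\mu}-P\ell_{\mu_{0}}\gtrsim\left|\mu-\mu_{0}\right|_{2}^{2}$ from the proof of Theorem \ref{Theorem_consistency}, this yields $\left|\nu_{n}-\mu_{0}\right|_{2}\to0$ and hence, by compactness of $\mathcal{H}^{K}\left(B\right)$ in $\left|\cdot\right|_{\infty}$ (Lemma \ref{Lemma_entropyOfL}), $\left|\nu_{n}-\mu_{0}\right|_{\infty}\to0$; the same holds for $\mu_{n}$, and the triangle inequality gives $\left|\mu_{n}-\nu_{n}\right|_{\infty}=o_{p}\left(1\right)$. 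For the $L_{2}$ rate I would use that $\mu_{n}$ is the \emph{exact} minimizer over the convex set $\mathcal{H}^{K}\left(B\right)$, so the variational inequality $P_{n}\partial\ell_{\mu_{n}}\left(\nu_{n}-\mu_{n}\right)\ge0$ holds; a second-order Taylor expansion of $P_{n}\ell$ about $\mu_{n}$ together with $\inf_{z,t}\partial^{2}L\left(z,t\right)/\partial t^{2}>0$ then gives $P_{n}\left(\nu_{n}-\mu_{n}\right)^{2}\lesssim P_{n}\ell_{\nu_{n}}-P_{n}\ell_{\mu_{n}}$, i.e. the empirical $L_{2}$ distance between $\mu_{n}$ and $\nu_{n}$ is controlled by the optimization tolerance (the stated $\epsilon_{n}$, up to the obvious power bookkeeping through the quadratic bound). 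The population $L_{2}$ statement then follows by transferring the empirical norm to the population one: the class $\left\{ \left(\mu-\mu'\right)^{2}:\mu,\mu'\in\mathcal{H}^{K}\left(B\right)\right\} $ is uniformly bounded with finite bracketing entropy (again Lemma \ref{Lemma_entropyOfL}), so it is $P$-Donsker by Doukhan et al. (1995), and since $\left|\left(\mu_{n}-\nu_{n}\right)^{2}\right|_{\infty}\to0$ the empirical process evaluated at $\left(\mu_{n}-\nu_{n}\right)^{2}$ is $o_{p}\left(n^{-1/2}\right)$ by asymptotic equicontinuity, which upgrades the empirical-norm bound to the population one. For the penalized estimator I would observe that $J_{n}\left(\mu\right):=P_{n}\ell_{\mu}+\rho\left|\mu\right|_{\mathcal{H}^{K}}^{2}$ satisfies, by convexity of $P_{n}\ell$ and the first-order condition at its minimizer $\mu_{n,\rho}$, the inequality $J_{n}\left(\mu\right)-J_{n}\left(\mu_{n,\rho}\right)\ge\rho\left|\mu-\mu_{n,\rho}\right|_{\mathcal{H}^{K}}^{2}$; applying this at $\nu_{n,\rho}$ with the $O_{p}\left(\rho\epsilon_{n}\right)$ gap yields $\left|\nu_{n,\rho}-\mu_{n,\rho}\right|_{\mathcal{H}^{K}}=O_{p}\left(\epsilon_{n}\right)$ (again up to power), and since $\mu_{n,\rho}$ lies inside $\mathcal{H}^{K}\left(B\right)$ with a fixed margin when $\rho n^{1/2}\to\infty$ (proof of Lemma \ref{Lemma_lagrangeMultiplierSize}), the triangle inequality gives $\left|\nu_{n,\rho}\right|_{\mathcal{H}^{K}}\le B$ eventually.

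For Point 2 the work is essentially done. Since $\Psi_{n}\left(\mu\right)h=P_{n}\partial\ell_{\mu}h$ is itself an empirical average, I would write
\[
\Psi_{n}\left(\mu_{n}\right)h-\Psi_{n}\left(\nu_{n}\right)h=\frac{1}{n}\sum_{i=1}^{n}\left[\partial\ell_{\mu_{n}}\left(Z_{i}\right)-\partial\ell_{\nu_{n}}\left(Z_{i}\right)\right]h\left(X_{i}\right),
\]
apply the exact mean value theorem in the second argument of $L$, namely $\partial\ell_{\mu_{n}}\left(z\right)-\partial\ell_{\nu_{n}}\left(z\right)=\left[\partial^{2}L\left(z,\tilde{t}\right)/\partial t^{2}\right]\left(\mu_{n}\left(x\right)-\nu_{n}\left(x\right)\right)$ with $\tilde{t}$ between $\mu_{n}\left(x\right)$ and $\nu_{n}\left(x\right)$, whence $\left|\partial^{2}L\left(z,\tilde{t}\right)/\partial t^{2}\right|\le\Delta_{2}\left(z\right)$ because $\left|\mu_{n}\right|_{\infty},\left|\nu_{n}\right|_{\infty}\le\bar{B}$, and then bound by Cauchy--Schwarz and $\sup_{h\in\mathcal{H}^{K}\left(1\right)}\left|h\right|_{\infty}\lesssim1$ (Lemma \ref{Lemma_propertiesOfL}):
\[
\sup_{h\in\mathcal{H}^{K}\left(1\right)}\left|\Psi_{n}\left(\mu_{n}\right)h-\Psi_{n}\left(\nu_{n}\right)h\right|\lesssim\left(\frac{1}{n}\sum_{i=1}^{n}\Delta_{2}\left(Z_{i}\right)^{2}\right)^{1/2}\left(\frac{1}{n}\sum_{i=1}^{n}\left(\mu_{n}\left(X_{i}\right)-\nu_{n}\left(X_{i}\right)\right)^{2}\right)^{1/2}.
\]
The first factor is $O_{p}\left(1\right)$ by the ergodic theorem since $P\Delta_{2}^{2}<\infty$ under Condition \ref{Condition_Loss} ($p>2$); the second factor is the empirical $L_{2}$ distance controlled in Point 1, which is $o_{p}\left(n^{-1/2}\right)$ in the regime $\epsilon_{n}=o_{p}\left(n^{-1/2}\right)$; multiplying gives the claim.

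The step I expect to be most delicate is the empirical-to-population $L_{2}$ transfer in Point 1: it requires $\left(P_{n}-P\right)\left(\mu_{n}-\nu_{n}\right)^{2}$ to be negligible at the relevant rate, and hence the Donsker/equicontinuity argument rather than a crude global supremum bound; one must also take care to invoke convexity of $\mathcal{H}^{K}\left(B\right)$ and the \emph{exactness} of $\mu_{n}$ (not merely approximate minimality) when writing the variational inequality, and to keep track of the power of $\epsilon_{n}$ through the quadratic lower bounds. I note that Point 2 --- the statement actually used in the testing theory --- needs only the \emph{empirical} $L_{2}$ bound, which is immediate from the variational inequality and the curvature bound, so the genuinely subtle point is confined to the population-norm formulation in Point 1.
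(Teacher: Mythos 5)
Your argument for the constrained part of Point 1 and for Point 2 is essentially the paper's: the variational inequality $P_{n}\partial\ell_{\mu_{n}}\left(\nu_{n}-\mu_{n}\right)\geq0$ together with the lower curvature bound give $P_{n}\left(\mu_{n}-\nu_{n}\right)^{2}\lesssim P_{n}\ell_{\nu_{n}}-P_{n}\ell_{\mu_{n}}=O_{p}\left(\epsilon_{n}\right)$, and a mean-value/H\"{o}lder step then converts this into a bound on $\Psi_{n}\left(\mu_{n}\right)h-\Psi_{n}\left(\nu_{n}\right)h$. Two deviations are worth noting. For the penalized part the paper does not argue via strong convexity of $J_{n}$: it invokes the argument from Steinwart and Christmann (2008, Thm.\ 5.9), using the Representer-type identity $\mu_{n,\rho}=-\left(2\rho\right)^{-1}P_{n}\partial\ell_{\mu_{n,\rho}}\Phi$ to kill an inner-product term and arrive at a \emph{linear}-in-norm bound $\rho\left|\mu_{n,\rho}-\nu_{n,\rho}\right|_{\mathcal{H}^{K}}\leq O_{p}\left(\rho\epsilon_{n}\right)$, hence the claimed $O_{p}\left(\epsilon_{n}\right)$; your strong-convexity inequality $J_{n}\left(\nu_{n,\rho}\right)-J_{n}\left(\mu_{n,\rho}\right)\geq\rho\left|\nu_{n,\rho}-\mu_{n,\rho}\right|_{\mathcal{H}^{K}}^{2}$ gives only $O_{p}\left(\epsilon_{n}^{1/2}\right)$, which still yields $o_{p}\left(1\right)$ and hence the eventual boundedness of $\left|\nu_{n,\rho}\right|_{\mathcal{H}^{K}}$, but does not match the stated rate. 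Also, the Donsker/asymptotic-equicontinuity step you add to upgrade the empirical $L_{2}$ bound to the population $L_{2}$ norm is not in the paper's proof; the paper stops at $P_{n}\left(\mu_{n}-\nu_{n}\right)^{2}=O_{p}\left(\epsilon_{n}\right)$ and passes directly to $\left|\cdot\right|_{\infty}$ consistency via compactness.

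The genuine gap is the ``power bookkeeping'' you flag in passing but never actually carry out, and it bites in Point 2. After Cauchy--Schwarz you are left with $O_{p}\left(1\right)\cdot\left[P_{n}\left(\mu_{n}-\nu_{n}\right)^{2}\right]^{1/2}$. You established $P_{n}\left(\mu_{n}-\nu_{n}\right)^{2}=O_{p}\left(\epsilon_{n}\right)$, so the second factor is $O_{p}\left(\epsilon_{n}^{1/2}\right)$; under the stated hypothesis $\epsilon_{n}=o_{p}\left(n^{-1/2}\right)$ this is only $o_{p}\left(n^{-1/4}\right)$, not the $o_{p}\left(n^{-1/2}\right)$ you assert. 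Your sentence claiming ``the second factor $\ldots$ is $o_{p}\left(n^{-1/2}\right)$ in the regime $\epsilon_{n}=o_{p}\left(n^{-1/2}\right)$'' is simply wrong as written: that factor is the square root of a quantity that is $O_{p}\left(\epsilon_{n}\right)$. To obtain $o_{p}\left(n^{-1/2}\right)$ one needs $\epsilon_{n}=o_{p}\left(n^{-1}\right)$, which is precisely the tolerance Theorem \ref{Theorem_weakConvergence} imposes on the asymptotic minimizer. The paper's own proof finishes with exactly the same $\left[P_{n}\left(\nu_{n}-\mu_{n}\right)^{2}\right]^{1/2}$ factor and makes the same silent jump, which strongly suggests the lemma's hypothesis should read $\epsilon_{n}=o_{p}\left(n^{-1}\right)$; but in a blind proof you should not reproduce that jump --- you need to either impose the stronger tolerance or explain why the square root can be avoided.
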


\begin{proof}At first consider the penalized estimator. To this end,
follow the same steps in the proof of 5.14 in Theorem 5.9 of Steinwart
and Christmann (2008). Mutatis mutandis, the argument in their second
paragraph on page 174 gives
\begin{align*}
 & \left\langle \nu_{n,\rho}-\mu_{n,\rho},P_{n}\partial\ell_{\mu_{n,\rho}}\Phi+2\rho\mu_{n,\rho}\right\rangle _{\mathcal{H}^{K}}+\rho\left|\mu_{n,\rho}-\nu_{n,\rho}\right|_{\mathcal{H}^{K}}\\
\leq & P_{n}\ell_{\nu_{n,\rho}}+\rho\left|\nu_{n,\rho}\right|_{\mathcal{H}^{K}}^{2}-\left(P_{n}\ell_{\mu_{n,\rho}}+\rho\left|\mu_{n,\rho}\right|_{\mathcal{H}^{K}}^{2}\right).
\end{align*}
 Derivation of this display requires convexity of $L\left(z,t\right)$
w.r.t. $t$, which is the case by Condition \ref{Condition_Loss}.
By assumption, the r.h.s. is $O_{p}\left(\rho\epsilon_{n}\right)$.
Note that $\mu_{n,\rho}$ is the exact minimizer of the penalized
empirical risk. Hence, eq. (5.12) in Theorem 5.9 of Steinwart and
Christmann (2008) says that $\mu_{n,\rho}=-\left(2\rho\right)^{-1}P_{n}\partial\ell_{\mu_{n,\rho}}\Phi$
for any $\rho>0$, implying that the inner product in the display
is zero. By these remarks, deduce that the above display simplifies
to $\rho\left|\mu_{n,\rho}-\nu_{n,\rho}\right|_{\mathcal{H}^{K}}=O_{p}\left(\rho\epsilon_{n}\right)$.
Deduce that $\left|\mu_{n}-\nu_{n}\right|_{\mathcal{H}^{K}}=o_{p}\left(1\right)$
so that by the triangle inequality, and Lemma \ref{Lemma_lagrangeMultiplierSize},
$\left|\nu_{n,\rho}\right|_{\mathcal{H}^{K}}\leq B$ eventually, in
probability for some $B<\infty$.

Now, consider the constrained estimator. Conditioning on the data,
by definition of $\mu_{n}$, the variational inequality $P_{n}\partial\ell_{\mu_{n}}\left(\nu_{n}-\mu_{n}\right)\geq0$
holds because $\nu_{n}-\mu_{n}$ is an element of the tangent cone
of $\mathcal{H}^{K}\left(B\right)$ at $\mu_{n}$. Conditioning on
the data, by Taylor's theorem in Banach spaces, and the fact that
$\inf_{z\in\mathcal{Z},\left|t\right|\leq B}\partial^{2}L\left(z,t\right)>0$
by Condition \ref{Condition_Loss}, deduce that$\left|P_{n}\ell_{\nu_{n}}-P_{n}\ell_{\mu_{n}}\right|\gtrsim P_{n}\left(\mu_{n}-\nu_{n}\right)^{2}$.
By the conditions of the lemma, and the previous inequality deduce
that $P_{n}\left(\mu_{n}-\nu_{n}\right)^{2}=O_{p}\left(\epsilon_{n}\right)$.
The $L_{2}$ convergence is then turned into uniform using the same
argument used in the proof of Theorem \ref{Theorem_consistency}.
Now, conditioning on the data, by Fr\'{e}chet differentiability,
\begin{eqnarray*}
\left|\Psi_{n}\left(\mu_{n}\right)h-\Psi_{n}\left(\nu_{n}\right)h\right| & = & \left|P_{n}\partial\ell_{\nu_{n}}-P_{n}\partial\ell_{\mu_{n}}\right|\\
 & \leq & P_{n}\left|\sup_{\mu\in\mathcal{H}^{K}\left(B\right)}\partial^{2}\ell_{\mu}\left(\nu_{n}-\mu_{n}\right)h\right|.
\end{eqnarray*}
By Holder's inequality, and the fact that $h\in\mathcal{H}^{K}\left(1\right)$
is bounded, the r.h.s. is bounded by a constant multiple of 
\[
\left[P_{n}\left|\sup_{\mu\in\mathcal{H}^{K}\left(B\right)}\partial^{2}\ell_{\mu}\right|^{2}\right]^{1/2}\left[P_{n}\left(\nu_{n}-\mu_{n}\right)^{2}\right]^{1/2}\lesssim\left[P_{n}\Delta_{2}^{2}\right]^{1/2}\left[P_{n}\left(\nu_{n}-\mu_{n}\right)^{2}\right]^{1/2}.
\]
By Condition \ref{Condition_Loss}, deduce that $P_{n}\Delta_{2}^{2}=O_{p}\left(1\right)$
so that, by the previous calculations, the result follows. \end{proof}

\subsection{Proof of Results in Section \ref{Section_testingFunctionalRestrictions}
\label{Section_proofTheoremProjection}}

We use the operators $\Pi_{\rho}$, $\Pi_{n,\rho}$, $\tilde{\Pi}_{n,\rho}$
such that for any $h\in\mathcal{H}^{K}$:
\begin{eqnarray}
\Pi_{\rho}h & := & \arg\inf_{\nu\in\mathcal{R}_{0}}P\partial^{2}\ell_{\mu_{0}}\left(h-\nu\right)^{2}+\rho\left|\nu\right|_{\mathcal{H}^{K}}^{2}\text{ as in (\ref{EQ_projectionOperator})}\nonumber \\
\Pi_{n,\rho}h & := & \arg\inf_{\nu\in\mathcal{R}_{0}}P_{n}^{2}\partial\ell_{\mu_{n0}}\left(h-\nu\right)^{2}+\rho\left|\nu\right|_{\mathcal{H}^{K}}^{2}\,\text{ as in (\ref{EQ_empiricalProjection})}\nonumber \\
\tilde{\Pi}_{n,\rho}h & := & \arg\inf_{\nu\in\mathcal{R}_{0}}P\partial\ell_{\mu_{n0}}^{2}\left(h-\nu\right)^{2}+\rho\left|\nu\right|_{\mathcal{H}^{K}}^{2}.\label{EQ_notationNus}
\end{eqnarray}
To ease notation, we may write $\Pi_{n}=\Pi_{n,\rho}$ when $\rho=\rho_{n}$. 

The proof uses some preliminary results. In what follows, we shall
assume that $K=1$. This is to avoid notational complexities that
could obscure the main steps in the derivations. Because of additivity,
this is not restrictive as long as $K$ is bounded.

\begin{lemma}\label{Lemma_nuTildeNormBound}Suppose that $h\in\mathcal{H}^{K}\left(1\right)$.
Then, $\left|\Pi_{0}h\right|_{\mathcal{H}^{K}}\leq1$.\end{lemma}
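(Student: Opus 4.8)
The plan is to exhibit $\Pi_{0}$ as an orthogonal projection onto a subspace of $\mathcal{H}^{K}$ and then read off the norm bound, exploiting the additive structure $C_{\mathcal{H}^{K}}=C_{\mathcal{R}_{0}}+C_{\mathcal{R}_{1}}$. With $\rho=0$, (\ref{EQ_projectionOperator}) defines $\Pi_{0}h$ as the minimiser over $\nu\in\mathcal{R}_{0}$ of $P\partial\ell_{\mu_{0}}^{2}\left(h-\nu\right)^{2}$. By strict convexity and coercivity of this objective — arguing exactly as in the existence part of the proof of Theorem \ref{Theorem_consistency}, using $\partial^{2}\ell_{\mu_{0}}>0$ and the moment bounds of Condition \ref{Condition_Loss} — the minimiser exists, is unique, lies in $\mathcal{R}_{0}$, and is pinned down by the normal equations $P\partial\ell_{\mu_{0}}^{2}\left(h-\Pi_{0}h\right)\nu=0$ for every $\nu\in\mathcal{R}_{0}$. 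Hence $\Pi_{0}$ is the orthogonal projection of $\mathcal{H}^{K}$ onto $\mathcal{R}_{0}$ with respect to the inner product $\langle f,g\rangle_{\star}:=P\partial\ell_{\mu_{0}}^{2}fg$, and as such is a contraction: the Pythagorean identity $\langle h,h\rangle_{\star}=\langle h-\Pi_{0}h,h-\Pi_{0}h\rangle_{\star}+\langle\Pi_{0}h,\Pi_{0}h\rangle_{\star}$ gives $\langle\Pi_{0}h,\Pi_{0}h\rangle_{\star}\le\langle h,h\rangle_{\star}$.

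To convert this into the RKHS-norm bound I would invoke Aronszajn's theorem on sums of reproducing kernels: since $C_{\mathcal{H}^{K}}=C_{\mathcal{R}_{0}}+C_{\mathcal{R}_{1}}$ with $C_{\mathcal{R}_{1}}$ a covariance, $\mathcal{R}_{0}$ is a subspace of $\mathcal{H}^{K}$ with $\left|\nu\right|_{\mathcal{H}^{K}}\le\left|\nu\right|_{\mathcal{R}_{0}}$ for every $\nu\in\mathcal{R}_{0}$, equality holding when, as in Examples \ref{Example_H0ZeroComponent}--\ref{Example_H0LinearityComponent}, $\mathcal{R}_{1}$ is the orthocomplement of $\mathcal{R}_{0}$ in $\mathcal{H}^{K}$, so that $\mathcal{R}_{0}\hookrightarrow\mathcal{H}^{K}$ isometrically and $\Pi_{0}$ just ``keeps the $\mathcal{R}_{0}$-component''. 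Thus $\left|\Pi_{0}h\right|_{\mathcal{H}^{K}}\le\left|\Pi_{0}h\right|_{\mathcal{R}_{0}}$, and it remains to see that the $\langle\cdot,\cdot\rangle_{\star}$-contraction also controls $\left|\Pi_{0}h\right|_{\mathcal{R}_{0}}$ by $\left|h\right|_{\mathcal{H}^{K}}\le1$. For this I would decompose $h=h_{0}+h_{1}$ along the two kernels with $\left|h_{0}\right|_{\mathcal{R}_{0}}^{2}+\left|h_{1}\right|_{\mathcal{R}_{1}}^{2}=\left|h\right|_{\mathcal{H}^{K}}^{2}$, note $\Pi_{0}h_{0}=h_{0}$, and combine the normal equations with the reproducing identity $\Pi_{0}h(x)=\langle\Pi_{0}h,C_{\mathcal{R}_{0}}(x,\cdot)\rangle_{\mathcal{R}_{0}}$ to bound $\left|\Pi_{0}h\right|_{\mathcal{R}_{0}}$ by $\left|h\right|_{\mathcal{H}^{K}}$.

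The step I expect to be the sticking point is this last one: an orthogonal projection in the weighted-$L_{2}$ inner product $\langle\cdot,\cdot\rangle_{\star}$ is a contraction for $\left|\cdot\right|_{\star}$ but not automatically for the stronger norm $\left|\cdot\right|_{\mathcal{H}^{K}}$, so one genuinely has to use the compatibility between the weight $\partial\ell_{\mu_{0}}^{2}$ and the kernel decomposition — in effect, that the $\langle\cdot,\cdot\rangle_{\star}$-projection onto $\mathcal{R}_{0}$ and the $\mathcal{H}^{K}$-orthogonal projection onto $\mathcal{R}_{0}$ coincide, so that ``orthogonal projections are contractions'' may be applied for $\left|\cdot\right|_{\mathcal{H}^{K}}$ directly. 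The remaining ingredients — existence, uniqueness and $\mathcal{R}_{0}$-membership of $\Pi_{0}h$, and the sum-of-kernels inequality — are routine given the machinery already developed.
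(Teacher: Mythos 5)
You have correctly identified the crux, and it is not resolved by the ambient machinery. The operator $\Pi_{0}$ in (\ref{EQ_projectionOperator}) is the orthogonal projection onto $\mathcal{R}_{0}$ for the weighted-$L_{2}$ inner product $\left\langle f,g\right\rangle _{\star}:=P\partial\ell_{\mu_{0}}^{2}fg$; such a projection is a contraction for $\left|\cdot\right|_{\star}$ but not automatically for the much stronger RKHS norm $\left|\cdot\right|_{\mathcal{H}^{K}}$. Your Pythagorean step bounds the $\left|\cdot\right|_{\star}$-norm, the Aronszajn-type inequality $\left|\nu\right|_{\mathcal{H}^{K}}\leq\left|\nu\right|_{\mathcal{R}_{0}}$ points the wrong way for upgrading this to an RKHS bound, and nothing in your sketch establishes the coincidence of the $\left\langle \cdot,\cdot\right\rangle _{\star}$-projection with the $\left\langle \cdot,\cdot\right\rangle _{\mathcal{H}^{K}}$-orthogonal projection onto $\mathcal{R}_{0}$. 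These are genuinely different inner products, and the missing step is exactly the one you call the sticking point.

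When you compare with the paper's own proof, you will find it runs along the same lines and elides precisely this step. It appeals to Lemma 9.1 of van der Vaart and van Zanten (2008), which yields $\left|h\right|_{\mathcal{H}^{K}}^{2}=\left|h_{0}\right|_{\mathcal{R}_{0}}^{2}+\left|h_{1}\right|_{\mathcal{R}_{0}^{c}}^{2}$ for the $\mathcal{H}^{K}$-orthogonal decomposition $h=h_{0}+h_{1}$ with $h_{0}\in\mathcal{R}_{0}$ and $h_{1}$ in the $\mathcal{H}^{K}$-orthogonal complement $\mathcal{R}_{0}^{c}$, and then identifies $h_{0}$ with $\Pi_{0}h$. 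That identification is exactly the unproved compatibility claim you flagged, and the proof offers no argument for it (nor does it settle the prior question of whether the unpenalized infimum in (\ref{EQ_projectionOperator}) is even attained inside $\mathcal{R}_{0}$, rather than only in its $L_{2}$-closure, when $\mathcal{R}_{0}$ is infinite dimensional). Your diagnosis is therefore sound: a complete proof must either show that the two projections agree under the paper's assumptions, or find a different route to the RKHS-norm bound; neither your sketch nor the cited lemma supplies this.
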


\begin{proof}By construction, the linear projection $\Pi_{0}h$ satisfies
$\Pi_{0}h\in\mathcal{R}_{0}$ and $\Pi_{0}\left(h-\Pi_{0}h\right)=0$.
Hence, the space $\mathcal{H}^{K}$ is the direct sum of the set $\mathcal{R}_{0}$
and its complement in $\mathcal{H}^{K}$, say $\mathcal{R}_{0}^{c}$.
These sets are orthogonal. Note that we do not necessarily have $\mathcal{R}_{0}^{c}=\mathcal{R}_{1}$
unless the basis that spans $\mathcal{R}_{1}$ is already linearly
independent of $\mathcal{R}_{0}$. By Lemma 9.1 in van der Vaart and
van Zanten (2008) $\left|h\right|_{\mathcal{H}^{K}}=\left|\Pi_{0}h\right|_{\mathcal{R}_{0}}+\left|h-\Pi_{0}h\right|_{\mathcal{R}_{0}^{c}}$.
The norms are the ones induced by the inner products in the respective
spaces. But, $\left|\Pi_{0}h\right|_{\mathcal{R}_{0}^{c}}=0$. Hence,
we have that $\left|\Pi_{0}h\right|_{\mathcal{R}_{0}}=\left|\Pi_{0}h\right|_{\mathcal{H}^{K}}\leq\left|h\right|_{\mathcal{H}^{K}}=1$.\end{proof}

\begin{lemma}\label{Lemma_populationPenalizedProjConvergence2}Under
Condition \ref{ConditionProjectionTheorem}, if $\rho n^{\left(2\eta-1\right)/\left(4\eta\right)}\rightarrow\infty$,
then, $\sup_{h\in\mathcal{H}^{K}\left(1\right)}\left|\left(\Pi_{\rho}-\tilde{\Pi}_{n,\rho}\right)h\right|_{\mathcal{H}^{K}}\rightarrow0$
in probability. \end{lemma}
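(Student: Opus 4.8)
The plan is a perturbation argument on the normal equations of the two penalized projections, with the dependence on $\rho$ tracked carefully. By additivity of the kernel it suffices to take $K=1$, as elsewhere in the proofs. Write $w_{0}:=\partial^{2}\ell_{\mu_{0}}$ and $w_{n}:=\partial^{2}\ell_{\mu_{n0}}$, so that $\Pi_{\rho}h$ and $\tilde{\Pi}_{n,\rho}h$ in (\ref{EQ_notationNus}) are the penalized projections of $h$ onto $\mathcal{R}_{0}$ with weights $w_{0}$ and $w_{n}$, the penalty $\rho$ being common to both; they differ only in that the weight is evaluated at $\mu_{0}$ resp.\ $\mu_{n0}$. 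By Condition \ref{Condition_Loss}, $c_{0}:=\inf_{z,t}\partial^{2}L\left(z,t\right)>0$, so $w_{0},w_{n}\geq c_{0}$; by Condition \ref{ConditionProjectionTheorem}, $\left|w_{0}\right|_{\infty}\vee\left|w_{n}\right|_{\infty}\leq\left|\Delta_{2}\right|_{\infty}<\infty$. For any such weight the objective $\nu\mapsto Pw\left(h-\nu\right)^{2}+\rho\left|\nu\right|_{\mathcal{H}^{K}}^{2}$ on $\mathcal{R}_{0}$ is strictly convex, coercive and Fr\'{e}chet differentiable, so its minimizer $\Pi^{w}h$ is unique and characterised by $Pw\left(\Pi^{w}h-h\right)g=-\rho\left\langle \Pi^{w}h,g\right\rangle _{\mathcal{H}^{K}}$ for every $g\in\mathcal{R}_{0}$.

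First I would subtract the first-order conditions for $\Pi_{\rho}h=\Pi^{w_{0}}h$ and $\tilde{\Pi}_{n,\rho}h=\Pi^{w_{n}}h$, test against $g=\delta:=\Pi_{\rho}h-\tilde{\Pi}_{n,\rho}h\in\mathcal{R}_{0}$, and rearrange using $w_{n}=w_{0}+\left(w_{n}-w_{0}\right)$ to obtain the identity
\[
Pw_{0}\delta^{2}+\rho\left|\delta\right|_{\mathcal{H}^{K}}^{2}=P\left(w_{n}-w_{0}\right)\left(\tilde{\Pi}_{n,\rho}h-h\right)\delta .
\]
The left side is $\geq c_{0}\left|\delta\right|_{2}^{2}$ and also $\geq\rho\left|\delta\right|_{\mathcal{H}^{K}}^{2}$. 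For the right side I would use that the mean value theorem together with $\left|\mu_{n0}\right|_{\infty}\vee\left|\mu_{0}\right|_{\infty}\leq\bar{B}$ and $\left|\Delta_{3}\right|_{\infty}<\infty$ (Condition \ref{ConditionProjectionTheorem}) gives $\left|w_{n}-w_{0}\right|\leq\left|\Delta_{3}\right|_{\infty}\left|\mu_{n0}-\mu_{0}\right|$ pointwise, and that evaluating the objective defining $\tilde{\Pi}_{n,\rho}h$ at $\nu=0$ gives $\rho\left|\tilde{\Pi}_{n,\rho}h\right|_{\mathcal{H}^{K}}^{2}\leq Pw_{n}h^{2}\leq\left|\Delta_{2}\right|_{\infty}\left|h\right|_{2}^{2}\lesssim1$ uniformly over $h\in\mathcal{H}^{K}\left(1\right)$ (Lemma \ref{Lemma_propertiesOfL}), whence $\left|\tilde{\Pi}_{n,\rho}h-h\right|_{\infty}\lesssim\rho^{-1/2}$ by the reproducing property (taking $\rho\leq1$, which is harmless). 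Hence the right side is $\lesssim\rho^{-1/2}\left|\mu_{n0}-\mu_{0}\right|_{2}\left|\delta\right|_{2}$. Comparing with the two lower bounds yields first $\left|\delta\right|_{2}\lesssim\rho^{-1/2}\left|\mu_{n0}-\mu_{0}\right|_{2}$ and then $\rho\left|\delta\right|_{\mathcal{H}^{K}}^{2}\lesssim\rho^{-1}\left|\mu_{n0}-\mu_{0}\right|_{2}^{2}$, i.e.
\[
\sup_{h\in\mathcal{H}^{K}\left(1\right)}\left|\left(\Pi_{\rho}-\tilde{\Pi}_{n,\rho}\right)h\right|_{\mathcal{H}^{K}}\lesssim\rho^{-1}\left|\mu_{n0}-\mu_{0}\right|_{2},
\]
all constants being uniform in $h$.

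To conclude I would invoke the convergence rate of the restricted estimator: each additive component of $C_{\mathcal{R}_{0}}$ inherits Condition \ref{Condition_kernel}, so $\mu_{n0}$ in (\ref{EQ_restrictedEstimator}) is a constrained estimator of the type treated in Theorem \ref{Theorem_consistency}(2), giving $\left|\mu_{n0}-\mu_{0}\right|_{2}=O_{p}\left(n^{-\left(2\eta-1\right)/\left(4\eta\right)}\right)$ (the $o_{p}\left(n^{-1}\right)$ approximate-minimizer slack contributing only an $o_{p}\left(n^{-1/2}\right)$ term, negligible because $\left(2\eta-1\right)/\left(4\eta\right)<1/2$). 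Substituting in the last display, $\sup_{h\in\mathcal{H}^{K}\left(1\right)}\left|\left(\Pi_{\rho}-\tilde{\Pi}_{n,\rho}\right)h\right|_{\mathcal{H}^{K}}=O_{p}\left(\left(\rho n^{\left(2\eta-1\right)/\left(4\eta\right)}\right)^{-1}\right)=o_{p}\left(1\right)$ under the hypothesis $\rho n^{\left(2\eta-1\right)/\left(4\eta\right)}\rightarrow\infty$.

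The main obstacle is getting the power of $\rho$ right. One cannot control $w_{n}-w_{0}$ through $\left|\mu_{n0}-\mu_{0}\right|_{\infty}$, which is only known to be $o_{p}\left(1\right)$ without a rate; instead the product $P\left(w_{n}-w_{0}\right)\left(\tilde{\Pi}_{n,\rho}h-h\right)\delta$ must be split so as to retain $\left|\mu_{n0}-\mu_{0}\right|_{2}$ with its $n^{-\left(2\eta-1\right)/\left(4\eta\right)}$ rate, while paying only the factor $\rho^{-1/2}$ that the sup-norm bound on $\tilde{\Pi}_{n,\rho}h-h$ costs. This is exactly what makes $\rho n^{\left(2\eta-1\right)/\left(4\eta\right)}\rightarrow\infty$ the natural — and essentially sharp — requirement. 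A subsidiary point is to check that the bound $\left|\tilde{\Pi}_{n,\rho}h\right|_{\mathcal{H}^{K}}\lesssim\rho^{-1/2}$, and hence all the subsequent estimates, is uniform over $h\in\mathcal{H}^{K}\left(1\right)$.
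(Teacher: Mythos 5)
Your argument is correct and reaches the same final bound as the paper, but by a genuinely different and more elementary route. The paper's proof invokes the RKHS stability inequality of Lemma \ref{Lemma_SteinwartChirstman} (a Steinwart--Christmann-type bound of the form $\left|\Pi_{\rho}h-\tilde{\Pi}_{n,\rho}h\right|_{\mathcal{H}^{K}}\leq\rho^{-1}\left|\left(\tilde{P}_{n}-\tilde{P}\right)\left(h-\Pi_{\rho}h\right)\Phi\right|_{\mathcal{H}^{K}}$), then expands the feature map via the series representation \eqref{EQ_covSeriesRepresentation} so that the $\mathcal{H}^{K}$-norm becomes $\sum_{v}\lambda_{v}\left|\left(\tilde{P}_{n}-\tilde{P}\right)\left(h-\Pi_{\rho}h\right)\varphi_{v}\right|$, and bounds each coordinate by $O_{p}\!\left(\left|\mu_{n0}-\mu_{0}\right|_{2}\right)$ using the crucial uniform bound $\left|\Pi_{\rho}h\right|_{\mathcal{H}^{K}}\leq\left|\Pi_{0}h\right|_{\mathcal{H}^{K}}\leq1$ from Lemma \ref{Lemma_nuTildeNormBound}. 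You instead subtract the normal equations of the two penalized quadratic problems, test against $\delta$, and obtain the exact identity $Pw_{0}\delta^{2}+\rho\left|\delta\right|_{\mathcal{H}^{K}}^{2}=P\left(w_{n}-w_{0}\right)\left(\tilde{\Pi}_{n,\rho}h-h\right)\delta$, which is closed-form because the sub-problem is quadratic; you then bootstrap from the $L_{2}$ lower bound to the $\mathcal{H}^{K}$-norm lower bound. This avoids the canonical-feature-map expansion and makes the dependence on $\rho$ transparent. Both approaches arrive at $\sup_{h}\left|\left(\Pi_{\rho}-\tilde{\Pi}_{n,\rho}\right)h\right|_{\mathcal{H}^{K}}=O_{p}\!\left(\rho^{-1}n^{-\left(2\eta-1\right)/\left(4\eta\right)}\right)$, hence the same hypothesis on $\rho$.

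One small remark: your bound $\left|\tilde{\Pi}_{n,\rho}h-h\right|_{\infty}\lesssim\rho^{-1/2}$ (obtained by plugging $\nu=0$ into the objective) is cruder than necessary. The same argument the paper uses to show $\left|\Pi_{\rho}h\right|_{\mathcal{H}^{K}}\leq\left|\Pi_{0}h\right|_{\mathcal{H}^{K}}\leq1$ (compare the penalized objective at $\Pi_{\rho}h$ and at the unpenalized projection, and invoke Lemma \ref{Lemma_nuTildeNormBound}) applies verbatim with the weight $w_{n}$, giving $\left|\tilde{\Pi}_{n,\rho}h\right|_{\mathcal{H}^{K}}\leq1$ and hence $\left|\tilde{\Pi}_{n,\rho}h-h\right|_{\infty}\lesssim1$ uniformly. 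Running your identity with this sharper bound would have given $\left|\delta\right|_{\mathcal{H}^{K}}\lesssim\rho^{-1/2}\left|\mu_{n0}-\mu_{0}\right|_{2}$, i.e.\ a weaker requirement on $\rho$. As it stands, the extra factor $\rho^{-1/2}$ you pay on the sup-norm side is exactly cancelled at the $L_{2}$ step, so you recover the paper's bound and the lemma follows; but your concluding remark that $\rho n^{\left(2\eta-1\right)/\left(4\eta\right)}\rightarrow\infty$ is ``essentially sharp'' for this step is not justified by your own computation, which is a factor $\rho^{1/2}$ off from what the method can deliver.
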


\begin{proof}Let $\tilde{P}$ and $\tilde{P}_{n}$ be finite positive
measures such that $d\tilde{P}/dP=\partial^{2}\ell_{\mu_{0}}$ and
$d\tilde{P}_{n}/dP=\partial^{2}\ell_{\mu_{0,n}}$. By Lemma \ref{Lemma_SteinwartChirstman},
using the same arguments as in the proof of Lemma \ref{Lemma_lagrangeMultiplierSize},
\begin{equation}
\left|\left(\Pi_{\rho}-\tilde{\Pi}_{n,\rho}\right)h\right|_{\mathcal{H}^{K}}\leq\frac{1}{\rho}\sum_{v=1}^{\infty}\lambda_{v}\left|\left(\tilde{P}_{n}-\tilde{P}\right)\left(h-\Pi_{\rho}h\right)\varphi_{v}\right|.\label{EQ_SteinwarthChristmanIneqPopProject}
\end{equation}
Taking derivatives, we bound each term in the absolute value by 
\[
\left|P\left(\partial^{2}\ell_{\mu_{0,n}}-\partial^{2}\ell_{\mu_{0}}\right)\left(h-\Pi_{\rho}h\right)\varphi_{v}\right|\leq\left|P\sup_{\mu\in\mathcal{H}^{K}\left(B\right)}\left|\partial^{3}\ell_{\mu}\right|\left(\mu_{0,n}-\mu_{0}\right)\left(h-\Pi_{\rho}h\right)\varphi_{v}\right|.
\]
By Lemma \ref{Lemma_nuTildeNormBound} and the definition of penalized
estimation, $\left|\Pi_{\rho}h\right|_{\mathcal{H}^{K}}\leq\left|\Pi_{0}h\right|_{\mathcal{H}^{K}}\leq1$
independently of $\rho$. Hence, $\left|h-\Pi_{\rho}h\right|_{\infty}\leq2$.
Moreover, the $\varphi_{v}$'s are uniformly bounded. Therefore, the
r.h.s. of the above display is bounded by a constant multiple of 
\[
\sqrt{P\left|\mu_{0,n}-\mu_{0}\right|^{2}}\sqrt{P\sup_{\mu\in\mathcal{H}^{K}\left(B\right)}\left|\partial^{3}\ell_{\mu}\right|^{2}}=\left|\mu_{0,n}-\mu_{0}\right|_{2}\sqrt{P\Delta_{3}^{2}}.
\]
The term $P\Delta_{3}^{2}$ is finite by Condition \ref{Condition_Loss}.
By Theorem \ref{Theorem_consistency}, we have that $\left|\mu_{0,n}-\mu_{0}\right|_{2}=O_{p}\left(n^{-\left(2\eta-1\right)/\left(4\eta\right)}\right)$.
Using the above display to bound (\ref{EQ_SteinwarthChristmanIneqPopProject}),
deduce that the lemma holds true if $\rho^{-1}\left(n^{-\left(2\eta-1\right)/\left(4\eta\right)}\right)=o_{p}\left(1\right)$
as stated in the lemma. Taking supremum w.r.t. $h\in\mathcal{H}^{K}\left(1\right)$
in the above steps, deduce that the result holds uniformly in $h\in\mathcal{H}^{K}\left(1\right)$.\end{proof}

\begin{lemma}\label{Lemma_samplePenalizedProjConvergence}Under Condition
\ref{ConditionProjectionTheorem}, we have that $\sup_{h\in\mathcal{H}^{K}\left(1\right)}\left|\left(\Pi_{n,\rho}-\tilde{\Pi}_{n,\rho}\right)h\right|_{\mathcal{H}^{K}}\rightarrow0$
in probability for any $\rho$ such that $\rho n^{1/2}\rightarrow\infty$
in probability.\end{lemma}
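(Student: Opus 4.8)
The plan is to mimic the proof of Lemma~\ref{Lemma_populationPenalizedProjConvergence2}, replacing the perturbation of the \emph{loss direction} ($\partial^{2}\ell_{\mu_{0,n}}$ vs.\ $\partial^{2}\ell_{\mu_{0}}$) by the perturbation of the \emph{measure} ($P_{n}$ vs.\ $P$), both weighted by the common Hessian $\partial^{2}\ell_{\mu_{n0}}$. Concretely, let $\tilde{P}_{n}$ be the finite measure with $d\tilde{P}_{n}/dP=\partial^{2}\ell_{\mu_{n0}}$ and let $\tilde{P}_{n}^{\ast}$ be the finite (signed, but eventually positive) measure with $d\tilde{P}_{n}^{\ast}/dP_{n}=\partial^{2}\ell_{\mu_{n0}}$; then $\Pi_{n,\rho}$ is the penalized projection in $L_{2}(\tilde{P}_{n}^{\ast})$ and $\tilde{\Pi}_{n,\rho}$ the one in $L_{2}(\tilde{P}_{n})$. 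Applying Lemma~\ref{Lemma_SteinwartChirstman} exactly as in the previous two lemmas (expanding the canonical feature map via (\ref{EQ_covSeriesRepresentation})) gives
\[
\left|\left(\Pi_{n,\rho}-\tilde{\Pi}_{n,\rho}\right)h\right|_{\mathcal{H}^{K}}\leq\frac{1}{\rho}\sum_{v=1}^{\infty}\lambda_{v}\left|\left(P_{n}-P\right)\partial^{2}\ell_{\mu_{n0}}\left(h-\tilde{\Pi}_{n,\rho}h\right)\varphi_{v}\right|,
\]
so it suffices to show that the sum on the right, which is an empirical process term, is $o_{p}(\rho)$ uniformly over $h\in\mathcal{H}^{K}(1)$.

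The next step is to control that empirical process term uniformly in $h$ and $v$. First I would note that $\left|\tilde{\Pi}_{n,\rho}h\right|_{\mathcal{H}^{K}}\lesssim1$ uniformly (same argument as in Lemma~\ref{Lemma_populationPenalizedProjConvergence2}: penalized projection cannot increase the RKHS norm relative to the unpenalized one, and the unpenalized one is bounded by Lemma~\ref{Lemma_nuTildeNormBound}, up to the weighting, which is bounded above and below by Condition~\ref{Condition_Loss} and Point~1 of Condition~\ref{ConditionProjectionTheorem}); hence $h-\tilde{\Pi}_{n,\rho}h$ ranges over a bounded subset of $\mathcal{H}^{K}$, and $\partial^{2}\ell_{\mu_{n0}}$ ranges over $\{\partial^{2}\ell_{\mu}:\mu\in\mathcal{H}^{K}(B)\}$ since $\mu_{n0}\in\mathcal{R}_{0}(B)\subset\mathcal{H}^{K}(B)$. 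The relevant class is therefore (a subset of) $\{\partial^{2}\ell_{\mu}\,gg':\mu\in\mathcal{H}^{K}(B),\ g,g'\in\mathcal{H}^{K}(1)\}$ tensored with the fixed bounded functions $\varphi_{v}$, which by the last statement of Lemma~\ref{Lemma_entropy_dL2hh} has $L_{p}$ bracketing entropy $\lesssim(B/\epsilon)^{2/(2\eta-1)}+K\ln(B/\epsilon)$; this makes the entropy integral (\ref{EQ_entropyIntegral}) finite since $\eta>1$. Then the maximal inequality of Theorem~3 in Doukhan et al.\ (1995) (as used in (\ref{EQ_doukhanL1MaxInequ}) in the proof of Lemma~\ref{Lemma_lagrangeMultiplierSize}) gives $\mathbb{E}\sup_{\mu,g,g'}\left|\sqrt{n}(P_{n}-P)\partial^{2}\ell_{\mu}gg'\varphi_{v}\right|\leq c_{1}$ for a constant independent of $v$. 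Summability of $\lambda_{v}$ (Condition~\ref{Condition_kernel}) then yields that $\sum_{v}\lambda_{v}\sup_{h}\left|(P_{n}-P)\partial^{2}\ell_{\mu_{n0}}(h-\tilde{\Pi}_{n,\rho}h)\varphi_{v}\right|=O_{p}(n^{-1/2})$, hence the bound above is $O_{p}(\rho^{-1}n^{-1/2})=o_{p}(1)$ whenever $\rho n^{1/2}\rightarrow\infty$.

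One technical subtlety I would address explicitly: to invoke Lemma~\ref{Lemma_SteinwartChirstman} for $\Pi_{n,\rho}$ and $\tilde{\Pi}_{n,\rho}$ I need the two weighted ``loss'' problems to be strictly convex minimizations to which the representer/stability bound applies — this holds because $\inf_{z,t}\partial^{2}L(z,t)>0$ by Condition~\ref{Condition_Loss}, so both $\tilde{P}_{n}$ and $\tilde{P}_{n}^{\ast}$ are bona fide positive measures with densities bounded away from $0$ and $\infty$ (the latter eventually, since $\mu_{n0}$ is uniformly bounded and $|\Delta_{2}|_{\infty}<\infty$), and the surrogate ``loss'' $t\mapsto(h(x)-t)^{2}$ weighted by these densities inherits the uniform strong convexity needed for the $\rho^{-1}$-Lipschitz stability estimate. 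I would also remark that the argument is stated for $K=1$ as elsewhere and extends to bounded $K$ by additivity. The main obstacle is the uniformity in $h\in\mathcal{H}^{K}(1)$: one must ensure the projections $\tilde{\Pi}_{n,\rho}h$ stay in a fixed RKHS ball so that a \emph{single} entropy bound controls the whole family simultaneously — this is exactly what Lemma~\ref{Lemma_nuTildeNormBound} (together with the two-sided bound on the Hessian weight) provides, and without it the empirical-process step would not be uniform.
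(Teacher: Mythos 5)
Your proof is correct and takes essentially the same route as the paper: you apply the Steinwart--Christmann stability bound (Lemma~\ref{Lemma_SteinwartChirstman}) to get the $\rho^{-1}\sum_{v}\lambda_{v}|(P_{n}-P)\partial^{2}\ell_{\mu_{n0}}(h-\tilde{\Pi}_{n,\rho}h)\varphi_{v}|$ estimate, control the empirical-process term uniformly via the bracketing-entropy bound of Lemma~\ref{Lemma_entropy_dL2hh} (here $\varphi_{v}$ is a fixed uniformly bounded function, so the class is effectively indexed by $(\mu,g)$ with $g=h-\tilde{\Pi}_{n,\rho}h\in\mathcal{H}^{K}(2)$, which is what the paper uses; your ``tensored with $g'$'' phrasing is harmless overkill) and the maximal inequality from Doukhan et al.\ exactly as in the proof of Lemma~\ref{Lemma_lagrangeMultiplierSize}, and conclude by summability of $\lambda_{v}$. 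Two small inaccuracies in your write-up do not affect correctness: (i) $\tilde{P}_{n}^{\ast}$ is always, not only eventually, a positive measure, since $\inf_{z,t}\partial^{2}L(z,t)>0$ by Condition~\ref{Condition_Loss}; (ii) the uniform bound $|\tilde{\Pi}_{n,\rho}h|_{\mathcal{H}^{K}}\lesssim1$ is what the paper implicitly encodes by taking the supremum over $\nu\in\mathcal{H}^{K}(1)$, and it follows from the same penalization-decreases-the-norm argument used for $\Pi_{\rho}$ in Lemma~\ref{Lemma_populationPenalizedProjConvergence2}, which you correctly cite.
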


\begin{proof}Following the same steps as in the proof of Lemma \ref{Lemma_populationPenalizedProjConvergence2},
deduce that 
\[
\left|\left(\Pi_{n,\rho}-\tilde{\Pi}_{n,\rho}\right)h\right|_{\mathcal{H}^{K}}\leq\frac{1}{\rho}\sum_{v=1}^{\infty}\lambda_{v}\left|\left(P_{n}-P\right)\partial^{2}\ell_{\mu_{0,n}}\left(h-\tilde{\Pi}_{n,\rho}h\right)\varphi_{v}\right|.
\]
Each absolute value term on the r.h.s. is bounded in $L_{1}$ by 
\begin{align*}
 & \mathbb{E}\sup_{h\in\mathcal{H}^{K}\left(1\right),\mu\in\mathcal{H}^{K}\left(B\right),\nu\in\mathcal{H}^{K}\left(1\right)}\left|\left(P_{n}-P\right)\partial^{2}\ell_{\mu}\left(h-\nu\right)\varphi_{v}\right|\\
\leq & 2\mathbb{E}\sup_{h\in\mathcal{H}^{K}\left(1\right),\mu\in\mathcal{H}^{K}\left(B\right)}\left|\left(P_{n}-P\right)\partial^{2}\ell_{\mu}h\varphi_{v}\right|.
\end{align*}
Define the class of functions $\mathcal{F}:=\left\{ \partial^{2}\ell_{\mu}h\varphi_{k}:\mu\in\mathcal{H}^{K}\left(B\right),h\in\mathcal{H}^{K}\left(1\right)\right\} $.
Given that $\varphi_{v}$ is uniformly bounded, it can be deduced
from Lemma \ref{Lemma_entropy_dL2hh} that $\ln N_{\left[\right]}\left(\epsilon,\mathcal{F},\left|\cdot\right|_{p}\right)\lesssim\left(B/\epsilon\right)^{2/\left(2\eta-1\right)}+K\ln\left(\frac{B}{\epsilon}\right)$.
Hence, to complete the proof of the lemma, we can follow the same
exact steps as in the proof of Lemma \ref{Lemma_lagrangeMultiplierSize}.
\end{proof}

\begin{lemma}\label{Lemma_projectionsConvergence}Suppose Conditions
\ref{ConditionProjectionTheorem}, and $\mu_{0}\in{\rm int}\left(\mathcal{H}^{K}\left(B\right)\right)$.
Then, for $\rho$ such that $\rho n^{\left(2\eta-1\right)/\left(4\eta\right)}\rightarrow\infty$
in probability, and for $n\rightarrow0$, the following hold
\[
\sup_{h\in\mathcal{H}^{K}\left(1\right)}\left|\left(\Pi_{\rho}-\Pi_{n,\rho}\right)h\right|_{\mathcal{H}^{K}}=o_{p}\left(1\right),
\]
and

\begin{equation}
\sup_{h\in\mathcal{H}^{K}\left(1\right)}\left|\sqrt{n}\Psi_{n}\left(\mu_{n0}\right)\left(\Pi_{\rho}-\Pi_{n,\rho}\right)h\right|=o_{p}\left(1\right).\label{EQ_momentEquationNuisanceProjectionError}
\end{equation}

Finally, if $w\left(x\right)=\int_{\mathcal{Y}}\partial^{2}\ell_{\mu_{0}}\left(y,x\right)dP\left(y|x\right)$
is a known function, the above displays hold for $\rho$ such that
$\rho n^{1/2}\rightarrow\infty$ in probability.\end{lemma}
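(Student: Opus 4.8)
The plan is to decompose the error $\Pi_{\rho}-\Pi_{n,\rho}$ through the intermediate operator $\tilde{\Pi}_{n,\rho}$ from (\ref{EQ_notationNus}), writing $\Pi_{\rho}-\Pi_{n,\rho}=(\Pi_{\rho}-\tilde{\Pi}_{n,\rho})+(\tilde{\Pi}_{n,\rho}-\Pi_{n,\rho})$. The first piece is controlled uniformly in $h\in\mathcal{H}^K(1)$ by Lemma \ref{Lemma_populationPenalizedProjConvergence2}, which needs $\rho n^{(2\eta-1)/(4\eta)}\to\infty$; the second is controlled by Lemma \ref{Lemma_samplePenalizedProjConvergence}, which needs only $\rho n^{1/2}\to\infty$. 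Since $(2\eta-1)/(4\eta)<1/2$ for every $\eta>1/2$, the binding requirement is the one from Lemma \ref{Lemma_populationPenalizedProjConvergence2}, so the assumption $\rho n^{(2\eta-1)/(4\eta)}\to\infty$ delivers the first display of the lemma. (The typo ``for $n\to0$'' should read ``$\rho\to0$''; that is used only to keep $\Pi_{n,\rho}$ from degenerating and plays no further role.)

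For the second display (\ref{EQ_momentEquationNuisanceProjectionError}), the idea is that $\sqrt n\,\Psi_n(\mu_{n0})$ acting on a function $g$ of small RKHS norm is itself small. First I would use the first-order (variational) inequality for the restricted estimator $\mu_{n0}$: since $\mu_{0}\in{\rm int}(\mathcal R_0(B))$, by the analogue of Lemma \ref{Lemma_lagrangeMultiplierSize} applied on $\mathcal R_0$ the Lagrange multiplier is $O_p(n^{-1/2})$, so $\sup_{g\in\mathcal R_0(1)}P_n\partial\ell_{\mu_{n0}}g=O_p(n^{-1/2})$. However $(\Pi_{\rho}-\Pi_{n,\rho})h$ need not lie in $\mathcal R_0$ after scaling — actually it does: both $\Pi_\rho h$ and $\Pi_{n,\rho}h$ are in $\mathcal R_0$, hence so is their difference. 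Combining with the uniform bound $\sup_h|(\Pi_\rho-\Pi_{n,\rho})h|_{\mathcal H^K}=o_p(1)$ just established, write $(\Pi_\rho-\Pi_{n,\rho})h=\delta_n\,g_n$ with $g_n\in\mathcal R_0(1)$ and $\delta_n=o_p(1)$; then $\sqrt n\,\Psi_n(\mu_{n0})(\Pi_\rho-\Pi_{n,\rho})h=\delta_n\cdot\sqrt n\,\Psi_n(\mu_{n0})g_n=o_p(1)\cdot O_p(1)=o_p(1)$, uniformly in $h$.

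The final sentence — that when $w(x)=\int_{\mathcal Y}\partial^2\ell_{\mu_0}(y,x)\,dP(y\mid x)$ is known one may instead use $\rho n^{1/2}\to\infty$ — follows by replacing $\partial^2\ell_{\mu_{n0}}$ with $w$ in the definition of the empirical projection (\ref{EQ_simplifiedEmpiricalProjection}). Then the intermediate operator $\tilde\Pi_{n,\rho}$ is not needed at all: $\Pi_\rho$ and the new $\Pi_{n,\rho}$ are both defined with the same population weight $w$, differing only through $P$ versus $P_n$, so the Steinwart--Christmann inequality (Lemma \ref{Lemma_SteinwartChirstman}) bounds $|(\Pi_\rho-\Pi_{n,\rho})h|_{\mathcal H^K}$ directly by $\rho^{-1}\sum_v\lambda_v|(P_n-P)w(h-\Pi_\rho h)\varphi_v|$, and the empirical-process maximal inequality of Theorem 3 in Doukhan et al.\ (1995) — applied exactly as in Lemma \ref{Lemma_samplePenalizedProjConvergence}, using the entropy bound of Lemma \ref{Lemma_entropyOfL} since $w$ is a fixed bounded function — gives an $O_p(n^{-1/2})$ rate for each summand, so $o_p(1)$ overall as soon as $\rho n^{1/2}\to\infty$. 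The rest of the argument for (\ref{EQ_momentEquationNuisanceProjectionError}) is unchanged.

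I expect the main obstacle to be the second display: one must verify that the variational-inequality bound $\sup_{g\in\mathcal R_0(1)}\sqrt n\,P_n\partial\ell_{\mu_{n0}}g=O_p(1)$ genuinely holds for the restricted estimator — i.e.\ that $\mu_{n0}$ is interior to $\mathcal R_0(B)$ so the Lagrange multiplier is $O_p(n^{-1/2})$ rather than merely $o_p(1)$ — and this is where the hypothesis $\mu_0\in{\rm int}(\mathcal H^K(B))$, together with the fact that $\mu_0$ lies in the interior of $\mathcal R_0(B)$ under the null, is used. Everything else is a routine repetition of the arguments already developed for Lemmas \ref{Lemma_lagrangeMultiplierSize}, \ref{Lemma_populationPenalizedProjConvergence2} and \ref{Lemma_samplePenalizedProjConvergence}.
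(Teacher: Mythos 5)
Your proposal matches the paper's proof in all essentials: the same decomposition through $\tilde{\Pi}_{n,\rho}$ combined with Lemmas \ref{Lemma_populationPenalizedProjConvergence2} and \ref{Lemma_samplePenalizedProjConvergence} for the first display, the same appeal to the Lagrange-multiplier bound of Lemma \ref{Lemma_lagrangeMultiplierSize} (via (\ref{EQ_FOCLagrangianBound})) for the second display, and the same observation for the known-$w$ case that $\Pi_{\rho}$ and the modified $\Pi_{n,\rho}$ then share the same population weight so the intermediate operator drops out, leaving only the constraint from Lemma \ref{Lemma_samplePenalizedProjConvergence}. Your explicit remark that $(\Pi_{\rho}-\Pi_{n,\rho})h\in\mathcal{R}_{0}$ — so the first-order condition for the restricted estimator $\mu_{n0}$ over $\mathcal{R}_{0}$ is the one that applies — is a useful clarification of a point the paper leaves implicit when it writes $\sup_{|h|_{\mathcal{H}^{K}}\le\delta}$.
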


\begin{proof}By the triangle inequality 
\begin{equation}
\sup_{h\in\mathcal{H}^{K}\left(1\right)}\left|\left(\Pi_{\rho}-\Pi_{n,\rho}\right)h\right|_{\mathcal{H}^{K}}\leq\sup_{h\in\mathcal{H}^{K}\left(1\right)}\left|\left(\tilde{\Pi}_{n,\rho}-\Pi_{\rho}\right)\right|^{2}+\sup_{h\in\mathcal{H}^{K}\left(1\right)}\left|\left(\Pi_{n,\rho}-\tilde{\Pi}_{n,\rho}\right)h\right|^{2}.\label{EQ_projectionsDecompositions}
\end{equation}
The first statement in the lemma follows by showing that the r.h.s.
of the above is $o_{p}\left(1\right)$. This is the case by application
of Lemmas \ref{Lemma_populationPenalizedProjConvergence2} and \ref{Lemma_samplePenalizedProjConvergence}. 

By the established convergence in $\left|\cdot\right|_{\mathcal{H}^{K}}$,
for any $h\in\mathcal{H}^{K}\left(1\right)$, $\left|\left(\Pi_{\rho}-\Pi_{n,\rho}\right)h\right|_{\mathcal{H}^{K}}\leq\delta$
with probability going to one for any $\delta>0$. Therefore, to prove
(\ref{EQ_momentEquationNuisanceProjectionError}), we can restrict
attention to a bound for 
\[
\lim_{\delta\rightarrow0}\sup_{\left|h\right|_{\mathcal{H}^{K}}\leq\delta}\sqrt{n}\Psi_{n}\left(\mu_{n0}\right)h=\lim_{\delta\rightarrow0}\sup_{\left|h\right|_{\mathcal{H}^{K}}\leq\delta}\sqrt{n}P_{n}\partial\ell_{\mu_{n}}h.
\]
From Lemma \ref{Lemma_lagrangeMultiplierSize} and (\ref{EQ_FOCLagrangianBound})
in its proof, deduce that the above is bounded by 
\[
\lim_{\delta\rightarrow0}\sup_{\left|h\right|_{\mathcal{H}^{K}}\leq\delta}\left|h\right|_{\mathcal{H}^{K}}\times O_{p}\left(B\right).
\]
The first term in the product is zero so that (\ref{EQ_momentEquationNuisanceProjectionError})
holds.

Finally, to show the last statement in the lemma, note that it is
Lemma \ref{Lemma_populationPenalizedProjConvergence2} that puts an
additional constraint on $\rho$. However, saying that the function
$w$ is known, effectively amounts to saying that we can replace $\mu_{0,n}$
with $\mu_{0}$ in the definition of $\tilde{\Pi}_{n,\rho}$ in (\ref{EQ_notationNus}).
This means that $\tilde{\Pi}_{n,\rho}=\Pi_{\rho}$ so that the second
term in (\ref{EQ_projectionsDecompositions}) is exactly zero and
we do not need to use Lemma \ref{Lemma_populationPenalizedProjConvergence2}.
Therefore, $\rho$ is only constrained for the application of Lemma
\ref{Lemma_samplePenalizedProjConvergence}.\end{proof}

We also need to bound the distance between $\Pi_{\rho}$ and $\Pi_{0}$,
but this cannot be achieved in probability under the operator norm. 

\begin{lemma}\label{Lemma_penalizedProjL2Convergence}Under Condition
\ref{ConditionProjectionTheorem}, we have that $\sup_{h\in\mathcal{H}^{K}\left(1\right)}\tilde{P}\left(\Pi_{\rho}h-\Pi_{0}h\right)^{2}\leq\rho$.\end{lemma}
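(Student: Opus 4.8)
The plan is to exploit the variational (first-order) characterisations of $\Pi_{0}h$ and $\Pi_{\rho}h$ together with a Pythagoras identity in $L_{2}\left(\tilde{P}\right)$, where $\tilde{P}$ is the finite positive measure with $d\tilde{P}/dP=\partial^{2}\ell_{\mu_{0}}$ already used in the proof of Lemma \ref{Lemma_populationPenalizedProjConvergence2}; this measure is finite and positive because $\partial^{2}\ell_{\mu_{0}}$ is bounded ($\left|\Delta_{2}\right|_{\infty}<\infty$ by Condition \ref{ConditionProjectionTheorem}) and bounded away from zero on $\left[-\bar{B},\bar{B}\right]$, which contains the range of $\mu_{0}$, by Condition \ref{Condition_Loss}. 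In this notation $\Pi_{0}h=\arg\inf_{\nu\in\mathcal{R}_{0}}\tilde{P}\left(h-\nu\right)^{2}$ is the $L_{2}\left(\tilde{P}\right)$ projection of $h$ onto $\mathcal{R}_{0}$, so it obeys the normal equations $\tilde{P}\left(h-\Pi_{0}h\right)\nu=0$ for every $\nu\in\mathcal{R}_{0}$, whereas $\Pi_{\rho}h$ is the minimiser of $\tilde{P}\left(h-\nu\right)^{2}+\rho\left|\nu\right|_{\mathcal{H}^{K}}^{2}$ over $\nu\in\mathcal{R}_{0}$.

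First I would fix $h\in\mathcal{H}^{K}\left(1\right)$ and set $g:=\Pi_{\rho}h-\Pi_{0}h\in\mathcal{R}_{0}$. Optimality of $\Pi_{\rho}h$ for the penalised problem gives
\[
\tilde{P}\left(h-\Pi_{\rho}h\right)^{2}+\rho\left|\Pi_{\rho}h\right|_{\mathcal{H}^{K}}^{2}\le\tilde{P}\left(h-\Pi_{0}h\right)^{2}+\rho\left|\Pi_{0}h\right|_{\mathcal{H}^{K}}^{2}.
\]
Since $g\in\mathcal{R}_{0}$, the normal equations for $\Pi_{0}h$ yield $\tilde{P}\left(h-\Pi_{0}h\right)g=0$, hence the Pythagoras identity $\tilde{P}\left(h-\Pi_{\rho}h\right)^{2}=\tilde{P}\left(h-\Pi_{0}h\right)^{2}+\tilde{P}g^{2}$. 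Substituting this into the previous display and cancelling $\tilde{P}\left(h-\Pi_{0}h\right)^{2}$ gives $\tilde{P}g^{2}\le\rho\bigl(\left|\Pi_{0}h\right|_{\mathcal{H}^{K}}^{2}-\left|\Pi_{\rho}h\right|_{\mathcal{H}^{K}}^{2}\bigr)\le\rho\left|\Pi_{0}h\right|_{\mathcal{H}^{K}}^{2}$. Invoking Lemma \ref{Lemma_nuTildeNormBound}, which bounds $\left|\Pi_{0}h\right|_{\mathcal{H}^{K}}\le1$ for $h\in\mathcal{H}^{K}\left(1\right)$, yields $\tilde{P}\left(\Pi_{\rho}h-\Pi_{0}h\right)^{2}\le\rho$, and taking the supremum over $h\in\mathcal{H}^{K}\left(1\right)$ proves the lemma.

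The argument is short, and the only point that needs care is the existence of the $L_{2}\left(\tilde{P}\right)$ projection $\Pi_{0}h$ inside $\mathcal{R}_{0}$ and the validity of its normal equations, since $\mathcal{R}_{0}$ need not be closed as a subset of $L_{2}\left(\tilde{P}\right)$. One either takes $\Pi_{0}$ as already defined by (\ref{EQ_projectionOperator}) with $\rho=0$ — in which case being an $\arg\inf$ over the linear subspace $\mathcal{R}_{0}$ forces the first-order condition $\tilde{P}\left(h-\Pi_{0}h\right)\nu=0$ for all $\nu\in\mathcal{R}_{0}$ — or, if preferred, one obtains $\Pi_{0}h$ as an $L_{2}\left(\tilde{P}\right)$ limit of $\Pi_{\rho'}h$ as $\rho'\downarrow0$ and passes the normal equations to the limit; either way no regularity beyond Condition \ref{ConditionProjectionTheorem} is required, and the constant in the bound is exactly $1$.
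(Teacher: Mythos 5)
Your argument is correct and is essentially the same as the paper's: both prove the Pythagoras identity $\tilde{P}\left(h-\Pi_{\rho}h\right)^{2}=\tilde{P}\left(h-\Pi_{0}h\right)^{2}+\tilde{P}\left(\Pi_{\rho}h-\Pi_{0}h\right)^{2}$ via the $L_{2}\left(\tilde{P}\right)$ normal equations for $\Pi_{0}$, combine it with the suboptimality inequality for the penalized minimizer $\Pi_{\rho}h$, and finish with Lemma \ref{Lemma_nuTildeNormBound} to get the constant $1$. You merely reorder the two steps (applying the penalized optimality first and then substituting Pythagoras, where the paper does the reverse), which is an algebraic rearrangement rather than a different method.
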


\begin{proof}At first show that 
\begin{equation}
\tilde{P}\left(\Pi_{\rho}h-\Pi_{0}h\right)^{2}\leq\tilde{P}\left(h-\Pi_{\rho}h\right)^{2}-\tilde{P}\left(h-\Pi_{0}h\right)^{2}.\label{EQ_basicQuadraticDistanceLossIneq}
\end{equation}
To see this, expand the r.h.s. of (\ref{EQ_basicQuadraticDistanceLossIneq}),
add and subtract $2\tilde{P}\left(\Pi_{0}h\right)^{2}$, and verify
that the r.h.s. of (\ref{EQ_basicQuadraticDistanceLossIneq}) is equal
to 
\[
-2\tilde{P}\Pi_{\rho}h+2\tilde{P}\Pi_{0}h\left(h-\Pi_{0}h\right)+\tilde{P}\left[\left(\Pi_{\rho}h\right)^{2}+\left(\Pi_{0}h\right)^{2}\right]
\]
However, $\Pi_{0}h$ is the projection of $h\in\mathcal{H}^{K}\left(1\right)$
onto the subspace $\mathcal{R}_{0}$. Hence, the middle term in the
above display is zero. Then, add and subtract $2\tilde{P}\Pi_{\rho}h\Pi_{0}h$
and rearrange to deduce that the above display is equal to 
\[
2\tilde{P}\Pi_{\rho}h\left(\Pi_{0}h-h\right)+\tilde{P}\left(\Pi_{\rho}h-\Pi_{0}h\right)^{2}.
\]
Given that $\Pi_{\rho}h\in\mathcal{R}_{0}$ and $\left(\Pi_{0}h-h\right)$
is orthogonal to elements in $\mathcal{R}_{0}$ by definition of the
projection $\Pi_{0}$, we have shown that (\ref{EQ_basicQuadraticDistanceLossIneq})
holds true. Following the proof of Corollary 5.18 in Steinwart and
Christmann (2008), 
\begin{align*}
\tilde{P}\left(h-\Pi_{\rho}h\right)^{2}-\tilde{P}\left(h-\Pi_{0}h\right)^{2} & \leq\left[\tilde{P}\left(h-\Pi_{\rho}h\right)^{2}+\rho\left|\Pi_{\rho}h\right|_{\mathcal{H}^{K}}^{2}\right]-\tilde{P}\left(h-\Pi_{0}h\right)^{2}\\
 & \leq\left[\tilde{P}\left(h-\Pi_{0}h\right)^{2}+\rho\left|\Pi_{0}h\right|_{\mathcal{H}^{K}}^{2}\right]-\tilde{P}\left(h-\Pi_{0}h\right)^{2}=\rho\left|\Pi_{0}h\right|_{\mathcal{H}^{K}}^{2}
\end{align*}
because $\left|\Pi_{\rho}h\right|_{\mathcal{H}^{K}}$ is positive
and $\Pi_{\rho}h$ is the minimizer of the penalized population loss
function (see (\ref{EQ_notationNus})). Now note that the r.h.s. of
the above display is bounded by $\rho$ using Lemma \ref{Lemma_nuTildeNormBound}
and (\ref{EQ_notationNus}). Hence the r.h.s. of (\ref{EQ_basicQuadraticDistanceLossIneq})
is bounded by $\rho$ uniformly in $h\in\mathcal{H}^{K}\left(1\right)$,
and the lemma is proved. \end{proof}

\begin{lemma}\label{Lemma_bilinearOperatorProjectionBound}Under
Condition \ref{ConditionProjectionTheorem}, we have that $\dot{\Psi}_{\mu_{0}}\sqrt{n}\left(\mu_{n0}-\mu_{0}\right)\left(\Pi_{0}h-\Pi_{\rho}h\right)=o_{p}\left(1\right)$
for any $\rho$ such that $n^{1/\left(2\eta\right)}\rho\rightarrow0$
in probability. \end{lemma}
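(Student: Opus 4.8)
The plan is to write the bilinear term as a weighted second moment, split it by the Cauchy--Schwarz inequality into an estimation--error factor and a projection--error factor, and then plug in the $L_{2}$ rate from Theorem~\ref{Theorem_consistency} together with Lemma~\ref{Lemma_penalizedProjL2Convergence}.

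First I would recall, from the discussion at the start of Section~\ref{Section_proofs} and the proof of Theorem~\ref{Theorem_weakConvergence}, that $\dot{\Psi}_{\mu_{0}}\left(\mu_{n0}-\mu_{0}\right)\left(\Pi_{0}h-\Pi_{\rho}h\right)=P\partial^{2}\ell_{\mu_{0}}\left(\mu_{n0}-\mu_{0}\right)\left(\Pi_{0}h-\Pi_{\rho}h\right)$. As in the proof of Lemma~\ref{Lemma_populationPenalizedProjConvergence2}, let $\tilde{P}$ be the finite positive measure with $d\tilde{P}/dP=\partial^{2}\ell_{\mu_{0}}$; this is a genuine finite measure since $0<\inf_{z,t}\partial^{2}L\left(z,t\right)\leq\partial^{2}\ell_{\mu_{0}}\leq\left|\Delta_{2}\right|_{\infty}<\infty$ by Conditions~\ref{Condition_Loss} and \ref{ConditionProjectionTheorem}. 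Then the quantity of interest equals $\sqrt{n}\,\tilde{P}\left(\mu_{n0}-\mu_{0}\right)\left(\Pi_{0}h-\Pi_{\rho}h\right)$, and Cauchy--Schwarz in $L_{2}\left(\tilde{P}\right)$ bounds its absolute value by
\[
\sqrt{n}\left[\tilde{P}\left(\mu_{n0}-\mu_{0}\right)^{2}\right]^{1/2}\left[\tilde{P}\left(\Pi_{0}h-\Pi_{\rho}h\right)^{2}\right]^{1/2}.
\]

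Next I would bound the two factors separately. For the projection--error factor, Lemma~\ref{Lemma_penalizedProjL2Convergence} gives $\sup_{h\in\mathcal{H}^{K}\left(1\right)}\tilde{P}\left(\Pi_{\rho}h-\Pi_{0}h\right)^{2}\leq\rho$, so it is at most $\rho^{1/2}$. For the estimation--error factor, boundedness of $\partial^{2}\ell_{\mu_{0}}$ gives $\tilde{P}\left(\mu_{n0}-\mu_{0}\right)^{2}\leq\left|\Delta_{2}\right|_{\infty}\left|\mu_{n0}-\mu_{0}\right|_{2}^{2}\lesssim\left|\mu_{n0}-\mu_{0}\right|_{2}^{2}$; since $C_{\mathcal{R}_{0}}$ (and each of its additive components) inherits Condition~\ref{Condition_kernel} with the same exponent $\eta$, and, under the null, $\mu_{0}\in{\rm int}\left(\mathcal{R}_{0}\left(B\right)\right)$, Theorem~\ref{Theorem_consistency} applied with $\mathcal{R}_{0}$ in place of $\mathcal{H}^{K}$ yields $\left|\mu_{n0}-\mu_{0}\right|_{2}=O_{p}\left(n^{-\left(2\eta-1\right)/\left(4\eta\right)}\right)$. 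Combining, the displayed bound is $O_{p}\left(n^{1/2}\cdot n^{-\left(2\eta-1\right)/\left(4\eta\right)}\cdot\rho^{1/2}\right)=O_{p}\left(n^{1/\left(4\eta\right)}\rho^{1/2}\right)=O_{p}\left(\left(n^{1/\left(2\eta\right)}\rho\right)^{1/2}\right)$, which is $o_{p}\left(1\right)$ precisely under the hypothesis $n^{1/\left(2\eta\right)}\rho\rightarrow0$ in probability; the bound is uniform in $h\in\mathcal{H}^{K}\left(1\right)$ because both Lemma~\ref{Lemma_penalizedProjL2Convergence} and the rate above are.

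The only point requiring care is the transfer of the $L_{2}$ rate from Theorem~\ref{Theorem_consistency} to the restricted estimator $\mu_{n0}$, i.e.\ checking that $\mathcal{R}_{0}$ together with its kernel $C_{\mathcal{R}_{0}}$ satisfies the same hypotheses (notably Condition~\ref{Condition_kernel} with exponent $\eta$, as remarked in Section~\ref{Section_testingFunctionalRestrictions}) and that under the null $\mu_{0}$ is interior to $\mathcal{R}_{0}\left(B\right)$; everything after that is Cauchy--Schwarz and an exponent count, and the exponent count is what forces the condition $n^{1/\left(2\eta\right)}\rho\rightarrow0$ in the statement.
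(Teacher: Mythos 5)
Your proof is correct and follows essentially the same route as the paper: express the bilinear operator as a weighted $L_{2}$ inner product, apply Cauchy--Schwarz (the paper calls it H\"{o}lder), bound the first factor by $O_{p}\left(n^{1/\left(4\eta\right)}\right)$ via boundedness of $\partial^{2}\ell_{\mu_{0}}$ and the $L_{2}$ rate from Theorem~\ref{Theorem_consistency}, bound the second by $\rho^{1/2}$ via Lemma~\ref{Lemma_penalizedProjL2Convergence}, and combine. Your closing remark about verifying that $\mathcal{R}_{0}$ and $\mu_{n0}$ inherit the hypotheses needed for Theorem~\ref{Theorem_consistency} is a legitimate point that the paper leaves implicit, but it does not change the argument.
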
 

\begin{proof}By definition, 
\[
\dot{\Psi}_{\mu_{0}}\sqrt{n}\left(\mu_{n0}-\mu_{0}\right)\left(\Pi_{0}h-\Pi_{\rho}h\right)=P\partial^{2}\ell_{\mu_{0}}\sqrt{n}\left(\mu_{n0}-\mu_{0}\right)\left(\Pi_{0}h-\Pi_{\rho}h\right).
\]
By Holder inequality, the absolute value of the display is bounded
by 
\begin{equation}
\sqrt{n}\left[P\partial^{2}\ell_{\mu_{0}}\left(\mu_{n0}-\mu_{0}\right)^{2}\right]^{1/2}\left[P\partial^{2}\ell_{\mu_{0}}\left(\Pi_{0}h-\Pi_{\rho}h\right)^{2}\right]^{1/2}.\label{EQ_bilinearOperatorHolderBound}
\end{equation}
By Condition \ref{ConditionProjectionTheorem}, $\left|\partial^{2}\ell_{\mu_{0}}\right|_{\infty}<\infty$,
so that 
\[
\sqrt{n}\left[P\partial^{2}\ell_{\mu_{0}}\left(\mu_{n0}-\mu_{0}\right)^{2}\right]^{1/2}\lesssim\sqrt{n}\left[P\left(\mu_{n0}-\mu_{0}\right)^{2}\right]^{1/2}=O_{p}\left(n^{1/\left(4\eta\right)}\right)
\]
using Point 2 in Theorem \ref{Theorem_consistency}. Hence, by Lemma
\ref{Lemma_penalizedProjL2Convergence}, deduce that (\ref{EQ_bilinearOperatorHolderBound})
is bounded above by $O_{p}\left(n^{1/\left(4\eta\right)}\rho^{1/2}\right)=o_{p}\left(1\right)$
for the given choice of $\rho$.\end{proof}

\begin{lemma}\label{Lemma_weakConvergenceProjectedProcess}Suppose
that $\mu_{0}\in{\rm int}\left(\mathcal{H}^{K}\left(B\right)\right)$.
Under Condition \ref{ConditionProjectionTheorem}, if $\rho\rightarrow0$, 

\[
\sqrt{n}\Psi_{n}\left(\mu_{0}\right)\left(h-\Pi_{\rho}h\right)\rightarrow G\left(h-\Pi_{0}h\right),\,h\in\mathcal{H}^{K}\left(1\right),
\]
weakly, where the r.h.s. is a mean zero Gaussian process with covariance
function
\[
\Sigma\left(h,h'\right):=\mathbb{E}G\left(h-\Pi_{0}h\right)G\left(h'-\Pi_{0}h'\right)=P\partial\ell_{\mu_{0}}^{2}\left(h-\Pi_{0}h\right)\left(h'-\Pi_{0}h'\right)
\]
for any $h,h'\in\mathcal{H}^{K}\left(1\right)$. \end{lemma}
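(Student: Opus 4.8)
The plan is to exploit that, because $\mu_{0}\in{\rm int}(\mathcal{H}^{K}(B))$, the first order condition $P\partial\ell_{\mu_{0}}g=0$ holds for every $g\in\mathcal{H}^{K}$, so that $\sqrt{n}\Psi_{n}(\mu_{0})(h-\Pi_{\rho}h)=\sqrt{n}(P_{n}-P)\partial\ell_{\mu_{0}}(h-\Pi_{\rho}h)$ is a centered empirical process. By Lemma \ref{Lemma_nuTildeNormBound} together with the bound $|\Pi_{\rho}h|_{\mathcal{H}^{K}}\leq|\Pi_{0}h|_{\mathcal{H}^{K}}\leq1$ established in the proof of Lemma \ref{Lemma_populationPenalizedProjConvergence2}, the functions $h-\Pi_{\rho}h$, $h-\Pi_{0}h$ and $\Pi_{0}h-\Pi_{\rho}h$ all lie in $\mathcal{H}^{K}(2)$ and are bounded by $2$ in sup norm, so everything takes place inside one fixed Donsker class.

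First I would record the convergence of the base process. By Lemma \ref{Lemma_gaussianConvergence} applied with $\mu=\mu_{0}$ fixed, and rescaled by linearity to the ball $\mathcal{H}^{K}(2)$, the map $g\mapsto\sqrt{n}(P_{n}-P)\partial\ell_{\mu_{0}}g$ converges weakly in $\ell^{\infty}(\mathcal{H}^{K}(2))$ to a mean zero Gaussian process with a.s.\ continuous sample paths; by Point 2 of Condition \ref{ConditionProjectionTheorem} the long run covariance collapses to $P\partial\ell_{\mu_{0}}^{2}gg'$, so the limit is $G$ and its intrinsic semimetric is $(g,g')\mapsto|\partial\ell_{\mu_{0}}(g-g')|_{2}$. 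Composing with the linear map $h\mapsto h-\Pi_{0}h$ (which sends $\mathcal{H}^{K}(1)$ into $\mathcal{H}^{K}(2)$) yields weak convergence of $h\mapsto\sqrt{n}(P_{n}-P)\partial\ell_{\mu_{0}}(h-\Pi_{0}h)$ to $h\mapsto G(h-\Pi_{0}h)$ in $\ell^{\infty}(\mathcal{H}^{K}(1))$, with covariance $\Sigma$.

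The crux is then the uniform negligibility of replacing $\Pi_{0}$ by $\Pi_{\rho}$, namely
\[ \sup_{h\in\mathcal{H}^{K}(1)}\left|\sqrt{n}(P_{n}-P)\partial\ell_{\mu_{0}}(\Pi_{0}h-\Pi_{\rho}h)\right|=o_{p}(1). \]
Lemma \ref{Lemma_penalizedProjL2Convergence} gives $\sup_{h}\tilde{P}(\Pi_{\rho}h-\Pi_{0}h)^{2}\leq\rho$, and since $\inf_{z,t}d^{2}L(z,t)/dt^{2}>0$ by Condition \ref{Condition_Loss} this gives $\sup_{h}|\Pi_{\rho}h-\Pi_{0}h|_{2}^{2}\lesssim\rho$; combined with $|\Pi_{\rho}h-\Pi_{0}h|_{\infty}\leq2$ and $P\Delta_{1}^{2p}<\infty$ from Point 1 of Condition \ref{ConditionProjectionTheorem}, a H\"older/interpolation estimate yields $\sup_{h}|\partial\ell_{\mu_{0}}(\Pi_{\rho}h-\Pi_{0}h)|_{2}^{2}\lesssim(P\Delta_{1}^{2p})^{1/p}\rho^{(p-1)/p}\to0$. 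Since $\Pi_{0}h-\Pi_{\rho}h\in\mathcal{H}^{K}(2)$ for every $h$ and $0$ belongs to that index class with $\sqrt{n}(P_{n}-P)\partial\ell_{\mu_{0}}0=0$, the asymptotic equicontinuity in probability supplied by the Donsker property of Lemma \ref{Lemma_gaussianConvergence} (in the semimetric $|\partial\ell_{\mu_{0}}(\cdot)|_{2}$) converts the vanishing $L_{2}$-radius into the displayed $o_{p}(1)$. Adding this to the base process and passing weak convergence through the $o_{p}(1)$ difference completes the argument.

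I expect this last step to be the main obstacle: the bookkeeping needed to pass from the $\tilde{P}$-weighted control of $\Pi_{\rho}h-\Pi_{0}h$ in Lemma \ref{Lemma_penalizedProjL2Convergence} to a uniform (in $h$) $L_{2}(P)$ bound on $\partial\ell_{\mu_{0}}(\Pi_{\rho}h-\Pi_{0}h)$ — which is exactly where the strengthened moment condition and the uniform boundedness of the projections are used — and the verification that equicontinuity may be invoked uniformly over $h$ because the relevant increments are anchored at $0$.
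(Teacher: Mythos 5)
Your proof is correct and follows essentially the same route as the paper: weak convergence of $\sqrt{n}\Psi_{n}(\mu_{0})$ on the enlarged ball $\mathcal{H}^{K}(2)$, a pseudometric estimate on $\Pi_{\rho}h-\Pi_{0}h$ driven by Lemma \ref{Lemma_penalizedProjL2Convergence}, and asymptotic equicontinuity (equivalently, continuity of the Gaussian limit) to absorb the replacement of $\Pi_{0}$ by $\Pi_{\rho}$. The only minor divergence is cosmetic: you pass from the $\tilde{P}$-control to $|\Pi_{\rho}h-\Pi_{0}h|_{2}^{2}\lesssim\rho$ (via $\partial^{2}\ell_{\mu_{0}}$ bounded below) and then use Hölder with $P\Delta_{1}^{2p}<\infty$ to get $\sup_{h}|\partial\ell_{\mu_{0}}(\Pi_{\rho}h-\Pi_{0}h)|_{2}\rightarrow0$, whereas the paper bounds $d^{2}(h,h')\lesssim P\partial\ell_{\mu_{0}}^{2}|h-h'|$ and then Hölders against $P(\partial\ell_{\mu_{0}}^{4}/\partial^{2}\ell_{\mu_{0}})$ to land directly on $\tilde{P}(h-h')^{2}$; both paths reach the same conclusion under Condition \ref{ConditionProjectionTheorem}.
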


\begin{proof}Any Gaussian process $G\left(h\right)$ - not necessarily
the one in the lemma - is continuous w.r.t. the pseudo norm $d\left(h,h'\right)=\sqrt{\mathbb{E}\left|G\left(h\right)-G\left(h'\right)\right|^{2}}$
(Lemma 1.3.1 in Adler and Taylor, 2007). Hence, $d\left(h,h'\right)\rightarrow0$
implies that $G\left(h\right)-G\left(h'\right)\rightarrow0$ in probability.
By Lemma \ref{Lemma_nuTildeNormBound}, deduce that $\left(h-\Pi_{\rho}h\right)\in\mathcal{H}^{K}\left(2\right)$.
Hence, consider the Gaussian process $G\left(h\right)$ in the lemma
with $h\in\mathcal{H}^{K}\left(2\right)$. By direct calculation,
\begin{align}
d^{2}\left(h,h'\right) & =P\partial\ell_{\mu_{0}}^{2}h\left(h-h'\right)+P\partial\ell_{\mu_{0}}^{2}h'\left(h'-h\right)\lesssim P\partial\ell_{\mu_{0}}^{2}\left|h'-h\right|.\label{EQ_GaussianMetric}
\end{align}

Recall the notation $d\tilde{P}/dP=\partial^{2}\ell_{\mu_{0}}$. Multiply
and divide the r.h.s. of the display by $\sqrt{\partial^{2}\ell_{\mu_{0}}}$
and use Holder inequality, to deduce that (\ref{EQ_GaussianMetric})
is bounded above by 
\[
\sqrt{P\left(\frac{\partial\ell_{\mu_{0}}^{4}}{\partial^{2}\ell_{\mu_{0}}}\right)}\sqrt{P\partial^{2}\ell_{\mu_{0}}\left(h-h'\right)^{2}}\lesssim\sqrt{\tilde{P}\left(h-h'\right)^{2}}
\]
using the fact that $\partial^{2}\ell_{\mu_{0}}$ is bounded away
from zero and $\partial\ell_{\mu_{0}}^{4}$ is integrable. Hence,
to check continuity of the Gaussian process at arbitrary $h\rightarrow h'$,
we only need to consider $\tilde{P}\left(h-h'\right)^{2}\rightarrow0$.
By Theorem \ref{Theorem_weakConvergence} which also holds for any
$h\in\mathcal{H}^{K}\left(2\right)$, $\sqrt{n}\Psi_{n}\left(\mu_{0}\right)h$
converges weakly to a Gaussian process $G\left(h\right)$, $h\in\mathcal{H}^{K}\left(2\right)$.
Hence, $\sqrt{n}\Psi_{n}\left(\mu_{0}\right)\left(h-\Pi_{\rho}h\right)$
converges weakly to $G\left(h-\Pi_{0}h\right)$ if for any $h\in\mathcal{H}^{K}\left(1\right)$
\[
\sup_{h\in\mathcal{H}^{K}\left(1\right)}\lim_{\rho\rightarrow0}\left|G\left(h-\Pi_{\rho}h\right)-G\left(h-\Pi_{0}h\right)\right|=0
\]
in probability. The above display holds true if $\sup_{h\in\mathcal{H}^{K}\left(1\right)}\tilde{P}\left(\Pi_{0}h-\Pi_{\rho}h\right)^{2}\rightarrow0$
in probability as $\rho\rightarrow0$. This is the case by Lemma \ref{Lemma_penalizedProjL2Convergence}.\end{proof}

Furthermore we need to estimate the eigenvalues $\omega_{k}$ in order
to compute critical values.

\begin{lemma}\label{Lemma_covarianceConsistency}Under the conditions
of Theorem \ref{Theorem_projectionScoreTest}, if in Condition \ref{Condition_Loss},
$P\Delta_{1}^{p}\left(\Delta_{1}^{p}+\Delta_{2}^{p}\right)<\infty$,
the following hold in probability:\\
1. $\sup_{h,h'\in\mathcal{H}^{K}\left(1\right)}\left|\Sigma_{n}\left(h,h'\right)-\Sigma\left(h,h'\right)\right|\rightarrow0$;\\
2. $\sup_{k>0}\left|\omega_{nk}-\omega_{k}\right|\rightarrow0$, where
$\omega_{nk}$ and $\omega_{k}$ are the $k^{th}$ eigenvalues of
the covariance functions with entries $\Sigma_{n}\left(h,h'\right)$
and $\Sigma\left(h,h'\right)$, $h,h'\in\tilde{\mathcal{R}}_{1}$;
moreover, both the sample and population eigenvalues are summable.
In particular, for any $c\geq1+\sum_{k=1}^{\infty}\omega_{k}$, $\Pr\left(\sum_{k=1}^{\infty}\omega_{n,k}>c\right)=o\left(1\right)$.
\end{lemma}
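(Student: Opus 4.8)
The plan is to prove part~1 (uniform consistency $\sup_{h,h'\in\mathcal{H}^{K}\left(1\right)}\left|\Sigma_{n}\left(h,h'\right)-\Sigma\left(h,h'\right)\right|\to0$ in probability) and then read off part~2 by an eigenvalue perturbation argument. Writing $g_{h}:=h-\Pi_{0}h$, $g_{h,n}:=h-\Pi_{n}h$ and using the sample score weight $\partial\ell_{\mu_{n0}}^{2}$ (with $\Pi_{n}=\Pi_{n,\rho_{n}}$ for the penalty sequence of Theorem~\ref{Theorem_feasibleTestStat}, resp.\ Corollary~\ref{Corollary_feasibleStatKnownHessian}), I would decompose
\[
\Sigma_{n}\left(h,h'\right)-\Sigma\left(h,h'\right)=\underbrace{\left(P_{n}-P\right)\partial\ell_{\mu_{n0}}^{2}g_{h,n}g_{h',n}}_{A_{n}\left(h,h'\right)}+\underbrace{P\bigl[\partial\ell_{\mu_{n0}}^{2}g_{h,n}g_{h',n}-\partial\ell_{\mu_{0}}^{2}g_{h}g_{h'}\bigr]}_{B_{n}\left(h,h'\right)}.
\]
For $A_{n}$: by Lemma~\ref{Lemma_nuTildeNormBound}, the bound $\left|\Pi_{\rho}h\right|_{\mathcal{H}^{K}}\le\left|\Pi_{0}h\right|_{\mathcal{H}^{K}}\le1$ recorded in the proof of Lemma~\ref{Lemma_populationPenalizedProjConvergence2}, and Lemma~\ref{Lemma_projectionsConvergence} (which gives $\sup_{h}\left|\Pi_{n}h-\Pi_{\rho}h\right|_{\mathcal{H}^{K}}\to0$), we have $g_{h,n}\in\mathcal{H}^{K}\left(3\right)$ uniformly in $h\in\mathcal{H}^{K}\left(1\right)$ with probability tending to one, while $\mu_{n0}\in\mathcal{R}_{0}\left(B\right)\subseteq\mathcal{H}^{K}\left(B\right)$ always. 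Hence $\partial\ell_{\mu_{n0}}^{2}g_{h,n}g_{h',n}$ lies in the fixed class $\mathcal{F}:=\{\partial\ell_{\mu}^{2}gg':\mu\in\mathcal{H}^{K}\left(B\right),\,g,g'\in\mathcal{H}^{K}\left(3\right)\}$, which by Lemma~\ref{Lemma_entropy_dL2hh} has finite $L_{p}$ bracketing numbers for every $\epsilon$ under the assumed $P\bigl(\Delta_{1}^{2p}+\Delta_{1}^{p}\Delta_{2}^{p}\bigr)<\infty$, with envelope dominated by a constant multiple of $\Delta_{1}^{2}$ and $P\Delta_{1}^{2}<\infty$; together with the mixing Condition~\ref{Condition_Dependence} this yields $\sup_{f\in\mathcal{F}}\left|\left(P_{n}-P\right)f\right|\to0$ (by the bracketing--ergodic argument used for $\{\ell_{\mu}\}$ in the proof of Theorem~\ref{Theorem_consistency}, or by the maximal inequality of Doukhan et al., 1995), so $\sup_{h,h'}\left|A_{n}\right|\to0$ in probability.

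For $B_{n}$ I would split it as $P\bigl(\partial\ell_{\mu_{n0}}^{2}-\partial\ell_{\mu_{0}}^{2}\bigr)g_{h,n}g_{h',n}+P\partial\ell_{\mu_{0}}^{2}\bigl(g_{h,n}g_{h',n}-g_{h}g_{h'}\bigr)$. The first term: by the pointwise bound $\left|\partial\ell_{\mu}^{2}-\partial\ell_{\mu'}^{2}\right|\le2\Delta_{1}\Delta_{2}\left|\mu-\mu'\right|_{\infty}$, the uniform estimate $\left|g_{h,n}\right|_{\infty}\lesssim1$, Hölder's inequality and $P\Delta_{1}^{p}\Delta_{2}^{p}<\infty$, it is of order $\left|\mu_{n0}-\mu_{0}\right|_{\infty}$, which is $o_{p}\left(1\right)$ uniformly in $h,h'$ by Theorem~\ref{Theorem_consistency} applied to $\mathcal{R}_{0}$. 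For the second term, $g_{h,n}g_{h',n}-g_{h}g_{h'}=\left(\Pi_{0}h-\Pi_{n}h\right)g_{h',n}+g_{h}\left(\Pi_{0}h'-\Pi_{n}h'\right)$, so by Cauchy--Schwarz and a Hölder step against $P\Delta_{1}^{2p}<\infty$ it is controlled by $\bigl(\sup_{h}P\left(\Pi_{0}h-\Pi_{n}h\right)^{2}\bigr)^{\theta}$ for some $\theta>0$ times the uniformly bounded quantities $\left|g_{h}\right|_{\infty}$, $\left|g_{h,n}\right|_{\infty}$, $\left|\Pi_{0}h-\Pi_{n}h\right|_{\infty}$. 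Finally $\sup_{h}P\left(\Pi_{0}h-\Pi_{n}h\right)^{2}\to0$ in probability: split $\Pi_{0}h-\Pi_{n}h=\left(\Pi_{0}h-\Pi_{\rho}h\right)+\left(\Pi_{\rho}h-\Pi_{n,\rho}h\right)$; the second summand is $o_{p}\left(1\right)$ in $\left|\cdot\right|_{\mathcal{H}^{K}}$, hence in $\left|\cdot\right|_{2}$, uniformly in $h$, by Lemma~\ref{Lemma_projectionsConvergence}, and since $d\tilde{P}/dP=\partial^{2}\ell_{\mu_{0}}$ is bounded below by a positive constant by Condition~\ref{Condition_Loss}, the first summand satisfies $\sup_{h}P\left(\Pi_{0}h-\Pi_{\rho}h\right)^{2}\lesssim\sup_{h}\tilde{P}\left(\Pi_{0}h-\Pi_{\rho}h\right)^{2}\le\rho\to0$ by Lemma~\ref{Lemma_penalizedProjL2Convergence}. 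Combining the two pieces, $\sup_{h,h'}\left|B_{n}\right|\to0$, which with the bound on $A_{n}$ proves part~1.

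For part~2, since $\tilde{\mathcal{R}}_{1}$ is finite with $R$ elements, the scaled matrices $\{\Sigma_{n}\left(h,h'\right)/R\}$ and $\{\Sigma\left(h,h'\right)/R\}$ are symmetric of fixed size $R\times R$, and Weyl's inequality (or the Hoffman--Wielandt inequality) gives $\sup_{k}\left|\omega_{nk}-\omega_{k}\right|\le R^{-1}\bigl(\sum_{h,h'\in\tilde{\mathcal{R}}_{1}}\left|\Sigma_{n}\left(h,h'\right)-\Sigma\left(h,h'\right)\right|^{2}\bigr)^{1/2}\to0$ in probability by part~1; the eigenvalues indexed by $k>R$ vanish on both sides, so the supremum over $k>0$ equals that over $k\le R$. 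Summability of the eigenvalues is automatic for finite $R$ (and in the $R=\infty$ limit would follow from finiteness of the trace $\sum_{k}\omega_{k}=R^{-1}\sum_{h\in\tilde{\mathcal{R}}_{1}}\Sigma\left(h,h\right)$, bounded using $\left|h-\Pi_{0}h\right|_{\infty}\lesssim1$ and $P\Delta_{1}^{2}<\infty$). Finally $\sum_{k}\omega_{n,k}=R^{-1}\sum_{h\in\tilde{\mathcal{R}}_{1}}\Sigma_{n}\left(h,h\right)\to R^{-1}\sum_{h\in\tilde{\mathcal{R}}_{1}}\Sigma\left(h,h\right)=\sum_{k}\omega_{k}$ in probability by part~1, so for any $c\ge1+\sum_{k}\omega_{k}$, which strictly exceeds $\sum_{k}\omega_{k}$, the portmanteau argument gives $\Pr\bigl(\sum_{k}\omega_{n,k}>c\bigr)\to0$.

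The main obstacle is the projection-replacement step, that is, controlling $\Pi_{n}h\to\Pi_{0}h$ uniformly over $h\in\mathcal{H}^{K}\left(1\right)$ against the unbounded score weight $\partial\ell_{\mu_{0}}^{2}$. There is no convergence of $\Pi_{n}$ to $\Pi_{0}$ in the RKHS operator norm --- only $\Pi_{n,\rho}\to\Pi_{\rho}$ is available there (Lemma~\ref{Lemma_projectionsConvergence}), while $\Pi_{\rho}\to\Pi_{0}$ holds merely in the $\tilde{P}$-weighted $L_{2}$ sense (Lemma~\ref{Lemma_penalizedProjL2Convergence}) --- so one must route through the penalized projection $\Pi_{\rho}$, exploit both the upper and the strictly positive lower bound on $\partial^{2}\ell_{\mu_{0}}$ to pass between the $\tilde{P}$- and $P$-norms, and insert a Hölder step against the $2p$-th moment of $\Delta_{1}$ to absorb the unbounded weight, all while keeping every estimate uniform in $h$. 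Once this is in place, the remainder is bookkeeping with the bracketing bounds already established in Lemmas~\ref{Lemma_entropy_dLh} and~\ref{Lemma_entropy_dL2hh}.
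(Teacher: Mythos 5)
Your decomposition expands to exactly the paper's four-term bound (the same telescoping of $(P_{n}-P)$, the Lipschitz step in $\mu$, and the two cross terms in $\Pi_{n}-\Pi_{0}$), and you invoke the same key lemmas (Lemmas \ref{Lemma_nuTildeNormBound}, \ref{Lemma_entropy_dL2hh}, \ref{Lemma_projectionsConvergence}, \ref{Lemma_penalizedProjL2Convergence}, Theorem \ref{Theorem_consistency}) to control them; the only superficial difference is that for the eigenvalue perturbation you cite Weyl/Hoffman--Wielandt where the paper cites Bosq's Lemma 4.2, both of which yield the required bound. The argument is correct and essentially identical to the paper's.
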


\begin{proof}To show Point 1, use the triangle inequality to deduce
that 
\begin{eqnarray}
\left|\Sigma_{n}\left(h,h'\right)-\Sigma\left(h,h'\right)\right| & \leq & \left|\left(P_{n}-P\right)\left(\partial\ell_{\mu_{n0}}^{2}\right)\left(h-\Pi_{n}h\right)\left(h'-\Pi_{n}h'\right)\right|\nonumber \\
 &  & +\left|P\left(\partial\ell_{\mu_{n0}}^{2}-\partial\ell_{\mu_{0}}^{2}\right)\left(h-\Pi_{n}h\right)\left(h'-\Pi_{n}h'\right)\right|\nonumber \\
 &  & +\left|P\partial\ell_{\mu_{0}}^{2}\left(\Pi_{0}h-\Pi_{n}h\right)\left(h'-\Pi_{n}h'\right)\right|\nonumber \\
 &  & +\left|P\partial\ell_{\mu_{0}}^{2}\left(h-\Pi_{0}h\right)\left(\Pi_{0}h'-\Pi_{n}h'\right)\right|.\label{EQ_SigmaDifference}
\end{eqnarray}
It is sufficient to bound each term individually uniformly in $h,h'\in\mathcal{H}^{K}\left(1\right)$. 

To bound the first term in (\ref{EQ_SigmaDifference}), note that,
with probability going to one, $\left|h-\Pi_{n}h\right|_{\mathcal{H}^{K}}\leq2+\epsilon$
for any $\epsilon>0$ uniformly in $h\in\mathcal{H}^{K}\left(1\right)$,
by Lemmas \ref{Lemma_nuTildeNormBound} and \ref{Lemma_projectionsConvergence},
as $n\rightarrow\infty$. By this remark, to bound the first term
in probability, it is enough to bound $\left|\left(P_{n}-P\right)\partial\ell_{\mu}^{2}hh'\right|$uniformly
in $\mu\in\mathcal{H}^{K}\left(B\right)$ and $h,h'\in\mathcal{H}^{K}\left(2+\epsilon\right)$.
By Lemma \ref{Lemma_entropy_dL2hh} and the same maximal inequality
used to bound (\ref{EQ_doukhanL1MaxInequ}), deduce that this term
is $O_{p}\left(n^{-1/2}\right)$. 

To bound the second term in (\ref{EQ_SigmaDifference}), note that
$P\partial\ell_{\mu}^{2}$ is Fr\'{e}chet differentiable w.r.t. $\mu$.
To see this, one can use the same arguments as in the proof of Lemma
2.21 in Steinwart and Christmann (2008) as long as $P\sup_{\mu\in\mathcal{H}^{K}\left(B\right)}\left|\partial\ell_{\mu}\partial^{2}\ell_{\mu}\right|<\infty$,
which is the case by the assumptions in the lemma. Hence, 
\begin{eqnarray*}
 &  & \left|P\left(\partial\ell_{\mu_{n0}}^{2}-\partial\ell_{\mu_{0}}^{2}\right)\left(h-\Pi_{n}h\right)\left(h'-\Pi_{n}h'\right)\right|\\
 & \leq & 2\left|P\partial\ell_{\mu_{0}}\partial^{2}\ell_{\mu_{0}}\left(\mu_{n0}-\mu_{0}\right)\left(h-\Pi_{n}h\right)\left(h'-\Pi_{n}h'\right)\right|+o_{p}\left(1\right)
\end{eqnarray*}
using the fact that $\left|\mu_{n0}-\mu_{0}\right|_{\infty}=o_{p}\left(1\right)$
by Theorem \ref{Theorem_consistency}. By an application of Lemma
\ref{Lemma_projectionsConvergence}, again, a bound in probability
for the above is given by a bound for
\[
2\sup_{h,h'\in\mathcal{H}^{K}\left(2+\epsilon\right)}\left|P\partial\ell_{\mu_{0}}\partial^{2}\ell_{\mu_{0}}\left(\mu_{n0}-\mu_{0}\right)hh'\right|.
\]
By Theorem \ref{Theorem_consistency} and $P\left|\partial\ell_{\mu_{0}}\partial^{2}\ell_{\mu_{0}}\right|\leq P\Delta_{1}\Delta_{2}<\infty$,
implying that the above is $o_{p}\left(1\right)$. 

The third term in (\ref{EQ_SigmaDifference}) is bounded by 

\begin{eqnarray}
 &  & P\left|\partial\ell_{\mu_{0}}^{2}\left(\Pi_{0}h-\Pi_{n}h\right)\left(h'-\Pi_{n}h'\right)\right|\nonumber \\
 & \leq & \left|\Pi_{0}h-\Pi_{n}h\right|_{2}\times\left|\partial\ell_{\mu_{0}}^{2}\left(h'-\Pi_{n}h'\right)\right|_{2}.\label{EQ_ineqCovSecondTerm}
\end{eqnarray}
By the triangle inequality 
\[
\left|\Pi_{0}h-\Pi_{n}h\right|_{2}\leq\left|\Pi_{0}h-\Pi_{\rho}h\right|_{2}+\left|\Pi_{\rho}h-\Pi_{n}h\right|_{2}.
\]
By Lemma \ref{Lemma_penalizedProjL2Convergence}, and the fact that
$d\tilde{P}/dP=\partial^{2}\ell_{\mu_{0}}$ is bounded away from zero
and infinity, the first term on the r.h.s. goes to zero as $\rho\rightarrow0$.
By Lemma \ref{Lemma_projectionsConvergence}, the second term on the
r.h.s. is $o_{p}\left(1\right)$. Using the triangle inequality, the
second term in the product in (\ref{EQ_ineqCovSecondTerm}), is bounded
by $\left|\partial\ell_{\mu_{0}}^{2}\left(h'-\Pi_{0}h'\right)\right|_{2}+\left|\partial\ell_{\mu_{0}}^{2}\left(\Pi_{0}-\Pi_{n}\right)h'\right|_{2}$
and it is not difficult to see that this is $O_{p}\left(1\right)$.
These remarks imply that (\ref{EQ_ineqCovSecondTerm}) is $o_{p}\left(1\right)$.
The last term in (\ref{EQ_SigmaDifference}) is bounded similarly.
The uniform convergence of the covariance is proved because all the
bounds converge to zero uniformly in $h,h'\in\mathcal{H}^{K}\left(1\right)$. 

It remains to show Point 2. This follows from the inequality 
\[
\sup_{k>0}\left|\omega_{nk}-\omega_{k}\right|\leq\frac{1}{R}\sum_{h\in\mathcal{\tilde{R}}_{1}}\left|\Sigma_{n}\left(h,h\right)-\Sigma\left(h,h\right)\right|,
\]
which uses Lemma 4.2 in Bosq (2000) together with the fact that the
operator norm of a covariance function is bounded by the nuclear norm
( Bosq, 2000). Clearly, the r.h.s. is bounded by $\sup_{h\in\mathcal{H}^{K}\left(1\right)}\left|\Sigma_{n}\left(h,h\right)-\Sigma\left(h,h\right)\right|$
which converges to zero in probability. Finally, by definition of
the eigenvalues and eigenfunctions, $\Sigma\left(h,h\right)=\sum_{k=1}^{\infty}\omega_{k}\psi_{k}\left(h\right)\psi_{k}\left(h\right)$
so that 
\[
\frac{1}{R}\sum_{h\in\mathcal{\tilde{R}}_{1}}\Sigma\left(h,h\right)=\sum_{k=1}^{\infty}\omega_{k}\leq\sup_{h\in\mathcal{H}^{K}\left(1\right)}\Sigma\left(h,h\right)<\infty
\]
implying that the eigenvalues are summable. The sum of the sample
eigenvalues is equal to 
\begin{eqnarray*}
\frac{1}{R}\sum_{h\in\mathcal{\tilde{R}}_{1}}\Sigma_{n}\left(h,h\right) & \leq & \frac{1}{R}\sum_{h\in\mathcal{\tilde{R}}_{1}}\Sigma\left(h,h\right)+\frac{1}{R}\sum_{h\in\mathcal{\tilde{R}}_{1}}\left|\Sigma_{n}\left(h,h\right)-\Sigma\left(h,h\right)\right|\\
 & \leq & \sup_{h\in\mathcal{H}^{K}\left(1\right)}\Sigma\left(h,h\right)+\sup_{h\in\mathcal{H}^{K}\left(1\right)}\left|\Sigma_{n}\left(h,h\right)-\Sigma\left(h,h\right)\right|.
\end{eqnarray*}
As shown above, the first term on the r.h.s. is finite and the second
term converges to zero in probability. Hence, the sample eigenvalues
are summable in probability. In particular, from these remarks deduce
that for any $c<\infty$ such that $c\geq1+\sum_{k=1}^{\infty}\omega_{k}$,
$\Pr\left(\sum_{k=1}^{\infty}\omega_{n,k}>c\right)=o\left(1\right)$.
\end{proof}\\

To avoid repetition, the results in Section \ref{Section_testingFunctionalRestrictions}
are proved together. Mutatis mutandis, from (\ref{EQ_basicIdentityFinal}),
we have that 
\begin{equation}
\sqrt{n}\Psi_{n}\left(\mu_{n0}\right)=\sqrt{n}\Psi_{n}\left(\mu_{0}\right)+\dot{\Psi}_{\mu_{0}}\sqrt{n}\left(\mu_{n0}-\mu_{0}\right)+o_{p}\left(1\right).\label{EQ_nuisanceParmExpansion}
\end{equation}
Trivially, any $h\in\mathcal{H}^{K}\left(1\right)$ can be written
as $h=\Pi_{\rho}h+\left(h-\Pi_{\rho}h\right)$. By Lemma \ref{Lemma_nuTildeNormBound},
replace $h\in\mathcal{H}^{K}\left(1\right)$ with $\left(h-\Pi_{\rho}h\right)\in\mathcal{H}^{K}\left(2\right)$
in Lemma \ref{Lemma_gaussianConvergence}. Then, $\sqrt{n}\left(\Psi_{n}\left(\mu\right)-\Psi\left(\mu\right)\right)\left(h-\Pi_{\rho}h\right)$
for $\mu\in\mathcal{H}^{K}\left(B\right),\,h\in\mathcal{H}^{K}\left(1\right)$
converges weakly to a Gaussian process with a.s. continuous sample
paths. Therefore, (\ref{EQ_nuisanceParmExpansion}) also applies to
$\Psi_{n}\left(\mu\right)$ as an element in the space of uniformly
bounded functions on $\mathcal{H}^{K}\left(2\right)$. Now, for $\rho=\rho_{n}$,
\[
\sqrt{n}\Psi_{n}\left(\mu_{n0}\right)\left(h-\Pi_{n,\rho}h\right)=\sqrt{n}\Psi_{n}\left(\mu_{n0}\right)\left(h-\Pi_{\rho}h\right)+\sqrt{n}\Psi_{n}\left(\mu_{n0}\right)\left(\Pi_{\rho}-\Pi_{n,\rho}\right)h
\]
adding and subtracting $\sqrt{n}\Psi_{n}\left(\mu_{n0}\right)\Pi_{\rho}h$.
Using Lemma \ref{Lemma_projectionsConvergence}, this is equal to
\[
\sqrt{n}\Psi_{n}\left(\mu_{n0}\right)\left(h-\Pi_{\rho}h\right)+o_{p}\left(1\right)
\]
which by (\ref{EQ_nuisanceParmExpansion}) is equal to
\[
\sqrt{n}\Psi_{n}\left(\mu_{0}\right)\left(h-\Pi_{\rho}h\right)+\sqrt{n}\dot{\Psi}_{\mu_{0}}\left(\mu_{n0}-\mu_{0}\right)\left(h-\Pi_{\rho}h\right)+o_{p}\left(1\right).
\]
Using linearity, rewrite 
\begin{align*}
\dot{\Psi}_{\mu_{0}}\sqrt{n}\left(\mu_{n0}-\mu_{0}\right)\left(h-\Pi_{\rho}h\right)= & \dot{\Psi}_{\mu_{0}}\sqrt{n}\left(\mu_{n0}-\mu_{0}\right)\left(h-\Pi_{0}h\right)\\
 & +\dot{\Psi}_{\mu_{0}}\sqrt{n}\left(\mu_{n0}-\mu_{0}\right)\left(\Pi_{0}h-\Pi_{\rho}h\right).
\end{align*}
The first term on the r.h.s. is $P\partial^{2}\ell_{\mu_{0}}\sqrt{n}\left(\mu_{n0}-\mu_{0}\right)\left(h-\Pi_{0}h\right)$.
This is zero because $\left(\mu_{n0}-\mu_{0}\right)$ is in the linear
span of elements in $\mathcal{R}_{0}$, and $\left(h-\Pi_{0}\right)$
is orthogonal to any element in $\mathcal{R}_{0}$ (w.r.t. $\tilde{P}$
by (\ref{EQ_projectionOperator}) with $\rho=0$). Lemma \ref{EQ_bilinearOperatorHolderBound}
shows that the absolute value of the second term on the r.h.s. of
the display is $o_{p}\left(1\right)$. 

We deduce that the asymptotic distribution of $\sqrt{n}\Psi_{n}\left(\mu_{n0}\right)\left(h-\Pi_{n}h\right)$
is given by the one of $\sqrt{n}\Psi_{n}\left(\mu_{0}\right)\left(h-\Pi_{\rho}h\right)$
for $\rho\rightarrow0$ at a suitable rate. By Lemma \ref{Lemma_weakConvergenceProjectedProcess},
the latter converges weakly to a centered Gaussian process as in the
statement of Theorem \ref{Theorem_projectionScoreTest}. 

The test statistic $\hat{S}_{n}$ is the square of $\sqrt{n}\Psi_{n}\left(\mu_{n0}\right)\left(h-\Pi_{n,\rho}h\right)$
averaged over a finite number of functions $h$. This is the average
of squared asymptotically centered Gaussian random variables. By the
singular value decomposition, its distribution is given by $S$. The
distribution of the approximation to $S$ when the sample eigenvalues
are used is $\hat{S}$ as given in Theorem \ref{Theorem_feasibleTestStat}.
By the triangle inequality, 
\begin{equation}
\left|\hat{S}-S\right|\leq\sum_{k=1}^{\infty}\left|\omega_{nk}-\omega_{k}\right|N_{k}^{2}.\label{EQ_limitProcessApproximationError}
\end{equation}
The sum can be split into two parts, one for $k\leq L$ plus one for
$k>L$ where here $L$ is a positive integer. Hence, deduce that the
above is bounded by 
\[
L\sup_{k\leq L}\left|\omega_{nk}-\omega_{k}\right|N_{k}^{2}+\sum_{k>L}\left(\omega_{nk}+\omega_{k}\right)N_{k}^{2}
\]
Using Lemma \ref{Lemma_covarianceConsistency}, the first term is
$o_{p}\left(1\right)$ for any fixed integer $L$. By Lemma \ref{Lemma_covarianceConsistency},
again, there is a positive summable sequence $\left(a_{k}\right)_{k\geq1}$
such that, as $n\rightarrow\infty$, the event $\left\{ \sup_{k\geq1}\omega_{nk}a_{k}^{-1}=\infty\right\} $
is contained in the event $\left\{ \sum_{k=1}^{\infty}\omega_{n,k}>c\right\} $
for some some finite constant $c$. However, the latter event has
probability going to zero. Hence, the second term in the display is
bounded with probability going to one by 
\[
\left(\sup_{k>0}\omega_{nk}a_{k}^{-1}\right)\sum_{k>L}a_{k}N_{k}^{2}+\sum_{k>L}\omega_{k}N_{k}^{2},
\]
where $\sup_{k>0}\omega_{nk}a_{k}^{-1}=O_{p}\left(1\right)$. Given
that 
\[
\mathbb{E}\left[\sum_{k>L}a_{k}N_{k}^{2}+\sum_{k>L}\omega_{k}N_{k}^{2}\right]\lesssim\sum_{k>L}\left(a_{k}+\omega_{k}\right)\rightarrow0
\]
as $L\rightarrow\infty$, deduce that letting $L\rightarrow\infty$
slowly enough, (\ref{EQ_limitProcessApproximationError}) is $o_{p}\left(1\right)$.

\subsection{Proof of Theorem \ref{Theorem_absoluteLossWeakConvergence}}

The proof of Theorem \ref{Theorem_absoluteLossWeakConvergence} follows
the proof of Theorems \ref{Theorem_consistency} and \ref{Theorem_weakConvergence}.
The following lemmas can be used for this purpose. 

The estimator $\mu_{n}$ satisfies $P_{n}\partial\ell_{\mu_{n}}h=-2\rho\left\langle \mu_{n},h\right\rangle _{\mathcal{H}^{K}}$
for $h\in\mathcal{H}^{K}\left(1\right)$, and $\rho=\rho_{B,n}$where
$\partial\ell_{\mu}\left(z\right)=sign\left(y-\mu\left(x\right)\right)$.
This is a consequence of the following results together with the fact
that $\mu_{0}\in{\rm int}\left(\mathcal{H}^{K}\left(B\right)\right)$
minimizes the expected loss $P\ell_{\mu}$. Hence, first of all, find
an expression for $P\ell_{\mu}$. 

\begin{lemma}\label{Lemma_expectedAbsoluteLossRepresentation} Suppose
that $\mathbb{E}\left|Y\right|<\infty$ and $\mu$ has bounded range.
Then, for $\ell_{\mu}\left(z\right)=\left|y-\mu\left(x\right)\right|$,
\[
P\ell_{\mu}=\int\left[\int_{\mu\left(x\right)}^{\infty}\Pr\left(Y\geq s|x\right)ds+\int_{-\infty}^{\mu\left(x\right)}\Pr\left(Y<s|x\right)ds\right]dP\left(x\right),
\]
where $\Pr\left(Y\leq s|x\right)$ is the distribution of $Y$ conditional
on $X=x$, and similarly for $\Pr\left(Y\geq s|x\right)$.\end{lemma}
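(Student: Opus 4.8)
The plan is to reduce the identity to the elementary ``layer--cake'' formula for the $L_{1}$ norm of a real random variable, applied conditionally on $X=x$, followed by an affine change of variables and an interchange of integrals justified by nonnegativity.

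First I would condition on $X=x$ and use the pointwise decomposition $\left|y-\mu\left(x\right)\right|=\left(y-\mu\left(x\right)\right)^{+}+\left(\mu\left(x\right)-y\right)^{+}$. Applying the layer--cake identity $\mathbb{E}\left[V^{+}\mid X=x\right]=\int_{0}^{\infty}\Pr\left(V>t\mid x\right)dt$, valid for any real $V$, to $V=Y-\mu\left(x\right)$ and to $V=\mu\left(x\right)-Y$ gives
\[
\mathbb{E}\left[\left|Y-\mu\left(x\right)\right|\mid X=x\right]=\int_{0}^{\infty}\Pr\left(Y-\mu\left(x\right)>t\mid x\right)dt+\int_{0}^{\infty}\Pr\left(\mu\left(x\right)-Y>t\mid x\right)dt.
\]
In the first integral I substitute $s=\mu\left(x\right)+t$ and in the second $s=\mu\left(x\right)-t$; the right--hand side becomes $\int_{\mu\left(x\right)}^{\infty}\Pr\left(Y>s\mid x\right)ds+\int_{-\infty}^{\mu\left(x\right)}\Pr\left(Y<s\mid x\right)ds$. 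Since a conditional distribution function has at most countably many discontinuities (and, under Condition \ref{Condition_absoluteLoss}, none, as $Y\mid x$ has a density), $\Pr\left(Y>s\mid x\right)=\Pr\left(Y\geq s\mid x\right)$ for Lebesgue--almost every $s$, so the first integral equals $\int_{\mu\left(x\right)}^{\infty}\Pr\left(Y\geq s\mid x\right)ds$, matching the form stated in the lemma.

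Finally I would integrate both sides against $dP\left(x\right)$. Because the integrand $\left(x,s\right)\mapsto\Pr\left(Y\geq s\mid x\right)1_{\left\{ s>\mu\left(x\right)\right\} }$ and its companion are nonnegative and measurable, Tonelli's theorem permits writing $P\ell_{\mu}=\int\mathbb{E}\left[\left|Y-\mu\left(x\right)\right|\mid X=x\right]dP\left(x\right)$ and then moving the $ds$--integral outside the $dP\left(x\right)$--integral, which yields exactly the displayed formula. The hypotheses $\mathbb{E}\left|Y\right|<\infty$ and boundedness of the range of $\mu$ ensure $\left|Y-\mu\left(x\right)\right|\leq\left|Y\right|+\sup_{x}\left|\mu\left(x\right)\right|$ is integrable, so both sides are finite and the manipulations are genuine equalities rather than $\infty=\infty$. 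There is no real obstacle here: the only points needing a word of care are the appeal to Tonelli (legitimate by nonnegativity) and the harmless replacement of $\Pr\left(Y>s\mid x\right)$ by $\Pr\left(Y\geq s\mid x\right)$ under the Lebesgue integral.
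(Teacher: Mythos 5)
Your proof is correct and follows essentially the same route as the paper: decompose $\left|y-\mu(x)\right|$ into positive parts, apply the layer-cake/integrated-indicator identity, invoke Fubini--Tonelli, and change variables. The only cosmetic difference is that the paper writes the layer-cake with a non-strict indicator $1_{\left\{ s\leq y-\mu(x)\right\} }$ and so lands directly on $\Pr\left(Y\geq s\mid x\right)$, whereas you start from the strict-inequality form and therefore need the (correct) extra observation that $\Pr\left(Y>s\mid x\right)=\Pr\left(Y\geq s\mid x\right)$ for Lebesgue-a.e.\ $s$.
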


\begin{proof}Note that for any positive variable $a$, $a=\int_{0}^{a}ds=\int_{0}^{\infty}1_{\left\{ s\leq a\right\} }ds$
and $a=\int_{-\infty}^{0}1_{\left\{ s>-a\right\} }ds$. Since $\left|y-\mu\left(x\right)\right|=\left(y-\mu\left(x\right)\right)1_{\left\{ y-\mu\left(x\right)>0\right\} }+\left(\mu\left(x\right)-y\right)1_{\left\{ \mu\left(x\right)-y\geq0\right\} }$,
by the aforementioned remark 
\[
\left|y-\mu\left(x\right)\right|=\int_{0}^{\infty}1_{\left\{ s\leq y-\mu\left(x\right)\right\} }ds+\int_{-\infty}^{0}1_{\left\{ s>y-\mu\left(x\right)\right\} }ds.
\]
Write $P\left(y,x\right)=P\left(y|x\right)P\left(x\right)$ and take
expectation of the above to find that 
\begin{eqnarray*}
 &  & \int\int\left|y-\mu\left(x\right)\right|dP\left(y|x\right)dP\left(x\right)\\
 & = & \int\int\left[\int_{0}^{\infty}1_{\left\{ s\leq y-\mu\left(x\right)\right\} }+\int_{\infty}^{0}1_{\left\{ s>y-\mu\left(x\right)\right\} }\right]dsdP\left(y|x\right)dP\left(x\right).
\end{eqnarray*}
By the conditions of the lemma, the expectation is finite. Hence,
apply Fubini's Theorem to swap integration w.r.t. $s$ and $y$. Integrating
w.r.t. $y$, the the above display is equal to 
\[
\int\left[\int_{0}^{\infty}\Pr\left(Y\geq\mu\left(x\right)+s|x\right)ds+\int_{-\infty}^{0}\Pr\left(Y<\mu\left(x\right)+s|x\right)ds\right]dP\left(x\right).
\]
By change of variables, this is equal to the statement in the lemma.\end{proof}

The population loss function is Fr\'{e}chet differentiable and strictly
convex. This will also ensure uniqueness of $\mu_{0}$. 

\begin{lemma}\label{Lemma_frechetDerivativesAbsolutNorm}Under Condition
\ref{Condition_absoluteLoss}, the first three Fr\'{e}chet derivatives
of $P\ell_{\mu}$ are
\[
\partial\ell_{\mu}Ph=\int\left[2\Pr\left(Y\leq\mu\left(x\right)|x\right)-1\right]h\left(x\right)dP\left(x\right)
\]
\[
\partial^{2}P\ell_{\mu}h^{2}=2\int pdf\left(\mu\left(x\right)|x\right)h^{2}\left(x\right)dP\left(x\right),
\]
\[
\partial^{3}P\ell_{\mu}h^{3}=2\int pdf'\left(\mu\left(x\right)|x\right)h^{3}\left(x\right)dP\left(x\right),
\]
where $pdf'\left(y|x\right)$ is the first derivative of $pdf\left(y|x\right)$
w.r.t. $y$. Moreover, $P\partial^{2}\ell_{\mu}h^{2}\gtrsim Ph^{2}$
and 
\[
\sup_{\mu\in\mathcal{H}^{K}\left(B\right)}\left|\partial^{3}P\ell_{\mu}h\left(\mu-\mu_{0}\right)^{2}\right|\lesssim P\left(\mu-\mu_{0}\right)^{2}.
\]
\end{lemma}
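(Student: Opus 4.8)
The plan is to push all the analysis down to a scalar function of a real variable, obtained from the representation in Lemma~\ref{Lemma_expectedAbsoluteLossRepresentation}. Fix $x$ and set $g_x(t):=\int_t^\infty\Pr(Y\ge s|x)\,ds+\int_{-\infty}^t\Pr(Y<s|x)\,ds$, so that $P\ell_\mu=\int g_x(\mu(x))\,dP(x)$; this is finite since $P\ell_\mu<\infty$ and $\mu$ has bounded range on $\mathcal{H}^K(B)$. Differentiating in $t$ by the fundamental theorem of calculus, and writing $F(\cdot|x)$ for the conditional distribution function of $Y$ given $X=x$, I get $g_x'(t)=2F(t|x)-1$, $g_x''(t)=2\,pdf(t|x)$ and $g_x'''(t)=2\,pdf'(t|x)$, the last two existing by the smoothness assumptions in Condition~\ref{Condition_absoluteLoss}. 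The point to record is that on the band $|t|\le\bar B$ these scalar derivatives are uniformly bounded in $x$: $|g_x'|\le1$, $\sup_{|t|\le\bar B,\,x}g_x''(t)<\infty$, and $c_1:=\sup_{|t|\le\bar B,\,x}|g_x'''(t)|<\infty$. These uniform bounds play, under Condition~\ref{Condition_absoluteLoss}, the role that integrability of $\Delta_0,\Delta_1,\Delta_2,\Delta_3$ plays under Condition~\ref{Condition_Loss}.

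Next I would lift these pointwise derivatives to Fr\'echet derivatives of $\mu\mapsto P\ell_\mu=\int g_x(\mu(x))\,dP(x)$, regarded as a functional on uniformly bounded functions on $\mathcal{X}^K$ with increments taken in $\mathcal{H}^K$ (whose elements are uniformly bounded). This is the computation outlined at the start of Section~\ref{Section_proofs}, following Lemma~2.21 in Steinwart and Christmann (2008) and Lemma~A.4 in Hable (2012): Taylor's theorem for $g_x$ gives $g_x(\mu(x)+h(x))-g_x(\mu(x))-g_x'(\mu(x))h(x)=\tfrac12 g_x''(\xi_x)h(x)^2$, and since $g_x''$ is uniformly bounded on $[-\bar B,\bar B]$, integrating against $P$ shows the remainder is $O(|h|_\infty^2)$; hence the first Fr\'echet derivative is $h\mapsto\int g_x'(\mu(x))h(x)\,dP(x)=\int[2\Pr(Y\le\mu(x)|x)-1]h(x)\,dP(x)$. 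Repeating the same expansion for $g_x'$, using boundedness of $g_x''$ and $g_x'''$, gives $\partial^2 P\ell_\mu h^2=\int g_x''(\mu(x))h(x)^2\,dP(x)=2\int pdf(\mu(x)|x)h(x)^2\,dP(x)$, and once more for $g_x''$ gives $\partial^3 P\ell_\mu h^3=2\int pdf'(\mu(x)|x)h(x)^3\,dP(x)$; polarization yields the mixed form $\partial^3 P\ell_\mu h(\mu-\mu_0)^2=2\int pdf'(\mu(x)|x)h(x)(\mu(x)-\mu_0(x))^2\,dP(x)$.

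The two inequalities are then immediate. For the lower bound, on $\mathcal{H}^K(B)$ one has $|\mu|_\infty\le\bar B$, so $pdf(\mu(x)|x)\ge c_0:=\min_{|t|\le\bar B,\,x\in\mathcal{X}^K}pdf(t|x)>0$ by Condition~\ref{Condition_absoluteLoss}, whence $P\partial^2\ell_\mu h^2=2\int pdf(\mu(x)|x)h(x)^2\,dP(x)\ge 2c_0\,Ph^2\gtrsim Ph^2$. For the cubic bound, with $h\in\mathcal{H}^K(1)$ so that $|h|_\infty\le1$ under the normalization $\sup_x\sqrt{C_{\mathcal{H}^K}(x,x)}=1$, and the third derivative evaluated at an arbitrary $\mu\in\mathcal{H}^K(B)$, $|\partial^3 P\ell_\mu h(\mu-\mu_0)^2|\le 2c_1|h|_\infty\int(\mu(x)-\mu_0(x))^2\,dP(x)\lesssim P(\mu-\mu_0)^2$. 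I expect the only delicate step to be the lifting in the second paragraph, namely checking that the $P$-integrated scalar Taylor remainders are of the stated order uniformly over $\mu\in\mathcal{H}^K(B)$ and $h$ in a fixed ball, so that differentiation and integration may be interchanged; this is handled by dominated convergence using exactly the uniform bounds on $g_x',g_x'',g_x'''$ over $[-\bar B,\bar B]$ recorded in the first paragraph, mirroring the cited references, with Condition~\ref{Condition_absoluteLoss} supplying these bounds and the positivity of $pdf$ in place of the moment conditions of Condition~\ref{Condition_Loss}.
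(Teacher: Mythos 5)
Your proof is correct and follows essentially the same route as the paper: reduce to the scalar map $t\mapsto g_x(t)$ obtained from Lemma~\ref{Lemma_expectedAbsoluteLossRepresentation}, compute its ordinary derivatives, and lift them to Fr\'echet derivatives of $\mu\mapsto\int g_x(\mu(x))\,dP(x)$ by controlling the integrated Taylor remainder, then read the two inequalities off the bounds in Condition~\ref{Condition_absoluteLoss}. The only (minor) technical difference is in the lifting step: the paper bounds the Newton quotient error $U_n(x)$ via the mean value theorem and monotonicity of $t\mapsto\Pr(Y\le t|x)$ and invokes dominated convergence, whereas you use the Lagrange form of the second-order remainder together with the uniform bound on $g_x''$; both yield the same conclusion under the stated hypotheses.
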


\begin{proof}Define 
\[
I\left(t\right):=\int_{t}^{\infty}\Pr\left(Y\geq s|x\right)ds+\int_{-\infty}^{t}\Pr\left(Y<s|x\right)ds.
\]
By Lemma \ref{Lemma_expectedAbsoluteLossRepresentation}, $P\ell_{\mu}=PI\left(\mu\right)=\int I\left(\mu\left(x\right)\right)dP\left(x\right)$.
For any sequence $h_{n}\in\mathcal{H}^{K}\left(B\right)$ converging
to $0$ under the uniform norm,
\[
I'\left(\mu\left(x\right)\right):=\lim_{n\rightarrow\infty}\frac{I\left(\mu\left(x\right)+h_{n}\left(x\right)\right)-I\left(\mu\left(x\right)\right)}{h_{n}\left(x\right)}=-\Pr\left(Y\geq\mu\left(x\right)|x\right)+\Pr\left(Y<\mu\left(x\right)|x\right)
\]
by standard differentiation, as $\mu\left(x\right)$ and $h_{n}\left(x\right)$
are just real numbers for fixed $x$. By Condition \ref{Condition_absoluteLoss}
the probability is continuous so that the above can be written as
$2\Pr\left(Y\leq\mu\left(x\right)|x\right)-1$. It will be shown that
the Fr\'{e}chet derivative of $P\ell_{\mu}$ is $\int I'\left(\mu\left(x\right)\right)dP\left(x\right)$.
To see this, define 
\[
U_{n}\left(x\right):=\left|\frac{I\left(\mu\left(x\right)+h_{n}\left(x\right)\right)-I\left(\mu\left(x\right)\right)}{h_{n}\left(x\right)}-I'\left(\mu\left(x\right)\right)\right|,
\]
if $h_{n}\neq0$, otherwise, $U_{n}\left(x\right):=0$. By construction,
$U_{n}\left(x\right)$ converges to zero pointwise. By a similar method
as in the proof of Lemma 2.21 in Steinwart and Christmann (2008),
it is sufficient to show that the following converges to zero, 
\[
\lim_{\left|h_{n}\right|_{\infty}\rightarrow0}\frac{\left|P\ell_{\mu+h_{n}}-P\ell_{\mu}-\int I'\left(\mu\left(x\right)\right)dP\left(x\right)\right|}{\left|h_{n}\right|_{\infty}}\leq\int U_{n}\left(x\right)dP\left(x\right).
\]
The upper bound follows replacing $\left|h_{n}\right|_{\infty}$ with
$\left|h_{n}\left(x\right)\right|$ because $1/\left|h_{n}\right|_{\infty}\leq1/h_{n}\left(x\right)$.
The above goes to zero by dominated convergence if we find a dominating
function. To this end, the mean value theorem implies that for some
$t_{n}\in\left[0,1\right]$, 
\[
\left|\frac{I\left(\mu\left(x\right)+h_{n}\left(x\right)\right)-I\left(\mu\left(x\right)\right)}{h_{n}\left(x\right)}\right|\leq2\Pr\left(Y\leq\mu\left(x\right)+t_{n}h_{n}\left(x\right)|x\right)-1.
\]
The sequence $h_{n}\left(x\right)\in\mathcal{H}^{K}\left(B\right)$
is uniformly bounded by $B$. By monotonicity of probabilities, this
implies that the above display is bounded by $2\Pr\left(Y\leq\mu\left(x\right)+B|x\right)-1$
uniformly for any $h_{n}\left(x\right)$. This is also an upper bound
for $I'\left(\mu\left(x\right)\right)$. Hence, using the definition
of $U_{n}\left(x\right)$, it follows that 
\[
\left|U_{n}\left(x\right)\right|\leq4\Pr\left(Y\leq\mu\left(x\right)+B|x\right),
\]
where the r.h.s. is integrable. This implies the existence of a dominating
function and in consequence the first statement of the lemma. To show
that also $\partial P\ell_{\mu}$ is Fr\'{e}chet differentiable,
one can use a similar argument as above. Then, the Fr\'{e}chet derivative
of $I'\left(\mu\left(x\right)\right)$ can be shown to be $I''\left(\mu\left(x\right)\right)=2pdf\left(\mu\left(x\right)|x\right)$.
The third derivative is found similarly as long as $pdf\left(y|x\right)$
has bounded derivative. The final statements in the lemma follow by
the condition on the conditional density as stated in Condition \ref{Condition_absoluteLoss}.\end{proof}

The last statement in Lemma \ref{Lemma_frechetDerivativesAbsolutNorm}
establishes the bound in (\ref{EQ_secondFrechetDerivativeBound}).
For the central limit theorem, an estimate of complexity tailored
to the present case is needed. 

\begin{lemma}\label{Lemma_entropyL1_dLB}Suppose that Condition \ref{Condition_absoluteLoss}
holds. Consider $\ell_{\mu}\left(z\right)=\left|y-\mu\left(x\right)\right|$
(recall the notation $z=\left(y,x\right)$). For the set $\mathcal{F}:=\left\{ \partial\ell_{\mu}h:\mu\in\mathcal{H}^{K}\left(B\right),h\in\mathcal{H}^{K}\left(1\right)\right\} $,
the $L_{1}$ $\epsilon$-entropy with bracketing is 
\[
\ln N_{\left[\right]}\left(\epsilon,\mathcal{F},\left|\cdot\right|_{1}\right)\lesssim\epsilon^{-2/\left(2\eta-1\right)}+K\ln\left(\frac{B}{\epsilon}\right),
\]
\end{lemma}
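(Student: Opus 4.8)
The plan is to follow the strategy of Lemma \ref{Lemma_entropy_dLh}, but with one essential change dictated by Condition \ref{Condition_absoluteLoss}: for the absolute loss $\partial\ell_\mu(z)=2\cdot 1_{\{y\ge\mu(x)\}}-1$ is discontinuous in $\mu$, so the sup-norm Lipschitz bound on $\partial\ell_\mu$ used there is unavailable and must be replaced by an $L_1(P)$-Lipschitz bound that exploits the bounded conditional density. First I would record that for any $\mu,\mu'\in\mathcal{H}^K(B)$,
\[
P\bigl|1_{\{y\ge\mu(x)\}}-1_{\{y\ge\mu'(x)\}}\bigr|
=\int P\bigl(\mu(x)\wedge\mu'(x)\le Y<\mu(x)\vee\mu'(x)\mid X=x\bigr)\,dP(x)
\lesssim |\mu-\mu'|_\infty ,
\]
using the uniform bound on $pdf(y|x)$ from Condition \ref{Condition_absoluteLoss} and $|\mu|_\infty,|\mu'|_\infty\le\bar{B}$. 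Combined with $|\partial\ell_{\mu'}|\le 1$ and $|h|_\infty\lesssim 1$, this yields $P|\partial\ell_\mu h-\partial\ell_{\mu'}h'|\lesssim|\mu-\mu'|_\infty+|h-h'|_\infty$, i.e. the class is $L_1$-Lipschitz in the sup-norm parameters, even though no function in the class is Lipschitz.

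Next I would turn the $L_1$-Lipschitz bound into a bracketing bound by an explicit bracket construction. Fix $\delta=\delta(\epsilon)\asymp\epsilon$, let $\{\mu_j\}$ be a sup-norm $\delta$-net of $\mathcal{H}^K(B)$ and $\{h_k\}$ a sup-norm $\delta$-net of $\mathcal{H}^K(1)$; by Lemma \ref{Lemma_entropyOfL} their log-cardinalities are $\lesssim (B/\delta)^{2/(2\eta-1)}+K\ln(B/\delta)$ and $\lesssim (1/\delta)^{2/(2\eta-1)}+K\ln(1/\delta)$. For a net pair $(j,k)$ partition $\mathcal Z$ into $A_j^+=\{y\ge\mu_j(x)+\delta\}$, $A_j^-=\{y<\mu_j(x)-\delta\}$ and $A_j^0=\{\mu_j(x)-\delta\le y<\mu_j(x)+\delta\}$; on $A_j^+$ (resp.\ $A_j^-$) one has $\partial\ell_\mu(z)=1$ (resp.\ $-1$) for every $\mu$ with $|\mu-\mu_j|_\infty\le\delta$, while on $A_j^0$ the value $\partial\ell_\mu(z)$ may be either $\pm1$. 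Taking $\underline f_{jk}(z)$ to be the pointwise infimum of $\partial\ell_\mu(z)h(x)$ over $\{|\mu-\mu_j|_\infty\le\delta,\ |h-h_k|_\infty\le\delta\}$ gives the explicit measurable function equal to $h_k(x)-\delta$ on $A_j^+$, to $-h_k(x)-\delta$ on $A_j^-$, and to $-(|h_k(x)|+\delta)$ on $A_j^0$, and similarly for $\overline f_{jk}$. Then $\underline f_{jk}\le\partial\ell_\mu h\le\overline f_{jk}$ for all $(\mu,h)$ in the $(j,k)$ cell, and
\[
P\bigl(\overline f_{jk}-\underline f_{jk}\bigr)
\le 2\delta\,P(A_j^+)+2\delta\,P(A_j^-)+2\bigl(|h_k|_\infty+\delta\bigr)P(A_j^0)
\lesssim \delta+P\bigl(\mu_j(x)-\delta\le Y<\mu_j(x)+\delta\bigr)\lesssim\delta ,
\]
again by the density bound and $|h_k|_\infty\lesssim 1$. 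Hence $N_{\left[\right]}(c\delta,\mathcal F,|\cdot|_1)$ is at most the product of the two net cardinalities, and taking logarithms with $\delta\asymp\epsilon$ yields $\ln N_{\left[\right]}(\epsilon,\mathcal F,|\cdot|_1)\lesssim\epsilon^{-2/(2\eta-1)}+K\ln(B/\epsilon)$, the $B=1$ term being absorbed into the general-$B$ one.

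The only genuine obstacle is the one already isolated: because $\partial\ell_\mu$ is a shifted indicator there is no Lipschitz control in $\mu$, so everything hinges on the observation that the bounded conditional density makes $\mu\mapsto 1_{\{y\ge\mu(x)\}}$ an $L_1(P)$-Lipschitz map, after which the ``ambiguous'' region $A_j^0$ contributes only an $O(\delta)$ error in the bracket width because $P(A_j^0)=O(\delta)$. I would also note in passing that $|\partial\ell_\mu h|\le|h|_\infty\lesssim 1$ uniformly, so $\mathcal F$ has the integrable envelope required when this lemma is fed into the maximal inequality of Doukhan et al.\ (1995) in the proofs of Theorems \ref{Theorem_consistency} and \ref{Theorem_weakConvergence} for the absolute loss.
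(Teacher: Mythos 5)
Your proposal is correct and follows essentially the same route as the paper: reduce $\partial\ell_\mu$ to a shifted indicator $2\cdot1_{\{y\ge\mu(x)\}}-1$, use the bounded conditional density to show the probability of the thin slab $\{\mu_j(x)-\delta\le Y<\mu_j(x)+\delta\}$ is $O(\delta)$, and then transfer $L_\infty$ coverings of $\mathcal{H}^K(B)\times\mathcal{H}^K(1)$ (Lemma \ref{Lemma_entropyOfL}) into $L_1$ brackets for $\mathcal{F}$. The only cosmetic difference is that the paper exploits the monotonicity of $\mu\mapsto 1_{\{y\ge\mu(x)\}}$ to pass an $L_\infty$ bracket $[\mu_L,\mu_U]$ directly to the indicator bracket $[1_{\{y\ge\mu_U\}},1_{\{y\ge\mu_L\}}]$, whereas you rebuild the same brackets explicitly via the three-region partition $A_j^\pm,A_j^0$ and pointwise inf/sup over the $h$-cell; the two constructions yield the same bound.
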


\begin{proof} The first derivative of the absolute value of $x\in\mathbb{R}$
is $d\left|x\right|/dx=sign\left(x\right)=2\times1_{\left\{ x\geq0\right\} }-1$.
In consequence, it is sufficient to find brackets for sets of the
type $\left\{ y-\mu\left(x\right)\geq0\right\} $. For any measurable
sets $A$ and $A'$, $\mathbb{E}\left|1_{A}-1_{A'}\right|=\Pr\left(A\Delta A'\right)$;
here $\Delta$ is the symmetric difference. Hence, under Condition
\ref{Condition_absoluteLoss}, for $A=\left\{ Y-\mu\left(X\right)\geq0\right\} $,
$A'=\left\{ Y-\mu'\left(X\right)\geq0\right\} $, 
\begin{eqnarray*}
\Pr\left(A\Delta A'\right) & = & \Pr\left(Y-\mu\left(X\right)\geq0,Y-\mu'\left(X\right)<0\right)+\Pr\left(Y-\mu\left(X\right)<0,Y-\mu'\left(X\right)\geq0\right)\\
 & = & \Pr\left(\mu\left(X\right)\leq Y<\mu'\left(X\right)\right)+\Pr\left(\mu'\left(X\right)\leq Y<\mu\left(X\right)\right).
\end{eqnarray*}
 Using Condition \ref{Condition_absoluteLoss}, conditioning on $X=x$
and differentiating the first term on the r.h.s., w.r.t. $\mu'$,
\[
\Pr\left(\mu\left(x\right)\leq Y<\mu'\left(x\right)|x\right)\lesssim\left|\mu\left(x\right)-\mu'\left(x\right)\right|\leq\left|\mu-\mu'\right|_{\infty}.
\]
From the above two displays, deduce that $\mathbb{E}\left|1_{A}-1_{A'}\right|\lesssim\left|\mu-\mu'\right|_{\infty}$.
Hence, the $L_{1}$ bracketing number of $\left\{ \partial\ell_{\mu}:\mu\in\mathcal{H}^{K}\left(B\right)\right\} $
is bounded above by the $L_{\infty}$ bracketing number of $\mathcal{H}^{K}\left(B\right)$,
which is given in Lemma \ref{Lemma_entropyOfL}. The proof is completed
using the same remarks as at the end of the proof of Lemma \ref{Lemma_entropy_dLh}.\end{proof}

The following provides the weak convergence of $\left\{ P_{n}\partial\ell_{\mu}h:\mu\in\mathcal{H}^{K}\left(B\right),h\in\mathcal{H}^{K}\left(1\right)\right\} $. 

\begin{lemma}Let $\mu\in\mathcal{H}^{K}\left(B\right)$. Under Condition
\ref{Condition_absoluteLoss} 
\[
\sqrt{n}\left(P_{n}-P\right)\partial\ell_{\mu}h\rightarrow G\left(\partial\ell_{\mu}h\right),\,\mu\in\mathcal{H}^{K}\left(B\right),h\in\mathcal{H}^{K}\left(1\right)
\]
weakly, where $G\left(\partial\ell_{\mu}h\right)$ is a mean zero
Gaussian process indexed by 
\[
\partial\ell_{\mu}h\in\left\{ \partial\ell_{\mu}h:\mu\in\mathcal{H}^{K}\left(B\right),\,h\in\mathcal{H}^{K}\left(1\right)\right\} ,
\]
with a.s. continuous sample paths and covariance function
\[
\mathbb{E}G\left(\partial\ell_{\mu},h\right)G\left(\partial\ell_{\mu'},h'\right)=\sum_{j\in\mathbb{Z}}P_{1,j}\left(\partial\ell_{\mu}h,\partial\ell_{\mu}h'\right)
\]

\end{lemma}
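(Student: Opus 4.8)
The plan is to follow the proof of Lemma \ref{Lemma_gaussianConvergence} almost verbatim, the only change being that the smoothness‑based bracketing estimate is replaced by the one tailored to the absolute loss. Concretely, I would apply Theorem~1 in Doukhan et al.~(1995) to the class $\mathcal{F}:=\{\partial\ell_\mu h:\mu\in\mathcal{H}^K(B),\,h\in\mathcal{H}^K(1)\}$, where now $\partial\ell_\mu(z)=\mathrm{sign}(y-\mu(x))$. First I would record that $\mathcal{F}$ has a \emph{bounded} envelope: $|\partial\ell_\mu(z)|\le 1$, and $|h|_\infty\lesssim 1$ by Lemma \ref{Lemma_propertiesOfL}, so every element of $\mathcal{F}$ lies in $L_p(P)$ for every $p\in[1,\infty]$. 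In particular one may take $p$ arbitrarily large in the hypotheses of Doukhan et al.~(1995): the moment requirement $p>2$ is automatic, and the mixing requirement $\beta(i)\lesssim(1+i)^{-\beta}$ with $\beta>p/(p-2)$ then reduces to mere summability of the beta‑mixing coefficients, which is exactly what Condition \ref{Condition_absoluteLoss} assumes.

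Next I would check the entropy integrability condition, i.e.\ (\ref{EQ_entropyIntegral}) with the $L_1$ bracketing number in place of the $L_p$ one: $\int_0^1\sqrt{\ln N_{[]}(\epsilon,\mathcal{F},|\cdot|_1)}\,d\epsilon<\infty$. By Lemma \ref{Lemma_entropyL1_dLB}, $\ln N_{[]}(\epsilon,\mathcal{F},|\cdot|_1)\lesssim\epsilon^{-2/(2\eta-1)}+K\ln(B/\epsilon)$, and since Condition \ref{Condition_absoluteLoss} strengthens Condition \ref{Condition_kernel} so that the kernel exponent (written $\lambda$ there, $\eta$ in Condition \ref{Condition_kernel}) exceeds $3/2$, we have $2/(2\eta-1)<1<2$, hence $\epsilon\mapsto\epsilon^{-1/(2\eta-1)}$ is integrable near the origin and the logarithmic term is trivially so. Thus Theorem~1 in Doukhan et al.~(1995) applies and delivers weak convergence of $\sqrt{n}(P_n-P)\partial\ell_\mu h$ in $l^\infty(\mathcal{F})$ to a tight mean‑zero Gaussian process with a.s.\ continuous sample paths (continuity relative to the $L_2(P)$ pseudo‑metric, which dominates the intrinsic pseudo‑metric of the limit because the integrands are uniformly bounded).

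Finally I would identify the covariance. By stationarity, $\mathrm{Var}(\sqrt{n}(P_n-P)\partial\ell_\mu h)=\sum_{|j|<n}(1-|j|/n)P_{1,j}(\partial\ell_\mu h,\partial\ell_\mu h)$, and summability of the mixing coefficients together with the bounded envelope makes the series $\sum_{j\in\mathbb{Z}}P_{1,j}(\partial\ell_\mu h,\partial\ell_\mu h')$ absolutely convergent, so dominated convergence yields the stated limiting covariance; applying the same computation to finite linear combinations pins down all finite‑dimensional distributions. The only step demanding genuine care is the bracketing bound underlying the second paragraph, but that has already been isolated in Lemma \ref{Lemma_entropyL1_dLB}, whose proof exploits that the jump of the sign function is smoothed out by the uniformly bounded conditional density assumed in Condition \ref{Condition_absoluteLoss} (so that $\Pr(\mu(x)\le Y<\mu'(x)\mid x)\lesssim|\mu-\mu'|_\infty$); once this is granted, the remainder of the argument is a routine transcription of the proof of Lemma \ref{Lemma_gaussianConvergence}.
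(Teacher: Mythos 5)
Your plan to ``follow the proof of Lemma \ref{Lemma_gaussianConvergence} almost verbatim'' via Doukhan et al.\ (1995, Theorem 1) has a genuine gap. The paper instead invokes Theorem 8.4 in Rio (2000), and the reason for switching theorems is exactly the point your argument misses.

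The issue is your claim that because the envelope is bounded ``one may take $p$ arbitrarily large'' and the entropy integral condition can then be checked with the $L_1$ bracketing number from Lemma \ref{Lemma_entropyL1_dLB}. Doukhan et al.'s Theorem 1 requires the $L_p$ (more precisely, $\|\cdot\|_{2,\beta}$) bracketing entropy integral, not the $L_1$ one, and for the sign‑type functions $\partial\ell_\mu(z)=\mathrm{sign}(y-\mu(x))$ the $L_p$ bracketing entropy is \emph{much larger} than the $L_1$ one, not the same. Concretely, the proof of Lemma \ref{Lemma_entropyL1_dLB} constructs brackets of the form $1_A-1_{A'}$ with $\mathbb{E}|1_A-1_{A'}|\lesssim|\mu-\mu'|_\infty$; since $|1_A-1_{A'}|^p=|1_A-1_{A'}|$, the $L_p$ size of such a bracket is only the $p$-th \emph{root} of its $L_1$ size, and one gets
\[
\ln N_{[]}\left(\epsilon,\mathcal{F},\left|\cdot\right|_p\right)\lesssim\epsilon^{-2p/\left(2\eta-1\right)}+K\ln\left(B/\epsilon\right).
\]
For $\int_0^1\sqrt{\ln N_{[]}(\epsilon,\mathcal{F},|\cdot|_p)}\,d\epsilon<\infty$ you therefore need $p<2\eta-1$, so $p$ is \emph{not} free: taking it arbitrarily large destroys the entropy integral. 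Under the stated assumption $\eta>3/2$ this forces $p$ to stay barely above $2$ when $\eta$ is close to $3/2$, and then the Doukhan mixing requirement $\beta>p/(p-2)$ is far stronger than the ``summable beta mixing'' actually assumed in Condition \ref{Condition_absoluteLoss}. Your argument as written would therefore not establish the lemma under that condition. This is precisely why the paper switches to Rio (2000, Theorem 8.4): for a uniformly bounded class that theorem only asks for $\int_0^1\sqrt{\epsilon^{-1}\ln N_{[]}(\epsilon,\mathcal{F},|\cdot|_1)}\,d\epsilon<\infty$ together with summable beta mixing, and plugging in Lemma \ref{Lemma_entropyL1_dLB} this reduces to $\int_0^1\epsilon^{-(1+2/(2\eta-1))/2}\,d\epsilon<\infty$, i.e.\ exactly $\eta>3/2$. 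Your boundedness observation and the $L_1$ bracketing estimate are the right ingredients, but they must be fed into Rio's bounded‑class CLT rather than into Doukhan's $L_p$-based theorem.
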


\begin{proof}This just follows from an application of Theorem 8.4
in Rio (2000). That theorem applies to bounded classes of functions
$\mathcal{F}$ and stationary sequences that have summable beta mixing
coefficients. It requires that $\mathcal{F}$ satisfies 
\[
\int_{0}^{1}\sqrt{\epsilon^{-1}\ln N_{\left[\right]}\left(\epsilon,\mathcal{F},\left|\cdot\right|_{1}\right)}d\epsilon<\infty.
\]
When $\mathcal{F}$ is as in Lemma \ref{Lemma_entropyL1_dLB}, this
is the case when $\eta>3/2$, as stated in Condition \ref{Condition_absoluteLoss}.\end{proof}

Using the above results, Theorem \ref{Theorem_absoluteLossWeakConvergence}
can be proved following step by step the proofs of Theorems \ref{Theorem_consistency}
and \ref{Theorem_weakConvergence}.

\subsection{Proof of Theorem \ref{Theorem_algoErrorBound}}

Only here, for typographical reasons, write $\ell\left(\mu\right)$
instead of $\ell_{\mu}$ and similarly for $\partial\ell\left(\mu\right)$.
Let 
\[
h_{m}:=\arg\min_{h\in\mathcal{L}^{K}\left(B\right)}P_{n}\partial\ell\left(F_{m-1}\right)h.
\]
Note that by linearity, and the $l_{1}$ constraint imposed by $\mathcal{L}^{K}\left(B\right)$,
the minimum is obtained by an additive function with $K-1$ additive
components equal to zero and a non-zero one in $\mathcal{H}$ with
norm $\left|\cdot\right|_{\mathcal{H}}$ equal to $B$, i.e. $Bf^{s\left(m\right)}$,
where $f^{s\left(m\right)}\in\mathcal{H}\left(1\right)$. Define,
\[
D\left(F_{m-1}\right):=\min_{h\in\mathcal{L}^{K}\left(B\right)}P_{n}\partial\ell\left(F_{m-1}\right)\left(h-F_{m-1}\right),
\]
so that for any $\mu\in\mathcal{L}^{K}\left(B\right)$, 
\begin{equation}
P_{n}\ell\left(\mu\right)-P_{n}\ell\left(F_{m-1}\right)\geq D\left(F_{m-1}\right)\label{EQ_gapEquation}
\end{equation}
by convexity. For $m\geq1$, define $\tilde{\tau}{}_{m}=2/\left(m+2\right)$
if $\tau_{m}$ is chosen by line search, or $\tilde{\tau}_{m}=\tau_{m}$
if $\tau_{m}=m^{-1}$. By convexity, again, 
\[
P_{n}\ell\left(F_{m}\right)=\inf_{\tau\in\left[0,1\right]}P_{n}\ell\left(F_{m-1}+\tau\left(h_{m}-F_{m-1}\right)\right)\leq P_{n}\ell\left(F_{m-1}\right)+P_{n}\partial\ell\left(F_{m-1}\right)\left(h_{m}-F_{m-1}\right)\tilde{\tau}_{m}+\frac{Q}{2}\tilde{\tau}_{m}^{2}
\]
where 
\[
Q:=\sup_{h,F\in\mathcal{L}^{K}\left(B\right),\tau\in\left[0,1\right]}\frac{2}{\tau^{2}}\left[P_{n}\ell\left(F+\tau\left(h-F\right)\right)-P_{n}\ell\left(F\right)-\tau P_{n}\partial\ell\left(F\right)\left(h-F\right)\right].
\]
The above two displays together with the definition of $D\left(F_{m-1}\right)=P_{n}\partial\ell\left(F_{m-1}\right)\left(h_{m}-F_{m-1}\right)$
imply that for any $\mu\in\mathcal{L}^{K}\left(B\right)$,
\begin{eqnarray*}
P_{n}\ell\left(F_{m}\right) & \leq & P_{n}\ell\left(F_{m-1}\right)+\tilde{\tau}_{m}D\left(F_{m-1}\right)+\frac{Q}{2}\tilde{\tau}_{m}^{2}\\
 & \leq & P_{n}\ell\left(F_{m-1}\right)+\rho_{m}\left(P_{n}\ell\left(\mu\right)-P_{n}\ell\left(F_{m-1}\right)\right)+\frac{Q}{2}\rho_{m}^{2},
\end{eqnarray*}
where the second inequality follows from (\ref{EQ_gapEquation}).
Subtracting $P_{n}\ell\left(\mu\right)$ on both sides and rearranging,
we have the following recursion 
\[
P_{n}\ell\left(F_{m}\right)-P_{n}\ell\left(\mu\right)\leq\left(1-\rho_{m}\right)\left(P_{n}\ell\left(F_{m-1}\right)-P_{n}\ell\left(\mu\right)\right)+\frac{Q}{2}\rho_{m}^{2}.
\]
The result is proved by bounding the above recursion for the two different
choices of $\tilde{\tau}_{m}$. When, $\tilde{\tau}_{m}=2/\left(m+1\right)$,
the proof of Theorem 1 in Jaggi (2013) bounds the recursion by $2Q/\left(m+2\right)$.
If $\rho_{m}=m^{-1}$, then, Lemma 2 in Sancetta (2016) bounds the
recursion by $4Q\ln\left(1+m\right)/m$ for any $m>0$. It remains
to bound $Q$. By Taylor expansion of $\ell\left(F+\tau\left(h-F\right)\right)$
at $\tau=0$, 
\[
\ell\left(F+\tau\left(h-F\right)\right)=\ell\left(F\right)+\partial\ell\left(F\right)\left(h-F\right)\tau+\frac{\partial^{2}\ell\left(F+t\left(h-F\right)\right)\left(h-F\right)^{2}\tau^{2}}{2}
\]
for some $t\in\left[0,1\right]$. It follows that 
\begin{eqnarray*}
Q & \leq & \max_{t\in\left[0,1\right]}\sup_{h,F\in\mathcal{L}^{K}\left(B\right),\tau\in\left[0,1\right]}P_{n}\partial^{2}\ell\left(F+t\left(h-F\right)\right)\left(h-F\right)^{2}\\
 & \leq & 4B^{2}\sup_{\left|t\right|<B}P_{n}d^{2}L\left(\cdot,t\right)/dt^{2}.
\end{eqnarray*}

\subsection{Proof of Lemma \ref{Lemma_propertiesOfL}\label{Section_proofRemainingResults}}

Point 1 is obvious. By the relation between the $l_{1}$ and $l_{2}$
norms (derived using Minkowski and the Cauchy-Schwarz inequality),
$\left|\mu\right|_{\mathcal{H}^{K}}\leq\left|\mu\right|_{\mathcal{L}^{K}}\leq\sqrt{K}\left|\mu\right|_{\mathcal{H}^{K}}$
and this shows the inclusion in Point 2. Every subspace of a Hilbert
space is uniformly convex, hence, Point 3 is proved. By the RKHS property
$f^{\left(k\right)}\left(x^{\left(k\right)}\right)=\left\langle f^{\left(k\right)},C\left(\cdot,x^{\left(k\right)}\right)\right\rangle _{\mathcal{H}}$,
for $\mu\left(x\right)=\sum_{k=1}^{K}f^{\left(k\right)}\left(x^{\left(k\right)}\right)$,
\[
\left|\mu\left(x\right)\right|=\left|\sum_{k=1}^{K}\left\langle f^{\left(k\right)},C\left(\cdot,x^{\left(k\right)}\right)\right\rangle _{\mathcal{H}}\right|.
\]
When $\mu\in\mathcal{L}^{K}\left(B\right)$, by the Cauchy-Schwarz
inequality and the RKHS property again, the display is bounded by
\[
\sum_{k=1}^{K}\left|f^{\left(k\right)}\right|_{\mathcal{H}}\left|C\left(\cdot,x^{\left(k\right)}\right)\right|_{\mathcal{H}}\leq\sum_{k=1}^{K}\left|f^{\left(k\right)}\right|_{\mathcal{H}}\sqrt{C\left(x^{\left(k\right)},x^{\left(k\right)}\right)}\leq cB,
\]
using the definition of $\mathcal{L}^{K}\left(B\right)$ and the assumed
bound on the kernel. The above two displays imply that $\left|\mu\right|_{\infty}\leq cB$.
This shows the result for $p=\infty$. For any $p\in[1,\infty)$,
use the trivial inequality $P\left|\mu\right|^{p}\leq\left|\mu\right|_{\infty}^{p}P\left(\mathcal{X}^{K}\right)=\left|\mu\right|_{\infty}^{p}$.
When $\mu\in\mathcal{H}^{K}\left(B\right)$, by Cauchy-Schwarz inequality
it is simple to deduce from the above two displays that $\left|\mu\right|_{\infty}\leq c\sqrt{K}B$.
These remarks prove Point 4.

\section{Appendix 2: Additional Details\label{Section_ComputationalAppendix}}

\subsection{Additional Details for Examples in Section \ref{Section_examples}
\label{Section_detailsExamples}}

The function $H_{V}$ in Example \ref{Example_H0LinearityComponent}
is 
\[
H_{V}\left(\cdot,\cdot\right)=\int_{0}^{1}G_{V}\left(\cdot,u\right)G_{V}\left(\cdot,u\right)du\text{ with }G_{V}\left(r,u\right):=\max\left\{ \frac{\left(r-u\right)^{V-1}}{\left(V-1\right)!},0\right\} ,
\]
where $r,u\in\left[0,1\right]$ (Wahba, 1990, p.7-8). 

To see that Example \ref{Example_H0AdditiveH1General} fits in the
framework of the paper, let $\mathcal{X}^{K+1}=\prod_{k=1}^{K+1}\mathcal{X}^{\left(k\right)}$
and $\mathcal{H}^{K+1}=\bigoplus_{k=1}^{K+1}\mathcal{H}^{\left(k\right)}$.
Here, $\mathcal{H}^{\left(k\right)}$ is a RKHS on $\mathcal{X}^{\left(k\right)}=\left[0,1\right]$
for $k\leq K$, and $\mathcal{H}^{\left(K+1\right)}$ is a RKHS on
$\mathcal{X}^{\left(K+1\right)}=\left[0,1\right]^{K}$. (Formally,
this would also require us to define $X=\left(X^{\left(1\right)},...,X^{\left(K\right)},X^{\left(K+1\right)}\right)$
with $X^{\left(K+1\right)}=\left(X^{\left(1\right)},...,X^{\left(K\right)}\right)$.)
As the example shows, in practice, we can directly consider $\mathcal{R}_{0}$
and $\mathcal{R}_{1}$ rather than $\mathcal{H}^{K+1}$. 

\subsection{Selection of $B$ and Variable Screening\label{Section_BSelectionVariableScreening}}

The parameter $B$ uniquely identifies the Lagrange multiplier $\rho_{B,n}$
in the penalized version of the optimization problem (\ref{EQ_theEstimator})
(see Example \ref{Example_sqaureErrorLoss}) If the loss is non-negative,
$\left|\mu_{n}\right|_{\mathcal{H}^{K}}^{2}\leq\rho_{B,n}^{-1}P_{n}\ell_{0}$
(e.g., Steinwart and Christmann, 2008, Section 5.1). The exact same
argument holds for $\mathcal{L}^{K}\left(B\right)$ in place of $\mathcal{H}^{K}\left(B\right)$.
When the constraint $\mu\in\mathcal{L}^{K}\left(B\right)$ is considered,
the solution via the greedy algorithm in Section \ref{Section_estimationAlgo}
allows us to keep track of the iterations at which selected variables
are included. Variables included at the early stage of the algorithm
will be clearly included even when $B$ is increased. Hence, exploration
for the purpose of feature selection (using the constraint $\mu\in\mathcal{L}^{K}\left(B\right)$)
can be carried out using a large $B$ to reduce the computational
burden. 

Selection of $B$ is usually based on cross-validation or penalized
estimation, where the penalty estimates the ``degrees of freedom''. 

\subsection{Additional Representations for Practical Computations\label{Section_additionalRepresentation}}

At each iteration $m$, the Lagrange multiplier in (\ref{EQ_objectiveGreedyUnivariate})
is derived as follows. Define

\[
\rho_{m}^{\left(k\right)}:=\left[\frac{1}{4}\sum_{i,j=1}^{n}\frac{\partial\ell_{F_{m-1}}\left(Z_{i}\right)}{n}\frac{\partial\ell_{F_{m-1}}\left(Z_{j}\right)}{n}C\left(X_{i}^{\left(k\right)},X_{j}^{\left(k\right)}\right)\right]^{1/2}.
\]
Let $s\left(m\right)=\arg\max_{k\leq K}\rho_{m}^{\left(k\right)}$,
and $\rho_{m}=\max_{k\leq K}\rho_{m}^{\left(k\right)}$. In consequence,
$f^{s\left(m\right)}=-\frac{1}{2\rho_{m}}P_{n}\partial\ell_{F_{m-1}}\Phi^{s\left(m\right)}$. 

Recall that $C_{\mathcal{H}^{K}}\left(s,t\right)=\sum_{k=1}^{K}C\left(s^{\left(k\right)},t^{\left(k\right)}\right)$
and the series representation (\ref{EQ_covSeriesRepresentation}).
Suppose that (\ref{EQ_covSeriesRepresentation}) holds for $C$ with
a finite number of $V$ terms (either exactly, or approximately, by
Condition \ref{Condition_kernel}), and that it is known. Then, we
can reduce the computational burden from $O\left(n^{2}\right)$ to
$O\left(nV\right)$. Recall that for notational simplicity we use
the same covariance kernel for all $k=1,2,...,K$ so that $\lambda_{v}\varphi_{v}$
in (\ref{EQ_covSeriesRepresentation}) does not depend on $k$. In
this case, define 
\[
a_{v}^{\left(k\right)}=\sum_{i=1}^{n}\frac{\partial\ell_{F_{m-1}}\left(Z_{i}\right)\lambda_{v}\varphi_{v}\left(X_{i}^{\left(k\right)}\right)}{n}
\]
and note that $\rho_{m}^{\left(k\right)}=\frac{1}{2}\sqrt{\sum_{v=1}^{V}\left|a_{v}^{\left(k\right)}\right|^{2}}$,
so that 
\[
f^{s\left(m\right)}\left(x^{s\left(m\right)}\right)=-\sum_{v=1}^{V}\frac{a_{v}^{s\left(m\right)}}{\sqrt{\sum_{v=1}^{V}\left|a_{v}^{s\left(m\right)}\right|^{2}}}\lambda_{v}\varphi_{v}\left(x^{s\left(m\right)}\right)
\]
and as before $s\left(m\right)=\arg\max_{k\leq K}\rho_{m}^{\left(k\right)}$.
This representation is suited for large sample $n$ when ready access
memory (RAM) is limited. For example, if $n=O\left(10^{6}\right)$,
which is not uncommon for high frequency applications, naive matrix
methods to estimate a regression function under RKHS constraints requires
to store a $n\times n$ matrix of doubles, which is equivalent to
about half a terabyte of RAM.

\subsection{The Beta Mixing Condition\label{Section_appendixRemarksConditions} }

To avoid ambiguities, recall the definition of beta mixing. Suppose
that $\left(Z_{i}\right)_{i\in\mathbb{Z}}$ is a stationary sequence
of random variables and let $\sigma\left(Z_{i}:i\leq0\right)$, $\sigma\left(Z_{i}:i\geq k\right)$
be the sigma algebra generated by $\left\{ Z_{i}:i\leq0\right\} $
and $\left\{ Z_{i}:i\geq k\right\} $, respectively, for integer $k$.
For any $k\geq1$, the beta mixing coefficient $\beta\left(k\right)$
for $\left(Z_{i}\right)_{i\in\mathbb{Z}}$ is 
\[
\beta\left(k\right):=\mathbb{E}\sup_{A\in\sigma\left(Z_{i}:i\geq k\right)}\left|\Pr\left(A|\sigma\left(W_{i}:i\leq0\right)\right)-\Pr\left(A\right)\right|
\]
(see Rio, 2000, section 1.6, for other equivalent definitions). In
the context of Condition \ref{Condition_Dependence}, set $Z_{i}=\left(Y_{i},X_{i}\right)$.
Condition \ref{Condition_Dependence} is a convenient technical restriction
and is satisfied by any model that can be written as a Markov chain
with smooth conditional distribution (e.g., Doukhan, 1995, for a review;
Basrak et al., 2002, for GARCH). Models with innovations that do not
have a smooth density function may not be covered (e.g. Rosenblatt,
1980, Andrews, 1984, Bradley, 1986, for a well known example). 

\begin{example}\label{Example_linearRegression}Suppose that $Y_{i}=\sum_{k=1}^{K}f^{\left(k\right)}\left(X_{i}^{\left(k\right)}\right)+\varepsilon_{i}$,
where the sequence of $\varepsilon_{i}$'s and $X_{i}$'s are independent.
By independence, deduce that the mixing coefficients of $\left\{ \left(Y_{i},X_{i}\right):i\in\mathbb{Z}\right\} $
are bounded by the sum of the mixing coefficients of the $\varepsilon_{i}$'s
and $X_{i}$'s (e.g., Bradley, 2005, Theorem 5.1). Suppose that the
$\varepsilon_{i}$'s and $X_{i}$'s are positive recurrent Markov
chains with innovations with continuous conditional density function.
Under additional mild regularity conditions, Condition \ref{Condition_Dependence}
is satisfied with geometric mixing rates (e.g., Mokkadem, 1987, Doukhan,
1995, section 2.4.0.1). Examples include GARCH and others, as in the
aforementioned references.\end{example}

\begin{example}Suppose that $Y_{i}\in\left\{ -1,1\right\} $. A classification
model based on the regressors $X_{i}$ can be generated via the random
utility model 
\[
Y_{i}^{*}=\mu\left(X_{i}\right)+\varepsilon_{i}
\]
where $Y_{i}=sign\left(Y_{i}^{*}\right)$. The sigma algebra generated
by $\left\{ Y_{i}:i\in\mathcal{A}\right\} $ for any subset $\mathcal{A}$
of the integers is contained in the sigma algebra generated by $\left\{ Y_{i}^{*}:i\in\mathcal{A}\right\} $.
Hence, for the errors $\varepsilon_{i}$'s and the $X_{i}$'s as in
Example \ref{Example_linearRegression}, the variables are beta mixing
with geometric mixing rate.\end{example}

\subsection{Explicit Expressions Implied by the Compact Notation\label{Examples_notation}}

The following examples should be nearly exhaustive in making quantities
more readable: 

\[
P_{n}\partial\ell_{\mu_{n0}}\left(h-\Pi_{n}h\right)=\frac{1}{n}\sum_{i=1}^{n}\partial\ell_{\mu_{n0}}\left(\left(Y_{i},X_{i}\right)\right)\left(h\left(X_{i}\right)-\Pi_{n}h\left(X_{i}\right)\right),
\]
\[
P_{n}\partial\ell_{\mu_{n0}}^{2}\left(h-\nu\right)^{2}=\frac{1}{n}\sum_{i=1}^{n}\partial\ell_{\mu_{n0}}^{2}\left(\left(Y_{i},X_{i}\right)\right)\left(h\left(X_{i}\right)-\nu\left(X_{i}\right)\right).
\]
\begin{align*}
 & P_{n}\partial\ell_{\mu_{n0}}^{2}\left(h-\Pi_{n}h\right)\left(h'-\Pi_{n}h'\right)\\
= & \frac{1}{n}\sum_{i=1}^{n}\partial\ell_{\mu_{n0}}^{2}\left(\left(Y_{i},X_{i}\right)\right)\left(h\left(X_{i}\right)-\Pi_{n}h\left(X_{i}\right)\right)\left(h'\left(X_{i}\right)-\Pi_{n}h'\left(X_{i}\right)\right).
\end{align*}
Moreover, 
\[
P\partial^{2}\ell_{\mu_{0}}\left(\mu_{n}-\mu_{0}\right)h=\int_{\mathcal{X}^{K}}\int_{\mathcal{Y}}\partial^{2}\ell_{\mu_{0}}\left(\left(y,x\right)\right)\left(\mu_{n}\left(x\right)-\mu_{0}\left(x\right)\right)h\left(x\right)dP\left(y,x\right)
\]
where $P$ is the law of $Z=\left(Y,X\right)$.

\section{Appendix 4: Additional Numerical Details\label{Section_AppendixNumericalResults}}

The following tables report more simulation results. The column heading
``No $\Pi$'' means that no correction was used in estimating the
test statistic and the covariance function: instead of using $\left(h-\Pi_{n}h\right)$
we just use $h$, which is the naive estimator in the presence of
a nuisance parameter. The column heading ``Size'' stands for the
nominal size and the simulated frequency of rejection should be close
to this when the null is true. 

\begin{table}[H]
\caption{High Dimensional Estimation. Simulated frequency of rejections when
$n=100$, $K=10$ and the true model is Lin3. Hence, Lin1 and Lin2
should be rejected. }
\begin{tabular}{ccccccccccc}
 &  &  & \multicolumn{2}{c}{Lin1} & \multicolumn{2}{c}{Lin2} & \multicolumn{2}{c}{Lin3} & \multicolumn{2}{c}{LinAll}\tabularnewline
$\rho$ & $\sigma_{\mu/\varepsilon}^{2}$ & Size  & No $\Pi$ & $\Pi$ & No $\Pi$ & $\Pi$ & No $\Pi$ & $\Pi$ & No $\Pi$ & $\Pi$\tabularnewline
0 & 1 & 0.10 & 1.00 & 1.00 & 0.99 & 0.99 & 0.08 & 0.12 & 0.05 & 0.13\tabularnewline
0 & 1 & 0.05 & 1.00 & 1.00 & 0.96 & 0.98 & 0.03 & 0.06 & 0.02 & 0.07\tabularnewline
0 & 0.2 & 0.10 & 0.71 & 0.78 & 0.44 & 0.50 & 0.08 & 0.12 & 0.05 & 0.13\tabularnewline
0 & 0.2 & 0.05 & 0.54 & 0.66 & 0.25 & 0.36 & 0.04 & 0.06 & 0.02 & 0.07\tabularnewline
0.75 & 1 & 0.10 & 0.91 & 0.95 & 0.21 & 0.31 & 0.05 & 0.10 & 0.07 & 0.14\tabularnewline
0.75 & 1 & 0.05 & 0.80 & 0.90 & 0.12 & 0.20 & 0.02 & 0.05 & 0.03 & 0.07\tabularnewline
0.75 & 0.2 & 0.10 & 0.28 & 0.39 & 0.08 & 0.14 & 0.05 & 0.10 & 0.07 & 0.14\tabularnewline
0.75 & 0.2 & 0.05 & 0.16 & 0.25 & 0.03 & 0.06 & 0.02 & 0.05 & 0.03 & 0.07\tabularnewline
\end{tabular}
\end{table}

\begin{table}
\caption{High Dimensional Estimation. Simulated frequency of rejections when
$n=1000$, $K=10$ and the true model is Lin3. Hence, Lin1 and Lin2
should be rejected. }
\begin{tabular}{ccccccccccc}
 &  &  & \multicolumn{2}{c}{Lin1} & \multicolumn{2}{c}{Lin2} & \multicolumn{2}{c}{Lin3} & \multicolumn{2}{c}{LinAll}\tabularnewline
$\rho$ & $\sigma_{\mu/\varepsilon}^{2}$ & Size  & No $\Pi$ & $\Pi$ & No $\Pi$ & $\Pi$ & No $\Pi$ & $\Pi$ & No $\Pi$ & $\Pi$\tabularnewline
0 & 1 & 0.10 & 1.00 & 1.00 & 1.00 & 1.00 & 0.09 & 0.11 & 0.07 & 0.10\tabularnewline
0 & 1 & 0.05 & 1.00 & 1.00 & 1.00 & 1.00 & 0.05 & 0.05 & 0.04 & 0.06\tabularnewline
0 & 0.2 & 0.10 & 1.00 & 1.00 & 1.00 & 1.00 & 0.10 & 0.11 & 0.07 & 0.11\tabularnewline
0 & 0.2 & 0.05 & 1.00 & 1.00 & 1.00 & 1.00 & 0.05 & 0.05 & 0.04 & 0.06\tabularnewline
0.75 & 1 & 0.10 & 1.00 & 1.00 & 1.00 & 1.00 & 0.06 & 0.09 & 0.06 & 0.10\tabularnewline
0.75 & 1 & 0.05 & 1.00 & 1.00 & 1.00 & 1.00 & 0.03 & 0.05 & 0.03 & 0.04\tabularnewline
0.75 & 0.2 & 0.10 & 1.00 & 1.00 & 0.48 & 0.60 & 0.06 & 0.09 & 0.06 & 0.10\tabularnewline
0.75 & 0.2 & 0.05 & 1.00 & 1.00 & 0.32 & 0.45 & 0.03 & 0.05 & 0.03 & 0.04\tabularnewline
\end{tabular}
\end{table}

\begin{table}
\caption{High Dimensional Estimation. Simulated frequency of rejections when
$n=100$, $K=10$ and the true model is LinAll. Hence, Lin1, Lin2,
and Lin3 should be rejected. }

\begin{tabular}{ccccccccccc}
 &  &  & \multicolumn{2}{c}{Lin1} & \multicolumn{2}{c}{Lin2} & \multicolumn{2}{c}{Lin3} & \multicolumn{2}{c}{LinAll}\tabularnewline
$\rho$ & $\sigma_{\mu/\varepsilon}^{2}$ & Size  & No $\Pi$ & $\Pi$ & No $\Pi$ & $\Pi$ & No $\Pi$ & $\Pi$ & No $\Pi$ & $\Pi$\tabularnewline
0 & 1 & 0.10 & 1.00 & 1.00 & 1.00 & 1.00 & 1.00 & 1.00 & 0.06 & 0.11\tabularnewline
0 & 1 & 0.05 & 1.00 & 1.00 & 1.00 & 1.00 & 1.00 & 1.00 & 0.03 & 0.05\tabularnewline
0 & 0.2 & 0.10 & 0.84 & 0.88 & 0.80 & 0.85 & 0.77 & 0.82 & 0.05 & 0.13\tabularnewline
0 & 0.2 & 0.05 & 0.71 & 0.77 & 0.65 & 0.75 & 0.62 & 0.72 & 0.02 & 0.07\tabularnewline
0.75 & 1 & 0.10 & 1.00 & 1.00 & 1.00 & 1.00 & 1.00 & 1.00 & 0.07 & 0.14\tabularnewline
0.75 & 1 & 0.05 & 1.00 & 1.00 & 1.00 & 1.00 & 1.00 & 1.00 & 0.03 & 0.07\tabularnewline
0.75 & 0.2 & 0.10 & 0.95 & 0.97 & 0.89 & 0.93 & 0.80 & 0.87 & 0.07 & 0.14\tabularnewline
0.75 & 0.2 & 0.05 & 0.89 & 0.94 & 0.80 & 0.89 & 0.66 & 0.80 & 0.03 & 0.07\tabularnewline
\end{tabular}
\end{table}

\begin{table}
\caption{High Dimensional Estimation. Simulated frequency of rejections when
$n=1000$, $K=10$ and the true model is LinAll. Hence, Lin1, Lin2,
and Lin3 should be rejected. }
\begin{tabular}{ccccccccccc}
 &  &  & \multicolumn{2}{c}{Lin1} & \multicolumn{2}{c}{Lin2} & \multicolumn{2}{c}{Lin3} & \multicolumn{2}{c}{LinAll}\tabularnewline
$\rho$ & $\sigma_{\mu/\varepsilon}^{2}$ & Size  & No $\Pi$ & $\Pi$ & No $\Pi$ & $\Pi$ & No $\Pi$ & $\Pi$ & No $\Pi$ & $\Pi$\tabularnewline
0 & 1 & 0.10 & 1.00 & 1.00 & 1.00 & 1.00 & 1.00 & 1.00 & 0.49 & 0.08\tabularnewline
0 & 1 & 0.05 & 1.00 & 1.00 & 1.00 & 1.00 & 1.00 & 1.00 & 0.23 & 0.05\tabularnewline
0 & 0.2 & 0.10 & 1.00 & 1.00 & 1.00 & 1.00 & 1.00 & 1.00 & 0.08 & 0.10\tabularnewline
0 & 0.2 & 0.05 & 1.00 & 1.00 & 1.00 & 1.00 & 1.00 & 1.00 & 0.04 & 0.06\tabularnewline
0.75 & 1 & 0.10 & 1.00 & 1.00 & 1.00 & 1.00 & 1.00 & 1.00 & 0.07 & 0.10\tabularnewline
0.75 & 1 & 0.05 & 1.00 & 1.00 & 1.00 & 1.00 & 1.00 & 1.00 & 0.03 & 0.05\tabularnewline
0.75 & 0.2 & 0.10 & 1.00 & 1.00 & 1.00 & 1.00 & 1.00 & 1.00 & 0.06 & 0.10\tabularnewline
0.75 & 0.2 & 0.05 & 1.00 & 1.00 & 1.00 & 1.00 & 1.00 & 1.00 & 0.03 & 0.05\tabularnewline
\end{tabular}
\end{table}
\begin{table}
\caption{High Dimensional Estimation. Simulated frequency of rejections when
$n=100$, $K=10$ and the true model is NonLin. Hence, Lin1, Lin2,
Lin3, and LinAll should be rejected. }
\begin{tabular}{ccccccccccccc}
 &  &  & \multicolumn{2}{c}{Lin1} & \multicolumn{2}{c}{Lin2} & \multicolumn{2}{c}{Lin3} & \multicolumn{2}{c}{LinAll} & \multicolumn{2}{c}{LinPoly}\tabularnewline
$\rho$ & $\sigma_{\mu/\varepsilon}^{2}$ & Size  & No $\Pi$ & $\Pi$ & No $\Pi$ & $\Pi$ & No $\Pi$ & $\Pi$ & No $\Pi$ & $\Pi$ & No $\Pi$ & $\Pi$\tabularnewline
0 & 1 & 0.10 & 0.97 & 0.94 & 0.97 & 0.94 & 0.97 & 0.95 & 0.59 & 0.61 & 0.03 & 0.15\tabularnewline
0 & 1 & 0.05 & 0.94 & 0.91 & 0.95 & 0.92 & 0.95 & 0.92 & 0.54 & 0.50 & 0.01 & 0.09\tabularnewline
0 & 0.2 & 0.10 & 0.71 & 0.72 & 0.72 & 0.74 & 0.73 & 0.75 & 0.30 & 0.31 & 0.04 & 0.12\tabularnewline
0 & 0.2 & 0.05 & 0.57 & 0.62 & 0.58 & 0.64 & 0.61 & 0.68 & 0.23 & 0.23 & 0.01 & 0.06\tabularnewline
0.75 & 1 & 0.10 & 0.94 & 0.95 & 0.93 & 0.93 & 0.85 & 0.88 & 0.61 & 0.61 & 0.02 & 0.13\tabularnewline
0.75 & 1 & 0.05 & 0.89 & 0.92 & 0.86 & 0.90 & 0.69 & 0.79 & 0.54 & 0.52 & 0.01 & 0.06\tabularnewline
0.75 & 0.2 & 0.10 & 0.70 & 0.77 & 0.57 & 0.66 & 0.32 & 0.39 & 0.33 & 0.31 & 0.01 & 0.14\tabularnewline
0.75 & 0.2 & 0.05 & 0.55 & 0.68 & 0.39 & 0.53 & 0.17 & 0.28 & 0.25 & 0.24 & 0.00 & 0.08\tabularnewline
\end{tabular}
\end{table}
\begin{table}
\caption{High Dimensional Estimation. Simulated frequency of rejections when
$n=1000$, $K=10$ and the true model is NonLin. Hence, Lin1, Lin2,
Lin3, and LinAll should be rejected. }

\begin{tabular}{ccccccccccccc}
 &  &  & \multicolumn{2}{c}{Lin1} & \multicolumn{2}{c}{Lin2} & \multicolumn{2}{c}{Lin3} & \multicolumn{2}{c}{LinAll} & \multicolumn{2}{c}{LinPoly}\tabularnewline
$\rho$ & $\sigma_{\mu/\varepsilon}^{2}$ & Size  & No $\Pi$ & $\Pi$ & No $\Pi$ & $\Pi$ & No $\Pi$ & $\Pi$ & No $\Pi$ & $\Pi$ & No $\Pi$ & $\Pi$\tabularnewline
0 & 1 & 0.10 & 1 & 1 & 1 & 1 & 1 & 1 & 0.92 & 0.91 & 0.03 & 0.1\tabularnewline
0 & 1 & 0.05 & 1 & 1 & 1 & 1 & 1 & 1 & 0.9 & 0.88 & 0.02 & 0.05\tabularnewline
0 & 0.2 & 0.10 & 1 & 0.99 & 1 & 0.99 & 1 & 0.99 & 0.75 & 0.72 & 0.04 & 0.1\tabularnewline
0 & 0.2 & 0.05 & 1 & 0.98 & 1 & 0.98 & 1 & 0.99 & 0.7 & 0.65 & 0.01 & 0.05\tabularnewline
0.75 & 1 & 0.10 & 1 & 1 & 1 & 1 & 1 & 1 & 0.89 & 0.87 & 0.02 & 0.09\tabularnewline
0.75 & 1 & 0.05 & 1 & 1 & 1 & 1 & 1 & 1 & 0.87 & 0.84 & 0 & 0.05\tabularnewline
0.75 & 0.2 & 0.10 & 1 & 0.99 & 1 & 0.99 & 0.99 & 0.99 & 0.75 & 0.73 & 0.02 & 0.11\tabularnewline
0.75 & 0.2 & 0.05 & 1 & 0.99 & 1 & 0.99 & 0.99 & 0.98 & 0.72 & 0.67 & 0.01 & 0.06\tabularnewline
\end{tabular}
\end{table}

\begin{table}
\caption{Infinite Dimensional Estimation. Simulated frequency of rejections
for $n=100$, and various combinations of signal to noise $\sigma_{\mu/\varepsilon}^{2}$,
and variables correlation $\rho$. The true model is linear in the
first variable and nonlinear in the second variable. Hence, LinAll
should be rejected.}
\label{Table_simulationSubsetCov-1}

\begin{tabular}{cccccc}
 &  & \multicolumn{2}{c}{Lin1NonLin} & \multicolumn{2}{c}{LinAll}\tabularnewline
$\left(\sigma_{\mu/\varepsilon}^{2},\rho\right)$ & Size & No $\Pi$ & $\Pi$ & No $\Pi$ & $\Pi$\tabularnewline
\hline 
$\left(1,0\right)$ & 0.10 & \textsf{0.00} & \textsf{0.12} & \textsf{0.00} & \textsf{0.91}\tabularnewline
$\left(1,0\right)$ & 0.05 & \textsf{0.00} & \textsf{0.06} & \textsf{0.00} & \textsf{0.84}\tabularnewline
$\left(.2,0\right)$ & 0.10 & \textsf{0.00} & \textsf{0.12} & \textsf{0.00} & \textsf{0.42}\tabularnewline
$\left(.2,0\right)$ & 0.05 & \textsf{0.00} & \textsf{0.06} & \textsf{0.00} & \textsf{0.31}\tabularnewline
$\left(1,.75\right)$ & 0.10 & \textsf{0.00} & \textsf{0.11} & \textsf{0.00} & \textsf{0.97}\tabularnewline
$\left(1,.75\right)$ & 0.05 & \textsf{0.00} & \textsf{0.06} & \textsf{0.00} & \textsf{0.93}\tabularnewline
$\left(.2,.75\right)$ & 0.10 & \textsf{0.00} & \textsf{0.11} & \textsf{0.00} & \textsf{0.51}\tabularnewline
$\left(.2,.75\right)$ & 0.05 & \textsf{0.00} & \textsf{0.06} & \textsf{0.00} & \textsf{0.38}\tabularnewline
\end{tabular}
\end{table}
\begin{table}
\caption{Infinite Dimensional Estimation. Simulated frequency of rejections
for $n=1000$, and various combinations of signal to noise $\sigma_{\mu/\varepsilon}^{2}$,
and variables correlation $\rho$. The true model is linear in the
first variable and nonlinear in the second variable. Hence, LinAll
should be rejected.}
\label{Table_simulationSubsetCov-1-1}

\begin{tabular}{cccccc}
 &  & \multicolumn{2}{c}{Lin1NonLin} & \multicolumn{2}{c}{LinAll}\tabularnewline
$\left(\sigma_{\mu/\varepsilon}^{2},\rho\right)$ & Size & No $\Pi$ & $\Pi$ & No $\Pi$ & $\Pi$\tabularnewline
\hline 
$\left(1,0\right)$ & 0.10 & \textsf{0.00} & \textsf{0.09} & \textsf{0.99} & \textsf{1.00}\tabularnewline
$\left(1,0\right)$ & 0.05 & \textsf{0.00} & \textsf{0.04} & \textsf{0.82} & \textsf{1.00}\tabularnewline
$\left(.2,0\right)$ & 0.10 & \textsf{0.00} & \textsf{0.09} & \textsf{0.00} & \textsf{1.00}\tabularnewline
$\left(.2,0\right)$ & 0.05 & \textsf{0.00} & \textsf{0.04} & \textsf{0.00} & \textsf{1.00}\tabularnewline
$\left(1,.75\right)$ & 0.10 & \textsf{0.00} & \textsf{0.11} & \textsf{1.00} & \textsf{1.00}\tabularnewline
$\left(1,.75\right)$ & 0.05 & \textsf{0.00} & \textsf{0.03} & \textsf{1.00} & \textsf{1.00}\tabularnewline
$\left(.2,.75\right)$ & 0.10 & \textsf{0.00} & \textsf{0.11} & \textsf{0.17} & \textsf{1.00}\tabularnewline
$\left(.2,.75\right)$ & 0.05 & \textsf{0.00} & \textsf{0.03} & \textsf{0.02} & \textsf{1.00}\tabularnewline
\end{tabular}
\end{table}
\begin{table}
\caption{Infinite Dimensional Estimation.Simulated frequency of rejections
for $n=5000$, and various combinations of signal to noise $\sigma_{\mu/\varepsilon}^{2}$,
and variables correlation $\rho$. The true model is linear in the
first variable and nonlinear in the second variable. The column heading
``Size'' stands for the nominal size.}
\label{Table_simulationSubsetCov-1-2}

\begin{tabular}{cccccc}
 &  & \multicolumn{2}{c}{Lin1NonLin} & \multicolumn{2}{c}{LinAll}\tabularnewline
$\left(\sigma_{\mu/\varepsilon}^{2},\rho\right)$ & Size & No $\Pi$ & $\Pi$ & No $\Pi$ & $\Pi$\tabularnewline
\hline 
$\left(1,0\right)$ & 0.10 & \textsf{0} & \textsf{0.12} & \textsf{1} & \textsf{1}\tabularnewline
$\left(1,0\right)$ & 0.05 & \textsf{0} & \textsf{0.05} & \textsf{1} & \textsf{1}\tabularnewline
$\left(.2,0\right)$ & 0.10 & \textsf{0} & \textsf{0.12} & \textsf{1} & \textsf{1}\tabularnewline
$\left(.2,0\right)$ & 0.05 & \textsf{0} & \textsf{0.05} & \textsf{0.97} & \textsf{1}\tabularnewline
$\left(1,.75\right)$ & 0.10 & \textsf{0} & \textsf{0.11} & \textsf{1} & \textsf{1}\tabularnewline
$\left(1,.75\right)$ & 0.05 & \textsf{0} & \textsf{0.06} & \textsf{1} & \textsf{1}\tabularnewline
$\left(.2,.75\right)$ & 0.10 & \textsf{0} & \textsf{0.11} & \textsf{1} & \textsf{1}\tabularnewline
$\left(.2,.75\right)$ & 0.05 & \textsf{0} & \textsf{0.06} & \textsf{1} & \textsf{1}\tabularnewline
\end{tabular}
\end{table}

\pagebreak{}

\end{document}